\keywords{Term Rewriting, Higher-Order Rewriting, Proof terms, Equivalence of Computations}
\colorlet{darkgreen}{green!60!black}
\newdimen\proofrulebreadth \proofrulebreadth=.05em
\newdimen\proofdotseparation \proofdotseparation=1.25ex
\newdimen\proofrulebaseline \proofrulebaseline=2ex
\let\then\relax
\def\hfi{\hskip0pt plus.0001fil}
\mathchardef\squigto="3A3B
\newif\ifinsideprooftree\insideprooftreefalse
\newif\ifonleftofproofrule\onleftofproofrulefalse
\newif\ifproofdots\proofdotsfalse
\newif\ifdoubleproof\doubleprooffalse
\let\wereinproofbit\relax
\newdimen\shortenproofleft
\newdimen\shortenproofright
\newdimen\proofbelowshift
\newbox\proofabove
\newbox\proofbelow
\newbox\proofrulename
\def\shiftproofbelow{\let\next\relax\afterassignment\setshiftproofbelow\dimen0 }
\def\shiftproofbelowneg{\def\next{\multiply\dimen0 by-1 }%
\afterassignment\setshiftproofbelow\dimen0 }
\def\setshiftproofbelow{\next\proofbelowshift=\dimen0 }
\def\setproofrulebreadth{\proofrulebreadth}
\def\prooftree{
%
\ifnum  \lastpenalty=1
\then   \unpenalty
\else   \onleftofproofrulefalse
\fi
%
\ifonleftofproofrule
\else   \ifinsideprooftree
        \then   \hskip.5em plus1fil
        \fi
\fi
%
\bgroup
\setbox\proofbelow=\hbox{}\setbox\proofrulename=\hbox{}%
\let\justifies\proofover\let\leadsto\proofoverdots\let\Justifies\proofoverdbl
\let\using\proofusing\let\[\prooftree
\ifinsideprooftree\let\]\endprooftree\fi
\proofdotsfalse\doubleprooffalse
\let\thickness\setproofrulebreadth
\let\shiftright\shiftproofbelow \let\shift\shiftproofbelow
\let\shiftleft\shiftproofbelowneg
\let\ifwasinsideprooftree\ifinsideprooftree
\insideprooftreetrue
%
\setbox\proofabove=\hbox\bgroup$\displaystyle 
\let\wereinproofbit\prooftree
%
\shortenproofleft=0pt \shortenproofright=0pt \proofbelowshift=0pt
%
\onleftofproofruletrue\penalty1
}
\def\eproofbit{
%
\ifx    \wereinproofbit\prooftree
\then   \ifcase \lastpenalty
        \then   \shortenproofright=0pt  
        \or     \unpenalty\hfil         
        \or     \unpenalty\unskip       
        \else   \shortenproofright=0pt  
        \fi
\fi
%
\global\dimen0=\shortenproofleft
\global\dimen1=\shortenproofright
\global\dimen2=\proofrulebreadth
\global\dimen3=\proofbelowshift
\global\dimen4=\proofdotseparation
\global\count255=\proofdotnumber
%
$\egroup  
%
\shortenproofleft=\dimen0
\shortenproofright=\dimen1
\proofrulebreadth=\dimen2
\proofbelowshift=\dimen3
\proofdotseparation=\dimen4
\proofdotnumber=\count255
}
\def\proofover{
\eproofbit 
\setbox\proofbelow=\hbox\bgroup 
\let\wereinproofbit\proofover
$\displaystyle
}%
\def\proofoverdbl{
\eproofbit 
\doubleprooftrue
\setbox\proofbelow=\hbox\bgroup 
\let\wereinproofbit\proofoverdbl
$\displaystyle
}%
\def\proofoverdots{
\eproofbit 
\proofdotstrue
\setbox\proofbelow=\hbox\bgroup 
\let\wereinproofbit\proofoverdots
$\displaystyle
}%
\def\proofusing{
\eproofbit 
\setbox\proofrulename=\hbox\bgroup 
\let\wereinproofbit\proofusing
\kern0.3em$
}
\def\endprooftree{
\eproofbit 
  \dimen5 =0pt
%
\dimen0=\wd\proofabove \advance\dimen0-\shortenproofleft
\advance\dimen0-\shortenproofright
%
\dimen1=.5\dimen0 \advance\dimen1-.5\wd\proofbelow
\dimen4=\dimen1
\advance\dimen1\proofbelowshift \advance\dimen4-\proofbelowshift
%
\ifdim  \dimen1<0pt
\then   \advance\shortenproofleft\dimen1
        \advance\dimen0-\dimen1
        \dimen1=0pt
        \ifdim  \shortenproofleft<0pt
        \then   \setbox\proofabove=\hbox{%
                        \kern-\shortenproofleft\unhbox\proofabove}%
                \shortenproofleft=0pt
        \fi
\fi
%
\ifdim  \dimen4<0pt
\then   \advance\shortenproofright\dimen4
        \advance\dimen0-\dimen4
        \dimen4=0pt
\fi
%
\ifdim  \shortenproofright<\wd\proofrulename
\then   \shortenproofright=\wd\proofrulename
\fi
%
\dimen2=\shortenproofleft \advance\dimen2 by\dimen1
\dimen3=\shortenproofright\advance\dimen3 by\dimen4
%
\ifproofdots
\then
        \dimen6=\shortenproofleft \advance\dimen6 .5\dimen0
        \setbox1=\vbox to\proofdotseparation{\vss\hbox{$\cdot$}\vss}%
        \setbox0=\hbox{%
                \advance\dimen6-.5\wd1
                \kern\dimen6
                $\vcenter to\proofdotnumber\proofdotseparation
                        {\leaders\box1\vfill}$%
                \unhbox\proofrulename}%
\else   \dimen6=\fontdimen22\the\textfont2 
        \dimen7=\dimen6
        \advance\dimen6by.5\proofrulebreadth
        \advance\dimen7by-.5\proofrulebreadth
        \setbox0=\hbox{%
                \kern\shortenproofleft
                \ifdoubleproof
                \then   \hbox to\dimen0{%
                        $\mathsurround0pt\mathord=\mkern-6mu%
                        \cleaders\hbox{$\mkern-2mu=\mkern-2mu$}\hfill
                        \mkern-6mu\mathord=$}%
                \else   \vrule height\dimen6 depth-\dimen7 width\dimen0
                \fi
                \unhbox\proofrulename}%
        \ht0=\dimen6 \dp0=-\dimen7
\fi
%
\let\doll\relax
\ifwasinsideprooftree
\then   \let\VBOX\vbox
\else   \ifmmode\else$\let\doll=$\fi
        \let\VBOX\vcenter
\fi
\VBOX   {\baselineskip\proofrulebaseline \lineskip.2ex
        \expandafter\lineskiplimit\ifproofdots0ex\else-0.6ex\fi
        \hbox   spread\dimen5   {\hfi\unhbox\proofabove\hfi}%
        \hbox{\box0}%
        \hbox   {\kern\dimen2 \box\proofbelow}}\doll%
%
\global\dimen2=\dimen2
\global\dimen3=\dimen3
\egroup 
\ifonleftofproofrule
\then   \shortenproofleft=\dimen2
\fi
\shortenproofright=\dimen3
%
\onleftofproofrulefalse
\ifinsideprooftree
\then   \hskip.5em plus 1fil \penalty2
\fi
}
\newcommand{\emptyPremise}{\vphantom{\vdash}}
\newcommand{\indrulename}[1]{\textsf{\textup{#1}}}
\newcommand{\indrule}[3]{
\ensuremath{
\begin{array}{c}
  \prooftree #2
    \justifies #3
    \thickness=0.05em
    \using \indrulename{#1}
  \endprooftree
\end{array}}}
\definecolor{wenv_color}{rgb}{0.0, 0.0, 1.0}
\definecolor{tenv_color}{rgb}{1.0, 0.0, 0.0}
\definecolor{rewr_color}{rgb}{0.47, 0.27, 0.23}
\definecolor{term_color}{rgb}{0.0, 0.5, 0.5}
\definecolor{darkred}{rgb}{0.8, 0.0, 0.0}
\renewcommand{\theenumi}{\arabic{enumi}}
\renewcommand{\theenumii}{\arabic{enumii}}
\renewcommand{\theenumiii}{\arabic{enumiii}}
\renewcommand\p@enumii{\theenumi.}
\renewcommand\p@enumiii{\theenumi.\theenumii.}
\renewcommand\p@enumiv{\theenumi.\theenumii.\theenumiii.}
\newtheorem{dummythm}{dummythm}
\newtheorem{assumption}[dummythm]{Assumption}
\newcommand{\llem}[1]{\label{lemma:#1}}
\newcommand{\rlem}[1]{Lemma~\ref{lemma:#1}}
\newcommand{\ldef}[1]{\label{def:#1}}
\newcommand{\rdef}[1]{Definition~\ref{def:#1}}
\newcommand{\lprop}[1]{\label{prop:#1}}
\newcommand{\rprop}[1]{Proposition~\ref{prop:#1}}
\newcommand{\lthm}[1]{\label{thm:#1}}
\newcommand{\rthm}[1]{Theorem~\ref{thm:#1}}
\newcommand{\lremark}[1]{\label{remark:#1}}
\newcommand{\rremark}[1]{Remark~\ref{remark:#1}}
\newcommand{\lsec}[1]{\label{section:#1}}
\newcommand{\rsec}[1]{Section~\ref{section:#1}}
\newcommand{\lexample}[1]{\label{example:#1}}
\newcommand{\rexample}[1]{Example~\ref{example:#1}}
\newcommand{\HS}{\hspace{.5cm}}
\newcommand{\ST}{\ |\ }
\newcommand{\ie}{{\em i.e.}\xspace}
\newcommand{\eg}{{\em e.g.}\xspace}
\newcommand{\cf}{{\em cf.}\xspace}
\newcommand{\ih}{IH\xspace}
\newcommand{\set}[1]{\{#1\}}
\newcommand{\eqdef}{\,\mathrel{\overset{\mathrm{def}}{=}}\,}
\newcommand{\subt}[2]{\textcolor{red}{{\bm\{}}#1{\backslash}#2\textcolor{red}{{\bm\}}}}
\newcommand{\subtr}[2]{\textcolor{darkgreen}{{\bm\{}}#1{\backslash}\!\!{\backslash}#2\textcolor{darkgreen}{{\bm\}}}}
\newcommand{\subrr}[2]{\textcolor{orange}{{\bm\{}}#1{\backslash}\!\!{\backslash}\!\!{\backslash}#2\textcolor{orange}{{\bm\}}}}
\newcommand{\subm}[2]{\textcolor{blue}{{\bm\{}}#1\backslash#2\textcolor{blue}{{\bm\}}}}
\newcommand{\fv}[1]{\mathsf{fv}(#1)}
\newcommand{\dom}[1]{\mathsf{dom}(#1)}
\newcommand{\lam}[2]{\lambda#1.#2}
\newcommand{\constantset}{\mathcal{C}}
\newcommand{\ruleset}{\mathcal{R}}
\newcommand{\var}{x}
\newcommand{\vartwo}{y}
\newcommand{\varthree}{z}
\newcommand{\varfour}{w}
\newcommand{\fapp}{\mathbf{app}}
\newcommand{\flam}{\mathbf{lam}}
\newcommand{\fdup}{\mathbf{d}}
\newcommand{\fun}{\mathbf{f}}
\newcommand{\funtwo}{\mathbf{g}}
\newcommand{\cons}{\mathbf{c}}
\newcommand{\constwo}{\mathbf{d}}
\newcommand{\consthree}{\mathbf{e}}
\newcommand{\consfour}{\mathbf{f}}
\newcommand{\rulewit}{\varrho}
\newcommand{\rulewittwo}{\vartheta}
\newcommand{\btyp}{\alpha}
\newcommand{\btyptwo}{\beta}
\newcommand{\tm}{s}
\newcommand{\tmtwo}{t}
\newcommand{\tmthree}{r}
\newcommand{\tmfour}{p}
\newcommand{\tmfive}{q}
\newcommand{\redseq}{\rho}
\newcommand{\redseqtwo}{\sigma}
\newcommand{\redseqthree}{\tau}
\newcommand{\redseqfour}{\upsilon}
\newcommand{\redseqn}{\hat{\redseq}}
\newcommand{\redseqntwo}{\hat{\redseqtwo}}
\newcommand{\mstep}{\mu}
\newcommand{\msteptwo}{\nu}
\newcommand{\mstepthree}{\xi}
\newcommand{\mstepB}{\hat{\mu}}
\newcommand{\msteptwoB}{\hat{\nu}}
\newcommand{\mstepthreeB}{\hat{\xi}}
\newcommand{\mstepC}{\dot{\mu}}
\newcommand{\msteptwoC}{\dot{\nu}}
\newcommand{\mstepthreeC}{\dot{\xi}}
\newcommand{\mstepn}{\hat{\mstep}}
\newcommand{\mstepntwo}{\hat{\msteptwo}}
\newcommand{\typ}{A}
\newcommand{\typtwo}{B}
\newcommand{\typthree}{C}
\newcommand{\noenv}{\cdot}
\newcommand{\tenv}{\Gamma}
\newcommand{\tenvtwo}{\Gamma'}
\newcommand{\imp}{\to}
\newcommand{\refl}[1]{\underline{#1}}
\newcommand{\seq}{\mathrel{{\bm;}}}
\newcommand{\termeq}{=_{\beta\eta}}
\newcommand{\judgTerm}[3]{
  \textcolor{tenv_color}{#1}
  \vdash \textcolor{term_color}{#2} : #3
}
\newcommand{\judgTermEq}[4]{
  \textcolor{tenv_color}{#1}
  \vdash \textcolor{term_color}{#2} \termeq \textcolor{term_color}{#3} : #4
}
\newcommand{\rewto}{\rightarrowtriangle}
\newcommand{\rewr}[4]{#1 : #2 \rewto #3 : #4}
\newcommand{\judgRewr}[5]{
  \textcolor{tenv_color}{#1}
  \vdash
  \textcolor{rewr_color}{#2}
  :
  \textcolor{term_color}{#3}
  \rightarrowtriangle
  \textcolor{term_color}{#4}
  : #5
}
\newcommand{\judgRewrLabelled}[6]{
  \textcolor{tenv_color}{#1}
  \vdash
  \textcolor{rewr_color}{#2}
  :
  \textcolor{term_color}{#3}
  \rightarrowtriangle_{#6}
  \textcolor{term_color}{#4}
  : #5
}
\newcommand{\permeq}{\mathrel{\approx}}
\newcommand{\permeqBy}[1]{\approx^{(\footnotesize{\indrulename{#1}})}}
\newcommand{\permeqRule}[1]{\indrulename{$\permeq$-#1}}
\newcommand{\judgPermR}[8]{
  \textcolor{tenv_color}{#1}
  \vdash
  (\textcolor{rewr_color}{#2}
   : \textcolor{term_color}{#3} \rightarrowtriangle \textcolor{term_color}{#4})
  \permeq
  (\textcolor{rewr_color}{#5}
   : \textcolor{term_color}{#6} \rightarrowtriangle \textcolor{term_color}{#7})
  :
  #8
}
\newcommand{\flateq}{\mathrel{\sim}}
\newcommand{\flateqRule}[1]{\indrulename{$\flateq$-#1}}
\newcommand{\judgFlatPerm}[8]{
  \textcolor{tenv_color}{#1}
  \vdash
  (\textcolor{rewr_color}{#2}
   : \textcolor{term_color}{#3} \rightarrowtriangle \textcolor{term_color}{#4})
  \flateq
  (\textcolor{rewr_color}{#5}
   : \textcolor{term_color}{#6} \rightarrowtriangle \textcolor{term_color}{#7})
  :
  #8
}
\newcommand{\etalong}{\overline{\eta}}
\newcommand{\toetaexp}{\mathrel{\,\to_{\etalong}\,}}
\newcommand{\ctxhole}{\Box}
\newcommand{\ctxof}[1]{\langle#1\rangle}
\newcommand{\cctx}{\mathtt{C}}
\newcommand{\cctxof}[1]{\cctx\ctxof{#1}}
\newcommand{\sctx}{\mathtt{S}}
\newcommand{\sctxof}[1]{\sctx\ctxof{#1}}
\newcommand{\kctx}{\mathtt{K}}
\newcommand{\kctxB}{\tilde{\mathtt{K}}}
\newcommand{\kctxof}[1]{\kctx\ctxof{#1}}
\newcommand{\flatteningSystem}{\ensuremath{\mathcal{F}}}
\newcommand{\flatsymbol}{\flatteningSystem}
\newcommand{\tof}{\mathrel{\overset{\flatsymbol}{\mapsto}}}
\newcommand{\tofs}{\mathrel{\overset{\flatsymbol}{\mapsto}{\hspace{-.2em}}^{*}}}
\newcommand{\tofsSub}[1]{\mathrel{\overset{\flatsymbol}{\mapsto}{\hspace{-.2em}}^{*}_{#1}}}
\newcommand{\tofeqSub}[1]{\mathrel{\overset{\flatsymbol}{\mapsto}{\hspace{-.2em}}^{=}_{#1}}}
\newcommand{\tofsinv}{\mathrel{\overset{\flatsymbol}{\mapsfrom}{\hspace{-.2em}}^{*}}}
\newcommand{\tofnoeta}{\mathrel{\overset{\circ}{\mapsto}}}
\newcommand{\flatRule}[1]{\indrulename{$\mathcal{F}$-#1}}
\newcommand{\flatRuleAnon}{\mathtt{x}}
\newcommand{\flatRuleAnonTwo}{\mathtt{y}}
\newcommand{\flatten}[1]{#1^{\flatsymbol}}
\newcommand{\flattennoeta}[1]{#1^{\circ}}
\newcommand{\rsrc}[1]{#1^{\mathsf{src}}}
\newcommand{\rtgt}[1]{#1^{\mathsf{tgt}}}
\newcommand{\rheavy}[1]{\#_{\mathsf{h}}(#1)}
\newcommand{\rweight}[1]{\#_{\mathsf{w}}(#1)}
\newcommand{\altsrc}[1]{#1^{\mathsf{<\etalong}}}
\newcommand{\fsrc}[1]{#1^{\blacktriangleleft}}
\newcommand{\ftgt}[1]{#1^{\blacktriangleright}}
\newcommand{\fsrcb}[1]{\flatten{(\rsrc{#1})}}
\newcommand{\ftgtb}[1]{\flatten{(\rtgt{#1})}}
\newcommand{\judgSplit}[3]{#1 \Leftrightarrow #2 \seq #3}
\newcommand{\judgSplitF}[3]{#1 \overset{\flatsymbol}{\Leftrightarrow} #2 \seq #3}
\newcommand{\judgSplitFTable}[3]{
  \begin{array}{rrl}
    #1
    & \overset{\flatsymbol}{\Leftrightarrow}
    & #2
  \\
    & \seq
    & #3
  \end{array}
}
\newcommand{\judgProj}[3]{#1/\!\!/\!\!/#2\Rightarrow#3}
\newcommand{\judgCompat}[2]{#1 \uparrow #2}
\newcommand{\projmr}{\mathrel{/^1}}
\newcommand{\projrm}{\mathrel{/^2}}
\newcommand{\projrr}{\mathrel{/^3}}
\newcommand{\sz}[1]{|#1|}
\newcommand{\streq}{\mathrel{\asymp}}
\newcommand{\tostd}{\mathrel{\rhd}}
\newcommand{\tostdinv}{\mathrel{\lhd}}
\newcommand{\tostdBy}[1]{\mathrel{\rhd}^{(\footnotesize{\indrulename{#1}})}}
\newcommand{\stdrule}[1]{\indrulename{\ensuremath{\rhd}-{#1}}}
\newcommand{\len}[1]{\ell(#1)}
\newcommand{\join}{\sqcup}
\newcommand{\depth}[1]{\mathbf{D}(#1)}
\newcommand{\SR}[1]{\mathbf{SR}(#1)}
\newcommand{\srgt}{\succ}
\newcommand{\stdform}[1]{\mathsf{std}(\redseq)}
\newcommand{\resultName}[1]{\textsf{\textup{#1}}}
\newcommand{\hrsone}{\mathcal{R}}
\newcommand{\tohrs}[1]{\mathrel{\,\to_{#1}\,}}
\newcommand{\consof}[1]{\mathbf{#1}}
\newcommand{\varseq}[1]{\vec{#1}}
\newcommand{\NotNow}[1]{}
\newcommand{\proja}{{/\!\!/}}
\theoremstyle{plain}
\def\eg{{\em e.g.}}
\def\cf{{\em cf.}}
\begin{document}

\title[Reductions in Higher-Order Rewriting and Their Equivalence]{Reductions in Higher-Order Rewriting and Their Equivalence}
\thanks{Partially supported by project ECOS-Sud A17C01}	

\author[P.~Barenbaum]{Pablo Barenbaum}[a,b]
\author[E.~Bonelli]{Eduardo Bonelli}[c]

\address{Universidad Nacional de Quilmes (CONICET), Argentina} 
\email{pbarenbaum@dc.uba.ar} 
\address{Universidad de Buenos Aires, Argentina} 

\address{Stevens Institute of Technology, USA}	
\email{ebonelli@stevens.edu}  





\begin{abstract}
  Proof terms are syntactic expressions that represent computations in
  term rewriting.  They were introduced by Meseguer and exploited by van Oostrom
  and de Vrijer
  to study {\em equivalence of reductions} in (left-linear)
  first-order term rewriting systems.  We study the problem of
  extending the notion of proof term to {\em
    higher-order rewriting}, which generalizes the first-order setting
  by allowing terms with binders and higher-order substitution.  In
  previous works that devise proof terms for higher-order rewriting,
  such as Bruggink's, it has been noted that the challenge lies in
  reconciling composition of proof terms and higher-order substitution
  ($\beta$-equivalence).  This led Bruggink to reject ``nested''
  composition, other than at the outermost level.  In this paper, we
  propose a notion of higher-order proof term we dub \emph{rewrites} that supports nested
  composition. We then define {\em two} notions of equivalence on rewrites, namely  
  {\em permutation equivalence} and {\em projection equivalence}, and show that they coincide.
  We also propose a {\em standardization} procedure, that computes a canonical
  representative of the permutation equivalence class of a rewrite.
\end{abstract}

\maketitle

\section{Introduction}

Term rewriting systems model computation as sequences of steps between terms,
{\em reduction sequences}, where steps are instances of term rewriting
rules~\cite{Terese03}. It is natural to consider reduction sequences up to
swapping of orthogonal steps since such reductions perform the ``same work''.
The ensuing notion of equivalence is called {\em permutation equivalence} and
was first studied by L\'evy~\cite{Levy1978} in the setting of the
$\lambda$-calculus but has appeared in other guises connected with
concurrency~\cite[Rem.8.1.1]{Terese03}. 
As an example, consider the rewrite rule
$\cons(\var,\consfour(\vartwo)) \rewto \constwo(\var,\var)$
 and the following reduction sequence 
where, in each step, the contracted redex is underlined:
  \begin{equation}
  \underline{\cons(\cons(\varthree,\consfour(\varthree)),\consfour(\varthree))}
  \rewto
  \constwo(\underline{\cons(\varthree,\consfour(\varthree))},
           \cons(\varthree,\consfour(\varthree)))
  \rewto
  \constwo(\constwo(\varthree,\varthree),
           \underline{\cons(\varthree,\consfour(\varthree))})
  \rewto
  \constwo(\constwo(\varthree,\varthree),\constwo(\varthree,\varthree))
  \label{eq:red}
  \end{equation}
Performing the innermost redex first, rather than the outermost one, leads to:
  \begin{equation}
  \cons(\underline{\cons(\varthree,\consfour(\varthree))},\consfour(\varthree))
  \rewto
  \underline{\cons(\constwo(\varthree,\varthree),
           \consfour(\varthree))}
  \rewto
  \constwo(\constwo(\varthree,\varthree),\constwo(\varthree,\varthree))
  \label{eq:red:ii}
  \end{equation}
  The first step in~(\ref{eq:red}) makes two copies of the innermost redex. It is the two steps contracting these copies that are swapped with the first one in~(\ref{eq:red}) to produce~(\ref{eq:red:ii}).
  Consider also the following reduction sequence:
  \begin{equation}
  \underline{\cons(\varthree,\consfour(\cons(\varthree,\consfour(\varthree))))}
  \rewto
  \constwo(\varthree,\varthree)
  \label{eq:red:iii}
  \end{equation}
Performing the innermost redex first, rather than the outermost one, leads to:
  \begin{equation}
  \cons(\varthree,\consfour(\underline{\cons(\varthree,\consfour(\varthree))}))
  \rewto
  \underline{\cons(\varthree,
           \consfour(\constwo(\varthree,\varthree)))}
  \rewto
  \constwo(\varthree,\varthree)
  \label{eq:red:iv}
  \end{equation}
  The step in~(\ref{eq:red:iii}) erases the innermost redex. The two steps of~(\ref{eq:red:iv}) can be swapped to produce~(\ref{eq:red:iii}).
  Such duplication and erasure contribute most of the complications behind permutation equivalence, both in its formulation and the study of its properties.

\paragraph*{Proof Terms}
One natural way to represent computations is by means of so-called \emph{proof terms}.
They were introduced by Meseguer as a means of representing proofs in Rewriting
Logic~\cite{DBLP:journals/tcs/Meseguer92} and exploited by van Oostrom and
de Vrijer in the setting of first-order left-linear rewriting systems,
to study equivalence of reductions
in~\cite{DBLP:journals/entcs/OostromV02} and~\cite[Chapter 9]{Terese03}.
Each rewrite rules is assigned a {\em rule symbol} denoting the application of that rule. Proof terms are expressions built using
function symbols, rule symbols, and a binary operator ``$\seq$''
denoting sequential composition of proof terms.  
Assuming
that we choose the symbol ``$\rulewit$'' for our rewrite rule,
in such a way that $\rulewit(\var,\vartwo)$ witnesses the step
$\cons(\var,\consfour(\vartwo)) \rewto \constwo(\var,\var)$,
reduction~(\ref{eq:red}) may be represented
as the proof term:
$
       \rulewit(\cons(\varthree,\consfour(\varthree)),\varthree)
  \seq \constwo(\rulewit(\varthree,\varthree),\cons(\varthree,\consfour(\varthree)))
  \seq \constwo(\constwo(\varthree,\varthree),\rulewit(\varthree,\varthree))
$
and reduction~(\ref{eq:red:ii}) as the proof term:
$
\cons(\rulewit(\varthree,\varthree),\consfour(\varthree))
\seq\rulewit(\constwo(\varthree,\varthree),\varthree)
$.
One notable feature of proof terms is that they support parallel steps. For instance, both proof terms above are permutation equivalent
to
$\rulewit(\cons(\varthree,\consfour(\varthree)),\varthree)
  \seq \constwo(\rulewit(\varthree,\varthree),\rulewit(\varthree,\varthree))$,
which performs the two last steps in parallel,
as well as to
$\rulewit(\rulewit(\varthree,\varthree),\varthree)$,
which performs all steps simultaneously.
Using this representation, permutation equivalence
can be studied in terms of equational theories on proof terms.

\paragraph*{Equivalence of Reductions via Proof Terms for First-Order Rewriting.}
In~\cite{DBLP:journals/entcs/OostromV02},
van Oostrom and de Vrijer characterize permutation equivalence
of proof terms in four alternative ways.
First, they formulate an equational theory
of permutation equivalence $\redseq \permeq \redseqtwo$
between proof terms, such that for example
$
\rulewit(\cons(\varthree,\consfour(\varthree)),\varthree)
\seq\constwo(\rulewit(\varthree,\varthree),\rulewit(\varthree,\varthree))
\permeq \rulewit(\rulewit(\varthree,\varthree),\varthree)$ holds.
These equations account for the behavior of proof term composition,
which has a monoidal structure, in the sense that composition is associative
and {\em empty} steps act as identities. 
Second, they define an operation of {\em projection} $\redseq/\redseqtwo$,
denoting the computational work that is left of $\redseq$ after
$\redseqtwo$.
For example,
$\cons(\rulewit(\varthree,\varthree),\consfour(\varthree))
/\rulewit(\cons(\varthree,\consfour(\varthree)),\varthree)
= \constwo(\rulewit(\varthree,\varthree),\rulewit(\varthree,\varthree))$.
This induces a notion of {\em projection equivalence}
between proof terms $\redseq$ and $\redseqtwo$,
declared to hold when both $\redseq/\redseqtwo$ and $\redseqtwo/\redseq$
are empty, \ie they contain no rule symbols.
Third, they define a {\em standardization procedure} to reorder the
steps of a reduction in outside-in order, mapping each proof term $\redseq$
to a proof term $\redseq^*$ in {\em standard form}.
For example, the (parallel) standard form of 
$\cons(\rulewit(\varthree,\varthree),\consfour(\varthree))
\seq\rulewit(\constwo(\varthree,\varthree),\varthree)$
is
$\rulewit(\cons(\varthree,\consfour(\varthree)),\varthree)
  \seq \constwo(\rulewit(\varthree,\varthree),\rulewit(\varthree,\varthree))$.
This induces a notion of {\em standardization equivalence}
between proof terms $\redseq$ and $\redseqtwo$,
declared to hold when $\redseq^* = \redseqtwo^*$.
Fourth, they define a notion of {\em labelling equivalence},
based on lifting computational steps to labelled terms.
Although these notions of equivalence were known prior to~\cite{DBLP:journals/entcs/OostromV02},
the main 
contribution of that paper is to show that they can be systematically studied using proof terms
 and, moreover, shown to coincide.

\paragraph*{Higher-Order Rewriting.}
Higher-order term rewriting (HOR) generalizes first-order term rewriting
by allowing binders.
Function symbols
are generalized to constants of any given simple type,
and first-order terms
are generalized to simply-typed $\lambda$-terms,
including constants and up to $\beta\eta$-equivalence.
The paradigmatic example of a higher-order rewriting system is the
$\lambda$-calculus.
It includes a base type $\iota$
and two constants $\fapp : \iota \to \iota \to \iota$
and $\flam : (\iota \to \iota) \to \iota$; $\beta$-reduction
may be expressed as
the higher-order rewrite rule
$
  \fapp\,(\flam\,(\lam{\varthree}{\var\, \varthree}))\,\vartwo
          \rewto
          \var\,\vartwo
$. 
A sample reduction sequence is:
{\small
\begin{equation}
  \mathbf{lam}(\lambda
  v.\mathbf{app}(\mathbf{lam}(\lambda
  x. x), \underline{\mathbf{app}(\mathbf{lam}(\lambda
  w. w),v)}))
  \rewto
  \mathbf{lam}(\lambda
  v. \underline{\mathbf{app}(\mathbf{lam}(\lambda x. x),v)})
  \rewto
  \mathbf{lam}(\lambda v.v)
  \label{eq:red:ho}
\end{equation}}%
Generalizing proof terms to the setting of higher-order rewriting is a natural
goal. Just like in the first-order case, we assign rule symbols to rewrite rules. One would then expect to obtain proof terms by adding these rule symbols and the ``$\seq$'' composition operator to the simply typed $\lambda$-calculus.  If we assume the following rule symbol for our rewrite rule $
  \rulewit\, \var\,\vartwo: \fapp\,(\flam\,(\lam{\varthree}{\var\, \varthree}))\,\vartwo
          \rewto
          \var\,\vartwo
$, then an example of a higher-order proof term for (\ref{eq:red:ho}) is:
\[
  {\small
    \mathbf{lam}\bigg(\lambda v.\big(\mathbf{app}(\mathbf{lam}(\lambda x.x), \rulewit\, (\lambda w.w)\,v)\ ;\ \rulewit\, (\lambda u.u)\,v\big)\bigg)    
  }
\]

However, higher-order substitution and proof term composition seem not to be in
consonance, an issue already observed by Bruggink~\cite{thesis:bruggink:08}.
Consider a variable $\var$. This variable itself
denotes an empty computation $\var \rewto \var$,
so the composition $(\var\seq\var)$
also denotes an empty computation $\var \rewto \var$.
If $\redseqtwo$ is an arbitrary proof term representing a computation $\tm \rewto \tmtwo$,
the proof term $(\lam{\var}{(\var\seq\var)})\,\redseqtwo$
should, in principle, represent a computation
$(\lam{\var}{\var})\,\tm \rewto (\lam{\var}{\var})\,\tmtwo$.
This is the same as $\tm \rewto \tmtwo$,
because terms are regarded up to $\beta\eta$-equivalence.
The challenge lies in lifting $\beta\eta$-equivalence to the level of
proof terms:
if $\beta$-reduction is naively extended to operate on proof terms,
the well-formed proof term $(\lam{\var}{(\var\seq\var)})\,\redseqtwo$
becomes equal to $(\redseqtwo\seq\redseqtwo)$,
which is ill-formed because $\redseqtwo$ is not composable with itself
if $\tm\not=_{\beta\eta}\tmtwo$.
Rather than simply disallowing the use of ``$\seq$'' under applications and
abstractions (the route taken in~\cite{thesis:bruggink:08}), our aim is to integrate it with
$\beta\eta$-reduction.

\paragraph*{Contribution.}
We propose a syntax for higher-order proof terms,
  called \emph{rewrites}, that includes  $\beta\eta$-equivalence
  and allows them to be freely composed.
  We then define a relation $\redseq \permeq \redseqtwo$
  of \emph{permutation equivalence} between rewrites, the central notion of our work.
  The issue mentioned above is avoided by {\em disallowing}
  the ill-behaved substitution
  of a rewrite in a rewrite ``$\redseq\{\var\backslash\redseqtwo\}$'',
  and by only allowing notions of substitution
  of a term in a rewrite $\redseq\subt{\var}{\tm}$,
  and of a rewrite in a term $\tm\subtr{\var}{\redseq}$.
  From these, a well-behaved
  notion of substitution of a rewrite in a rewrite
  $\redseq\subrr{\var}{\redseqtwo}$
  can be shown to be \emph{derivable}.
  We also define a notion of \emph{projection} $\redseq\proja\redseqtwo$.
  The induced notion of
  \emph{projection equivalence coincides with permutation equivalence},
  in the sense that
  $\redseq \permeq \redseqtwo$ iff
   $\redseq\proja\redseqtwo \permeq \rtgt{\redseqtwo}$
    and 
    $\redseqtwo\proja\redseq \permeq \rtgt{\redseq}$,
    where $\rtgt{\redseq}$ stands for the {\em target} term of $\redseq$.
  The equivalence is established by means of \emph{flattening},
  a method to convert an arbitrary rewrite $\redseq$
  into a ({\em flat}) representative $\flatten{\redseq}$
  that only uses the composition operator ``$\seq$'' at the top level, and
  a notion of \emph{flat permutation equivalence} $\redseq \flateq \redseqtwo$.
  Flattening is achieved by means of a rewriting system
  whose objects are themselves rewrites.
  This system is shown to be confluent and strongly normalizing. We also show   that
  \emph{permutation equivalence is sound and complete
  with respect to flat permutation equivalence}
  in the sense that $\redseq \permeq \redseqtwo$
  if and only if $\flatten{\redseq} \flateq \flatten{\redseqtwo}$.
  Finally, we present a notion of parallel standardization
  that can be used to effectively
  find a canonical representative $\redseq^*$ of the permutation equivalence class
  of a given rewrite $\redseq$, under appropriate conditions.
  
\paragraph*{Structure of the Paper.}
  In~\rsec{hor}
  we review Nipkow's Higher-Order Rewriting Systems.
  \rsec{rewrites} proposes our notion of rewrite and \rsec{permutation_equivalence}
  introduces permutation equivalence for them.
  Flattening is presented in~\rsec{flat_permutation_equivalence}.
  In this section, we also formulate an equational theory
  defining the relation $\redseq \flateq \redseqtwo$
  of flat permutation equivalence between flat rewrites.
  It relies crucially on a ternary relation
  between {\em multisteps}, called {\em splitting}
  and written $\judgSplit{\mstep}{\mstep_1}{\mstep_2}$,
  meaning that $\mstep$ and $\mstep_1\seq\mstep_2$
  perform the same computational work.
  In~\rsec{projection}
  we first define a projection operator for flat rewrites $\redseq/\redseqtwo$,
  and we lift it to a projection operator for arbitrary rewrites
  $\redseq\proja\redseqtwo \eqdef \flatten{\redseq}/\flatten{\redseqtwo}$.
  Then we show that
  the induced notion of projection equivalence coincides with permutation equivalence.
  \rsec{standardization} presents some preliminary results on parallel standardization
  of rewrites.
  Finally, we conclude and discuss related and future work.

\paragraph*{Note.} An extended abstract of this paper was published as~\cite{DBLP:conf/csl/BarenbaumB23}.
This report includes full proofs and a new section on Standardization (\rsec{standardization}).


\section{Higher-Order Rewriting}
\lsec{hor}

\newcommand{\etanf}[1]{\ensuremath{{#1}\downarrow^{\eta}}}
\newcommand{\etaexp}[1]{\ensuremath{{#1}\uparrow^{\eta}}}
\newcommand{\betanf}[1]{\ensuremath{{#1}\downarrow^{\beta}}}
\newcommand{\betaetanf}[1]{\ensuremath{{#1}\updownarrow^{\eta}_{\beta}}}
\newcommand{\seqt}[1]{\overline{#1}}
\newcommand{\lamseq}[2]{\ensuremath{\lambda \seqt{#1}.{#2}}}
\newcommand{\subst}{\theta}

There are various approaches to HOR in the literature, 
including Klop's Combinatory Reduction Systems (CRSs)~\cite{Klop1980}
and Nipkow's Higher-Order Rewriting Systems
(HRSs)~\cite{nipkow1991higher,MayrNipkow98}. We consider HRSs in this paper.
Their use of the simply-typed lambda calculus for representing terms and
substitution provides a suitable starting point for modeling our rewrites.

Assume given a denumerably infinite set of
{\em variables} ($\var,\vartwo,\hdots$),
{\em base types} ($\btyp,\btyptwo,\hdots$), and
{\em constant symbols} ($\cons,\constwo,\hdots$).
The sets of {\em terms} ($\tm,\tmtwo,\hdots$) and
{\em types} ($\typ,\typtwo,\hdots$) 
are given by:
\[
  \begin{array}{lrl@{\hspace{1cm}}lrl}
    \tm & ::=  & \var \,\mid\,\cons\,\mid\,\lam{\var}{\tm}\,\mid\,\tm\,\tm
  &
    \typ & ::=  & \btyp \,\mid\,\typ \imp \typ
  \\
  \end{array}
\]
A term can either be a variable, a constant, an abstraction or an
application. A type can either be a base type or an arrow type. We write $\fv{\tm}$ for the free variables of $\tm$.
We use $\seqt{X_n}$, or sometimes just $\seqt{X}$ if $n$ is clear
from the context,
to denote a sequence $X_1,\ldots,X_n$.
Following standard conventions, $\tm\,\seqt{\tmtwo_n}$ stands for the iterated application
$\tm\,\tmtwo_1\hdots\tmtwo_n$,
and $\seqt{\typ_n}\to\typtwo$ for the type $\typ_1\to\hdots\typ_n\to\typtwo$.
We write $\tm\subt{\var}{\tmtwo}$ for the capture-avoiding
substitution of all free occurrences of $\var$ in $\tm$ with $\tmtwo$ and call it a \emph{term/term substitution}.
As usual, terms are defined modulo $\alpha$-conversion, \ie we identify terms that differ only in the names of their bound variables.   A {\em typing context} ($\tenv,\tenvtwo,\hdots$)  is a partial function from variables to types. We write $\dom{\tenv}$ for the {\em domain} of $\tenv$. Given a typing context $\tenv$ and $\var\notin\dom{\tenv}$, we write $\tenv,\var:\typ$ for the typing context such that $(\tenv,\var:\typ)(\var)= \typ$, and $(\tenv,\var:\typ)(\vartwo)=\tenv(\vartwo)$ whenever $\vartwo\neq \var$. We write a centered dot 
$\noenv$ for the empty typing context and
$\var \in \tenv$ if $\var\in\dom{\tenv}$.
A {\em signature} of a HRS is a set $\constantset$ of typed constants $\cons:\typ$. A sample signature is $\constantset=\{\fapp : \iota \to \iota \to \iota,
\flam : (\iota \to \iota) \to \iota\}$ for $\iota$ a base type.

\begin{defi}[Type system for terms]
\ldef{typing_terms}
Terms are typed using the usual typing rules of the simply-typed $\lambda$-calculus:
\[
  \indrule{Var}{
    (\var:\typ) \in \tenv
  }{
    \judgTerm{\tenv}{\var}{\typ}
  }
  \indrule{Con}{
    (\cons:\typ) \in \constantset
  }{
    \judgTerm{\tenv}{\cons}{\typ}
  }
  \indrule{Abs}{
    \judgTerm{\tenv, \var:\typ}{\tm}{\typtwo}
  }{
    \judgTerm{\tenv}{\lam{\var}{\tm}}{\typ \imp \typtwo}
  }
  \indrule{App}{
    \judgTerm{\tenv}{\tm}{\typ \imp \typtwo}
    \HS
    \judgTerm{\tenv}{\tmtwo}{\typ}
  }{
    \judgTerm{\tenv}{\tm\,\tmtwo}{\typtwo}
  }
\]
Given any $\tenv$ and $\typ$ such that $\judgTerm{\tenv}{\tm}{\typ}$ can be proved
using these rules, we say $\tm$ is a \emph{typed term} over $\constantset$.
We typically assume $\constantset$ to be globally fixed, leaving it as an implicit parameter of the type system.
\end{defi}

The notions
of $\beta$ and $\eta$-reduction between terms are defined as usual.
Recall that $\beta$-reduction and $\eta$-reduction are confluent and terminating on typed terms.
We write $\betanf{\tm}$ (resp. $\etanf{\tm}$) for the unique $\beta$-normal form (resp. $\eta$-normal form) of $\tm$. The $\beta$-normal form of a term $\tm$ has the form $\lamseq{x_k}{a\,\tmtwo_1\ldots\tmtwo_m}$, for $a$ either a constant or a variable. The $\eta$-expanded form of $\tm$ is defined as:
\begin{center}
  $\begin{array}{rcl}
   \etaexp{\tm} & \eqdef & \lamseq{x_{n+k}}{a\,(\etaexp{\seqt{\tmtwo_m}})\,(\etaexp{\var_{n+1}})\ldots(\etaexp{\var_{n+k}})}
   \end{array}$
   \end{center}
where $\tm$ is assumed to have type $\seqt{\typ_{n+k}}\imp \typtwo$ and the $x_{n+1},\ldots,x_{n+k}$ are fresh. We use $\betaetanf{\tm}$ to denote the term $\etaexp{\betanf{\tm}}$ and call it the $\beta\etalong$-normal form of $\tm$.

A {\em substitution} $\subst$ is a function mapping variables to typed terms
such that $\subst(\var) \neq \var$ only for finitely many $\var$.
The {\em domain} of a substitution is defined as
$\dom{\subst}= \set{\var \ST \subst(\var) \neq \var}$.
The application of a substitution $\subst = \{\var_1\mapsto\tm_1,\ldots,\var_n\mapsto\tm_n\}$
to a term $\tmtwo$ is defined as $\subst\,\tmtwo \eqdef \betaetanf{((\lamseq{\var_n}{\tmtwo})\,\seqt{\tm_n})}$.

\begin{defi}
A {\em pattern} is a typed term  in $\beta$-normal form
such that all free occurrences of a variable $\var_i$ are in a subterm of the 
  form $\var_i\,\tmtwo_1\hdots\tmtwo_k$
  with $\tmtwo_1,\hdots,\tmtwo_k$ $\eta$-equivalent to 
  distinct bound variables.
A \emph{rewriting rule} is a pair
$\langle \ell,r\rangle$
of typed terms in $\beta\etalong$-normal form of the same base type with $\ell$
a pattern not $\eta$-equivalent to a variable and $\fv{r}\subseteq\fv{\ell}$. An HRS is a pair consisting of a signature and a set of rewriting rules over that signature. We typically omit the signature.
\end{defi}

\begin{exa}
\lexample{hrs:mu_plus_f_rules}
Consider a base type $\iota$ and typed constants
$\consof{mu}:(\iota\imp\iota)\imp\iota$ and $\consof{f}:\iota\imp\iota$. Two sample rewriting rules are:
$
      \langle \consof{mu}(\lam{\vartwo}{\var\,\vartwo}),
      \var\,(\consof{mu}(\lam{\vartwo}{\var\,\vartwo}))
      \rangle
$
and
$
\langle
\consof{f}\,\var,
\consof{g}\,\var
\rangle
      $. All four terms have base type $\iota$.   
    \end{exa}
    
\begin{defi}
  The rewrite relation $\tohrs{\hrsone}$ for an HRS $\hrsone$ is the relation over typed terms in $\beta\etalong$-normal form defined as follows:
  \[
  \indrule{Root}{
    \langle \ell,r\rangle \in\hrsone
  }{
    \subst\,\ell \tohrs{\hrsone} \subst\,r
  }
  \indrule{App}{
        \tm \tohrs{\hrsone} \tmtwo
  }{
    a\,\seqt{\tmthree_m}\,\tm\,\seqt{\tmfour_n} \tohrs{\hrsone} a\,\seqt{\tmthree_m}\,\tmtwo\,\seqt{\tmfour_n}
  }
  \indrule{Abs}{
    \tm \tohrs{\hrsone} \tmtwo
  }{
    \lam{\var}{\tm} \tohrs{\hrsone} \lam{\var}{\tmtwo}
  }
\]
where $a$ is either a constant or a variable of type $\seqt{\typ_{m+1+n}}\imp \typtwo$.
We write $\stackrel{*}{\rightarrow_{\hrsone}}$ (resp. $ \stackrel{*}{\leftrightarrow_{\hrsone}}$) for the reflexive, transitive  (resp.  reflexive, symmetric and transitive) closure of $\tohrs{\hrsone}$.
\end{defi}

\begin{exa}
\lexample{hrs:mu_plus_f}
An example of a sequence of rewrite steps in the HRS of \rexample{hrs:mu_plus_f_rules} is
      $\consof{mu}\,(\lam{\var}{\consof{f}\,\var})
      \tohrs{\hrsone}
      \consof{f}\,(\consof{mu}\,(\lam{\var}{\consof{f}\,\var}))
      \tohrs{\hrsone}
      \consof{f}\,(\consof{mu}\,(\lam{\var}{\consof{g}\,\var}))
      \tohrs{\hrsone}
      \consof{g}\,(\consof{mu}\,(\lam{\var}{\consof{g}\,\var}))$.      
\end{exa}

Recall that an HRS is {\em orthogonal} if:
\begin{enumerate}
\item 
  The rules are {\em left-linear}, \ie if the left-hand side
  $\ell$ has $\fv{\ell}=\{\var_1,\ldots,\var_n\}$,
  then 
  there is {\em exactly} one free occurrence of $\var_i$ in $\ell$,  for each $1 \leq i \leq n$.
\item 
  There are {\em no critical pairs}, as defined for example in~\cite[Def.~4.1]{nipkow1991higher}.
  \end{enumerate}
Orthogonal HRSs are deterministic in the sense that their rewrite relation is
confluent~\cite{nipkow1991higher}.
All of the examples of HRSs presented above are orthogonal. In the
sequel of this paper, we assume given a fixed, orthogonal HRS $\hrsone$.


\section{Rewrites}
\lsec{rewrites}
\newcommand{\aneqn}[2]{{#1}\doteq {#2}}
\newcommand{\eqt}[3]{{#1}\doteq_{#3} {#2}}
\newcommand{\judgRewrE}[6]{
  \textcolor{tenv_color}{#1}
  \vdash
  \textcolor{term_color}{#3}
  \doteq_{#6}
  \textcolor{term_color}{#4}
  : #5
}

\newcommand{\judgRewrL}[5]{
  \textcolor{tenv_color}{#1}
  \vdash
  \textcolor{term_color}{#3}
  \rightarrowtriangle
  \textcolor{term_color}{#4}
  : #5
}

\newcommand{\closed}[1]{{#1}^c}

In this section we propose a syntax for higher-order proof terms,
called \emph{rewrites}\footnote{Our notion of rewrite is unrelated to that of Definition~2.4 in~\cite{DBLP:journals/tcs/Meseguer92}; it corresponds to ``proof terms'' as introduced in Section~3.1  in~\cite{DBLP:journals/tcs/Meseguer92}.}. Rewrites for an HRS $\hrsone$ are a means for denoting proofs in Higher-Order Rewriting Logic (HORL, \cf~\rdef{body:typing_terms_and_rewrites}) which, in turn, correspond to reduction sequences in $\hrsone$ (\cf~Theorem.~\ref{thm:rewrites:and:HORL}). As in the first-order case~\cite{DBLP:journals/tcs/Meseguer92}, HORL is simply the equational theory that results from an HRS but disregarding symmetry.
Given an HRS $\hrsone$, let $\closed{\ruleset}$ denote the set of pairs $\langle\lam{\seqt{\var_n}}{\ell}, \lam{\seqt{\var_n}}{r}\rangle$ such that $\langle \ell,r \rangle\in\ruleset$ and $\{\var_1,\ldots,\var_n\}=\fv{\ell}$. We begin by recalling the definition of equational logic (\cf~\rdef{body:horeql}), the equational theory induced by an HRS. It is essentially that of~\cite[Def.~3.11]{MayrNipkow98}, except that in the inference rule $\indrulename{ERule}$ we use $\closed{\ruleset}$ rather than $\ruleset$. This equivalent formulation will be convenient when introducing rewrites since free variables in the LHS of a rewrite rule will be reflected in the rewrite too.

\begin{defi}[Equational Logic]
\ldef{body:horeql}
An HRS $\hrsone$ induces a congruence
$\eqt{}{}{\hrsone}$ on terms defined by the following rules:
{\small
\[
  \begin{array}{c}
  \indrule{EBeta}{
    \judgTerm{\tenv,\var:\typ}{\tm}{\typtwo}
    \HS
    \judgTerm{\tenv}{\tmtwo}{\typ}
  }{
    \judgRewrE{\tenv}{}{(\lam{\var}{\tm})\,\tmtwo}{\tm\subt{\var}{\tmtwo}}{\typtwo}{\hrsone}
  }
  \indrule{EEta}{
    \judgTerm{\tenv,\var:\typ}{\tm}{\typtwo}
    \HS
    \var\notin\fv{\tm}
  }{
    \judgRewrE{\tenv}{}{\lam{\var}{\tm\,\var}}{\tm}{\typtwo}{\hrsone}
  }
  \\
  \\
  \indrule{EVar}{
    (\var:\typ) \in \tenv
  }{
    \judgRewrE{\tenv}{\var}{\var}{\var}{\typ}{\hrsone}
  }
  \indrule{ECon}{
    (\cons:\typ) \in \constantset
  }{
    \judgRewrE{\tenv}{\cons}{\cons}{\cons}{\typ}{\hrsone}
  }
  \\
  \\
  \indrule{ESymm}{
    \judgRewrE{\tenv}{
      \redseq
    }{
      \tm_0
    }{
      \tm_1
    }{
      \typ
    }{\hrsone}
  }{
    \judgRewrE{\tenv}{
      \redseq\seq\redseqtwo
    }{
      \tm_1
    }{
      \tm_0
    }{
      \typ
    }{\hrsone}
     }
  \HS
  \indrule{ETrans}{
    \judgRewrE{\tenv}{
      \redseq
    }{
      \tm_0
    }{
      \tm_1
    }{
      \typ
    }{\hrsone}
    \HS
    \judgRewrE{\tenv}{
      \redseqtwo
    }{
      \tm_1
    }{
      \tm_2
    }{
      \typ
    }{\hrsone}
  }{
    \judgRewrE{\tenv}{
      \redseq\seq\redseqtwo
    }{
      \tm_0
    }{
      \tm_2
    }{
      \typ
    }{\hrsone}
  }
  \\
  \\
  \indrule{EAbs}{
    \judgRewrE{\tenv,\var:\typ}{
      \redseq
    }{
      \tm_0
    }{
      \tm_1
    }{\typtwo}{\hrsone}
  }{
    \judgRewrE{\tenv}{
      \lam{\var}{\redseq}
    }{
      \lam{\var}{\tm_0}
    }{
      \lam{\var}{\tm_1}
    }{\typ \imp \typtwo}{\hrsone}
     }
  \indrule{EApp}{
    \judgRewrE{\tenv}{
      \redseq
    }{
      \tm_0
    }{
      \tm_1
    }{\typ \imp \typtwo}{\hrsone}
    \HS
    \judgRewrE{\tenv}{
      \redseqtwo
    }{
      \tmtwo_0
    }{
      \tmtwo_1
    }{\typ}{\hrsone}
  }{
    \judgRewrE{\tenv}{
      \redseq\,\redseqtwo
    }{
      \tm_0\,\tmtwo_0
    }{
      \tm_1\,\tmtwo_1
    }{\typtwo}{\hrsone}
  }
     \\
     \\
  \indrule{ERule}{
    \langle s,t\rangle\in\closed{\ruleset}
    \HS
    \judgTerm{\noenv}{s}{\typ}
    \HS
    \judgTerm{\noenv}{t}{\typ}
  }{
    \judgRewrE{\tenv}{\rulewit}{s}{t}{\typ}{\hrsone}
  }
   \end{array}
\]
}
\end{defi}
There is no explicit reflexivity rule saying that
$\tm \eqt{}{}{\hrsone} \tm$. But note that this is easy to prove
by induction on $\tm$.

\begin{thm}[Thm.~3.12 in~\cite{MayrNipkow98}]
   \label{thm:HOLandHOR}
   $\judgRewrE{\tenv}{
    }{
      \tm
    }{
      \tmtwo
    }{
      \typ
    }{\hrsone}$ iff $\betaetanf{\tm} \stackrel{*}{\leftrightarrow_{\hrsone}}\betaetanf{\tmtwo}$.
   \end{thm}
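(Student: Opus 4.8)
The plan is to prove the two implications separately. Throughout, write $\doteq_{\hrsone}$ for the congruence of \rdef{body:horeql} and $\leftrightarrow^*_{\hrsone}$ for the reflexive-symmetric-transitive closure of $\to_{\hrsone}$ on $\beta\eta$-normal forms. A shared ingredient is that $\doteq_{\hrsone}$ subsumes full $\beta\eta$-equivalence: using the reflexivity property noted after \rdef{body:horeql} together with \indrulename{EBeta}, \indrulename{EEta}, \indrulename{EAbs}, \indrulename{EApp}, \indrulename{ESymm}, \indrulename{ETrans}, one shows $\tm \doteq_{\hrsone} \betaetanf{\tm}$ for every $\tm$. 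Dually, since $\to_{\hrsone}$ acts on $\beta\eta$-normal forms and $\beta\eta$-conversion preserves them (confluence and termination of $\beta\eta$), any two $\beta\eta$-convertible terms have the same $\betaetanf{\cdot}$.

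\emph{Soundness ($\Rightarrow$).} I would show that the relation $S = \{(\tm,\tmtwo) \mid \betaetanf{\tm} \leftrightarrow^*_{\hrsone} \betaetanf{\tmtwo}\}$ is closed under every inference rule of \rdef{body:horeql}; since $\doteq_{\hrsone}$ is the least such relation, rule induction gives $\doteq_{\hrsone} \subseteq S$. The cases \indrulename{EVar}, \indrulename{ECon}, \indrulename{EBeta}, \indrulename{EEta} hold because both sides share the same $\betaetanf{\cdot}$, while \indrulename{ESymm} and \indrulename{ETrans} are just symmetry and transitivity of $\leftrightarrow^*_{\hrsone}$. For \indrulename{ERule}, given $\langle \lam{\seqt{\var_n}}{\ell}, \lam{\seqt{\var_n}}{r}\rangle \in \closed{\ruleset}$ with $\ell,r$ already $\beta\eta$-normal and $\fv{\ell} = \{\var_1,\ldots,\var_n\}$, the identity substitution yields $\subst\,\ell = \ell$ and $\subst\,r = r$, so \indrulename{Root} gives $\ell \to_{\hrsone} r$ and $n$ applications of \indrulename{Abs} give $\betaetanf{\lam{\seqt{\var_n}}{\ell}} \to_{\hrsone} \betaetanf{\lam{\seqt{\var_n}}{r}}$. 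The genuine content lies in \indrulename{EAbs} and \indrulename{EApp}: they amount to proving that $\leftrightarrow^*_{\hrsone}$ is a congruence modulo renormalization, i.e. that $u \leftrightarrow^*_{\hrsone} u'$ implies $\betaetanf{\lam{\var}{u}} \leftrightarrow^*_{\hrsone} \betaetanf{\lam{\var}{u'}}$, and that $u \leftrightarrow^*_{\hrsone} u'$ and $w \leftrightarrow^*_{\hrsone} w'$ imply $\betaetanf{u\,w} \leftrightarrow^*_{\hrsone} \betaetanf{u'\,w'}$.

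By transitivity I would reduce these congruence facts to single steps, and then to one \textbf{substitution-stability lemma}: if $u \to_{\hrsone} u'$ between $\beta\eta$-normal forms and $\subst$ maps into $\beta\eta$-normal forms, then $\subst\,u \leftrightarrow^*_{\hrsone} \subst\,u'$. Indeed, writing a $\beta\eta$-long term of arrow type as $\lam{\var}{b}$, forming an application and renormalizing is precisely a substitution $b\subt{\var}{w}$, so the abstraction case is immediate and the application case follows by transporting the operator step through $b\subt{\var}{w}$ and the operand step through the (possibly several) occurrences of $\var$; duplication or erasure of the redex is harmless since we only need $\leftrightarrow^*_{\hrsone}$, not a single step. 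This lemma is the main obstacle: it is exactly where higher-order substitution and HRS rewriting must be reconciled, and its proof rests on the closure of $\to_{\hrsone}$ under substitution (guaranteed in Nipkow's setting by the pattern restriction, so that a redex matched by a pattern survives substitution of the metavariables and the subsequent renormalization). I would prove it by induction on the position of the contracted redex in $u$.

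\emph{Completeness ($\Leftarrow$).} Assuming $\betaetanf{\tm} \leftrightarrow^*_{\hrsone} \betaetanf{\tmtwo}$, I use $\tm \doteq_{\hrsone} \betaetanf{\tm}$ and $\tmtwo \doteq_{\hrsone} \betaetanf{\tmtwo}$ together with \indrulename{ESymm} and \indrulename{ETrans} to reduce the goal to proving $u \doteq_{\hrsone} v$ for every single step $u \to_{\hrsone} v$ between $\beta\eta$-normal forms. This goes by induction on the derivation of the step: for \indrulename{Root}, \indrulename{ERule} gives $\lam{\seqt{\var_n}}{\ell} \doteq_{\hrsone} \lam{\seqt{\var_n}}{r}$, and applying both sides to $\subst(\var_1),\ldots,\subst(\var_n)$ via \indrulename{EApp} (using reflexivity on the arguments), contracting with \indrulename{EBeta}, and renormalizing yields $\subst\,\ell \doteq_{\hrsone} \subst\,r$; the \indrulename{App} and \indrulename{Abs} cases follow immediately from the induction hypothesis via \indrulename{EApp} and \indrulename{EAbs}, again using reflexivity on the untouched subterms. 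This direction is routine once reflexivity and $\beta\eta \subseteq {\doteq_{\hrsone}}$ are in place; all the difficulty of the theorem is concentrated in the substitution-stability lemma of the soundness direction.
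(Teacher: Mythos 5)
Your proposal is correct and takes essentially the same approach as the paper's (very terse) proof: the $(\Leftarrow)$ direction by observing that $\beta\etalong$-conversion and $\stackrel{*}{\leftrightarrow_{\hrsone}}$ are included in $\doteq_{\hrsone}$, and the $(\Rightarrow)$ direction by induction on the derivation of the equational judgment. The substitution-stability lemma you isolate is exactly the closure-under-substitution property from Mayr--Nipkow's development that the paper's two-sentence sketch (deferring to their Theorem~3.12) leaves implicit.
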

   The $(\Leftarrow)$ direction follows from observing that $\rightarrow_{\beta,\overline{\eta}}$ and $ \stackrel{*}{\leftrightarrow_{\hrsone}}$ are all included in $\doteq_{\hrsone}$. The $(\Rightarrow)$ direction is by induction on the derivation  of $\judgRewrE{\tenv}{
    }{
      \tm
    }{
      \tmtwo
    }{
      \typ
    }{\hrsone}$.
   
   Higher-Order Rewriting Logic results from removing the $\indrulename{ESymm}$ rule
in~\rdef{body:horeql} and adding a proof witness. Its judgments take the form
$\judgRewrLabelled{\tenv}{\redseq}{\tm}{\tmtwo}{\typ}{\hrsone}$ where
$\redseq$ is an expression that witnesses
the proof of the judgment, called a {\em rewrite}. Given a set of {\em rule symbols}
($\rulewit,\rulewittwo,\hdots$), the set of {\em rewrites}
($\redseq,\redseqtwo,\hdots$) is defined by the grammar:
\begin{center}
$\begin{array}{rcl}
   \redseq ::= \var \,\mid\,\cons \,\mid\,\rulewit \,\mid\,\lam{\var}{\redseq}\,\mid\,\redseq\,\redseq \,\mid\,\redseq \seq \redseq
 \end{array}$
 \end{center}
A rewrite can either be a variable, a constant, a rule symbol,
an abstraction congruence, an application congruence, or a composition.
Note that composition may occur anywhere inside a rewrite\footnote{This is
one of the key differences between our approach and that of~\cite{thesis:bruggink:08}.}.
For the sake of clarity we present the full system for Higher-Order Rewriting Logic next. We assume given an HRS $\hrsone$ such that each rewrite rule $\langle \ell,r\rangle\in\ruleset$ has been assigned a unique rule symbol  $\rulewit$ and shall write $\langle \rulewit, \ell,r\rangle\in\ruleset$ and also use the same notation for $\closed{\ruleset}$. HORL consists of two forms of typing judgments: 
\begin{enumerate}
\item 
   $\judgTermEq{\tenv}{\tm}{\tmtwo}{\typ}$,
  meaning that
  $\tm$ and $\tmtwo$ are $\beta\eta$-equivalent
  terms of type $\typ$ under $\tenv$;
and
\item 
  $\judgRewrLabelled{\tenv}{\redseq}{\tm}{\tmtwo}{\typ}{\hrsone}$,
  meaning that $\redseq$ is a rewrite
  with source $\tm$ and target $\tmtwo$,
  which are terms of type $\typ$ under $\tenv$.
\end{enumerate}

\begin{defi}[Higher-Order Rewriting Logic]
\ldef{body:typing_terms_and_rewrites}
Term equivalence is defined as follows:
\[
{\small
  \begin{array}{c}
  \indrule{EqBeta}{
    \judgTerm{\tenv,\var:\typ}{\tm}{\typtwo}
    \HS
    \judgTerm{\tenv}{\tmtwo}{\typ}
  }{
    \judgTermEq{\tenv}{(\lam{\var}{\tm})\,\tmtwo}{\tm\subt{\var}{\tmtwo}}{\typtwo}
  }
  \indrule{EqEta}{
    \judgTerm{\tenv,\var:\typ}{\tm}{\typtwo}
    \HS
    \var\notin\fv{\tm}
  }{
    \judgTermEq{\tenv}{\lam{\var}{\tm\,\var}}{\tm}{\typtwo}
    }
    \\
    \\
 \indrule{EqRefl}{
    \judgTerm{\tenv}{\tm}{\typ}
  }{
    \judgTermEq{\tenv}{\tm}{\tm}{\typ}
  }
  \indrule{EqSym}{
    \judgTermEq{\tenv}{\tm}{\tmtwo}{\typ}
  }{
    \judgTermEq{\tenv}{\tmtwo}{\tm}{\typ}
  }
  \indrule{EqTrans}{
    \judgTermEq{\tenv}{\tm}{\tmtwo}{\typ}
    \HS
    \judgTermEq{\tenv}{\tmtwo}{\tmthree}{\typ}
  }{
    \judgTermEq{\tenv}{\tm}{\tmthree}{\typ}
  }
    \\
    \\
    \indrule{EqCongAbs}{
    \judgTermEq{\tenv,\var:\typ}{\tm_0}{\tm_1}{\typtwo}
  }{
    \judgTermEq{\tenv}{\lam{\var}{\tm_0}}{\lam{\var}{\tm_1}}{\typ \imp \typtwo}
  }
  \indrule{EqCongApp}{
    \judgTermEq{\tenv}{\tm_0}{\tm_1}{\typ \imp \typtwo}
    \HS
    \judgTermEq{\tenv}{\tmtwo_0}{\tmtwo_1}{\typ}
  }{
    \judgTermEq{\tenv}{\tm_0\,\tmtwo_0}{\tm_1\,\tmtwo_1}{\typ}
  }
  \end{array}
}
\]
Typing rules for rewrites are as follows:
\[
{\small
  \begin{array}{c}
  \indrule{RVar}{
    (\var:\typ) \in \tenv
  }{
    \judgRewrLabelled{\tenv}{\var}{\var}{\var}{\typ}{\hrsone}
  }
  \indrule{RCon}{
    (\cons:\typ) \in \constantset
  }{
    \judgRewrLabelled{\tenv}{\cons}{\cons}{\cons}{\typ}{\hrsone}
    }
    \\
    \\
  \indrule{RAbs}{
    \judgRewrLabelled{\tenv,\var:\typ}{
      \redseq
    }{
      \tm_0
    }{
      \tm_1
    }{\typtwo}{\hrsone}
  }{
    \judgRewrLabelled{\tenv}{
      \lam{\var}{\redseq}
    }{
      \lam{\var}{\tm_0}
    }{
      \lam{\var}{\tm_1}
    }{\typ \imp \typtwo}{\hrsone}
  }
  \\
  \\
  \indrule{RApp}{
    \judgRewrLabelled{\tenv}{
      \redseq
    }{
      \tm_0
    }{
      \tm_1
    }{\typ \imp \typtwo}{\hrsone}
    \HS
    \judgRewrLabelled{\tenv}{
      \redseqtwo
    }{
      \tmtwo_0
    }{
      \tmtwo_1
    }{\typ}{\hrsone}
  }{
    \judgRewrLabelled{\tenv}{
      \redseq\,\redseqtwo
    }{
      \tm_0\,\tmtwo_0
    }{
      \tm_1\,\tmtwo_1
    }{\typtwo}{\hrsone}
  }
  \\
  \\
  \indrule{RRule}{
    \langle \rulewit, s, t\rangle\in\closed{\ruleset}
    \HS
    \judgTerm{\noenv}{s}{\typ}
    \HS
    \judgTerm{\noenv}{t}{\typ}
  }{
    \judgRewrLabelled{\tenv}{\rulewit}{s}{t}{\typ}{\hrsone}
  }
  \indrule{RTrans}{
    \judgRewrLabelled{\tenv}{
      \redseq
    }{
      \tm_0
    }{
      \tm_1
    }{
      \typ
    }
    {\hrsone}
    \HS
    \judgRewrLabelled{\tenv}{
      \redseqtwo
    }{
      \tm_1
    }{
      \tm_2
    }{
      \typ
    }
    {\hrsone}
  }{
    \judgRewrLabelled{\tenv}{
      \redseq\seq\redseqtwo
    }{
      \tm_0
    }{
      \tm_2
    }{
      \typ
    }
    {\hrsone}
  }
  \\
  \\
  \indrule{RConv}{
    \judgRewrLabelled{\tenv}{
      \redseq
    }{
      \tm'
    }{
      \tmtwo'
    }{
      \typ
    }
    {\hrsone}
    \HS
    \judgTermEq{\tenv}{\tm}{\tm'}{\typ}
    \HS
    \judgTermEq{\tenv}{\tmtwo'}{\tmtwo}{\typ}
  }{
    \judgRewrLabelled{\tenv}{
      \redseq
    }{
      \tm
    }{
      \tmtwo
    }{
      \typ
    }
    {\hrsone}
  }
  \end{array}
}
\]
\end{defi}
Term equivalence rules (prefixed by ``\indrulename{Eq}'') define
the least congruence generated by $\beta$~and~$\eta$-conversion
over simply typed $\lambda$-terms.
Typing rules for rewrites (prefixed by~``\indrulename{R}'') include
the \indrulename{RVar} and \indrulename{RCon} rules,
which express that variables and constants represent identity rewrites.
The \indrulename{RAbs} and \indrulename{RApp} rules
express congruence below abstraction and application.
The \indrulename{RRule} rule allows to use a rule symbol to stand for
a rewrite between its source and its target,
which must be closed terms of the same type.
The \indrulename{RTrans} rule is the typing rule for rewrite composition,
which can be understood as expressing that the rewriting relation is closed by
transitivity.
The \indrulename{RConv} rule states that the source and the target
of a rewrite are regarded up to $\beta\eta$-equivalence.
Note that there are no rules equating rewrites; such rules are the purpose of~\rsec{permutation_equivalence} which introduces permutation equivalence.
\begin{exa}
\lexample{rewrite}
Suppose we assign the following rule symbols to the rewriting rules of~\rexample{hrs:mu_plus_f}:
  $
  \langle \rulewit,
  \consof{mu}(\lam{\vartwo}{\var\,\vartwo}),
      \var\,(\consof{mu}(\lam{\vartwo}{\var\,\vartwo}))
      \rangle
$
and
$
\langle
\rulewittwo,
\consof{f}\,\var,
\consof{g}\,\var
\rangle$.
Recall that $\constantset\eqdef\{\consof{mu}:(\iota\imp\iota)\imp\iota,\consof{f}:\iota\imp\iota\}$.
The reduction sequence of~\rexample{hrs:mu_plus_f} can be represented as a rewrite:
\begin{center}
$\judgRewrLabelled{\noenv}{
   \rulewit\,(\lam{\var}{\consof{f}\,\var})\seq \consof{f}\, (\consof{mu}\,(\lam{\var}{\rulewittwo\,\var}))\seq \rulewittwo \,(\consof{mu}\,(\lam{\var}{\consof{g}\,\var})) 
 }{
   \consof{mu}\,(\lam{\var}{\consof{f}\,\var})
 }{
   \consof{g}\,(\consof{mu}\,(\lam{\var}{\consof{g}\,\var})
 }{
   \iota
 }
 {\hrsone}$
\end{center}
\end{exa}

Inspection of the proof of Theorem~\ref{thm:HOLandHOR} in~\cite{MayrNipkow98} reveals that $\beta$ and $\eta$ are only needed for substitutions in rewrite rules. As a consequence:
 \begin{thm}
   \label{thm:rewrites:and:HORL}
There is a rewrite $\redseq$ such that
$\judgRewrLabelled{\tenv}{\redseq}{\tm}{\tmtwo}{\typ}{\hrsone}$ if and only if $\betaetanf{\tm} \stackrel{*}{\rightarrow_{\hrsone}}\betaetanf{\tmtwo}$.
   \end{thm}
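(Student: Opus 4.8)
The plan is to prove both directions by adapting the induction behind \rthm{HOLandHOR} to the \emph{oriented} setting, using the observation that $\beta$ and $\eta$ enter only through substitution into rule instances and are otherwise absorbed by the \indrulename{RConv} rule. As a preliminary I would isolate the existence of \emph{identity rewrites}: for every typed term $\tm$ with $\judgTerm{\tenv}{\tm}{\typ}$ there is a rewrite $\redseq$ with $\judgRewrLabelled{\tenv}{\redseq}{\tm}{\tm}{\typ}{\hrsone}$, obtained by a routine induction on $\tm$ using \indrulename{RVar}, \indrulename{RCon}, \indrulename{RAbs} and \indrulename{RApp}; this is the oriented counterpart of the reflexivity remark following \rdef{body:horeql}. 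I would also record that, by \indrulename{RConv} and the uniqueness of $\beta\etalong$-normal forms, the existence of a rewrite between two terms depends only on their $\beta\eta$-equivalence classes, so it suffices to reason about $\betaetanf{\tm}$ and $\betaetanf{\tmtwo}$.

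For the $(\Leftarrow)$ direction I would first show that every single step $\tm \tohrs{\hrsone} \tmtwo$ between $\beta\etalong$-normal forms is witnessed by a rewrite, by induction on its derivation. A \indrulename{Root} step $\subst\,\ell \tohrs{\hrsone} \subst\,r$ is witnessed by applying the rule symbol $\rulewit$ (available through \indrulename{RRule}, since $\langle\rulewit,\lam{\seqt{\var_n}}{\ell},\lam{\seqt{\var_n}}{r}\rangle\in\closed{\ruleset}$) to the identity rewrites of $\subst(\var_1),\dots,\subst(\var_n)$ and then correcting the source and target with \indrulename{RConv}; the \indrulename{App} and \indrulename{Abs} context cases are handled by \indrulename{RApp} (placing identity rewrites in the remaining argument positions and on the head) and \indrulename{RAbs}. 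A reflexive step is covered by an identity rewrite, a finite sequence of steps is assembled with \indrulename{RTrans}, and a final \indrulename{RConv} recovers the original, possibly non-normal, endpoints $\tm$ and $\tmtwo$.

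For the $(\Rightarrow)$ direction I would proceed by induction on the derivation of $\judgRewrLabelled{\tenv}{\redseq}{\tm}{\tmtwo}{\typ}{\hrsone}$, tracking orientation through the case analysis of \rthm{HOLandHOR}. The cases \indrulename{RVar} and \indrulename{RCon} give $\betaetanf{\tm}=\betaetanf{\tmtwo}$, hence reflexivity; \indrulename{RRule} yields a single step, since the underlying rule applies under the abstraction prefix so that $\lam{\seqt{\var_n}}{\ell} \tohrs{\hrsone} \lam{\seqt{\var_n}}{r}$, with both sides already $\beta\etalong$-normal; \indrulename{RTrans} concatenates the two reduction sequences supplied by the induction hypothesis; and \indrulename{RConv} uses that $\termeq$ preserves $\beta\etalong$-normal forms. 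The crux lies in the congruence cases \indrulename{RAbs} and \indrulename{RApp}: from $\betaetanf{\tm_0}\stackrel{*}{\rightarrow_{\hrsone}}\betaetanf{\tm_1}$ (and, for \indrulename{RApp}, also $\betaetanf{\tmtwo_0}\stackrel{*}{\rightarrow_{\hrsone}}\betaetanf{\tmtwo_1}$) one must derive $\betaetanf{\lam{\var}{\tm_0}}\stackrel{*}{\rightarrow_{\hrsone}}\betaetanf{\lam{\var}{\tm_1}}$ and $\betaetanf{\tm_0\,\tmtwo_0}\stackrel{*}{\rightarrow_{\hrsone}}\betaetanf{\tm_1\,\tmtwo_1}$.

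This last point is where I expect the main difficulty, and it is exactly the oriented analogue of the substitution lemmas in Mayr and Nipkow's proof of \rthm{HOLandHOR}: because $\tohrs{\hrsone}$ is defined only between $\beta\etalong$-normal forms, building an application can create a $\beta$-redex whose contraction and subsequent re-expansion to $\eta$-long form must be shown to leave the given reduction sequences intact. The announced observation---that $\beta$ and $\eta$ intervene only via substitution into rule instances---is precisely what guarantees that this renormalization introduces no backward steps, so that the reflexive--transitive closure $\stackrel{*}{\rightarrow_{\hrsone}}$, and not merely its symmetric closure, is preserved. I would therefore lift the relevant congruence/substitution lemma from Mayr and Nipkow's development, re-reading it with orientations in place, rather than reprove it from scratch.
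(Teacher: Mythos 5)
Your proposal is correct and follows essentially the same route as the paper: the paper's proof of Theorem~\ref{thm:rewrites:and:HORL} consists precisely of the remark that inspecting Mayr--Nipkow's proof of \rthm{HOLandHOR} shows $\beta$ and $\eta$ are needed only for substitutions into rule instances, so that dropping \indrulename{ESymm} on the logical side corresponds to dropping symmetry on the rewriting side. Your blind reconstruction makes that inspection explicit, correctly locating the only substantive work in the \indrulename{RApp}/\indrulename{RAbs} congruence cases and handling it by the oriented substitution lemmas already available in Mayr--Nipkow's development.
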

Now that we know that rewrites over an HRS $\hrsone$ are sound and complete with respect to reduction sequences in $\hrsone$, we review some basic properties of rewrites and then focus, in the remaining sections, on equivalences between rewrites. In the sequel we will omit $\hrsone$ in $\judgRewrLabelled{\tenv}{\redseq}{\tm}{\tmtwo}{\typ}{\hrsone}$ and write $\judgRewr{\tenv}{\redseq}{\tm}{\tmtwo}{\typ}$.

\subsection{Properties of Rewrites}

We begin by introducing the notion of source and target of a rewrite and then move on to a few important syntactic properties of terms and rewrites. 
  \begin{defi}[Source and target of a rewrite]
For each rewrite $\redseq$
we define the {\em source} $\rsrc{\redseq}$
and the {\em target} $\rtgt{\redseq}$
as the following terms:
\[
  \begin{array}{rcll}
    \rsrc{\var}
    & \eqdef &
    \var
    \\
    \rsrc{\cons}
    & \eqdef &
    \cons
    \\
    \rsrc{\rulewit}
    & \eqdef &
    \tm
      \HS\text{if $(\rewr{\rulewit}{\tm}{\tmtwo}{\typ}) \in \ruleset$}
    \\
    \rsrc{(\lam{\var}{\redseq})}
    & \eqdef &
    \lam{\var}{\rsrc{\redseq}}
    \\
    \rsrc{(\redseq\,\redseqtwo)}
    & \eqdef &
    \rsrc{\redseq}\,\rsrc{\redseqtwo}
    \\
    \rsrc{(\redseq\seq\redseqtwo)}
    & \eqdef &
    \rsrc{\redseq}
  \end{array}
  \begin{array}{rcll}
    \rtgt{\var}
    & \eqdef &
    \var
    \\
    \rtgt{\cons}
    & \eqdef &
    \cons
    \\
    \rtgt{\rulewit}
    & \eqdef &
    \tmtwo
      \HS\text{if $(\rewr{\rulewit}{\tm}{\tmtwo}{\typ}) \in \ruleset$}
    \\
    \rtgt{(\lam{\var}{\redseq})}
    & \eqdef &
    \lam{\var}{\rtgt{\redseq}}
    \\
    \rtgt{(\redseq\,\redseqtwo)}
    & \eqdef &
    \rtgt{\redseq}\,\rtgt{\redseqtwo}
    \\
    \rtgt{(\redseq\seq\redseqtwo)}
    & \eqdef &
    \rtgt{\redseq}
  \end{array}
\]
\end{defi}

The free variables of
an expression $X$ (which may be a term or a rewrite) are written $\fv{X}$,
and defined as expected, with lambdas binding variables
in their bodies.
We write $\redseq\subt{\var}{\tmtwo}$
for the capture-avoiding substitution of
the variable $\var$ in the rewrite $\redseq$ by $\tmtwo$.
This operation is called {\em rewrite/term substitution}.
Together with term/term substitution $\tm\subt{\var}{\tmtwo}$ (\cf \rsec{hor})
this allows us to write $X\subt{\var}{\tm}$ when $X$ is either a term or a rewrite.

As for the properties, we begin with the \emph{Substitution Lemma}, which holds for any term or rewrite $X$
as long as $\var \neq \vartwo$ and $\var \notin \fv{\tmtwo}$ and is proved by induction on $X$:

\begin{lem}
\llem{substitution_lemma}
The equation
$
  X\subt{\var}{\tm}\subt{\vartwo}{\tmtwo}
  =
  X\subt{\vartwo}{\tmtwo}\subt{\var}{\tm\subt{\vartwo}{\tmtwo}}
$
holds.
\end{lem}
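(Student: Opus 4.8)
The plan is to prove the Substitution Lemma by structural induction on the expression $X$, which ranges over both terms and rewrites. Since term/term substitution $\tm\subt{\var}{\tmtwo}$ and rewrite/term substitution $\redseq\subt{\var}{\tmtwo}$ are both defined by the same recursive clauses over the shared syntactic structure (variable, constant, abstraction, application, and additionally rule symbol and composition for rewrites), a single induction covers all cases uniformly. The only subtlety distinguishing terms from rewrites is the presence of the rule-symbol case $\rulewit$ and the composition case $\redseq\seq\redseqtwo$, but as I explain below these require no extra work.

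First I would handle the base cases. If $X$ is a variable, there are two subcases. When $X=\var$, the left-hand side becomes $\tm\subt{\vartwo}{\tmtwo}$ (substituting $\var$ by $\tm$, then $\vartwo$ by $\tmtwo$), while the right-hand side becomes $\var\subt{\var}{\tm\subt{\vartwo}{\tmtwo}} = \tm\subt{\vartwo}{\tmtwo}$ (since $\var\neq\vartwo$, the outer $\subt{\vartwo}{\tmtwo}$ leaves $\var$ fixed); these agree. When $X=\varthree$ is a third variable distinct from both $\var$ and $\vartwo$, both sides reduce to $\varthree$; and when $X=\vartwo$, both sides reduce to $\tmtwo$ (on the right-hand side using $\var\notin\fv{\tmtwo}$, so the inner substitution $\subt{\var}{\tm\subt{\vartwo}{\tmtwo}}$ does nothing to $\tmtwo$). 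If $X$ is a constant $\cons$ or a rule symbol $\rulewit$, both sides are just $X$ itself, since substitution is the identity on these atoms; here the hypotheses $\var\neq\vartwo$ and $\var\notin\fv{\tmtwo}$ are not even needed.

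For the inductive cases, abstraction $\lam{\varfour}{X'}$, application $X_1\,X_2$, and composition $X_1\seq X_2$ all follow by pushing the substitutions inside the constructor and applying the induction hypothesis to the immediate subexpressions. The application and composition cases are immediate: substitution distributes over both arguments, so the equation for $X_1\,X_2$ (resp.\ $X_1\seq X_2$) follows from the equations for $X_1$ and $X_2$. The abstraction case $\lam{\varfour}{X'}$ is where the genuine work lies, and I expect it to be the main obstacle. To apply the induction hypothesis to $X'$ I must move both substitutions under the binder, which requires $\varfour$ to be distinct from $\var$ and $\vartwo$ and not free in $\tm$ or $\tmtwo$. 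Working modulo $\alpha$-conversion, I would rename $\varfour$ to a fresh variable satisfying these freshness conditions, so that $(\lam{\varfour}{X'})\subt{\var}{\tm} = \lam{\varfour}{X'\subt{\var}{\tm}}$ and similarly for the other substitutions; the induction hypothesis on $X'$ then yields the claim. The careful bookkeeping of the freshness side-conditions under $\alpha$-renaming, ensuring that no variable capture occurs when the inner substitution $\subt{\var}{\tm\subt{\vartwo}{\tmtwo}}$ is pushed under the binder, is the only delicate point; everything else is routine unfolding of definitions.
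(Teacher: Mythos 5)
Your proposal is correct and takes essentially the same approach as the paper, which proves this lemma by structural induction on $X$ under the standing hypotheses $\var \neq \vartwo$ and $\var \notin \fv{\tmtwo}$. The case analysis you give---the variable subcases (using $\var \notin \fv{\tmtwo}$ precisely where needed), the atomic cases $\cons$ and $\rulewit$, and the congruence cases with $\alpha$-renaming of the bound variable in the abstraction case---is exactly the routine elaboration the paper leaves implicit.
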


\emph{Weakening} holds too, proved by induction on the derivation of the target judgment:

\begin{lem}
\llem{weakening}
Let $\var \notin \tenv$. Then:
\begin{enumerate}
\item
  If $\judgTerm{\tenv}{\tm}{\typ}$
  then $\judgTerm{\tenv,\var:\typtwo}{\tm}{\typ}$.
\item
  If $\judgTermEq{\tenv}{\tm}{\tmtwo}{\typ}$
  then $\judgTermEq{\tenv,\var:\typtwo}{\tm}{\tmtwo}{\typ}$.
\item
  If $\judgRewr{\tenv}{\redseq}{\tm}{\tmtwo}{\typ}$
  then $\judgRewr{\tenv,\var:\typtwo}{\redseq}{\tm}{\tmtwo}{\typ}$.
\end{enumerate}
\end{lem}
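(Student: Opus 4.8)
The plan is to prove the three statements by induction on the derivation of the hypothesis judgment, treating them in turn and exploiting the fact that the judgment forms are stratified: term typing $\judgTerm{\tenv}{\tm}{\typ}$ does not depend on the other two, term equivalence $\judgTermEq{\tenv}{\tm}{\tmtwo}{\typ}$ has only term-typing judgments as side premises, and rewrite typing $\judgRewr{\tenv}{\redseq}{\tm}{\tmtwo}{\typ}$ may invoke term equivalence. Accordingly, I would first establish item~(1); then item~(2), using item~(1) for the term-typing premises of $\indrulename{EqBeta}$, $\indrulename{EqEta}$, and $\indrulename{EqRefl}$; and finally item~(3), which needs item~(2) only for the term-equivalence premises of $\indrulename{RConv}$, all its other premises being either rewrite judgments covered by the induction hypothesis or term-typing judgments living in a fixed context.

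For item~(1), I would case on the last rule. The $\indrulename{Var}$ case types some variable $\vartwo$ with $(\vartwo:\typ)\in\tenv$; since $\var\notin\dom{\tenv}$ whereas $\vartwo\in\dom{\tenv}$, we have $\vartwo\neq\var$, so $(\vartwo:\typ)\in\tenv,\var:\typtwo$ and the rule re-applies. The $\indrulename{Con}$ case is immediate, as constants are typed from the signature independently of $\tenv$, and $\indrulename{App}$ follows by applying the induction hypothesis to both premises and reassembling with the same rule. Items~(2) and~(3) follow the identical pattern: for every rule whose premises and conclusion share the context $\tenv$ (such as $\indrulename{EqCongApp}$, $\indrulename{RApp}$, and $\indrulename{RTrans}$), apply the induction hypothesis to each premise under the extended context $\tenv,\var:\typtwo$ and close with the same rule. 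The rule $\indrulename{RRule}$, whose term-typing premises live in the empty context $\noenv$ and are therefore untouched, together with $\indrulename{RCon}$, is handled by simply re-deriving the conclusion under $\tenv,\var:\typtwo$.

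The only case requiring genuine care is the abstraction rule of each layer ($\indrulename{Abs}$, $\indrulename{EqCongAbs}$, $\indrulename{RAbs}$), whose premise extends the context with the bound variable, say $\tenv,\vartwo:\typ$. Here I would first invoke the convention that terms and rewrites are taken up to $\alpha$-equivalence to rename the bound variable so that $\vartwo\neq\var$ and $\vartwo\notin\dom{\tenv}$. Applying the induction hypothesis to the premise then yields a judgment under $\tenv,\vartwo:\typ,\var:\typtwo$; and since a typing context is a partial function, the contexts $\tenv,\vartwo:\typ,\var:\typtwo$ and $\tenv,\var:\typtwo,\vartwo:\typ$ denote one and the same object, so the abstraction rule recovers the desired conclusion under $\tenv,\var:\typtwo$.

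I expect no genuine obstacle beyond this bookkeeping. In particular, the freshness side condition $\var\notin\fv{\tm}$ of $\indrulename{EqEta}$ is unaffected by enlarging the context, and the dependency ordering $(1)\Rightarrow(2)\Rightarrow(3)$ guarantees that whenever a premise of one layer is a judgment of a lower layer, the corresponding weakening statement is already in hand. The whole argument is therefore a routine structural induction whose only content lies in respecting $\alpha$-conversion and the functional reading of contexts.
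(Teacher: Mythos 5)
Your proposal is correct and follows essentially the same route as the paper, which simply asserts that weakening is ``proved by induction on the derivation of the target judgment''; your stratified ordering $(1)\Rightarrow(2)\Rightarrow(3)$ and the $\alpha$-renaming plus partial-function reading of contexts in the abstraction cases are exactly the bookkeeping that proof leaves implicit. The only minor imprecision is that the context-extending premises occur not only in \indrulename{Abs}, \indrulename{EqCongAbs}, \indrulename{RAbs} but also in \indrulename{EqBeta} and \indrulename{EqEta}, though your renaming argument handles those identically.
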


Substitution of a variable by a term \emph{commutes with the source and target operators}:

\begin{lem}
\llem{source_and_target_of_subrt}
If $\judgRewr{\tenv,\var:\typ}{\redseq}{\tm}{\tmtwo}{\typtwo}$
and $\judgTerm{\tenv}{\tmthree}{\typ}$
then
$\rsrc{\redseq\subt{\var}{\tmthree}} = \rsrc{\redseq}\subt{\var}{\tmthree}$
and $\rtgt{\redseq\subt{\var}{\tmthree}} = \rtgt{\redseq}\subt{\var}{\tmthree}$.



\end{lem}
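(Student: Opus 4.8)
The plan is to prove both equations, $\rsrc{\redseq\subt{\var}{\tmthree}} = \rsrc{\redseq}\subt{\var}{\tmthree}$ and $\rtgt{\redseq\subt{\var}{\tmthree}} = \rtgt{\redseq}\subt{\var}{\tmthree}$, by structural induction on the rewrite $\redseq$. The source and target operators are defined by recursion over the grammar of rewrites, and rewrite/term substitution is likewise a syntax-directed operation, so the induction follows the clauses of both definitions in lockstep. I will treat the source equation in detail; the target equation is entirely symmetric and handled the same way. The typing hypothesis $\judgRewr{\tenv,\var:\typ}{\redseq}{\tm}{\tmtwo}{\typtwo}$ is needed only to guarantee that $\redseq$ is well-formed and, in the rule-symbol case, that the side condition on closed rules applies; it does not otherwise drive the argument.

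First I would dispatch the base cases. For $\redseq = \vartwo$ a variable, I split on whether $\vartwo = \var$: if so, $\rsrc{\var\subt{\var}{\tmthree}} = \rsrc{\tmthree}$, and since $\tmthree$ is a term its source is itself, matching $\rsrc{\var}\subt{\var}{\tmthree} = \var\subt{\var}{\tmthree} = \tmthree$; if $\vartwo \neq \var$, both sides reduce to $\vartwo$. For $\redseq = \cons$ a constant, both sides are $\cons$ since substitution and the source operator leave constants fixed. The crucial base case is $\redseq = \rulewit$ a rule symbol, with $(\rewr{\rulewit}{\tm_0}{\tmtwo_0}{\typfour}) \in \ruleset$: here $\rsrc{\rulewit} = \tm_0$, and by the $\indrulename{RRule}$ typing side condition (and the definition of $\closed{\ruleset}$) the source $\tm_0$ is a \emph{closed} term, so $\var \notin \fv{\tm_0}$ and hence $\tm_0\subt{\var}{\tmthree} = \tm_0$. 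Meanwhile $\rsrc{\rulewit\subt{\var}{\tmthree}} = \rsrc{\rulewit} = \tm_0$ as well, since rewrite/term substitution does not alter rule symbols. Thus both sides equal $\tm_0$.

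For the inductive cases I would push the substitution inward and apply the induction hypothesis. For abstraction $\redseq = \lam{\vartwo}{\redseq'}$, choosing $\vartwo$ fresh (so $\vartwo \neq \var$ and $\vartwo \notin \fv{\tmthree}$ by $\alpha$-conversion), substitution commutes with the binder, giving $\rsrc{(\lam{\vartwo}{\redseq'})\subt{\var}{\tmthree}} = \lam{\vartwo}{\rsrc{\redseq'\subt{\var}{\tmthree}}}$, which by the \ih equals $\lam{\vartwo}{\rsrc{\redseq'}\subt{\var}{\tmthree}} = (\lam{\vartwo}{\rsrc{\redseq'}})\subt{\var}{\tmthree} = \rsrc{\lam{\vartwo}{\redseq'}}\subt{\var}{\tmthree}$. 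For application $\redseq = \redseq_1\,\redseq_2$, substitution distributes over application and $\rsrc{(\cdot)}$ distributes over application, so both the left and right sides factor through the two subrewrites and the \ih applies componentwise. For composition $\redseq = \redseq_1\seq\redseq_2$, the source operator is defined to project onto the first component, $\rsrc{(\redseq_1\seq\redseq_2)} = \rsrc{\redseq_1}$, and substitution distributes over composition, so $\rsrc{(\redseq_1\seq\redseq_2)\subt{\var}{\tmthree}} = \rsrc{\redseq_1\subt{\var}{\tmthree}}$, which matches $\rsrc{\redseq_1}\subt{\var}{\tmthree}$ by the \ih; the second component $\redseq_2$ plays no role.

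The argument is essentially routine once the definitions are unfolded, so there is no single deep obstacle. The one point deserving care is the rule-symbol base case: it relies on the fact that sources and targets of rule symbols are closed terms, which is exactly what the $\indrulename{RRule}$ side condition (via $\closed{\ruleset}$) guarantees, so that substitution passes through them inertly. The other mild subtlety is standard hygiene in the abstraction case, where one must invoke $\alpha$-conversion to keep the bound variable distinct from $\var$ and out of $\fv{\tmthree}$ so that the substitution commutes with the binder without capture. The target equation is proved identically, mutatis mutandis, using the symmetric clauses of the $\rtgt{(\cdot)}$ definition.
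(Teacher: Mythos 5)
Your proof is correct and follows exactly the paper's approach: the paper's own proof is simply ``by induction on $\redseq$,'' which is precisely the structural induction you carry out in full. The details you supply---the closedness of rule-symbol sources and targets (via $\closed{\ruleset}$) making substitution vacuous in the $\rulewit$ case, and the standard $\alpha$-conversion hygiene under binders---are the right justifications for the cases the paper leaves implicit.
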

\begin{proof}
By induction on $\redseq$.
\end{proof}

Substitution also \emph{preserves typing}.
In particular, the third item of the following lemma
expresses that rewrite/terms substitution can
be understood, intuitively,
as the morphism component of a functor $\cdot\subt{\var}{\tmthree}$:
\[
  \indrule{}{
    \tm \overset{\redseq}{\xrightarrow{\hspace*{1cm}}} \tmtwo
  }{
    \tm\subt{\var}{\tmthree} \overset{\redseq\subt{\var}{\tmthree}}{\xrightarrow{\hspace*{1cm}}} \tmtwo\subt{\var}{\tmthree}
  }
\]

\begin{lem}
\llem{substitution_term_variables}
Let $\judgTerm{\tenv}{\tmthree}{\typ}$. Then:
\begin{enumerate}
\item
  If $\judgTerm{\tenv,\var:\typ}{\tm}{\typtwo}$
  then $\judgTerm{\tenv}{\tm\subt{\var}{\tmthree}}{\typtwo}$.
\item
  If $\judgTermEq{\tenv,\var:\typ}{\tm}{\tmtwo}{\typtwo}$
  then $\judgTermEq{\tenv}{\tm\subt{\var}{\tmthree}}{\tmtwo\subt{\var}{\tmthree}}{\typtwo}$.
\item
  If $\judgRewr{\tenv,\var:\typ}{\redseq}{\tm}{\tmtwo}{\typtwo}$
  then $\judgRewr{\tenv}{\redseq\subt{\var}{\tmthree}}{\tm\subt{\var}{\tmthree}}{\tmtwo\subt{\var}{\tmthree}}{\typtwo}$.
\end{enumerate}
\end{lem}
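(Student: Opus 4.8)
The statement is the standard "substitution preserves typing" lemma, extended to the three judgment forms of HORL (term typing, term equivalence, rewrite typing). Since each item refers to the syntactic structure of a derivation rather than of the object, the natural strategy is simultaneous induction on the derivation of the premise judgment (as the paper announces for \rlem{weakening}). I would prove the three items together, because the rules $\indrulename{RConv}$, $\indrulename{EqCongApp}$, etc.\ force the items to appeal to one another: for instance the rewrite case $\indrulename{RConv}$ invokes item~(2) on its two $\termeq$-premises, and so the induction hypothesis must cover all three simultaneously.

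\textbf{Key steps.} First I would dispatch item~(1), the pure simply-typed case. The only nontrivial rules are $\indrulename{Abs}$ and $\indrulename{App}$, handled by the \ih{} together with \rlem{weakening}~(1) to move the freshly-bound variable past $\var$; the $\indrulename{Var}$ case splits on whether the variable is $\var$ (replace by $\tmthree$, using the hypothesis $\judgTerm{\tenv}{\tmthree}{\typ}$) or not (it survives the substitution). Item~(2) proceeds by induction on the derivation of $\judgTermEq{\tenv,\var:\typ}{\tm}{\tmtwo}{\typtwo}$: the congruence, reflexivity, symmetry and transitivity rules are immediate from the \ih; the crucial cases are $\indrulename{EqBeta}$ and $\indrulename{EqEta}$, where one must verify that the substitution commutes with the $\beta$- and $\eta$-redexes. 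This is exactly where the \rlem{substitution_lemma} is used: for $\indrulename{EqBeta}$ one checks that $((\lam{\vartwo}{\tm})\,\tmtwo)\subt{\var}{\tmthree}$ and $(\tm\subt{\vartwo}{\tmtwo})\subt{\var}{\tmthree}$ reduce to $\beta\eta$-equivalent terms, which follows by commuting the two substitutions; for $\indrulename{EqEta}$ one uses the side condition $\var\notin\fv{\tm}$ together with $\alpha$-renaming the bound variable away from $\fv{\tmthree}$.

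Finally item~(3), the rewrite case, goes by induction on the derivation of $\judgRewr{\tenv,\var:\typ}{\redseq}{\tm}{\tmtwo}{\typtwo}$. The rules $\indrulename{RVar}$, $\indrulename{RCon}$, $\indrulename{RAbs}$, $\indrulename{RApp}$, $\indrulename{RTrans}$ mirror their equational-logic counterparts and follow from the \ih{} (with \rlem{weakening}~(3) for $\indrulename{RAbs}$, and noting that in $\indrulename{RTrans}$ the shared intermediate term $\tm_1$ gets substituted consistently in both premises). The $\indrulename{RRule}$ case is essentially trivial: since $s,t$ are \emph{closed}, $\var\notin\fv{s}\cup\fv{t}$, so $\rulewit\subt{\var}{\tmthree}=\rulewit$ and $s\subt{\var}{\tmthree}=s$, $t\subt{\var}{\tmthree}=t$, and the rule re-fires unchanged. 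The $\indrulename{RConv}$ case is the one that cements the need for simultaneous induction: from the \ih{} on the rewrite premise one obtains a typing of $\redseq\subt{\var}{\tmthree}$ between $\tm'\subt{\var}{\tmthree}$ and $\tmtwo'\subt{\var}{\tmthree}$, and then item~(2) applied to the two term-equivalence premises supplies $\judgTermEq{\tenv}{\tm\subt{\var}{\tmthree}}{\tm'\subt{\var}{\tmthree}}{\typ}$ and $\judgTermEq{\tenv}{\tmtwo'\subt{\var}{\tmthree}}{\tmtwo\subt{\var}{\tmthree}}{\typ}$, so a final $\indrulename{RConv}$ closes the case.

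\textbf{Main obstacle.} The genuinely delicate points are the $\indrulename{EqBeta}$ and $\indrulename{EqEta}$ cases of item~(2), where the outer substitution $\subt{\var}{\tmthree}$ interacts with the redex being contracted. The heart of the matter is the commutation of substitutions supplied by \rlem{substitution_lemma}, combined with careful $\alpha$-conversion to keep the bound variable of the redex fresh for both $\var$ and $\fv{\tmthree}$; everything else is routine rule-by-rule bookkeeping driven by the \ih.
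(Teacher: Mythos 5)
Your skeleton---rule-by-rule induction on the derivation of each judgment, with item~(1) feeding item~(2) and both feeding item~(3), and with \rlem{substitution_lemma} plus $\alpha$-renaming doing the real work in the \indrulename{EqBeta}/\indrulename{EqEta} cases---is exactly the paper's proof, which says only that each item is by induction on the derivation of the target judgment. One structural remark: simultaneous induction is not actually \emph{forced}, contrary to your claim. The judgment forms are stratified: term-typing derivations contain no other judgment forms, term-equivalence derivations contain only term typings as premises, and rewrite derivations contain only term typings and term equivalences. So items~(1) and~(2) can be proved first as standalone lemmas and then cited as established facts in the \indrulename{RConv} case of item~(3); this sequential reading is what the paper's ``each item'' phrasing indicates. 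Packaging everything as one mutual induction is harmless, just unnecessary.

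There is, however, one concrete gap. The \indrulename{RVar} case of item~(3) does not ``follow from the IH''---the rule is an axiom, so there is no induction hypothesis---and it is not a mere re-firing of the rule when the typed variable is $\var$ itself. In that subcase $\var\subt{\var}{\tmthree} = \tmthree$, which is a \emph{term}, and you must establish $\judgRewr{\tenv}{\refl{\tmthree}}{\tmthree}{\tmthree}{\typ}$, i.e., that the term $\tmthree$ types as a unit rewrite from itself to itself. This is precisely the reflexivity property (\rlem{reflexivity} in the paper, proved by its own induction on the typing derivation of $\tmthree$), which your proposal never invokes; without it, or an inline induction reproving it, the case cannot be closed. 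A second, smaller slip: in the \indrulename{RAbs} case the weakening you need is \rlem{weakening}(1), in order to type the substituend $\tmthree$ in the extended context $\tenv,\vartwo:\typtwo$ so that the induction hypothesis applies there---not \rlem{weakening}(3), which weakens rewrite judgments.
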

\begin{proof}
  Each item is by induction on the derivation of the target judgment.
\end{proof}
Terms appearing in valid equality and rewriting judgments
are  typable:

\begin{lem}
\llem{equal_terms_are_typable}
\begin{enumerate}
\item  If $\judgTermEq{\tenv}{\tm}{\tmtwo}{\typ}$
then $\judgTerm{\tenv}{\tm}{\typ}$
and $\judgTerm{\tenv}{\tmtwo}{\typ}$.

\item If $\judgRewr{\tenv}{\redseq}{\tm}{\tmtwo}{\typ}$
then
$\judgTerm{\tenv}{\tm}{\typ}$
and
$\judgTerm{\tenv}{\tmtwo}{\typ}$.
\end{enumerate}
\end{lem}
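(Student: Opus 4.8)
The plan is to prove both items by induction on the derivation of the hypothesis judgment, handling each inference rule of Higher-Order Rewriting Logic in turn. The two items are mutually dependent—the rewrite judgment rules \indrulename{RConv} invoke term equivalence, so item (2) relies on item (1)—hence I would either prove them simultaneously or prove (1) first and use it freely in (2).

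For item (1), I proceed by induction on the derivation of $\judgTermEq{\tenv}{\tm}{\tmtwo}{\typ}$. The base cases \indrulename{EqBeta} and \indrulename{EqEta} are where the work lies: in \indrulename{EqBeta} the premises give $\judgTerm{\tenv,\var:\typ}{\tm}{\typtwo}$ and $\judgTerm{\tenv}{\tmtwo}{\typ}$, from which $\judgTerm{\tenv}{(\lam{\var}{\tm})\,\tmtwo}{\typtwo}$ follows by \indrulename{Abs} and \indrulename{App}, while typability of the contractum $\tm\subt{\var}{\tmtwo}$ is exactly \rlempart{substitution_term_variables}{1}. The \indrulename{EqEta} case is similar, using the premise together with \rlem{weakening} to reconstruct the type of $\lam{\var}{\tm\,\var}$. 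The rule \indrulename{EqRefl} is immediate, \indrulename{EqSym} and \indrulename{EqTrans} follow directly from the \ih (noting for transitivity that the shared middle term $\tmtwo$ is typable by either induction hypothesis), and the congruence rules \indrulename{EqCongAbs} and \indrulename{EqCongApp} combine the induction hypotheses with \indrulename{Abs}, respectively \indrulename{App}.

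For item (2), I induct on the derivation of $\judgRewr{\tenv}{\redseq}{\tm}{\tmtwo}{\typ}$. The axiom rules \indrulename{RVar}, \indrulename{RCon}, and \indrulename{RRule} have source and target equal to a single term ($\var$, $\cons$, or the closed terms $s,t$), which is typable directly by \indrulename{Var}, \indrulename{Con}, or the premises of \indrulename{RRule} together with \rlem{weakening} to move from the empty context $\noenv$ to $\tenv$. The \indrulename{RAbs} and \indrulename{RApp} cases apply the induction hypotheses to the premises and reassemble the source and target terms using \indrulename{Abs} and \indrulename{App}. For \indrulename{RTrans} the source of $\redseq\seq\redseqtwo$ is $\rsrc{\redseq}=\tm_0$ and the target is $\rtgt{\redseqtwo}=\tm_2$, both furnished by the respective induction hypotheses. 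The only rule touching equivalence is \indrulename{RConv}: its induction hypothesis gives $\judgTerm{\tenv}{\tm'}{\typ}$ and $\judgTerm{\tenv}{\tmtwo'}{\typ}$, and the premises $\judgTermEq{\tenv}{\tm}{\tm'}{\typ}$ and $\judgTermEq{\tenv}{\tmtwo'}{\tmtwo}{\typ}$ yield $\judgTerm{\tenv}{\tm}{\typ}$ and $\judgTerm{\tenv}{\tmtwo}{\typ}$ by item (1).

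I expect no serious obstacle: the statement is a routine structural induction, and the genuinely substantive content—that $\beta$-contraction preserves types—has been isolated into \rlem{substitution_term_variables}. The only point demanding mild care is the dependency of (2) on (1) through \indrulename{RConv}, which is why the two items are stated and proved together; everything else is a mechanical traversal of the inference rules.
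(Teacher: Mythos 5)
Your proposal follows exactly the paper's strategy: the paper's entire proof of this lemma is the one-line statement that item (1) is by induction on the derivation of $\judgTermEq{\tenv}{\tm}{\tmtwo}{\typ}$ and item (2) by induction on the derivation of $\judgRewr{\tenv}{\redseq}{\tm}{\tmtwo}{\typ}$, which is precisely what you do, and your case analysis (the substitution lemma for \indrulename{EqBeta}, weakening for \indrulename{RRule}, item (1) for \indrulename{RConv}) fills in the details correctly.

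One slip is worth fixing: in the \indrulename{EqEta} case the lemma you cite cannot do the job for the right-hand term. The premise gives $\judgTerm{\tenv,\var:\typ}{\tm}{\typtwo}$ with $\var\notin\fv{\tm}$, and you must conclude $\judgTerm{\tenv}{\tm}{\typtwo}$; this is \emph{strengthening} (dropping an unused variable from the context), which is the converse of \rlem{weakening} and is not stated anywhere in the paper. Weakening only adds variables to the context, so invoking it here would fail. The repair is routine---prove strengthening by induction on the typing derivation, using $\var\notin\fv{\tm}$---but it is an additional auxiliary lemma your argument genuinely needs. (By contrast, typing the left-hand side $\lam{\var}{\tm\,\var}$ requires no auxiliary lemma at all: it follows from the premise by \indrulename{Var}, \indrulename{App}, and \indrulename{Abs}, reading $\typtwo$ as an arrow type with domain $\typ$.)
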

\begin{proof}
The first item is by induction on the derivation of $\judgTermEq{\tenv}{\tm}{\tmtwo}{\typ}$ and the second by induction on the derivation of 
$\judgRewr{\tenv}{\redseq}{\tm}{\tmtwo}{\typ}$.
\end{proof}

Given a typable rewrite,
$\judgRewr{\tenv}{\redseq}{\tm}{\tmtwo}{\typ}$,
the source of $\redseq$ and $\tm$ are not necessarily equal,
but they are interconvertible, and similarly for the target:
\begin{lem}
\llem{endpoint:coherence}
If $\judgRewr{\tenv}{\redseq}{\tm}{\tmtwo}{\typ}$
then $\judgTermEq{\tenv}{\tm}{\rsrc{\redseq}}{\typ}$
and $\judgTermEq{\tenv}{\tmtwo}{\rtgt{\redseq}}{\typ}$.
\end{lem}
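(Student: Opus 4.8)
The plan is to argue by induction on the derivation of the rewrite judgment $\judgRewr{\tenv}{\redseq}{\tm}{\tmtwo}{\typ}$, establishing both conjuncts simultaneously. Inducting on the \emph{derivation} rather than on the structure of $\redseq$ is essential here, because the rule \indrulename{RConv} leaves $\redseq$ untouched while altering its recorded endpoints; a structural induction on $\redseq$ would have no handle on this case, whereas the subderivation fed to \indrulename{RConv} is strictly smaller. Throughout, whenever a reflexivity step is needed I invoke \rlem{equal_terms_are_typable}, which guarantees $\judgTerm{\tenv}{\tm}{\typ}$ and $\judgTerm{\tenv}{\tmtwo}{\typ}$, so that the premise of \indrulename{EqRefl} is always met.

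For the axiom cases \indrulename{RVar}, \indrulename{RCon} and \indrulename{RRule}, the computed source and target coincide syntactically with the endpoints recorded in the conclusion: for a rule symbol $\rulewit$ with $(\rewr{\rulewit}{s}{t}{\typ})\in\ruleset$ the defining clauses give $\rsrc{\rulewit}=s=\tm$ and $\rtgt{\rulewit}=t=\tmtwo$, and likewise for variables and constants. Hence both $\judgTermEq{\tenv}{\tm}{\rsrc{\redseq}}{\typ}$ and $\judgTermEq{\tenv}{\tmtwo}{\rtgt{\redseq}}{\typ}$ follow at once from \indrulename{EqRefl}. The congruence cases \indrulename{RAbs} and \indrulename{RApp} are discharged by matching the inductive hypothesis against the recursive clauses defining $\rsrc{\cdot}$ and $\rtgt{\cdot}$: for $\redseq=\lam{\var}{\redseq'}$ the hypothesis yields $\judgTermEq{\tenv,\var:\typ}{\tm_0}{\rsrc{\redseq'}}{\typtwo}$, and one application of \indrulename{EqCongAbs} gives $\judgTermEq{\tenv}{\lam{\var}{\tm_0}}{\lam{\var}{\rsrc{\redseq'}}}{\typ\imp\typtwo}$, whose right-hand side is precisely $\rsrc{\lam{\var}{\redseq'}}$; the target, and the \indrulename{RApp}/\indrulename{EqCongApp} case, are analogous. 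For \indrulename{RTrans}, with $\redseq=\redseq_1\seq\redseq_2$, the definitions give $\rsrc{\redseq}=\rsrc{\redseq_1}$ and $\rtgt{\redseq}=\rtgt{\redseq_2}$, so the two halves follow directly from the inductive hypotheses applied to $\redseq_1$ (for the source) and to $\redseq_2$ (for the target).

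The only genuinely interesting case is \indrulename{RConv}, which is exactly where $\tm$ and $\rsrc{\redseq}$ may fail to be syntactically equal. Its premise is $\judgRewr{\tenv}{\redseq}{\tm'}{\tmtwo'}{\typ}$ together with $\judgTermEq{\tenv}{\tm}{\tm'}{\typ}$ and $\judgTermEq{\tenv}{\tmtwo'}{\tmtwo}{\typ}$, and $\redseq$ (hence $\rsrc{\redseq}$ and $\rtgt{\redseq}$) is unchanged. The inductive hypothesis on the premise gives $\judgTermEq{\tenv}{\tm'}{\rsrc{\redseq}}{\typ}$, and composing with $\judgTermEq{\tenv}{\tm}{\tm'}{\typ}$ via \indrulename{EqTrans} yields $\judgTermEq{\tenv}{\tm}{\rsrc{\redseq}}{\typ}$; symmetrically, from $\judgTermEq{\tenv}{\tmtwo'}{\rtgt{\redseq}}{\typ}$ and $\judgTermEq{\tenv}{\tmtwo'}{\tmtwo}{\typ}$ (the latter reversed with \indrulename{EqSym}) one \indrulename{EqTrans} step gives $\judgTermEq{\tenv}{\tmtwo}{\rtgt{\redseq}}{\typ}$. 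I do not expect any real obstacle: the argument is essentially bookkeeping, the single subtlety being the choice of induction on the derivation so that \indrulename{RConv} is absorbed by transitivity of term equivalence.
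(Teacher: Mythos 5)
Your proof is correct and takes essentially the same approach as the paper: the paper's own proof is just ``by induction on the derivation of $\judgRewr{\tenv}{\redseq}{\tm}{\tmtwo}{\typ}$,'' and your case analysis (axioms via \indrulename{EqRefl}, congruences via \indrulename{EqCongAbs}/\indrulename{EqCongApp}, composition via the source/target clauses, and \indrulename{RConv} absorbed by \indrulename{EqSym}/\indrulename{EqTrans}) is a faithful elaboration of it. Your remark that one must induct on the derivation rather than on the structure of $\redseq$, precisely because \indrulename{RConv} changes the endpoints without changing $\redseq$, is exactly the key point.
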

\begin{proof}
By induction on the derivation of
$\judgRewr{\tenv}{\redseq}{\tm}{\tmtwo}{\typ}$.
\end{proof}
For example,
if $\rewr{\rulewit}{\lam{\var}{\cons\,\var}}{\lam{\var}{\constwo}}{\typ\to\typ}$
then it can be shown that
$\judgRewr{}{\rulewit\,\constwo}{\cons\,\constwo}{\constwo}{\typ}$,
and indeed
$\cons\,\constwo \termeq (\lam{\var}{\cons\,\var})\,\constwo = \rsrc{(\rulewit\,\constwo)}$.

Any typable term $\tm$ can be understood as an empty or \emph {unit}
rewrite $\refl{\tm}$,
without occurrences of rule symbols, between $\tm$ and itself. This allows us to coerce terms to rewrites implicitly
if there is little danger of confusion.

\begin{lem}
\llem{reflexivity}
If $\judgTerm{\tenv}{\tm}{\typ}$
then $\judgRewr{\tenv}{\refl{\tm}}{\tm}{\tm}{\typ}$.
\end{lem}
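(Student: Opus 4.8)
The plan is to prove the statement by induction on the derivation of $\judgTerm{\tenv}{\tm}{\typ}$, which for simply-typed terms is the same as structural induction on $\tm$. Recall that the unit rewrite is defined by copying the syntax of $\tm$ verbatim, namely $\refl{\var} \eqdef \var$, $\refl{\cons} \eqdef \cons$, $\refl{(\lam{\var}{\tm})} \eqdef \lam{\var}{\refl{\tm}}$ and $\refl{(\tm\,\tmtwo)} \eqdef \refl{\tm}\,\refl{\tmtwo}$; in particular $\refl{\tm}$ contains neither rule symbols nor the composition operator. Since the grammar of terms is a fragment of the grammar of rewrites, the proof reduces to observing that every term-typing rule is mirrored by a rewrite-typing rule that yields a rewrite whose source and target both equal $\tm$.

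In the base cases there is nothing to do beyond quoting the corresponding leaf rule: if $\tm = \var$ with $(\var:\typ)\in\tenv$ then \indrulename{RVar} gives $\judgRewr{\tenv}{\var}{\var}{\var}{\typ}$, and if $\tm = \cons$ with $(\cons:\typ)\in\constantset$ then \indrulename{RCon} gives $\judgRewr{\tenv}{\cons}{\cons}{\cons}{\typ}$, matching $\refl{\var}=\var$ and $\refl{\cons}=\cons$. For the inductive cases I invoke the congruence rules. In the abstraction case $\tm = \lam{\var}{\tm'}$ is typed from $\judgTerm{\tenv,\var:\typ}{\tm'}{\typtwo}$, so the induction hypothesis gives $\judgRewr{\tenv,\var:\typ}{\refl{\tm'}}{\tm'}{\tm'}{\typtwo}$ and \indrulename{RAbs} produces $\judgRewr{\tenv}{\lam{\var}{\refl{\tm'}}}{\lam{\var}{\tm'}}{\lam{\var}{\tm'}}{\typ\imp\typtwo}$, which is the goal since $\refl{(\lam{\var}{\tm'})}=\lam{\var}{\refl{\tm'}}$. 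The application case is analogous: from $\judgTerm{\tenv}{\tm_1}{\typ\imp\typtwo}$ and $\judgTerm{\tenv}{\tm_2}{\typ}$ the hypothesis supplies identity rewrites $\refl{\tm_1}$ and $\refl{\tm_2}$, and \indrulename{RApp} assembles them into $\judgRewr{\tenv}{\refl{\tm_1}\,\refl{\tm_2}}{\tm_1\,\tm_2}{\tm_1\,\tm_2}{\typtwo}$.

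I do not anticipate a genuine obstacle: the whole point of the lemma is that reflexivity, although \emph{not} taken as a primitive rewrite-typing rule (in contrast with \indrulename{EqRefl} for term equivalence), is nonetheless \emph{derivable}, being already provided at the leaves by \indrulename{RVar} and \indrulename{RCon} and then propagated upward by the congruence rules \indrulename{RAbs} and \indrulename{RApp}. The only thing requiring a moment's attention is that in each inductive step the constructed rewrite has source and target both equal to $\tm$; this holds because the subderivations handed over by the induction hypothesis already have coinciding source and target, and it can be recorded, if desired, as the evident identities $\rsrc{\refl{\tm}} = \tm = \rtgt{\refl{\tm}}$, provable by the same induction.
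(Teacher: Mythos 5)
Your proof is correct and follows exactly the same route as the paper, which simply states that the result is straightforward by induction on the derivation of $\judgTerm{\tenv}{\tm}{\typ}$. Your case analysis (base cases via \indrulename{RVar} and \indrulename{RCon}, inductive cases via \indrulename{RAbs} and \indrulename{RApp}) is precisely the intended argument, spelled out in full.
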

\begin{proof}
Straightforward by induction on the derivation of the judgment.
\end{proof}

{\em Term/rewrite substitution}, defined below, generalizes
term/term substitution $\tm\subt{\var}{\tmtwo}$
when $\tmtwo$ is a rewrite, \ie $\tm\subtr{\var}{\redseq}$.
Sometimes we also call this notion {\em lifting substitution},
as $\tm\subtr{\var}{\redseq}$ ``lifts''
the expression $\tm$ from the level of terms to the level of rewrites.

\begin{defi}[Term/rewrite substitution]
  \ldef{term_rewrite_substitution}
\[
  \begin{array}{rcl@{\HS}rcl}
    \vartwo\subtr{\var}{\redseq}
    & \eqdef &
    \begin{cases}
      \redseq & \text{if $\var = \vartwo$} \\
      \vartwo & \text{if $\var \neq \vartwo$}
    \end{cases}
  &
    \cons\subtr{\var}{\redseq}
    & \eqdef &
    \cons
  \\
    (\lam{\vartwo}{\tm})\subtr{\var}{\redseq}
    & \eqdef &
    \lam{\vartwo}{\tm\subtr{\var}{\redseq}}
    \HS\text{if $\var \neq \vartwo$}
  &
    (\tm\,\tmtwo)\subtr{\var}{\redseq}
    & \eqdef &
    \tm\subtr{\var}{\redseq}\,
    \tmtwo\subtr{\var}{\redseq}
  \end{array}
\]
\end{defi}

We mention some important properties of term/rewrite substitution. First, rewrite/term substitution commutes with source and target operations:
\begin{lem}
\llem{source_and_target_of_subtr}
If $\judgTerm{\tenv,\var:\typ}{\tmthree}{\typtwo}$
and $\judgRewr{\tenv}{\redseq}{\tm}{\tmtwo}{\typ}$
then
$\rsrc{\tmthree\subtr{\var}{\redseq}} = \tmthree\subtr{\var}{\rsrc{\redseq}}$
and $\rtgt{\tmthree\subtr{\var}{\redseq}} = \tmthree\subtr{\var}{\rtgt{\redseq}}$.
\end{lem}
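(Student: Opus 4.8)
The plan is to prove both equations simultaneously by induction on the structure of the term $\tmthree$ (equivalently, on the derivation of $\judgTerm{\tenv,\var:\typ}{\tmthree}{\typtwo}$). The two claims are symmetric, so I will spell out the argument for the source operator $\rsrc{\cdot}$; the argument for $\rtgt{\cdot}$ is obtained by replacing every occurrence of $\rsrc{\cdot}$ with $\rtgt{\cdot}$. The one point that must be kept firmly in mind throughout is that the two sides of the equation involve \emph{different} substitution operations: on the left, $\tmthree\subtr{\var}{\redseq}$ is term/rewrite substitution (\rdef{term_rewrite_substitution}), which produces a rewrite; on the right, $\tmthree\subt{\var}{\rsrc{\redseq}}$ is ordinary term/term substitution, which is legitimate precisely because $\rsrc{\redseq}$ is a term. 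The induction will show that $\rsrc{\cdot}$ transports the former into the latter.

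The base cases are the variable and the constant. The crux is the variable case $\tmthree = \vartwo$. If $\vartwo = \var$, then $\tmthree\subtr{\var}{\redseq} = \redseq$ by definition, so its source is $\rsrc{\redseq}$; on the other side $\var\subt{\var}{\rsrc{\redseq}} = \rsrc{\redseq}$ as well, and the two agree. If $\vartwo \neq \var$, then $\tmthree\subtr{\var}{\redseq} = \vartwo$, whose source is $\vartwo$, while $\vartwo\subt{\var}{\rsrc{\redseq}} = \vartwo$; again they agree. The constant case $\tmthree = \cons$ is immediate, since $\cons\subtr{\var}{\redseq} = \cons$ and $\rsrc{\cons} = \cons = \cons\subt{\var}{\rsrc{\redseq}}$.

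For the inductive cases, abstraction and application are routine because each of $\rsrc{\cdot}$, term/rewrite substitution, and term/term substitution commutes with the corresponding term constructor. For an application $\tmthree = \tmfour\,\tmfive$, we have $\rsrc{(\tmfour\,\tmfive)\subtr{\var}{\redseq}} = \rsrc{\tmfour\subtr{\var}{\redseq}}\,\rsrc{\tmfive\subtr{\var}{\redseq}}$, which by the induction hypothesis equals $\tmfour\subt{\var}{\rsrc{\redseq}}\,\tmfive\subt{\var}{\rsrc{\redseq}} = (\tmfour\,\tmfive)\subt{\var}{\rsrc{\redseq}}$. For an abstraction $\tmthree = \lam{\vartwo}{\tmfour}$, using the standard $\alpha$-convention to assume $\vartwo \neq \var$ and $\vartwo \notin \fv{\redseq}$, we get $\rsrc{(\lam{\vartwo}{\tmfour})\subtr{\var}{\redseq}} = \lam{\vartwo}{\rsrc{\tmfour\subtr{\var}{\redseq}}}$, which by the induction hypothesis equals $\lam{\vartwo}{\tmfour\subt{\var}{\rsrc{\redseq}}} = (\lam{\vartwo}{\tmfour})\subt{\var}{\rsrc{\redseq}}$.

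I do not expect any real obstacle here: the statement is a purely equational fact, and the typing hypotheses are needed only to guarantee that all the expressions involved are well-formed (so that, e.g., $\redseq$ genuinely has a source and a target). The only points requiring care are the bookkeeping distinction between the two substitution operators and the freshness condition in the abstraction case, both of which are discharged by the usual $\alpha$-renaming convention. Note in particular that this lemma is the ``dual'' of \rlem{source_and_target_of_subrt}: there one substitutes a term into a rewrite, whereas here one substitutes a rewrite into a term, so the earlier lemma cannot simply be reused and a fresh induction on $\tmthree$ is the natural route.
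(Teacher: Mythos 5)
Your proof is correct and follows exactly the route the paper takes: the paper's proof is simply ``by induction on $\tmthree$,'' which is precisely the structural induction you carry out, with the variable case as the crux and the constructor cases following from commutation with $\rsrc{\cdot}$ and $\rtgt{\cdot}$. Your side remark that the right-hand substitution $\tmthree\subtr{\var}{\rsrc{\redseq}}$ coincides with ordinary term/term substitution because $\rsrc{\redseq}$ is a term is also consistent with the paper's conventions.
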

\begin{proof}
By induction on $\tmthree$.
\end{proof}

Second, term/rewrite substitution can
be understood, intuitively,
as the morphism component of a functor $\tmthree\subt{\var}{\cdot}$:
\[
  \indrule{}{
    \tm\overset{\redseq}{\xrightarrow{\hspace*{1cm}}}\tmtwo
  }{
    \tmthree\subt{\var}{\tm} \overset{\tmthree\subtr{\var}{\redseq}}{\xrightarrow{\hspace*{1cm}}} \tmthree\subt{\var}{\tmtwo}
  }
\]
More precisely:

\begin{lem}
\llem{fundamental_property_of_term_lifting}
If $\judgTerm{\tenv,\var:\typ}{\tmthree}{\typtwo}$
and $\judgRewr{\tenv}{\redseq}{\tm}{\tmtwo}{\typ}$
then
$\judgRewr{\tenv}{
    \tmthree\subtr{\var}{\redseq}
  }{
    \tmthree\subt{\var}{\tm}
  }{
    \tmthree\subt{\var}{\tmtwo}
  }{\typtwo}$.
\end{lem}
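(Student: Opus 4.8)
The plan is to proceed by induction on the structure of the term $\tmthree$ (equivalently, on the derivation of $\judgTerm{\tenv,\var:\typ}{\tmthree}{\typtwo}$), generalizing over the ambient context so that the hypothesis can be applied in extended contexts. This is the natural choice because term/rewrite substitution $\tmthree\subtr{\var}{\redseq}$ is defined by recursion on $\tmthree$. The key observation that makes the induction go through cleanly is that term/rewrite substitution and term/term substitution have the same recursive shape: they agree on abstractions and applications and differ only at the variable $\var$, where the former inserts $\redseq$ and the latter inserts a term. Consequently, at each application of a structural typing rule for rewrites, the source and target of the rewrite being built coincide \emph{definitionally} with $\tmthree\subt{\var}{\tm}$ and $\tmthree\subt{\var}{\tmtwo}$, so no appeal to \rlem{endpoint:coherence} or the \indrulename{RConv} rule is needed.

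For the base cases, first suppose $\tmthree = \var$. Then the typing derivation forces $\typtwo = \typ$, and by definition $\tmthree\subtr{\var}{\redseq} = \redseq$, $\tmthree\subt{\var}{\tm} = \tm$, and $\tmthree\subt{\var}{\tmtwo} = \tmtwo$, so the goal $\judgRewr{\tenv}{\redseq}{\tm}{\tmtwo}{\typ}$ is exactly the second hypothesis. If $\tmthree = \vartwo$ with $\vartwo \neq \var$, then $(\vartwo:\typtwo)\in\tenv$ and all three substitutions leave $\vartwo$ unchanged, so \indrulename{RVar} yields $\judgRewr{\tenv}{\vartwo}{\vartwo}{\vartwo}{\typtwo}$; the case $\tmthree = \cons$ is analogous via \indrulename{RCon}.

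For the inductive cases, if $\tmthree = \tmfour\,\tmfive$ the typing derivation ends in \indrulename{App}, providing $\judgTerm{\tenv,\var:\typ}{\tmfour}{\typthree\imp\typtwo}$ and $\judgTerm{\tenv,\var:\typ}{\tmfive}{\typthree}$; applying the induction hypothesis to $\tmfour$ and $\tmfive$ and combining the results with \indrulename{RApp} gives the goal, since both $\subtr{\var}{\redseq}$ and $\subt{\var}{\tm}$ distribute over application. If $\tmthree = \lam{\vartwo}{\tmfour}$ with $\vartwo \neq \var$ and $\vartwo \notin \tenv$ (assumed by $\alpha$-conversion), the derivation ends in \indrulename{Abs} with $\judgTerm{\tenv,\var:\typ,\vartwo:\typthree}{\tmfour}{\typfour}$ and $\typtwo = \typthree \imp \typfour$. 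Here the one subtlety arises: to apply the induction hypothesis to $\tmfour$ in the ambient context $\tenv,\vartwo:\typthree$ (reordering the context, which is harmless as contexts are partial functions), we first need $\redseq$ typed in that extended context, which is supplied by Weakening (\rlem{weakening}). The induction hypothesis then yields a rewrite for $\tmfour\subtr{\var}{\redseq}$ under $\tenv,\vartwo:\typthree$, and \indrulename{RAbs} closes the case. The main (and essentially only) point demanding care is thus this weakening step in the abstraction case; the rest is a routine matching of the structural definitions of the two substitution operations against the rewrite typing rules.
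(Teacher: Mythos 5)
Your proof is correct and follows exactly the approach the paper takes: induction on the derivation of $\judgTerm{\tenv,\var:\typ}{\tmthree}{\typtwo}$, with the structural cases matched against the rewrite typing rules and \rlem{weakening} invoked in the abstraction case. The paper states only the induction scheme and leaves the details implicit; your filled-in cases, including the weakening step, are precisely what that one-line proof abbreviates.
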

\begin{proof}
By induction on the derivation of
$\judgTerm{\tenv,\var:\typ}{\tmthree}{\typtwo}$.
\end{proof}

Third, term/rewrite and rewrite/term substitution commute as follows:
\begin{lem}
\llem{subtr_subt_commutation_I}
If
$\judgTerm{\tenv,\var:\typ,\vartwo:\typtwo}{\tm}{\typthree}$
and
$\judgRewr{\tenv,\vartwo:\typtwo}{\redseqthree}{\tmfour_0}{\tmfour_1}{\typ}$
and
$\judgTerm{\tenv}{\tmfive}{\typtwo}$
then:
\[
  \tm
    \subtr{\var}{\redseqthree}
    \subt{\vartwo}{\tmfive}
  =
  \tm
    \subt{\vartwo}{\tmfive}
    \subtr{\var}{\redseqthree\subt{\vartwo}{\tmfive}}
\]
In particular, if $\vartwo$ does not occur free in $\redseqthree$,
then
$
\tm\subtr{\var}{\redseqthree}\subt{\vartwo}{\tmfive}
=
\tm\subt{\vartwo}{\tmfive}\subtr{\var}{\redseqthree}
$.
\end{lem}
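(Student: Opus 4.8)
The plan is to prove the equation by induction on the structure of the term $\tm$, in close analogy with the Substitution Lemma (\rlem{substitution_lemma}). Since both term/rewrite substitution $\cdot\subtr{\var}{\cdot}$ (\rdef{term_rewrite_substitution}) and ordinary substitution are defined by recursion on the expression being substituted into, the desired equation is a purely syntactic identity, and the typing hypotheses enter only to discharge two side conditions. First, well-formedness of the context $\tenv,\var:\typ,\vartwo:\typtwo$ forces $\var \neq \vartwo$. Second, since $\tmfive$ is typed under $\tenv$ and neither $\var$ nor $\vartwo$ lies in $\dom{\tenv}$, we get $\var \notin \fv{\tmfive}$; a routine sub-induction on $\tmfive$ then shows $\tmfive\subtr{\var}{\redseq} = \tmfive$ for every rewrite $\redseq$, i.e.\ lifting a term that does not mention $\var$ returns that term unchanged.

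The base cases carry the whole content of the lemma. If $\tm = \var$, both sides compute to $\redseqthree\subt{\vartwo}{\tmfive}$: on the left $\var\subtr{\var}{\redseqthree} = \redseqthree$ and then $\subt{\vartwo}{\tmfive}$ is applied, while on the right $\var\subt{\vartwo}{\tmfive} = \var$ (using $\var \neq \vartwo$) followed by $\var\subtr{\var}{\redseqthree\subt{\vartwo}{\tmfive}} = \redseqthree\subt{\vartwo}{\tmfive}$. The delicate case is $\tm = \vartwo$: the left side gives $\vartwo\subtr{\var}{\redseqthree}\subt{\vartwo}{\tmfive} = \vartwo\subt{\vartwo}{\tmfive} = \tmfive$, whereas the right side gives $\tmfive\subt{\vartwo}{\tmfive\text{-free}}$, more precisely $\tmfive\subtr{\var}{\redseqthree\subt{\vartwo}{\tmfive}}$, which equals $\tmfive$ exactly because $\var \notin \fv{\tmfive}$. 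For any other variable and for a constant $\cons$ both operations act as the identity, so the equation holds trivially.

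For the inductive step, in the application case $\tm = \tm_1\,\tm_2$ all three substitution operations distribute over application, so the equation follows by applying the induction hypothesis to $\tm_1$ and to $\tm_2$. In the abstraction case $\tm = \lam{\varthree}{\tm'}$ we first $\alpha$-rename so that $\varthree$ is distinct from $\var$ and $\vartwo$ and does not occur free in $\redseqthree$ or $\tmfive$; both sides then push their substitutions under the binder, and we conclude by the induction hypothesis applied to $\tm'$, using Weakening (\rlem{weakening}) to retype $\redseqthree$ and $\tmfive$ in the context extended by $\varthree$ so that the hypotheses of the lemma are met. Finally, the ``in particular'' clause is immediate: if $\vartwo \notin \fv{\redseqthree}$ then $\redseqthree\subt{\vartwo}{\tmfive} = \redseqthree$, and the main equation specializes to $\tm\subtr{\var}{\redseqthree}\subt{\vartwo}{\tmfive} = \tm\subt{\vartwo}{\tmfive}\subtr{\var}{\redseqthree}$.

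The only genuinely subtle point is the variable case $\tm = \vartwo$, where the entire identity hinges on $\var \notin \fv{\tmfive}$ so that lifting $\tmfive$ along $\subtr{\var}{\cdot}$ is inert; the remaining cases are routine congruence and $\alpha$-conversion bookkeeping of exactly the kind already handled in \rlem{substitution_lemma}.
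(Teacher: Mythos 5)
Your proof is correct and follows essentially the same route as the paper, which proves this lemma by induction on the typing derivation of $\judgTerm{\tenv,\var:\typ,\vartwo:\typtwo}{\tm}{\typthree}$ (equivalent to your structural induction on $\tm$, since the type system is syntax-directed). You also correctly isolate the one genuinely delicate point the paper leaves implicit: in the case $\tm = \vartwo$ the identity hinges on $\var \notin \fv{\tmfive}$, so that $\tmfive\subtr{\var}{\redseqthree\subt{\vartwo}{\tmfive}} = \tmfive$.
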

\begin{proof}
By induction on the derivation of
$\judgTerm{\tenv,\var:\typ,\vartwo:\typtwo}{\tm}{\typthree}$.
\end{proof}


It is also easy to show that term/rewrite substitution commutes with reflexivity
in the sense that $\refl{\tm\subt{\var}{\tmtwo}} = \tm\subtr{\var}{\refl{\tmtwo}}$,
and with source and target operators,
in the sense that
$\rsrc{\tm\subtr{\var}{\redseq}} = \tm\subt{\var}{\rsrc{\redseq}}$
and $\rtgt{\tm\subtr{\var}{\redseq}} = \tm\subt{\var}{\rtgt{\redseq}}$.


\section{Permutation equivalence}
\lsec{permutation_equivalence}
This section presents \emph{permutation equivalence}~(\rdef{body:permutation_equivalence}),
an equivalence relation over (typed) rewrites $\redseq\permeq\redseqtwo$ that identifies any
two rewrites $\redseq$ and $\redseqtwo$
---in a given orthogonal HRS~$\hrsone$---
when they denote computations that, intuitively speaking,
perform the ``same computational work''.
In~\rsec{flat_permutation_equivalence}, we will show
that, as the name suggests, permutation equivalence essentially amounts
to identifying computations up to permutation of steps.


\begin{defi}[Permutation equivalence]
\ldef{body:permutation_equivalence}
Suppose $\judgRewr{\tenv}{\redseq}{\tm}{\tmtwo}{\typ}$ and $\judgRewr{\tenv}{\redseq'}{\tm'}{\tmtwo'}{\typ}$ are derivable. Permutation equivalence, written $\judgPermR{\tenv}{\redseq}{\tm}{\tmtwo}{\redseq'}{\tm'}{\tmtwo'}{\typ}$ (or simply $\redseq \permeq \redseq'$
if $\tenv,\tm,\tmtwo,\tm',\tmtwo',\typ$ are clear from the context), 
is defined as the reflexive, symmetric, transitive, and contextual closure
of the following axioms:
\[
  \begin{array}{rclrl}
    \refl{\rsrc{\redseq}}\seq\redseq
    & \permeq &
    \redseq
    &
    & \permeqRule{IdL}
  \\
    \redseq\seq\refl{\rtgt{\redseq}}
    & \permeq &
    \redseq
    &
    & \permeqRule{IdR}
  \\
    (\redseq\seq\redseqtwo)\seq\redseqthree
    & \permeq &
    \redseq\seq(\redseqtwo\seq\redseqthree)
    &
    & \permeqRule{Assoc}
  \\
    (\lam{\var}{\redseq})\seq(\lam{\var}{\redseqtwo})
    & \permeq &
    \lam{\var}{(\redseq\seq\redseqtwo)}
    &
    & \permeqRule{Abs}
  \\
    (\redseq_1\redseq_2)\seq(\redseqtwo_1\redseqtwo_2)
    & \permeq &
    (\redseq_1\seq\redseqtwo_1)(\redseq_2\seq\redseqtwo_2)
    &
    & \permeqRule{App}
  \\
    (\lam{\var}{\refl{\tm}})\,\redseq
    & \permeq &
    \tm\subtr{\var}{\redseq}
    &
    & \permeqRule{BetaTR}
  \\
    (\lam{\var}{\redseq})\,\refl{\tm}
    & \permeq &
    \redseq\subt{\var}{\tm}
    &
    & \permeqRule{BetaRT}
  \\
    \lam{\var}{\redseq\,\var}
    & \permeq &
    \redseq
    &
    \text{if $\var \notin \fv{\redseq}$}
    & \permeqRule{Eta}
  \end{array}
\]
\end{defi}
Rules $\permeqRule{IdL}$, $\permeqRule{IdR}$ and $\permeqRule{Assoc}$, state
that rewrites together with rewrite composition have a monoidal structure.
Recall from~\rsec{rewrites} that  $\rsrc{\redseq}$ is a term and
$\refl{\rsrc{\redseq}}$ is its corresponding rewrite. Rules $\permeqRule{Abs}$
and $\permeqRule{App}$ state that rewrite composition
commutes
with abstraction and application.
An important thing to be wary of is that rules
may be applied only if both the left and the right-hand sides are well-typed.
In particular, the right-hand side of the \permeqRule{App} rule may
not be well-typed even if the left-hand side is; for example
given rule symbols
$\cons:\typ\to\typtwo$ and $\constwo:\typ$,
the expression $((\lam{\var}{\var})(\cons\,\constwo))\seq(\cons\,\constwo)$
is well-typed, with source and target $\cons\,\constwo$,
while ${((\lam{\var}{\var})\seq\cons)\,((\cons\,\constwo)\seq\constwo)}$
is not well-typed.
Finally, rules $\permeqRule{BetaTR}$,
$\permeqRule{BetaRT}$ and $\permeqRule{Eta}$ introduce $\beta\eta$-equivalence
for rewrites.  Note that  $\permeqRule{BetaTR}$ and $\permeqRule{BetaRT}$
restrict either the body of the abstraction or the argument to a unit rewrite,
thus avoiding the issue mentioned in the introduction where a naive combination
of composition and $\beta\eta$-equivalence can lead to invalid rewrites.
Permutation equivalent rewrites are coinitial and cofinal modulo $\termeq$, a property which may easily be proved by induction on the derivation of $\redseq \permeq \redseqtwo$:
\begin{lem}
\llem{permeq_endpoints_are_termeq}
Let $\judgRewr{\tenv}{\redseq}{\tmfour_0}{\tmfour_1}{\redseqtwo}$
and $\judgRewr{\tenv}{\redseqtwo}{\tmfive_0}{\tmfive_1}{\redseqtwo}$
be such that $\redseq \permeq \redseqtwo$.
Then $\rsrc{\redseq} \termeq \rsrc{\redseqtwo}$
and $\rtgt{\redseq} \termeq \rtgt{\redseqtwo}$.
\end{lem}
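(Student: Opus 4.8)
The plan is to proceed by induction on the derivation of $\redseq \permeq \redseqtwo$, following the inductive structure of \rdef{body:permutation_equivalence} as the reflexive, symmetric, transitive, and contextual closure of the seven axioms. The three closure-under-equivalence cases are immediate: reflexivity yields $\redseq \permeq \redseq$ with identical endpoints, while for symmetry and transitivity the conclusion is inherited directly from the corresponding closure properties of $\termeq$, namely the rules \indrulename{EqSym} and \indrulename{EqTrans}. For the contextual closure I would argue by cases on the context former (abstraction, either side of an application, and either side of a composition), in each case combining the induction hypothesis with the fact that $\termeq$ is itself a congruence --- using \indrulename{EqCongAbs} and \indrulename{EqCongApp}, together with \indrulename{EqRefl} applied to the unchanged subterm, whose typability is guaranteed by \rlem{equal_terms_are_typable} --- and with the compositional clauses defining $\rsrc{\cdot}$ and $\rtgt{\cdot}$ on abstractions, applications and compositions.

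It then remains to verify the seven base axioms. For the monoidal and congruence axioms \permeqRule{IdL}, \permeqRule{IdR}, \permeqRule{Assoc}, \permeqRule{Abs} and \permeqRule{App}, I expect the sources of the two sides to be \emph{syntactically equal}, and likewise the targets, once the defining clauses of $\rsrc{\cdot}$ and $\rtgt{\cdot}$ are unfolded; for instance $\rsrc{((\redseq_1\redseq_2)\seq(\redseqtwo_1\redseqtwo_2))} = \rsrc{\redseq_1}\,\rsrc{\redseq_2} = \rsrc{((\redseq_1\seq\redseqtwo_1)(\redseq_2\seq\redseqtwo_2))}$. Hence $\termeq$ holds by reflexivity (\indrulename{EqRefl}), again appealing to \rlem{equal_terms_are_typable} to supply the required typing premise.

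The genuine content lies in the three $\beta\eta$-axioms. For \permeqRule{BetaTR}, unfolding gives $\rsrc{((\lam{\var}{\refl{\tm}})\,\redseq)} = (\lam{\var}{\tm})\,\rsrc{\redseq}$, whereas the source/target commutation property of term/rewrite substitution yields $\rsrc{(\tm\subtr{\var}{\redseq})} = \tm\subt{\var}{\rsrc{\redseq}}$; these two terms are related by exactly one $\beta$-step, so they are $\termeq$ via \indrulename{EqBeta}, and symmetrically for the target. The case \permeqRule{BetaRT} is dual, using \rlem{source_and_target_of_subrt} to compute $\rsrc{(\redseq\subt{\var}{\tm})} = \rsrc{\redseq}\subt{\var}{\tm}$ and again closing with \indrulename{EqBeta}. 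For \permeqRule{Eta}, the source of $\lam{\var}{\redseq\,\var}$ is $\lam{\var}{\rsrc{\redseq}\,\var}$, and since $\var \notin \fv{\redseq}$ entails $\var \notin \fv{\rsrc{\redseq}}$ (the source operation never introduces free variables, as rule symbols unfold to closed terms), the rule \indrulename{EqEta} identifies this with $\rsrc{\redseq}$.

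The main obstacle, such as it is, is confined to these three axioms: everywhere else $\termeq$ is reflexivity or a straightforward congruence step, but in the $\beta\eta$-cases one must reconcile the source and target of the two distinct substitution operations $\cdot\subt{\var}{\cdot}$ and $\cdot\subtr{\var}{\cdot}$ with an actual $\beta$- or $\eta$-conversion. This reconciliation is precisely the content of \rlem{source_and_target_of_subrt} and the companion commutation property for term/rewrite substitution, so once those are in hand the argument closes cleanly, confirming the claim that the lemma follows by a routine induction.
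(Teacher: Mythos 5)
Your proposal is correct and follows essentially the same route as the paper: the paper establishes this lemma precisely ``by induction on the derivation of $\redseq \permeq \redseqtwo$'', and your case analysis---syntactic equality of endpoints for \permeqRule{IdL}, \permeqRule{IdR}, \permeqRule{Assoc}, \permeqRule{Abs}, \permeqRule{App}; the rules \indrulename{EqBeta}/\indrulename{EqEta} combined with the endpoint/substitution commutation properties for the three $\beta\eta$-axioms; and congruence of $\termeq$ for the contextual closure---is exactly that argument spelled out in detail. (One trivial slip: you speak of ``seven'' base axioms, but there are eight, all of which you do in fact cover.)
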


Note that there are no explicit sequencing equations such as the I/O equations\footnote{
$I:\varrho(\sigma_1,...,\sigma_n) \permeq l(\sigma_1,...,\sigma_n)\cdot \varrho(t_1,...,t_n)$ and $
O:\varrho(\sigma_1,...,\sigma_n) \permeq \varrho (s_1,...,s_n)\cdot r(\sigma_1,...,\sigma_n)$
} defining permutation equivalence in the first-order case~\cite{Terese03} and the corresponding equations flat-l and flat-r of~\cite{thesis:bruggink:08} for the higher-order case. Nonetheless, we can derive the following
coherence equation whose proof is by induction on the derivation of
$\judgRewr{\tenv,\var:\typ}{
    \redseq}{\tmfour_0}{\tmfour_1}{\typtwo}$ and which will motivate our upcoming notion of substitution of rewrites (\rdef{rewrite_rewrite_substitution}):
\begin{lem}
\llem{coherence}
Let
$\judgRewr{\tenv,\var:\typ}{
    \redseq
  }{\tmfour_0}{\tmfour_1}{\typtwo}$
and
$\judgRewr{\tenv}{\redseqtwo}{\tmfive_0}{\tmfive_1}{\typ}$.
Then:
\begin{center}
  $
  \begin{array}{rcll}
    \redseq\subt{\var}{\tmfive_0}\seq\tmfour_1\subtr{\var}{\redseqtwo}
  & \permeq &
  \tmfour_0\subtr{\var}{\redseqtwo}\seq\redseq\subt{\var}{\tmfive_1}
    & (\permeqRule{Perm})
    \end{array}
    $
  \end{center}
Graphically, the following diagram commutes:
\[
  \xymatrix@C=4cm{
    \tm\subt{\var}{\tm'}
      \ar^-{\redseq\subt{\var}{\tm'}}[r]
      \ar_-{\tm\subtr{\var}{\redseqtwo}}[d]
  &
    \tmtwo\subt{\var}{\tm'}
      \ar^-{\tmtwo\subtr{\var}{\redseqtwo}}[d]
  \\
    \tm\subt{\var}{\tmtwo'}
      \ar_-{\redseq\subt{\var}{\tmtwo'}}[r]
  &
    \tmtwo\subt{\var}{\tmtwo'}
  }
\]
\end{lem}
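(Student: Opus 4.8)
The plan is to induct on the derivation of $\judgRewr{\tenv,\var:\typ}{\redseq}{\tmfour_0}{\tmfour_1}{\typtwo}$, reading the desired equation as the \emph{naturality} of the family of rewrites $\tmfour_i\subtr{\var}{\redseqtwo}$ along the horizontal rewrite $\redseq$. At each step I would use \rlem{source_and_target_of_subrt} and \rlem{source_and_target_of_subtr} to verify that the compositions written below are well-typed---i.e.\ that the target of every left factor is $\termeq$ to the source of its right factor---so that each intermediate expression denotes a legal rewrite. This bookkeeping matters because, as the remark following \rdef{body:permutation_equivalence} warns, the right-hand side of $\permeqRule{App}$ need not be typable even when the left-hand side is.

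In the base cases $\indrulename{RVar}$, $\indrulename{RCon}$ and $\indrulename{RRule}$ we have $\tmfour_0 = \tmfour_1$. If $\redseq$ is a variable $\vartwo\neq\var$, a constant, or a rule symbol (whose source and target are closed, so $\var\notin\fv{\tmfour_0}$), then $\redseq\subt{\var}{\tmfive_i}$ and $\tmfour_j\subtr{\var}{\redseqtwo}$ are all units and both sides collapse to the same unit by $\permeqRule{IdL}$ and $\permeqRule{IdR}$. If $\redseq=\var$, then $\redseq\subt{\var}{\tmfive_i}=\refl{\tmfive_i}$ and $\tmfour_j\subtr{\var}{\redseqtwo}=\redseqtwo$, so the two sides are $\refl{\tmfive_0}\seq\redseqtwo$ and $\redseqtwo\seq\refl{\tmfive_1}$, both $\permeq\redseqtwo$ by $\permeqRule{IdL}$/$\permeqRule{IdR}$. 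For $\indrulename{RAbs}$ I would push all substitutions through the binder, fuse the two $\lambda$-composites on each side with $\permeqRule{Abs}$, and close the IH for the body under the abstraction context (weakening $\redseqtwo$ to $\tenv,\vartwo:\typ'$ via \rlem{weakening}). For $\indrulename{RApp}$ I would distribute the substitutions over the application and apply $\permeqRule{App}$ to both sides to interchange composition with application, reducing to the IH for the two subrewrites under the application context.

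The $\indrulename{RTrans}$ case, $\redseq=\redseq_1\seq\redseq_2$, is the step that justifies the name \emph{permutation}: writing $\tm_1$ for the shared midpoint, I would start from $(\redseq_1\subt{\var}{\tmfive_0}\seq\redseq_2\subt{\var}{\tmfive_0})\seq\tmfour_1\subtr{\var}{\redseqtwo}$, re-associate with $\permeqRule{Assoc}$, apply the IH for $\redseq_2$ to the inner square, re-associate, apply the IH for $\redseq_1$, and re-associate once more, thereby pasting the two smaller commuting squares along their shared edge $\tm_1\subtr{\var}{\redseqtwo}$ to reach $\tmfour_0\subtr{\var}{\redseqtwo}\seq\redseq\subt{\var}{\tmfive_1}$.

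The main obstacle is $\indrulename{RConv}$, where the endpoints $\tmfour_0',\tmfour_1'$ of the premise are replaced by $\termeq$-equivalent $\tmfour_0,\tmfour_1$ while $\redseq$ itself is untouched. To transport the square supplied by the IH I must know that lifting sends $\termeq$-equal terms to $\permeq$-equal rewrites, i.e.\ that $\judgTermEq{\tenv,\var:\typ}{\tm}{\tm'}{\typtwo}$ implies $\tm\subtr{\var}{\redseqtwo}\permeq\tm'\subtr{\var}{\redseqtwo}$; this is exactly the reconciliation of $\beta\eta$-equivalence with composition that the paper is about, and I would isolate it as a separate lemma proved by induction on the derivation of the term equality. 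Reflexivity, symmetry, transitivity and the two congruence cases follow from the corresponding closure properties of $\permeq$, and $\indrulename{EqEta}$ follows directly from $\permeqRule{Eta}$. The hard subcase is $\indrulename{EqBeta}$: here $((\lam{\vartwo}{\tm})\,\tmtwo)\subtr{\var}{\redseqtwo}$ becomes the forbidden redex $(\lam{\vartwo}{\tm\subtr{\var}{\redseqtwo}})\,(\tmtwo\subtr{\var}{\redseqtwo})$ in which neither the body nor the argument is a unit, so that neither $\permeqRule{BetaTR}$ nor $\permeqRule{BetaRT}$ applies directly. I would defuse it by inserting units with $\permeqRule{IdL}$/$\permeqRule{IdR}$ and splitting the application with $\permeqRule{App}$ into a composite of two redexes, one of shape $(\lam{\vartwo}{\cdots})\,\refl{\cdots}$ and one of shape $(\lam{\vartwo}{\refl{\cdots}})\,(\cdots)$, discharged by $\permeqRule{BetaRT}$ and $\permeqRule{BetaTR}$ respectively. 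This reduces $\indrulename{EqBeta}$ to the substitution identity
\[
  (\tm\subt{\vartwo}{\tmtwo})\subtr{\var}{\redseqtwo}
  \permeq
  (\tm\subtr{\var}{\redseqtwo})\subt{\vartwo}{\tmtwo\subt{\var}{\tmfive_0}}
  \seq
  (\tm\subt{\var}{\tmfive_1})\subtr{\vartwo}{\tmtwo\subtr{\var}{\redseqtwo}},
\]
which I would prove by a routine induction on $\tm$ using $\permeqRule{Abs}$, $\permeqRule{App}$, $\permeqRule{IdL}$ and $\permeqRule{IdR}$ (the interesting base cases being $\tm=\vartwo$ and $\tm=\var$). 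Once lifting is known to respect $\termeq$, the $\indrulename{RConv}$ case closes by rewriting $\tmfour_j\subtr{\var}{\redseqtwo}\permeq\tmfour_j'\subtr{\var}{\redseqtwo}$ on the two ends of the square furnished by the IH.
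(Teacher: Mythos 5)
Your proposal is correct and takes essentially the same route as the paper: the paper's proof of \rlem{coherence} is exactly an induction on the derivation of $\judgRewr{\tenv,\var:\typ}{\redseq}{\tmfour_0}{\tmfour_1}{\typtwo}$, with the congruence rules handled by \permeqRule{Abs}/\permeqRule{App}, the \indrulename{RTrans} case by pasting the two inner squares, and the \indrulename{RConv} case by congruence of lifting with respect to $\termeq$. The auxiliary fact you isolate for \indrulename{RConv} --- that $\judgTermEq{\tenv,\var:\typ}{\tm}{\tm'}{\typtwo}$ implies $\tm\subtr{\var}{\redseqtwo} \permeq \tm'\subtr{\var}{\redseqtwo}$, proved by induction on the term-equality derivation with the \indrulename{EqBeta} case discharged via unit insertion, \permeqRule{App}, \permeqRule{BetaRT}/\permeqRule{BetaTR} and a substitution identity --- is precisely the paper's \rlem{congruence_termeq_subs}(2), and your self-contained, non-circular treatment of it (only a very minor slip: in the \indrulename{RRule} base case both sides collapse to $\rulewit$, which is not a unit, though the \permeqRule{IdL}/\permeqRule{IdR} argument is exactly right) is consistent with the paper's development.
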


\begin{exa}
\lexample{mu_permeq}
Consider the two following reduction sequences,
where ($R_2$) is the same as in \rexample{hrs:mu_plus_f}:
  \[
    \begin{array}{lll}
      R_1 & : & \consof{mu}\,(\lam{\var}{\consof{f}\,\var}) \rewto \consof{mu}\,(\lam{\var}{\consof{g}\,\var}) \rewto \consof{g}\,(\consof{mu}\,(\lam{\var}{\consof{g}\,\var}))\\
      R_2 & : & \consof{mu}\,(\lam{\var}{\consof{f}\,\var}) \rewto \consof{f}\,(\consof{mu}\,(\lam{\var}{\consof{f}\,\var}))\rewto \consof{f}\,(\consof{mu}\,(\lam{\var}{\consof{g}\,\var})) \rewto \consof{g}\,(\consof{mu}\,(\lam{\var}{\consof{g}\,\var}))
\end{array}
\]
Reduction sequence $R_1$ can be encoded as the rewrite $\consof{mu}\,(\lam{\var}{\rulewittwo\,\var}) ; \rulewit\,(\lam{\var}{\consof{g}\,\var})$ and $R_2$ as $\rulewit\,(\lam{\var}{\consof{f}\,\var})\seq \consof{f}\, (\consof{mu}\,(\lam{\var}{\rulewittwo\,\var}))\seq \rulewittwo \,(\consof{mu}\,(\lam{\var}{\consof{g}\,\var})) $. These two rewrites are permutation equivalent:
  \[
  \begin{array}{lllll}
    & \consof{mu}\,(\lam{\var}{\rulewittwo\,\var}) \seq \rulewit\,(\lam{\var}{\consof{g}\,\var})
    \\
    \permeqBy{Eta} & \consof{mu}\,\rulewittwo \seq \rulewit\,\consof{g}
    \\
    = &   ( \consof{mu}\,\vartwo)\subtr{\vartwo}{\rulewittwo}\seq( \rulewit\,\vartwo)\subt{\vartwo}{\consof{g}}
    \\
    \permeqBy{Perm} & (\rulewit\,\vartwo)\subt{\vartwo}{\consof{f}}\seq(\vartwo\,(\consof{mu}\,\vartwo))\subtr{\vartwo}{\rulewittwo}
    \\
    = & \rulewit\,\consof{f}\seq \rulewittwo\,(\consof{mu}\,\rulewittwo)
    \\
    \permeqBy{IdL} & \rulewit\,\consof{f}\seq (\consof{f}\seq\rulewittwo)\,(\consof{mu}\,\rulewittwo)
    \\
    \permeqBy{IdR} & \rulewit\,\consof{f}\seq (\consof{f}\seq\rulewittwo)\,((\consof{mu}\,\rulewittwo)\seq (\consof{mu}\,\consof{g}))
    \\
    \permeqBy{App} & \rulewit\,\consof{f}\seq \consof{f}\, (\consof{mu}\,\rulewittwo)\seq \rulewittwo \,(\consof{mu}\,\consof{g})
    \\
    \permeqBy{Eta} & \rulewit\,(\lam{\var}{\consof{f}\,\var})\seq \consof{f}\, (\consof{mu}\,(\lam{\var}{\rulewittwo\,\var}))\seq \rulewittwo \,(\consof{mu}\,(\lam{\var}{\consof{g}\,\var}))
  \end{array} 
\]
\end{exa}

\subsection{Substitution of Rewrites in Rewrites.}
The $\permeqRule{Perm}$ rule of \rlem{coherence}
motivates the following definition of substitution of a rewrite inside another rewrite:
\begin{defi}[Rewrite/Rewrite Substitution]
\ldef{rewrite_rewrite_substitution}
Let $\judgRewr{\tenv,\var:\typ}{\redseq}{\tm}{\tmtwo}{\typtwo}$
and let $\judgRewr{\tenv}{\redseqtwo}{\tm'}{\tmtwo'}{\typ}$. Then
$
  \redseq\subrr{\var}{\redseqtwo} \eqdef
  \redseq\subt{\var}{\tm'}\seq\tmtwo\subtr{\var}{\redseqtwo}
$.
Note that, by the $\permeqRule{Perm}$ rule~(\rlem{coherence}),
$
  \redseq\subrr{\var}{\redseqtwo} \permeq
  \tm\subtr{\var}{\redseqtwo}\seq\redseq\subt{\var}{\tmtwo'}
$,
which could be taken as an alternative definition.
Graphically:
\[
  \xymatrix@C=4cm{
    \tm\subt{\var}{\tm'}
      \ar@{.>}^-{\redseq\subt{\var}{\tm'}}[r]
      \ar@{.>}_-{\tm\subtr{\var}{\redseqtwo}}[d]
      \ar^-{\redseq\subrr{\var}{\redseqtwo}}[dr]
  &
    \tmtwo\subt{\var}{\tm'}
      \ar@{.>}^-{\tmtwo\subtr{\var}{\redseqtwo}}[d]
  \\
    \tm\subt{\var}{\tmtwo'}
      \ar@{.>}_-{\redseq\subt{\var}{\tmtwo'}}[r]
  &
    \tmtwo\subt{\var}{\tmtwo'}
  }
\]
\end{defi}
One subtle remark is that $\redseq\subrr{\var}{\redseqtwo}$
depends on $\tmtwo$ and $\tm'$,
and hence on the particular typing derivations
for $\redseq$ and $\redseqtwo$.
\rlem{congruence_permeq_subs}(3) below
ensures that the value of $\redseq\subrr{\var}{\redseqtwo}$
does not depend, up to permutation equivalence, on those typing
derivations.
It relies on a number of preliminary results,
all of which express that equivalence.
The proof of \rlem{congruence_termeq_subs}(1) is by induction on the derivation of $\redseq$
and that of \rlem{congruence_termeq_subs}(2) by induction on the derivation of
$\judgTermEq{\tenv,\var:\typ}{\tm}{\tm'}{\typtwo}$.
\begin{lem}[Congruence of term equivalence below substitution]
\llem{congruence_termeq_subs}
\quad
\begin{enumerate}
\item
  Let $\judgRewr{\tenv,\var:\typ}{\redseq}{\tm}{\tmtwo}{\typtwo}$
  and $\judgTermEq{\tenv}{\tmthree}{\tmthree'}{\typ}$.
  Then $\redseq\subt{\var}{\tmthree} \permeq \redseq\subt{\var}{\tmthree'}$.
\item
  Let $\judgTermEq{\tenv,\var:\typ}{\tmthree}{\tmthree'}{\typtwo}$
  and $\judgRewr{\tenv}{\redseq}{\tm}{\tmtwo}{\typ}$.
  Then $\tmthree\subtr{\var}{\redseq} \permeq \tmthree'\subtr{\var}{\redseq}$.
\end{enumerate}
\end{lem}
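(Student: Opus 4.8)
The plan is to prove the two items by the separate inductions their statements suggest: item~(1) by structural induction on $\redseq$, and item~(2) by induction on the derivation of $\judgTermEq{\tenv,\var:\typ}{\tmthree}{\tmthree'}{\typtwo}$. In both cases the reflexive/symmetric/transitive/contextual closure of $\permeq$ disposes of the purely structural cases immediately, so the real work concentrates where a single $\var$-occurrence (item~1) or a single $\beta$-step (item~2) forces a term-level identity to be promoted to a genuine permutation equivalence. The key auxiliary facts are independent of items~(1) and~(2), so there is no circularity.

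For item~(1) the cases $\redseq=\cons$ and $\redseq=\rulewit$ are trivial, since substitution leaves them unchanged (their endpoints are $\var$-free), and the cases $\redseq=\lam{\vartwo}{\redseqtwo}$, $\redseq=\redseqtwo\,\redseqthree$ and $\redseq=\redseqtwo\seq\redseqthree$ follow from the induction hypotheses and closure of $\permeq$ under abstraction, application and composition (renaming $\vartwo$ so that $\vartwo\notin\fv{\tmthree}\cup\fv{\tmthree'}$, and weakening $\tmthree\termeq\tmthree'$ to the extended context by \rlem{weakening}). The only substantive case is $\redseq=\vartwo$ with $\vartwo=\var$, where $\var\subt{\var}{\tmthree}$ and $\var\subt{\var}{\tmthree'}$ are the unit rewrites $\refl{\tmthree}$ and $\refl{\tmthree'}$. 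This reduces item~(1) to the sub-claim that $\judgTermEq{\tenv}{\tmthree}{\tmthree'}{\typ}$ implies $\refl{\tmthree}\permeq\refl{\tmthree'}$, which I would prove by a side induction on $\tmthree\termeq\tmthree'$: congruence/symmetry/transitivity are inherited from $\permeq$; the $\indrulename{EqEta}$ case uses $\permeqRule{Eta}$ after unfolding $\refl{\lam{\vartwo}{\tm\,\vartwo}}=\lam{\vartwo}{\refl{\tm}\,\vartwo}$; and the $\indrulename{EqBeta}$ case uses $\permeqRule{BetaRT}$, since $\refl{(\lam{\vartwo}{\tm})\,\tmtwo}=(\lam{\vartwo}{\refl{\tm}})\,\refl{\tmtwo}\permeq\refl{\tm}\subt{\vartwo}{\tmtwo}=\refl{\tm\subt{\vartwo}{\tmtwo}}$, the last step being the routine fact that $\refl{-}$ commutes with term/term substitution.

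For item~(2) I would induct on the derivation of $\tmthree\termeq\tmthree'$. The $\indrulename{EqRefl}$, $\indrulename{EqSym}$, $\indrulename{EqTrans}$, $\indrulename{EqCongAbs}$ and $\indrulename{EqCongApp}$ cases are routine: reflexivity is an identity, symmetry and transitivity are inherited from $\permeq$ (using \rlem{equal_terms_are_typable} to type the intermediate term), and the two congruence cases follow from the induction hypotheses and closure of $\permeq$ (again weakening $\redseq$ to the extended context by \rlem{weakening} in the abstraction case). The $\indrulename{EqEta}$ case is direct: for $\vartwo\notin\fv{\tm}$ we have $(\lam{\vartwo}{\tm\,\vartwo})\subtr{\var}{\redseq}=\lam{\vartwo}{\tm\subtr{\var}{\redseq}\,\vartwo}$, which is $\permeq$-related to $\tm\subtr{\var}{\redseq}$ by the $\permeqRule{Eta}$ axiom since $\vartwo\notin\fv{\tm\subtr{\var}{\redseq}}$.

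The main obstacle is the $\indrulename{EqBeta}$ case, where $\tmthree=(\lam{\vartwo}{\tm})\,\tmtwo$ and $\tmthree'=\tm\subt{\vartwo}{\tmtwo}$. Lifting pushes through the application and abstraction to give $\tmthree\subtr{\var}{\redseq}=(\lam{\vartwo}{\tm\subtr{\var}{\redseq}})\,(\tmtwo\subtr{\var}{\redseq})$, a redex at the level of rewrites in which neither the body nor the argument is a unit; hence $\permeqRule{BetaTR}$ and $\permeqRule{BetaRT}$ do not apply directly. This is exactly the duplication/erasure phenomenon that makes permutation equivalence delicate, since $\var$ may occur any number of times in $\tmtwo$ and the two sides contract different numbers of copies of $\redseq$. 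I would resolve it with two auxiliaries. First, a derived ``$\beta$ for rewrites'' equivalence $(\lam{\vartwo}{\redseqthree})\,\redseqfour\permeq\redseqthree\subrr{\vartwo}{\redseqfour}$ for composable rewrites, obtained by writing $\lam{\vartwo}{\redseqthree}\permeq\lam{\vartwo}{\redseqthree}\seq\refl{\lam{\vartwo}{\rtgt{\redseqthree}}}$ and $\redseqfour\permeq\refl{\rsrc{\redseqfour}}\seq\redseqfour$ (by $\permeqRule{IdR}$, $\permeqRule{IdL}$), splitting the application with the $\permeqRule{App}$ axiom, and finishing the two factors with $\permeqRule{BetaRT}$ and $\permeqRule{BetaTR}$; the result matches $\redseqthree\subrr{\vartwo}{\redseqfour}$ by \rdef{rewrite_rewrite_substitution}. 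Second, a commutation lemma $(\tm\subt{\vartwo}{\tmtwo})\subtr{\var}{\redseq}\permeq(\tm\subtr{\var}{\redseq})\subrr{\vartwo}{\tmtwo\subtr{\var}{\redseq}}$ (for $\vartwo\notin\fv{\redseq}$), proved by induction on $\tm$: the variable and constant cases collapse the right-hand $\subrr{}{}$ to a trivial composition removed by $\permeqRule{IdL}$ or $\permeqRule{IdR}$, while the abstraction and application cases use $\permeqRule{Abs}$ and $\permeqRule{App}$ to distribute $\subrr{}{}$ before appealing to the induction hypothesis. Chaining these gives $\tmthree\subtr{\var}{\redseq}=(\lam{\vartwo}{\tm\subtr{\var}{\redseq}})\,(\tmtwo\subtr{\var}{\redseq})\permeq(\tm\subtr{\var}{\redseq})\subrr{\vartwo}{\tmtwo\subtr{\var}{\redseq}}\permeq(\tm\subt{\vartwo}{\tmtwo})\subtr{\var}{\redseq}=\tmthree'\subtr{\var}{\redseq}$, closing the case. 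Throughout one must check that both sides of every axiom application are well-typed, which holds because the endpoints of all rewrites involved are pinned down, via \rlem{source_and_target_of_subtr} and \rlem{fundamental_property_of_term_lifting}, by the fixed terms $\tm$, $\tmtwo$ and $\redseq$.
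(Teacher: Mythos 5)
Your proposal is correct and takes essentially the same route as the paper: item~(1) by structural induction on $\redseq$ and item~(2) by induction on the derivation of $\judgTermEq{\tenv,\var:\typ}{\tmthree}{\tmthree'}{\typtwo}$, which is precisely the induction structure the paper's (very terse) proof indicates. The auxiliary facts you interpose to handle the genuinely hard spots --- that unit rewrites on $\termeq$-equal terms are $\permeq$-equal, the derived \permeqRule{BetaRR} equivalence, and the commutation of term/rewrite substitution with rewrite/rewrite substitution --- are all derivable from the axioms of \rdef{body:permutation_equivalence} alone (mirroring how the paper later proves \rlem{betaRR} and \rlem{subrr_recursion}), so they introduce no circularity with results appearing after this lemma.
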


\begin{lem}[Congruence of rewrite equivalence below substitution]
  \llem{congruence_permeq_subs}
\quad
\begin{enumerate}
\item Let $\judgRewr{\tenv,\var:\typ}{\redseq}{\tm}{\tmtwo}{\typtwo}$
and $\judgRewr{\tenv,\var:\typ}{\redseq'}{\tm'}{\tmtwo'}{\typtwo}$
be such that $\redseq \permeq \redseq'$,
and let $\judgTerm{\tenv}{\tmthree}{\typ}$.
Then $\redseq\subt{\var}{\tmthree} \permeq \redseq'\subt{\var}{\tmthree}$.

\item Let $\judgTerm{\tenv,\var:\typ}{\tmthree}{\typtwo}$
and let $\judgRewr{\tenv}{\redseq}{\tm}{\tmtwo}{\typ}$
and $\judgRewr{\tenv}{\redseq'}{\tm'}{\tmtwo'}{\typ}$
such that $\redseq \permeq \redseq'$.
Then $\tmthree\subtr{\var}{\redseq} \permeq \tmthree\subtr{\var}{\redseq'}$.

\item Let $\judgRewr{\tenv,\var:\typ}{\redseq}{\tm}{\tmtwo}{\typtwo}$
and $\judgRewr{\tenv,\var:\typ}{\redseq'}{\tm'}{\tmtwo'}{\typtwo}$
be such that $\redseq \permeq \redseq'$,
and
let $\judgRewr{\tenv}{\redseqtwo}{\tmthree}{\tmfour}{\typ}$
and $\judgRewr{\tenv}{\redseqtwo'}{\tmthree'}{\tmfour'}{\typ}$
be such that $\redseqtwo \permeq \redseqtwo'$.
Then
$\redseq\subrr{\var}{\redseqtwo} \permeq \redseq'\subrr{\var}{\redseqtwo'}$.
\end{enumerate}
\end{lem}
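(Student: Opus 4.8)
The plan is to prove the three items in order, since item~(3) will be assembled from items~(1) and~(2) together with earlier congruence lemmas. For item~(1), I would argue by induction on the derivation of $\redseq \permeq \redseq'$, following the presentation of $\permeq$ in~\rdef{body:permutation_equivalence} as the reflexive, symmetric, transitive and contextual closure of the listed axioms. Reflexivity, symmetry and transitivity are immediate from the induction hypothesis. The contextual (congruence) cases are handled uniformly by observing that rewrite/term substitution distributes over all three rewrite constructors, \eg $(\redseq_1\seq\redseq_2)\subt{\var}{\tmthree} = \redseq_1\subt{\var}{\tmthree}\seq\redseq_2\subt{\var}{\tmthree}$, and similarly for abstraction (choosing the bound variable fresh) and application; this lets us push the substitution inside, apply the induction hypothesis, and re-close under the same context. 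The core of the case is checking that each axiom is stable under $\subt{\var}{\tmthree}$. For \permeqRule{IdL} and \permeqRule{IdR} this uses that substitution commutes with $\refl{\cdot}$ together with \rlem{source_and_target_of_subrt}, so that $\refl{\rsrc{\redseq}}\subt{\var}{\tmthree} = \refl{\rsrc{\redseq\subt{\var}{\tmthree}}}$; the cases \permeqRule{Assoc}, \permeqRule{Abs}, \permeqRule{App} and \permeqRule{Eta} follow from distributivity plus freshness. The delicate cases are the two $\beta$-axioms (with bound variable $\alpha$-renamed to some $\vartwo \neq \var$ with $\vartwo \notin \fv{\tmthree}$): for \permeqRule{BetaRT} the Substitution Lemma~(\rlem{substitution_lemma}) turns $(\redseq\subt{\vartwo}{\tm})\subt{\var}{\tmthree}$ into $\redseq\subt{\var}{\tmthree}\subt{\vartwo}{\tm\subt{\var}{\tmthree}}$, and for \permeqRule{BetaTR} \rlem{subtr_subt_commutation_I} turns $(\tm\subtr{\vartwo}{\redseq})\subt{\var}{\tmthree}$ into $\tm\subt{\var}{\tmthree}\subtr{\vartwo}{\redseq\subt{\var}{\tmthree}}$; in each case the substituted left-hand side is again an instance of the same axiom whose right-hand side is precisely this expression. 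Well-typedness of both sides of every axiom instance is preserved by \rlem{substitution_term_variables}(3).

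Item~(2) is a straightforward structural induction on the term $\tmthree$. When $\tmthree = \var$ we have $\tmthree\subtr{\var}{\redseq} = \redseq$ and $\tmthree\subtr{\var}{\redseq'} = \redseq'$, so the goal is exactly the hypothesis $\redseq \permeq \redseq'$; when $\tmthree$ is a different variable or a constant both results are syntactically equal, so reflexivity suffices. For $\tmthree = \lam{\vartwo}{\tmthree}$ and $\tmthree = \tmthree_1\,\tmthree_2$, term/rewrite substitution distributes over the constructor, so the induction hypothesis applies to the immediate subterms and closing under the abstraction (resp.\ application) congruence concludes the case. Typability of the rewrites produced is guaranteed by \rlem{fundamental_property_of_term_lifting}, and no axiom reasoning is needed here.

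Item~(3) assembles the pieces. Unfolding \rdef{rewrite_rewrite_substitution} and writing $\tmfive_0 = \rsrc{\redseqtwo}$, $\tmfive'_0 = \rsrc{\redseqtwo'}$, $\tmtwo_1 = \rtgt{\redseq}$, $\tmtwo'_1 = \rtgt{\redseq'}$, we must compare $\redseq\subt{\var}{\tmfive_0}\seq\tmtwo_1\subtr{\var}{\redseqtwo}$ with $\redseq'\subt{\var}{\tmfive'_0}\seq\tmtwo'_1\subtr{\var}{\redseqtwo'}$. Applying \rlem{permeq_endpoints_are_termeq} to $\redseqtwo \permeq \redseqtwo'$ and to $\redseq \permeq \redseq'$ yields $\tmfive_0 \termeq \tmfive'_0$ and $\tmtwo_1 \termeq \tmtwo'_1$. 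For the first composite component, chain item~(1), which gives $\redseq\subt{\var}{\tmfive_0} \permeq \redseq'\subt{\var}{\tmfive_0}$, with \rlem{congruence_termeq_subs}(1), which gives $\redseq'\subt{\var}{\tmfive_0} \permeq \redseq'\subt{\var}{\tmfive'_0}$. For the second, chain item~(2), which gives $\tmtwo_1\subtr{\var}{\redseqtwo} \permeq \tmtwo_1\subtr{\var}{\redseqtwo'}$, with \rlem{congruence_termeq_subs}(2), which gives $\tmtwo_1\subtr{\var}{\redseqtwo'} \permeq \tmtwo'_1\subtr{\var}{\redseqtwo'}$. Closing under the composition congruence then yields the claim; the two composites are well-typed because, by \rlem{source_and_target_of_subrt} and \rlem{source_and_target_of_subtr}, the target of each first component agrees with the source of the corresponding second component.

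I expect the main obstacle to be item~(1), and within it the two $\beta$-axiom cases: matching the effect of a term/term substitution on a $\beta$-redex against the correct commutation identity requires careful variable bookkeeping --- freshness of the redex's bound variable with respect to $\var$ and $\tmthree$ --- and invoking exactly the appropriate lemma, namely \rlem{substitution_lemma} for \permeqRule{BetaRT} and \rlem{subtr_subt_commutation_I} for \permeqRule{BetaTR}. Once item~(1) is established, items~(2) and~(3) are essentially routine.
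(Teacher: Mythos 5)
Your proposal is correct and follows essentially the same route as the paper: item~(1) by induction on the derivation of $\redseq \permeq \redseq'$, item~(2) by induction on $\tmthree$ (equivalently, its typing derivation), and item~(3) by the very same chain combining items~(1) and~(2) with \rlem{congruence_termeq_subs}(1) and~(2), using \rlem{permeq_endpoints_are_termeq} for the endpoint equivalences. The detail you supply for the $\beta$-axiom cases of item~(1) --- \rlem{substitution_lemma} for \permeqRule{BetaRT} and \rlem{subtr_subt_commutation_I} for \permeqRule{BetaTR} --- is exactly what the paper leaves implicit.
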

\begin{proof}
The first item is by induction on the derivation of $\redseq\permeq \redseq'$ and the second on that of 
  $\judgTerm{\tenv,\var:\typ}{\tmthree}{\typtwo}$. For the third we reason as follows:
\[
 \begin{array}{rcll}
   \redseq\subrr{\var}{\redseqtwo}
 & = &
   \redseq\subt{\var}{\tmthree}
   \seq
   \tmtwo\subtr{\var}{\redseqtwo}
 \\
 & \permeq &
   \redseq'\subt{\var}{\tmthree}
   \seq
   \tmtwo\subtr{\var}{\redseqtwo}
   & \text{by \rlem{congruence_permeq_subs}(1)}
 \\
 & \permeq &
   \redseq'\subt{\var}{\tmthree'}
   \seq
   \tmtwo\subtr{\var}{\redseqtwo}
   & \text{by \rlem{congruence_termeq_subs}(1)}
 \\
 & \permeq &
   \redseq'\subt{\var}{\tmthree'}
   \seq
   \tmtwo'\subtr{\var}{\redseqtwo}
   & \text{by \rlem{congruence_termeq_subs}(2)}
 \\
 & \permeq &
   \redseq'\subt{\var}{\tmthree'}
   \seq
   \tmtwo'\subtr{\var}{\redseqtwo'}
   & \text{by \rlem{congruence_permeq_subs}(2)}
 \\
 & = &
   \redseq'\subrr{\var}{\redseqtwo'}
 \end{array}
\]
\end{proof}

Rewrite/rewrite substitution generalizes
rewrite/term and term/rewrite substitution, in the sense that
$\redseq\subt{\var}{\tmtwo} \permeq \redseq\subrr{\var}{\refl{\tmtwo}}$
and $\tm\subtr{\var}{\redseq} \permeq \refl{\tm}\subrr{\var}{\redseq}$.

Other important facts involving rewrite/rewrite substitution
are the following.
First, it commutes with abstraction, application, and composition:

  \begin{lem}
\llem{subrr_recursion}
Let $\judgRewr{\tenv}{\redseqtwo}{\tmfive_0}{\tmfive_1}{\typ}$.
The following recursive equations hold for rewrite/rewrite substitution:
\begin{enumerate}
\item If $\judgRewr{\tenv,\var:\typ,\vartwo:\typtwo}{\redseq}{\tm_0}{\tm_1}{\typthree}$, then
      $
        (\lam{\vartwo}{\redseq})\subrr{\var}{\redseqtwo}
        \permeq
        \lam{\vartwo}{\redseq\subrr{\vartwo}{\redseqtwo}}
      $.
\item If $\judgRewr{\tenv,\var:\typ}{\redseq_1}{\tm_0}{\tm_1}{\typtwo \imp \typthree}$
      and $\judgRewr{\tenv,\var:\typ}{\redseq_2}{\tmtwo_0}{\tmtwo_1}{\typtwo}$,
      then
      $
        (\redseq_1\,\redseq_2)\subrr{\var}{\redseqtwo}
        \permeq
        \redseq_1\subrr{\var}{\redseqtwo}\,\redseq_2\subrr{\var}{\redseqtwo}
      $.
\item Suppose that:
\begin{itemize}
\item
  $\judgRewr{\tenv,\var:\typ}{\redseq_1}{\tm_0}{\tm_1}{\typtwo}$
  and
  $\judgRewr{\tenv,\var:\typ}{\redseq_2}{\tm_1}{\tm_2}{\typtwo}$
\item
  $\judgRewr{\tenv}{\redseqtwo_1}{\tmtwo_0}{\tmtwo_1}{\typtwo}$
  and
  $\judgRewr{\tenv}{\redseqtwo_2}{\tmtwo_1}{\tmtwo_2}{\typtwo}$
\end{itemize}
Then
$
  (\redseq_1\seq\redseq_2)\subrr{\var}{\redseqtwo_1\seq\redseqtwo_2}
  \permeq
  \redseq_1\subrr{\var}{\redseqtwo_1}\seq\redseq_2\subrr{\var}{\redseqtwo_2}
$.
\end{enumerate}
\end{lem}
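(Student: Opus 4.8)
The plan is to handle the three items separately: items (1) and (2) are direct computations from the definition of rewrite/rewrite substitution together with a single permutation axiom, while item (3) requires an auxiliary functoriality property of term/rewrite substitution combined with the $\permeqRule{Perm}$ equation of \rlem{coherence}. Throughout, $\redseqtwo$ has source $\tm'$ and target $\tmtwo'$. For item (1) I would unfold $(\lam{\vartwo}{\redseq})\subrr{\var}{\redseqtwo} = (\lam{\vartwo}{\redseq})\subt{\var}{\tm'}\seq(\lam{\vartwo}{\tm_1})\subtr{\var}{\redseqtwo}$; since both rewrite/term and term/rewrite substitution commute with abstraction (using $\vartwo\neq\var$ and $\vartwo\notin\fv{\redseqtwo}$ by the variable convention), this equals $\lam{\vartwo}{(\redseq\subt{\var}{\tm'})}\seq\lam{\vartwo}{(\tm_1\subtr{\var}{\redseqtwo})}$, and one application of $\permeqRule{Abs}$ yields $\lam{\vartwo}{(\redseq\subt{\var}{\tm'}\seq\tm_1\subtr{\var}{\redseqtwo})} = \lam{\vartwo}{(\redseq\subrr{\var}{\redseqtwo})}$ (the subscript on the right-hand side of the statement should read $\var$, not $\vartwo$). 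Item (2) is entirely analogous with $\permeqRule{App}$ in place of $\permeqRule{Abs}$: unfolding $\subrr$ and distributing $\subt{\var}{\tm'}$ and $\subtr{\var}{\redseqtwo}$ over the application turns the left-hand side into $(\redseq_1\subt{\var}{\tm'}\,\redseq_2\subt{\var}{\tm'})\seq(\tm_1\subtr{\var}{\redseqtwo}\,\tmtwo_1\subtr{\var}{\redseqtwo})$, which $\permeqRule{App}$ regroups into $(\redseq_1\subrr{\var}{\redseqtwo})\,(\redseq_2\subrr{\var}{\redseqtwo})$. In both cases the endpoints match by \rlem{source_and_target_of_subrt} and \rlem{source_and_target_of_subtr}, so every composition and application involved is well-typed.

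Before item (3) I would establish the auxiliary fact that term/rewrite substitution preserves composition: for composable $\redseqtwo_1\colon \tmtwo_0\rewto\tmtwo_1$ and $\redseqtwo_2\colon \tmtwo_1\rewto\tmtwo_2$ and any term $\tm$ typable in $\tenv,\var:\typ$, one has $\tm\subtr{\var}{\redseqtwo_1\seq\redseqtwo_2} \permeq \tm\subtr{\var}{\redseqtwo_1}\seq\tm\subtr{\var}{\redseqtwo_2}$. This is proved by induction on $\tm$: the case $\tm=\var$ holds on the nose; the cases $\tm=\vartwo\neq\var$ and $\tm=\cons$ reduce to $\vartwo\seq\vartwo\permeq\vartwo$ and $\cons\seq\cons\permeq\cons$, which hold by $\permeqRule{IdL}$ because a variable or constant coincides with its own unit rewrite; and the abstraction and application cases follow from the induction hypothesis together with the $\permeqRule{Abs}$ and $\permeqRule{App}$ axioms (applied right-to-left) and the contextual closure of $\permeq$.

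Finally, for item (3) I would chain the following equivalences, using $\permeqRule{Assoc}$ freely. Unfolding the left-hand side and distributing $\subt{\var}{\tmtwo_0}$ over composition gives $(\redseq_1\seq\redseq_2)\subrr{\var}{\redseqtwo_1\seq\redseqtwo_2} = \redseq_1\subt{\var}{\tmtwo_0}\seq\redseq_2\subt{\var}{\tmtwo_0}\seq\tm_2\subtr{\var}{\redseqtwo_1\seq\redseqtwo_2}$, and applying the auxiliary fact to the last factor rewrites this, up to $\permeq$, into $\redseq_1\subt{\var}{\tmtwo_0}\seq\redseq_2\subt{\var}{\tmtwo_0}\seq\tm_2\subtr{\var}{\redseqtwo_1}\seq\tm_2\subtr{\var}{\redseqtwo_2}$. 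On the other side, unfolding the two $\subrr$'s gives $\redseq_1\subrr{\var}{\redseqtwo_1}\seq\redseq_2\subrr{\var}{\redseqtwo_2} = \redseq_1\subt{\var}{\tmtwo_0}\seq\tm_1\subtr{\var}{\redseqtwo_1}\seq\redseq_2\subt{\var}{\tmtwo_1}\seq\tm_2\subtr{\var}{\redseqtwo_2}$. The two expressions differ only in their middle two factors, and the $\permeqRule{Perm}$ equation of \rlem{coherence}, instantiated at $\redseq_2\colon\tm_1\rewto\tm_2$ and $\redseqtwo_1\colon\tmtwo_0\rewto\tmtwo_1$, states precisely $\redseq_2\subt{\var}{\tmtwo_0}\seq\tm_2\subtr{\var}{\redseqtwo_1} \permeq \tm_1\subtr{\var}{\redseqtwo_1}\seq\redseq_2\subt{\var}{\tmtwo_1}$, which bridges the gap and completes the chain. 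I expect the auxiliary functoriality property to be the main obstacle: although its induction is routine, it is the only point where one must reason about the unit-rewrite behaviour of variables and constants (the $\vartwo\seq\vartwo\permeq\vartwo$ step), and it is exactly what makes the two chains meet; the rest of item (3) is the commuting-square reasoning already packaged in \rlem{coherence}.
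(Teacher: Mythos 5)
Your proposal is correct and follows essentially the same route as the paper's own proof: items (1) and (2) by unfolding $\subrr{\var}{\cdot}$, commuting the two substitutions with the constructor, and applying $\permeqRule{Abs}$/$\permeqRule{App}$, and item (3) via exactly the auxiliary fact $\tm\subtr{\var}{\redseqtwo_1}\seq\tm\subtr{\var}{\redseqtwo_2}\permeq\tm\subtr{\var}{\redseqtwo_1\seq\redseqtwo_2}$ (the paper's $(\star)$, which it likewise proves by induction on the typing of $\tm$) followed by the $\permeqRule{Perm}$ equation of \rlem{coherence}, all modulo $\permeqRule{Assoc}$. Your side remark is also right: the subscript $\vartwo$ in $\redseq\subrr{\vartwo}{\redseqtwo}$ in item (1) of the statement is a typo for $\var$, as confirmed by the paper's own proof, which concludes with $\lam{\vartwo}{\redseq\subrr{\var}{\redseqtwo}}$.
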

\begin{proof}
We check each item separately:
\begin{enumerate}
\item Abstraction:
  \[
    \begin{array}{rcll}
          (\lam{\vartwo}{\redseq})\subrr{\var}{\redseqtwo}
    & = & (\lam{\vartwo}{\redseq})\subt{\var}{\tmfive_0} \seq
          (\lam{\vartwo}{\tm_1})\subtr{\var}{\redseqtwo}
    \\
    & = & (\lam{\vartwo}{\redseq\subt{\var}{\tmfive_0}}) \seq
          (\lam{\vartwo}{\tm_1\subtr{\var}{\redseqtwo}})
    \\
    & \permeq &
          \lam{\vartwo}{(\redseq\subt{\var}{\tmfive_0} \seq
                          \tm_1\subtr{\var}{\redseqtwo})}
      & \text{by \permeqRule{Abs}}
    \\
    & = & \lam{\vartwo}{\redseq\subrr{\var}{\redseqtwo}}
    \end{array}
  \]
\item Application:
  similar to the previous case, using the \permeqRule{App} rule.

\item Composition:
We work implicitly modulo associativity of composition (``$\seq$''),
using the \permeqRule{Assoc} rule:
\[
  \begin{array}{rcll}
    (\redseq_1\seq\redseq_2)\subrr{\var}{\redseqtwo_1\seq\redseqtwo_2}
  & = &
         (\redseq_1\seq\redseq_2)\subt{\var}{\tmtwo_0}
    \seq \tm_2\subtr{\var}{\redseqtwo_1\seq\redseqtwo_2}
  \\
  & = &
         \redseq_1\subt{\var}{\tmtwo_0}
    \seq \redseq_2\subt{\var}{\tmtwo_0}
    \seq \tm_2\subtr{\var}{\redseqtwo_1\seq\redseqtwo_2}
  \\
  & \permeq &
         \redseq_1\subt{\var}{\tmtwo_0}
    \seq \redseq_2\subt{\var}{\tmtwo_0}
    \seq \tm_2\subtr{\var}{\redseqtwo_1}
    \seq \tm_2\subtr{\var}{\redseqtwo_2}
    & \text{by ($\star$)}
  \\
  & \permeq &
         \redseq_1\subt{\var}{\tmtwo_0}
    \seq \tm_1\subtr{\var}{\redseqtwo_1}
    \seq \redseq_2\subt{\var}{\tmtwo_1}
    \seq \tm_2\subtr{\var}{\redseqtwo_2}
    & \text{by \rlem{coherence}}
  \\
  & \permeq &
         \redseq_1\subrr{\var}{\redseqtwo_1}
    \seq \redseq_2\subrr{\var}{\redseqtwo_2}
  \end{array}
\]
The $(\star)$ is the following property which can be proved by induction on the derivation of
$\judgTerm{\tenv,\var:\typ}{\tm}{\typtwo}$:
Let $\judgTerm{\tenv,\var:\typ}{\tm}{\typtwo}$
and
$\judgRewr{\tenv}{\redseq}{\tmfour_0}{\tmfour_1}{\typ}$
and
$\judgRewr{\tenv}{\redseqtwo}{\tmfour_1}{\tmfour_2}{\typ}$.
Then:
$
  \tm\subtr{\var}{\redseq}\seq\tm\subtr{\var}{\redseqtwo}
  \permeq
  \tm\subtr{\var}{\redseq\seq\redseqtwo}
$.

\end{enumerate}
\end{proof}

Second, an analog of the substitution lemma holds:
   \begin{prop}
\lprop{substitution_property_for_subrr}
Suppose that
$\judgRewr{\tenv,\var:\typ,\vartwo:\typtwo}{\redseq}{\tm_0}{\tm_1}{\typthree}$,
$\judgRewr{\tenv,\vartwo:\typtwo}{\redseqtwo}{\tmtwo_0}{\tmtwo_1}{\typ}$, and
$\judgRewr{\tenv}{\redseqthree}{\tmthree_0}{\tmthree_1}{\typtwo}$.
Then
$
  \redseq
    \subrr{\var}{\redseqtwo}
    \subrr{\vartwo}{\redseqthree}
  \permeq
  \redseq
    \subrr{\vartwo}{\redseqthree}
    \subrr{\var}{\redseqtwo\subrr{\vartwo}{\redseqthree}}
$.
\end{prop}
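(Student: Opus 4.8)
The plan is to prove the statement by induction on the structure of the rewrite $\redseq$, mirroring the proof of the ordinary substitution lemma (\rlem{substitution_lemma}) but working up to $\permeq$ rather than on the nose. The engine of the argument is the compositionality of rewrite/rewrite substitution recorded in \rlem{subrr_recursion}, which lets me push $\subrr{\var}{\redseqtwo}$ and $\subrr{\vartwo}{\redseqthree}$ through abstractions, applications, and compositions, together with \rlem{congruence_permeq_subs}(3), which guarantees that each $\subrr{}{}$ respects $\permeq$ in both arguments and so permits rewriting subexpressions underneath a pending substitution. Because these identities hold only up to $\permeq$, the conclusion is an equivalence, as expected.

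For the base cases I would unfold \rdef{rewrite_rewrite_substitution}. When $\redseq$ is a constant or a rule symbol, both sides collapse to that symbol. When $\redseq = \var$, the inner substitution yields (via \permeqRule{IdL}) $\redseqtwo$ on the left; on the right, $\var\subrr{\vartwo}{\redseqthree} \permeq \var$ and then $\var\subrr{\var}{\redseqtwo\subrr{\vartwo}{\redseqthree}} \permeq \redseqtwo\subrr{\vartwo}{\redseqthree}$, so both sides are $\permeq \redseqtwo\subrr{\vartwo}{\redseqthree}$. When $\redseq = \vartwo$ both sides reduce to $\redseqthree$, using that $\var \notin \fv{\redseqthree}$ (since $\redseqthree$ is typed in $\tenv$) and hence $\redseqthree\subrr{\var}{-} \permeq \redseqthree$; this auxiliary fact follows by unfolding the definition, noting that $\subt{\var}{-}$ acts as the identity while $\subtr{\var}{-}$ produces the unit rewrite, which is discarded by \permeqRule{IdR}. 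The remaining variable case is immediate.

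The abstraction and application cases should be routine: I would use \rlem{subrr_recursion}(1) respectively (2) to expose the substitution on the immediate subterm(s), \rlem{congruence_permeq_subs}(3) to commute the two pending substitutions inward, and then close by the induction hypothesis under the contextual closure of $\permeq$ (renaming the bound variable of an abstraction so that it is distinct from $\var,\vartwo$ and fresh for $\redseqtwo,\redseqthree$).

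The one genuinely delicate case is composition, $\redseq = \redseq_1 \seq \redseq_2$, which I expect to be the main obstacle. Rewrite/rewrite substitution does \emph{not} distribute over $\seq$ in its first argument on the nose: the expression $\redseq_1\subrr{\var}{\redseqtwo}\seq\redseq_2\subrr{\var}{\redseqtwo}$ is not even composable, since the target of the first factor evaluates $\var$ at $\rtgt{\redseqtwo}$ whereas the source of the second evaluates it at $\rsrc{\redseqtwo}$. The remedy is to derive from \rlem{subrr_recursion}(3) the one-sided splitting
\[
  (\redseq_1\seq\redseq_2)\subrr{\var}{\redseqtwo}
  \;\permeq\;
  \redseq_1\subrr{\var}{\redseqtwo}\seq\redseq_2\subt{\var}{\rtgt{\redseqtwo}},
\]
obtained by writing $\redseqtwo \permeq \redseqtwo \seq \refl{\rtgt{\redseqtwo}}$ (by \permeqRule{IdR}) and collapsing $\redseq_2\subrr{\var}{\refl{\rtgt{\redseqtwo}}} \permeq \redseq_2\subt{\var}{\rtgt{\redseqtwo}}$. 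Applying this splitting consistently to both nested substitutions on each side, I reduce the left-hand side to $\redseq_1\subrr{\var}{\redseqtwo}\subrr{\vartwo}{\redseqthree}\seq\redseq_2\subt{\var}{\rtgt{\redseqtwo}}\subt{\vartwo}{\rtgt{\redseqthree}}$ and the right-hand side to $\redseq_1\subrr{\vartwo}{\redseqthree}\subrr{\var}{\redseqtwo\subrr{\vartwo}{\redseqthree}}\seq\redseq_2\subt{\vartwo}{\rtgt{\redseqthree}}\subt{\var}{\rtgt{\redseqtwo}\subt{\vartwo}{\rtgt{\redseqthree}}}$. The two first factors are related exactly by the induction hypothesis for $\redseq_1$, while the two second factors, which involve only term substitutions, are equal by \rlem{substitution_lemma} (using $\var \neq \vartwo$ and $\var \notin \fv{\rtgt{\redseqthree}}$); congruence of $\permeq$ under $\seq$ then yields the claim. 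The care required here is twofold: to split on the \emph{same} side throughout, so that the second factor is only ever term-substituted and hence comparable by the on-the-nose substitution lemma, and to verify at each splitting step that sources and targets match, so that every intermediate rewrite is well-typed.
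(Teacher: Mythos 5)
Your proof is correct, and it identifies the only genuinely delicate point accurately. The paper states \rprop{substitution_property_for_subrr} without an accompanying proof, so there is no line-by-line argument to compare against; the route its surrounding lemmas suggest (compare the proofs of \rlem{betaRR} and \rlem{congruence_permeq_subs}(3)) is a direct equational calculation: unfold both sides via \rdef{rewrite_rewrite_substitution} into threefold compositions and align them using \rlem{substitution_lemma}, \rlem{subtr_subt_commutation_I}, the distribution of $\tm\subtr{\var}{\cdot}$ over compositions (the $(\star)$ property inside the proof of \rlem{subrr_recursion}), and the coherence rule \permeqRule{Perm}. Your structural induction on $\redseq$ trades that single global computation for a case analysis, and it goes through: the base cases are as you describe (the $\redseq=\vartwo$ case correctly exploits $\var\notin\fv{\redseqthree}$ and \permeqRule{IdR}), abstraction and application follow from \rlem{subrr_recursion}(1)--(2) plus \rlem{congruence_permeq_subs}(3) and weakening, and the composition case is rescued exactly by your one-sided splitting $(\redseq_1\seq\redseq_2)\subrr{\var}{\redseqtwo} \permeq \redseq_1\subrr{\var}{\redseqtwo}\seq\redseq_2\subt{\var}{\rtgt{\redseqtwo}}$, obtained from \rlem{subrr_recursion}(3) after padding $\redseqtwo \permeq \redseqtwo\seq\refl{\rtgt{\redseqtwo}}$ and collapsing $\redseq_2\subrr{\var}{\refl{\rtgt{\redseqtwo}}}$ to a term substitution. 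Applying this splitting on the same side throughout ensures $\redseq_2$ only ever receives rewrite/term substitutions, so the second factors $\redseq_2\subt{\var}{\rtgt{\redseqtwo}}\subt{\vartwo}{\rtgt{\redseqthree}}$ and $\redseq_2\subt{\vartwo}{\rtgt{\redseqthree}}\subt{\var}{\rtgt{\redseqtwo}\subt{\vartwo}{\rtgt{\redseqthree}}}$ agree on the nose by \rlem{substitution_lemma} ($\var\neq\vartwo$, $\var\notin\fv{\rtgt{\redseqthree}}$ since $\redseqthree$ is typed under $\tenv$), while the first factors are the induction hypothesis for $\redseq_1$; note that only $\redseq_1$ needs the inductive call, which keeps the induction well-founded. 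What the inductive route buys is locality---each case is a short application of an already-proved lemma; what the direct calculation buys is independence from \rlem{subrr_recursion}, whose composition clause is itself proved by a computation of essentially the same shape, so the two approaches ultimately rest on the same ingredients: \permeqRule{Perm} and the commutation lemmas for the two one-sided substitutions.
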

Finally, as discussed above, a $\beta$-rule for arbitrary rewrites
holds in the following form:
\begin{lem}[Rewrite/rewrite $\beta$-reduction rule]
\llem{betaRR}
Let $\judgRewr{\tenv,\var:\typ}{\redseq}{\tm_0}{\tm_1}{\typtwo}$
and $\judgRewr{\tenv}{\redseqtwo}{\tmtwo_0}{\tmtwo_1}{\typ}$.
Then the following equivalence, called \permeqRule{BetaRR}, holds:
\[
  (\lam{\var}{\redseq})\,\redseqtwo
  \permeq
  \redseq\subrr{\var}{\redseqtwo}
  \HS\HS (\permeqRule{BetaRR})
\]
\end{lem}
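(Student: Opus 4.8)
The plan is to prove the equivalence by ``splitting'' the application $(\lam{\var}{\redseq})\,\redseqtwo$ into a sequential composition along the two dimensions of the application, and then collapsing each factor with one of the two $\beta$-axioms. Recall that, by \rdef{rewrite_rewrite_substitution}, the right-hand side unfolds to $\redseq\subrr{\var}{\redseqtwo} = \redseq\subt{\var}{\tmtwo_0}\seq\tm_1\subtr{\var}{\redseqtwo}$, where $\tmtwo_0$ is the source of $\redseqtwo$ and $\tm_1$ the target of $\redseq$. The first summand performs the function rewrite $\redseq$ while the argument is frozen at its source $\tmtwo_0$; the second performs the argument rewrite $\redseqtwo$ while the function is frozen at the abstraction over its target $\tm_1$. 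This is exactly the shape produced by reading the \permeqRule{App} axiom from right to left.

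Concretely, I would first pad both halves with units: using \permeqRule{Abs} together with \permeqRule{IdR} we get $\lam{\var}{\redseq} \permeq (\lam{\var}{\redseq})\seq(\lam{\var}{\refl{\tm_1}})$, and using \permeqRule{IdL} we get $\redseqtwo \permeq \refl{\tmtwo_0}\seq\redseqtwo$. By the contextual closure of $\permeq$ (applied in the application context) this turns $(\lam{\var}{\redseq})\,\redseqtwo$ into $\big((\lam{\var}{\redseq})\seq(\lam{\var}{\refl{\tm_1}})\big)\,\big(\refl{\tmtwo_0}\seq\redseqtwo\big)$. Applying \permeqRule{App} from right to left yields $\big((\lam{\var}{\redseq})\,\refl{\tmtwo_0}\big)\seq\big((\lam{\var}{\refl{\tm_1}})\,\redseqtwo\big)$. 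Finally, \permeqRule{BetaRT} collapses the first factor to $\redseq\subt{\var}{\tmtwo_0}$ and \permeqRule{BetaTR} collapses the second to $\tm_1\subtr{\var}{\redseqtwo}$; by contextual closure in the composition context, the whole expression becomes $\redseq\subt{\var}{\tmtwo_0}\seq\tm_1\subtr{\var}{\redseqtwo} = \redseq\subrr{\var}{\redseqtwo}$, as required.

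Two points require care, and they are the real obstacle. First, every $\permeq$-axiom may be applied only when both sides are well-typed, so I would check that each intermediate rewrite is typable: using \rlem{reflexivity} and \rlem{equal_terms_are_typable} the units $\refl{\tmtwo_0}$ and $\refl{\tm_1}$ are well-formed, and the two padded applications have matching endpoints (both the target of the first factor and the source of the second factor are $(\lam{\var}{\tm_1})\,\tmtwo_0$), so the composition is well-typed and shares its endpoints with $(\lam{\var}{\redseq})\,\redseqtwo$. Second, the \permeqRule{IdL}, \permeqRule{IdR} axioms are phrased with the computed endpoints $\refl{\rsrc{\cdot}}$ and $\refl{\rtgt{\cdot}}$, which need not be \emph{syntactically} equal to $\refl{\tmtwo_0}$ and $\refl{\tm_1}$; by \rlem{endpoint:coherence} they agree only up to $\termeq$. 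I would therefore run the padding step with $\refl{\rtgt{\redseq}}$ and $\refl{\rsrc{\redseqtwo}}$, so that \permeqRule{IdL} and \permeqRule{IdR} apply literally, obtaining $\redseq\subt{\var}{\rsrc{\redseqtwo}}\seq\rtgt{\redseq}\subtr{\var}{\redseqtwo}$ after the $\beta$-axioms, and then bridge to the desired $\redseq\subt{\var}{\tmtwo_0}\seq\tm_1\subtr{\var}{\redseqtwo}$ using \rlem{congruence_termeq_subs}(1) (for $\rsrc{\redseqtwo}\termeq\tmtwo_0$) and \rlem{congruence_termeq_subs}(2) (for $\rtgt{\redseq}\termeq\tm_1$). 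This coherence bookkeeping, rather than the algebraic rearrangement, is where the bulk of the attention goes.
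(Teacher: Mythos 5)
Your proposal is correct and follows essentially the same route as the paper's proof: pad the abstraction and the argument with unit rewrites via \permeqRule{IdR}/\permeqRule{IdL}, apply \permeqRule{App} from right to left, and collapse the two resulting factors with \permeqRule{BetaRT} and \permeqRule{BetaTR}. The only difference is that you explicitly handle the mismatch between the computed endpoints $\rsrc{\redseqtwo}, \rtgt{\redseq}$ and the typing endpoints $\tmtwo_0, \tm_1$ via \rlem{endpoint:coherence} and \rlem{congruence_termeq_subs}, a coherence step the paper's proof glosses over by writing $\tm_1$ and $\tmtwo_0$ directly in its identity-padding steps.
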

\begin{proof}
\[
  \begin{array}{rcll}
    (\lam{\var}{\redseq})\,\redseqtwo
  & \permeq &
    ((\lam{\var}{\redseq})\seq(\lam{\var}{\tm_1}))\,\redseqtwo
    & \permeqRule{IdR}
  \\
  & \permeq &
    ((\lam{\var}{\redseq})\seq(\lam{\var}{\tm_1}))\,(\tmtwo_0\seq\redseqtwo)
    & \permeqRule{IdL}
  \\
  & \permeq &
    (\lam{\var}{\redseq})\,\tmtwo_0 \seq (\lam{\var}{\tm_1})\,\redseqtwo
    & \permeqRule{App}
  \\
  & \permeq &
    \redseq\subt{\var}{\tmtwo_0} \seq (\lam{\var}{\tm_1})\,\redseqtwo
    & \permeqRule{BetaRT}
  \\
  & \permeq &
    \redseq\subt{\var}{\tmtwo_0} \seq \tm_1\subtr{\var}{\redseqtwo}
    & \permeqRule{BetaTR}
  \\
  & = &
    \redseq\subrr{\var}{\redseqtwo}
  \end{array}
\]
\end{proof}

A summary of the three notions of substitution introduced so far is as follows:
\begin{center}
  \begin{tabular}{lll}
    $\redseq\subt{\var}{\tm}$ & rewrite/term & Standard notion of (capture-avoiding) substitution \\
    $\tm\subtr{\var}{\redseq}$ & term/rewrite & \rdef{term_rewrite_substitution}\\
    $\redseq\subrr{\var}{\redseqtwo}$ & rewrite/rewrite & \rdef{rewrite_rewrite_substitution} \\
  \end{tabular}
\end{center}


\section{Flattening}
\lsec{flat_permutation_equivalence}

Allowing composition to be nested within
application and abstraction can give rise to rewrites in which it
is not obvious what reduction sequences of steps are being denoted.
An example from the previous
section might be the rewrite $
((\lam{\var}{\consof{f}\,\var})\seq\rulewittwo)\,((\consof{mu}\,(\lam{\var}{\rulewittwo\,\var}))\seq
(\consof{mu}\,(\lam{\var}{\consof{g}\,\var})))$ which denotes a reduction
sequence $\consof{f}\,(\consof{mu}\,(\lam{\var}{\consof{f}\,\var}))\rewto
\consof{g}\,(\consof{mu}\,(\lam{\var}{\consof{g}\,\var}))$ that replaces both
occurrences of $\consof{f}$ with $\consof{g}$ simultaneously. This section
shows how rewrites can be ``flattened'' so as to expose an underlying reduction
sequence, expressed as a canonical or {\em flat} rewrite.

Specifically, we define an operation called {\em flattening}
that given an arbitrary rewrite $\redseq$ computes a permutation equivalent
flat rewrite $\flatten{\redseq}$.
Flat rewrites do not allow the composition operator ``$\seq$''
to appear inside $\lambda$-abstractions
nor applications, so for instance $(\rulewit\seq\rulewittwo)\,\cons$ is not flat, 
whereas the permutation equivalent rewrite $\rulewit\,\cons \seq \rulewittwo\,\cons$
is flat.
Flattening is a way of ensuring
that arbitrary rewrites can always be converted into an equivalent
higher-order proof-term in Bruggink's~\cite{thesis:bruggink:08} sense,
given that, as mentioned in the introduction, Bruggink disallows compositions
to appear inside $\lambda$-abstractions and applications.
In addition, we define an equivalence relation $\redseq \flateq \redseqtwo$
between {\em flat} rewrites, and we show that in general
$(\redseq \permeq \redseqtwo) \iff (\flatten{\redseq} \flateq \flatten{\redseqtwo})$.
This means that flattening can be understood as a {\em semantics} for rewrites.
One additional use of flattening will be to show that permutation equivalence is decidable (\cf~end of~\rsec{projection}).

\subsection{The Flattening Rewrite System}
A {\em multistep} ($\mstep,\msteptwo,\mstepthree,\hdots$) is a rewrite without any occurrences of the composition operator.
The capture-avoiding substitution
of the free occurrences of $\var$ in $\mstep$ by $\msteptwo$
is written $\mstep\subm{\var}{\msteptwo}$,
which is in turn a multistep.
\begin{defi}[Flat Multistep and Flat Rewrite]
\ldef{flat_multistep_and_flat_rewrite}  
  A {\em flat multistep} ($\mstepn,\mstepntwo,\hdots$), is a multistep
in $\beta$-normal form, \ie without subterms of the form $(\lam{\var}{\mstep})\,\msteptwo$.
A {\em flat rewrite} ($\redseqn,\redseqntwo,\hdots$), is a rewrite
given by the grammar
$
    \redseqn ::=  \mstepn \,\mid\, \redseqn\seq\redseqntwo
$.
\end{defi}
Flat rewrites use the composition operator ``$\seq$'' at the top level,
that is
they are of the form $\mstepn_1\seq\hdots\seq\mstepn_n$
(up to associativity of ``$\seq$''), where each $\mstepn_i$
is a flat multistep.
Note that we do not require the $\mstepn_i$
to be in $\eta$-normal form nor in $\etalong$-normal form.
As mentioned in the introduction, flattening is achieved by means of a
{\em rewriting system whose objects are themselves rewrites}~(\rdef{flattening_system})
which is shown to be {\em confluent and terminating}~(\rprop{body:flat_sn}).

We also formulate an equational theory
defining a relation $\redseq \flateq \redseqtwo$
of {\em flat permutation equivalence}
between flat rewrites~(\rdef{body:flat_permutation_equivalence}).
The main result of this section is that
permutation equivalence is {\em sound and complete}
with respect
to flat permutation equivalence~(\rthm{body:soundness_completeness_flateq}).

\begin{rem}
\lremark{multistep_substitution_vs_subtr_subrt}
A substitution $\mstep\subm{\var}{\msteptwo}$
in which $\mstep$ is a term is a term/rewrite substitution,
\ie $\tm\subm{\var}{\msteptwo} = \tm\subtr{\var}{\msteptwo}$.
A substitution
in which $\msteptwo$ is a term is a rewrite/term substitution,
\ie $\mstep\subm{\var}{\tm} = \mstep\subt{\var}{\tm}$.
\end{rem}

\begin{defi}[Flattening Rewrite System $\flatteningSystem$]
\ldef{flattening_system}
The flattening system $\flatteningSystem$ is given by the 
following rules, closed under arbitrary contexts,
defined between \emph{typable} rewrites:
\[
  \begin{array}{rclrl}
    \lam{\var}{(\redseq\seq\redseqtwo)} 
    & \tof &
    (\lam{\var}{\redseq})\seq(\lam{\var}{\redseqtwo})
    &
    & \flatRule{Abs}
  \\
    (\redseq\seq\redseqtwo)\,\mstep
    & \tof &
    (\redseq\,\refl{\rsrc{\mstep}})\seq(\redseqtwo\,\mstep)
    & 
    & \flatRule{App1}
  \\
    \mstep\,(\redseq\seq\redseqtwo)
    & \tof &
    (\mstep\,\redseq)\seq(\refl{\rtgt{\mstep}}\,\redseqtwo)
    & 
    & \flatRule{App2}
  \\
    (\redseq_1\seq\redseq_2)\, (\redseqtwo_1\seq\redseqtwo_2)
    & \tof &
    ((\redseq_1\seq\redseq_2)\,\refl{\rsrc{\redseqtwo_1}})
    \seq
    (\refl{\rtgt{\redseq_2}}\,(\redseqtwo_1\seq\redseqtwo_2))
    & 
    & \flatRule{App3}
  \\
    (\lam{\var}{\mstep})\,\msteptwo
    & \tof &
    \mstep\subm{\var}{\msteptwo}  
    & 
    & \flatRule{BetaM}
  \\
    \lam{\var}{\mstep\,\var}
    & \tof &
    \mstep
    & \text{if $\var \notin \fv{\mstep}$}
    & \flatRule{EtaM}
  \end{array}
\]
Note that rules \flatRule{BetaM} and \flatRule{EtaM} apply to multisteps only.
The reduction relation $\tof$ is the union of all these rules,
closed by compatibility under arbitrary contexts.
We write $\flatten{\redseq}$ for the unique $\tof$-normal form of $\redseq$. Uniqueness will be proved below (\rprop{body:flat_sn}).
\end{defi}

\begin{exa}
Consider a rewriting rule
$\rewr{\rulewit}{\cons}{\constwo}{\typ}$.
The rewrite $(\lam{\var}{(\var\seq\var)})\,\rulewit$
can be flattened as follows:
  \[
  \begin{array}{lll}
    (\lam{\var}{(\var\seq\var)})\,\rulewit
    &
    \tof_{\flatRule{Abs}}
    & ((\lam{\var}{\var}) \seq (\lam{\var}{\var}))\,\rulewit
  \\
    &
    \tof_{\flatRule{App1}}
    & (\lam{\var}{\var})\,\cons \seq (\lam{\var}{\var})\,\rulewit
  \\
    &
    \tof_{\flatRule{BetaM}}
    & \cons \seq (\lam{\var}{\var})\,\rulewit
  \\
    &
    \tof_{\flatRule{BetaM}}
    & \cons \seq \rulewit
  \end{array}
  \]
\end{exa}


\subsection{Basic Properties of the Flattening Rewrite System}

We present some basic properties of  $\flatteningSystem$. First a remark. Flattening does not alter the source or target of a rewrite, modulo $\beta\eta$-equality. Moreover, in the particular case that an applied flattening rule is one not in the set $\set{\flatRule{BetaM},\flatRule{EtaM}}$, then the source and target are literally the same.

\begin{rem}
\lremark{flat_not_beta_steps_preserve_endpoints}
If $\redseq \tof_{\flatRuleAnon} \redseqtwo$
and $\flatRuleAnon \notin \set{\flatRule{BetaM},\flatRule{EtaM}}$
then $\rsrc{\redseq} = \rsrc{\redseqtwo}$
and $\rtgt{\redseq} = \rtgt{\redseqtwo}$.
\end{rem}

Next we consider flattening below term/rewrite and rewrite/term
substitution (\rlem{flattening_below_subs}). The proof the first item is
straightforward by induction on $\tm$ and that of the second by induction on
the context under which the step $\redseq \tof \redseq'$ takes place.

\begin{lem}
\llem{flattening_below_subs}
Suppose $\redseq \tof \redseq'$. Then:
\begin{enumerate}
\item $\tm\subtr{\var}{\redseq} \tofs \tm\subtr{\var}{\redseq'}$.

  \item $\redseq\subt{\var}{\tm} \tof \redseq'\subt{\var}{\tm}$.
\end{enumerate}
\end{lem}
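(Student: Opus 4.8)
The plan is to prove the two items by different inductions, as the hint in the excerpt already suggests. For item (1), I would proceed by induction on the term $\tm$ (whose typing is implicit via the hypotheses of \rlem{source_and_target_of_subtr} and \rlem{fundamental_property_of_term_lifting}). The point is that term/rewrite substitution $\tm\subtr{\var}{\redseq}$ threads a copy of $\redseq$ into each free occurrence of $\var$ in $\tm$, so a single step $\redseq \tof \redseq'$ becomes zero, one, or several $\flatsymbol$-steps on $\tm\subtr{\var}{\redseq}$, which is exactly why the conclusion uses $\tofs$ rather than $\tof$. In the base cases, $\tm = \vartwo$ with $\vartwo = \var$ gives $\tm\subtr{\var}{\redseq} = \redseq \tof \redseq' = \tm\subtr{\var}{\redseq'}$ (one step), while $\tm = \vartwo \neq \var$ and $\tm = \cons$ give identical expressions on both sides (zero steps); either way $\tofs$ holds. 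In the inductive cases $\tm = \lam{\vartwo}{\tmtwo}$ and $\tm = \tmtwo\,\tmthree$, the substitution distributes over the abstraction and application by \rdef{term_rewrite_substitution}, and since $\tof$ is closed under arbitrary contexts I can lift the $\tofs$-reductions obtained from the induction hypotheses componentwise, possibly concatenating the two reduction sequences in the application case.

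For item (2), I would instead do induction on the context under which the step $\redseq \tof \redseq'$ takes place, i.e.\ on the derivation that $\redseq$ reduces to $\redseq'$ via closure under arbitrary contexts. The key observation is that rewrite/term substitution $\cdot\subt{\var}{\tm}$ is a homomorphism with respect to the term constructors (abstraction, application, composition), so it commutes with the contextual closure of $\tof$; the only interesting work is at the root, where the applied rule is one of the six rules of \rdef{flattening_system}. For each such rule I must check that substituting $\tm$ for $\var$ on both the redex and the contractum again yields an instance of the \emph{same} rule. For the purely structural rules (\flatRule{Abs}, \flatRule{App1}, \flatRule{App2}, \flatRule{App3}) this is immediate, using \rlem{source_and_target_of_subrt} to see that the side terms $\refl{\rsrc{\mstep}}$, $\refl{\rtgt{\mstep}}$, etc.\ are transported correctly, since $\rsrc{\mstep\subt{\var}{\tm}} = \rsrc{\mstep}\subt{\var}{\tm}$ and dually for $\rtgt$. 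For \flatRule{BetaM} and \flatRule{EtaM} I would invoke the Substitution Lemma (\rlem{substitution_lemma}) to commute the two substitutions and the standard freshness bookkeeping for $\eta$ (ensuring $\var \notin \fv{\mstep}$ is preserved after substituting a term with $\var$ fresh enough). The conclusion is a single $\tof$-step, matching the statement.

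The main obstacle I anticipate is the root-case analysis for item (2), specifically verifying that the \flatRule{BetaM} rule is stable under rewrite/term substitution: the redex $(\lam{\vartwo}{\mstep})\,\msteptwo$ contracts to the \emph{multistep substitution} $\mstep\subm{\vartwo}{\msteptwo}$, so I must show $(\mstep\subt{\var}{\tm})\subm{\vartwo}{\msteptwo\subt{\var}{\tm}} = (\mstep\subm{\vartwo}{\msteptwo})\subt{\var}{\tm}$. This is precisely an instance of the Substitution Lemma (\rlem{substitution_lemma}), recalling from \rremark{multistep_substitution_vs_subtr_subrt} that $\cdot\subm{\cdot}{\cdot}$ coincides with ordinary capture-avoiding substitution in the relevant cases, and relying on the side conditions $\vartwo \neq \var$ and $\var \notin \fv{\msteptwo}$ that $\alpha$-renaming guarantees. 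Once this single algebraic identity is in place, the remaining rules are routine, and item (1) is entirely mechanical.
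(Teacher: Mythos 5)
Your proposal matches the paper's proof exactly: the paper establishes item (1) by induction on $\tm$ and item (2) by induction on the context under which the step $\redseq \tof \redseq'$ takes place, which is precisely your plan, including the observation that item (1) yields $\tofs$ (zero or more steps, one per free occurrence of $\var$) while item (2) yields a single step. One tiny slip worth noting: in the \flatRule{BetaM} root case, the instantiation of \rlem{substitution_lemma} requires $\vartwo \neq \var$ and $\vartwo \notin \fv{\tm}$ (both guaranteed by $\alpha$-renaming the bound variable $\vartwo$), not $\var \notin \fv{\msteptwo}$, which need not hold and is not needed.
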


Similarly, we have the following proved by induction on the target judgment:
\begin{lem}
\llem{flattening_preserves_betam_and_etam_steps}
    If $\flatRuleAnon \in \set{\flatRule{BetaM},\flatRule{EtaM}}$ then:
    \begin{itemize}
    \item
      $\mstep \tof_{\flatRuleAnon} \mstep'$ implies
      $\mstep\subm{\var}{\msteptwo} \tof_{\flatRuleAnon}
       \mstep'\subm{\var}{\msteptwo}$
    \item
      $\msteptwo \tof_{\flatRuleAnon} \msteptwo'$ implies
        $\mstep\subm{\var}{\msteptwo} \tofsSub{\flatRuleAnon}
       \mstep\subm{\var}{\msteptwo'}$.
    \end{itemize}

  \end{lem}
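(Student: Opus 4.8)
The plan is to prove each item by a straightforward structural induction, in the same style as the immediately preceding \rlem{flattening_below_subs}, exploiting that $\tof$ is closed under arbitrary contexts together with a substitution lemma for multistep/multistep substitution $\subm{\var}{\cdot}$. Since multisteps are exactly the composition-free rewrites, $\subm{\var}{\cdot}$ is ordinary capture-avoiding substitution, so the multistep analogue of \rlem{substitution_lemma}, namely $\mstep_1\subm{\vartwo}{\mstep_2}\subm{\var}{\msteptwo} = \mstep_1\subm{\var}{\msteptwo}\subm{\vartwo}{\mstep_2\subm{\var}{\msteptwo}}$ whenever $\vartwo \neq \var$ and $\vartwo \notin \fv{\msteptwo}$, holds by the same induction on $\mstep_1$.

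\textbf{First item.} I would induct on the context in which the step $\mstep \tof_{\flatRuleAnon} \mstep'$ takes place (equivalently, on $\mstep$). In the congruence cases, where the step occurs under an abstraction $\lam{\vartwo}{\cdot}$ or inside one argument of an application, one renames $\vartwo$ so that $\vartwo \neq \var$ and $\vartwo \notin \fv{\msteptwo}$, pushes the substitution inside, and concludes by the induction hypothesis. The two interesting cases are the root steps. For a $\flatRule{BetaM}$ redex $\mstep = (\lam{\vartwo}{\mstep_1})\,\mstep_2$ with $\mstep' = \mstep_1\subm{\vartwo}{\mstep_2}$ (again taking $\vartwo \neq \var$, $\vartwo \notin \fv{\msteptwo}$) we get $\mstep\subm{\var}{\msteptwo} = (\lam{\vartwo}{\mstep_1\subm{\var}{\msteptwo}})\,\mstep_2\subm{\var}{\msteptwo}$, which is again a $\flatRule{BetaM}$ redex and reduces in one step to $\mstep_1\subm{\var}{\msteptwo}\subm{\vartwo}{\mstep_2\subm{\var}{\msteptwo}}$; by the substitution lemma this equals $\mstep_1\subm{\vartwo}{\mstep_2}\subm{\var}{\msteptwo} = \mstep'\subm{\var}{\msteptwo}$, as required. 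For an $\flatRule{EtaM}$ redex $\mstep = \lam{\vartwo}{\mstep_1\,\vartwo}$ with $\vartwo \notin \fv{\mstep_1}$ and $\mstep' = \mstep_1$, the freshness conventions give $\vartwo \notin \fv{\mstep_1\subm{\var}{\msteptwo}}$, so $\mstep\subm{\var}{\msteptwo} = \lam{\vartwo}{\mstep_1\subm{\var}{\msteptwo}\,\vartwo}$ is still an $\flatRule{EtaM}$ redex, contracting in one step to $\mstep_1\subm{\var}{\msteptwo} = \mstep'\subm{\var}{\msteptwo}$.

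\textbf{Second item.} Here the step lies in the substituted argument, so I would induct on the structure of $\mstep$. If $\mstep = \var$ then $\mstep\subm{\var}{\msteptwo} = \msteptwo \tof_{\flatRuleAnon} \msteptwo' = \mstep\subm{\var}{\msteptwo'}$, a single step. If $\mstep$ is a variable other than $\var$, a constant, or a rule symbol, both sides equal $\mstep$ and we use reflexivity (zero steps). For $\mstep = \lam{\vartwo}{\mstep_1}$ (with $\vartwo$ fresh for $\msteptwo,\msteptwo'$) and for $\mstep = \mstep_1\,\mstep_2$ one applies the induction hypothesis to the immediate subterm(s) and closes each resulting $\flatRuleAnon$-sequence under $\lam{\vartwo}{\cdot}$, respectively under application, using that $\tof$ is a congruence; in the application case the two sequences are concatenated. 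This case analysis is exactly why the reflexive transitive closure $\tofsSub{\flatRuleAnon}$ is needed rather than a single step: a free occurrence of $\var$ may be erased or duplicated by $\mstep$, so the step $\msteptwo \tof_{\flatRuleAnon} \msteptwo'$ is replayed once per occurrence.

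\textbf{Main obstacle.} No step is deep. The only delicate point is the $\flatRule{BetaM}$ root case of the first item, where the commutation of the two $\subm{}{}$-substitutions must be justified by the substitution lemma, together with the bookkeeping of the freshness side conditions ($\vartwo \neq \var$, $\vartwo \notin \fv{\msteptwo}$) that guarantee the root redex survives the substitution unchanged in shape. Everything else is routine congruence closure, paralleling \rlem{flattening_below_subs}.
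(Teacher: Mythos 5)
Your proposal is correct and follows essentially the same route as the paper, which simply states that the lemma is proved by induction (the paper gives no further detail); your structural induction with the congruence cases handled by context closure is the standard way to carry that out. The two points you isolate as the only delicate ones — the substitution-commutation identity in the \flatRule{BetaM} root case and the freshness bookkeeping ensuring the \flatRule{EtaM} redex survives — are exactly the content such an induction needs, and your handling of both is sound.
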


The following result states that \flatRule{BetaM} and \flatRule{EtaM} steps can be postponed
after steps of other kinds. We will make use of it in our proof of SN of flattening. The proof is by induction on the context under which the step
$\redseq \tof_{\flatRuleAnon} \redseqtwo$ takes place.

\begin{lem}
\llem{flat_beta_eta_postponement}
Let $\redseq \tof_{\flatRuleAnon} \redseqtwo
             \tof_{\flatRuleAnonTwo} \redseqthree$
where $\flatRuleAnon \in \set{\flatRule{BetaM},\flatRule{EtaM}}$
and $\flatRuleAnonTwo \notin \set{\flatRule{BetaM},\flatRule{EtaM}}$.
Then there is a rewrite $\redseqfour$
such that $\redseq \tof_{\flatRuleAnonTwo} \redseqfour
                   \tof_{\flatRuleAnon}^+ \redseqthree$.
\end{lem}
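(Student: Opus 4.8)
The plan is to contract the first step $\redseq \tof_{\flatRuleAnon} \redseqtwo$ at the hole of a context $\cctx$, writing $\redseq = \cctxof{\Delta}$ and $\redseqtwo = \cctxof{\Delta'}$ where $\Delta \tof_{\flatRuleAnon} \Delta'$ is a root \flatRule{BetaM} or \flatRule{EtaM} contraction, and then to proceed by induction on $\cctx$, classifying the position of the second redex $\redseqtwo \tof_{\flatRuleAnonTwo} \redseqthree$ relative to the hole. First I would dispose of the two degenerate configurations. The two redexes cannot share a root: the immediate subterms of a \flatRule{BetaM}/\flatRule{EtaM} redex are all multisteps (namely $\lam{\var}{\mstep}$ and $\msteptwo$, resp.\ $\mstep\,\var$), hence contain no ``$\seq$'', whereas every structural redex \flatRule{Abs}/\flatRule{App1}/\flatRule{App2}/\flatRule{App3} has a ``$\seq$'' as one of its immediate subterms, so no term can be both kinds of redex at the root. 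The second redex also cannot lie strictly below the hole, since it would then sit inside the contractum $\Delta'$, which is a multistep ($\Delta' = \mstep\subm{\var}{\msteptwo}$ for \flatRule{BetaM} and $\Delta' = \mstep$ for \flatRule{EtaM}) and therefore contains no ``$\seq$'' and no structural redex. In particular the base case $\cctx = \ctxhole$ is vacuous, because then $\redseqtwo = \Delta'$ is a multistep admitting no $\flatRuleAnonTwo$-step.

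Only two genuine configurations remain, and the induction on $\cctx$ serves as bookkeeping for them. When the second redex lies in an immediate subterm of $\cctx$ disjoint from the one holding the hole, the two contractions are at disjoint positions and commute by compatibility, giving $\redseq \tof_{\flatRuleAnonTwo} \redseqfour \tof_{\flatRuleAnon} \redseqthree$ with a single $\flatRuleAnon$-step. When the second redex lies within the same immediate subterm as the hole but still strictly above it, I peel off the top constructor of $\cctx$ and apply the induction hypothesis to the smaller context, then rewrap. The base cases of this recursion are precisely those in which the second redex's pattern involves the constructor sitting directly above the hole, so that $\Delta$ is nested inside one of the immediate components of the structural redex; these overlaps are the crux and are handled by direct computation. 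Here I must also check that the structural rule remains applicable \emph{before} $R_1$ is performed: since \flatRule{BetaM}/\flatRule{EtaM} neither create nor destroy ``$\seq$'', the required composition (and any multistep side-condition, e.g.\ the multistep argument of \flatRule{App1}) is already present in $\redseq$.

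The hard part will be the overlaps where $\Delta$ lands in a component that the structural rule copies into a reflexive wrapper: the argument of \flatRule{App1}, the function-side of \flatRule{App2}, and the crossed components of \flatRule{App3}. There, contracting $R_2$ first keeps one genuine copy of the component (carrying an honest $\flatRuleAnon$-residual of $\Delta$) but also produces a copy of the form $\refl{\rsrc{\cdot}}$ or $\refl{\rtgt{\cdot}}$. The key subsidiary fact, complementary to \rremark{flat_not_beta_steps_preserve_endpoints}, is that a \flatRule{BetaM} step induces exactly one $\beta$-step on both source and target, and a \flatRule{EtaM} step exactly one $\eta$-step; this follows from the homomorphic definitions of $\rsrc{\cdot}$ and $\rtgt{\cdot}$ together with \rlem{source_and_target_of_subtr}. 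Because a term is a multistep, that single $\beta$- (resp.\ $\eta$-) contraction inside the reflexive copy is itself an $\flatRuleAnon$-step. Replaying $R_1$ thus costs \emph{two} $\flatRuleAnon$-steps in these overlaps, which is exactly why the conclusion must use $\tof_{\flatRuleAnon}^{+}$ rather than a single step; in the non-copying overlaps (such as \flatRule{Abs}, or $\Delta$ landing in a non-duplicated component) a single residual suffices. Typability of the intermediate $\redseqfour$ needs no separate argument, since $\tof$ is defined only between typable rewrites, so exhibiting the $\flatRuleAnonTwo$-step automatically keeps us within typable rewrites.
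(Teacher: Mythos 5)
Your proposal is correct and takes essentially the same route as the paper: induction on the context under which the $\flatRule{BetaM}$/$\flatRule{EtaM}$ step takes place, with the duplicating structural rules (\flatRule{App1}, \flatRule{App2}, \flatRule{App3}) being the reason the conclusion needs $\tof_{\flatRuleAnon}^{+}$ rather than a single step. One harmless overstatement: the reflexive copy does not always absorb exactly one $\flatRuleAnon$-step---for \flatRule{App3} the duplicated component may itself contain compositions, and $\rsrc{\cdot}$/$\rtgt{\cdot}$ discard one side of a composition, so zero steps can suffice there---but this does not affect the argument, since the conclusion only demands one or more steps.
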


The following auxiliary result will be of use in proving confluence of $\tof$; its  proof is by  induction on $\redseq$. Here $\tofeqSub{\flatRuleAnon}$ denotes
the {\em reflexive} closure of $\tof_{\flatRuleAnon}$:
\begin{lem}[Source-Target of Beta/Eta Flattening]
\llem{flatten_beta_eta_source_target}
If $\redseq \tof_{\flatRuleAnon} \redseqtwo$
where $\flatRuleAnon \in \set{\flatRule{BetaM},\flatRule{EtaM}}$,
then $\refl{\rsrc{\redseq}} \tofeqSub{\flatRuleAnon} \refl{\rsrc{\redseqtwo}}$
and $\refl{\rtgt{\redseq}} \tofeqSub{\flatRuleAnon} \refl{\rtgt{\redseqtwo}}$.

\end{lem}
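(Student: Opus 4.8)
The plan is to proceed by structural induction on $\redseq$, with a case analysis on the position at which the contracted redex occurs. Since $\flatRule{BetaM}$ and $\flatRule{EtaM}$ apply only to multisteps, a root redex is either a $\flatRule{BetaM}$-redex $(\lam{\var}{\mstep})\,\msteptwo$ (when $\redseq$ is an application) or an $\flatRule{EtaM}$-redex $\lam{\var}{\mstep\,\var}$ with $\var\notin\fv{\mstep}$ (when $\redseq$ is an abstraction); otherwise the redex lies strictly inside a subrewrite and the induction hypothesis applies.

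Before the case analysis I would record two routine commutation facts. First, that source and target commute with multistep substitution, $\rsrc{\mstep\subm{\var}{\msteptwo}} = \rsrc{\mstep}\subt{\var}{\rsrc{\msteptwo}}$ and dually for $\rtgt{\cdot}$; this is a straightforward induction on $\mstep$ in the style of \rlem{source_and_target_of_subtr}, appealing to \rremark{multistep_substitution_vs_subtr_subrt}. Second, that the unit-rewrite operator distributes over the term constructors, $\refl{\lam{\var}{\tm}} = \lam{\var}{\refl{\tm}}$ and $\refl{\tm\,\tmtwo} = \refl{\tm}\,\refl{\tmtwo}$, and commutes with substitution, $\refl{\tm}\subm{\var}{\refl{\tmtwo}} = \refl{\tm\subt{\var}{\tmtwo}}$. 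I would also note the elementary inclusion $\fv{\rsrc{\mstep}} \subseteq \fv{\mstep}$, and symmetrically for $\rtgt{\cdot}$ (source and target never create free variables, rule symbols being closed).

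For the root cases the crux is that taking $\refl{\rsrc{\cdot}}$ of a redex again yields a redex of the same kind. For $\flatRule{BetaM}$ we have $\refl{\rsrc{\redseq}} = (\lam{\var}{\refl{\rsrc{\mstep}}})\,\refl{\rsrc{\msteptwo}}$, a $\flatRule{BetaM}$-redex that contracts in one step to $\refl{\rsrc{\mstep}}\subm{\var}{\refl{\rsrc{\msteptwo}}}$, which by the commutation facts equals $\refl{\rsrc{\mstep}\subt{\var}{\rsrc{\msteptwo}}} = \refl{\rsrc{\mstep\subm{\var}{\msteptwo}}} = \refl{\rsrc{\redseqtwo}}$; hence $\refl{\rsrc{\redseq}} \tof_{\flatRule{BetaM}} \refl{\rsrc{\redseqtwo}}$, which lies in the reflexive closure. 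For $\flatRule{EtaM}$, $\refl{\rsrc{\redseq}} = \lam{\var}{\refl{\rsrc{\mstep}}\,\var}$ with $\var\notin\fv{\rsrc{\mstep}}$, so it is an $\flatRule{EtaM}$-redex contracting to $\refl{\rsrc{\mstep}} = \refl{\rsrc{\redseqtwo}}$. The target halves are symmetric.

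In the congruence cases, where the redex lies under an abstraction or on one side of an application, the endpoints distribute, so the induction hypothesis together with closure of $\tof$ under arbitrary contexts lifts the reflexive-closure relation. The delicate case is composition $\redseq = \redseq_1\seq\redseq_2$: here $\rsrc{\redseq} = \rsrc{\redseq_1}$ while $\rtgt{\redseq} = \rtgt{\redseq_2}$, so if the redex lies in $\redseq_1$ then the source is handled by the induction hypothesis whereas the target is literally unchanged (the reflexive part), and symmetrically if it lies in $\redseq_2$. This is exactly why the statement is phrased with the reflexive closure $\tofeqSub{\flatRuleAnon}$ rather than a single step: along the component not containing the redex, the corresponding endpoint does not move. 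I expect this bookkeeping — matching each of source and target to the correct component — together with the substitution-commutation identities used in the $\flatRule{BetaM}$ root case, to be the only points requiring care; everything else is mechanical.
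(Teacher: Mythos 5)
Your proposal is correct and takes essentially the same approach as the paper, whose proof is precisely an induction on $\redseq$ (equivalently, on the context under which the $\flatRule{BetaM}$/$\flatRule{EtaM}$ redex is contracted). The auxiliary facts you isolate---commutation of $\rsrc{\cdot}$ and $\rtgt{\cdot}$ with multistep substitution, distribution of $\refl{\cdot}$ over constructors, and the composition case being the sole source of the reflexive part of $\tofeqSub{\flatRuleAnon}$---are exactly the points that induction must handle.
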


 We next address strong normalization (\rprop{flat_sn}) and confluence (\rprop{flat_confluent}).
\subsection{Strong Normalization of the Flattening Rewrite System}

Since the union of \flatRule{BetaM} and \flatRule{EtaM} can
easily be seen to be SN (given that they essentially operate on
simply typed terms) and in view of \rlem{flat_beta_eta_postponement}, in
order to obtain SN of $\flatteningSystem$ we are left with the task
of proving that the union of all the remaining rules is SN. This can be done by defining an appropriate well-founded measure
on rewrites. Consider
$\#(\redseq) \eqdef (\rheavy{\redseq}, \rweight{\redseq})$ where:

\begin{defi}[Heavy applications]
An application $\redseq\,\redseqtwo$ is {\em heavy}
if $\redseq$ and $\redseqtwo$ are not multisteps,
\ie if both $\redseq$ and $\redseqtwo$ contain compositions~(``$\seq$'').
We write $\rheavy{\redseq}$ to stand for the number of heavy
applications in $\redseq$. More precisely:
\[
  \begin{array}{rcll}
    \rheavy{\var} = \rheavy{\cons} = \rheavy{\rulewit} & \eqdef & 0 \\
    \rheavy{\lam{\var}{\redseq}} & \eqdef & \rheavy{\redseq} \\
    \rheavy{\redseq\,\redseqtwo} & \eqdef &
      \rheavy{\redseq} + \rheavy{\redseqtwo} +
      \begin{cases}
        1 & \text{if $\redseq\,\redseqtwo$ is heavy} \\
        0 & \text{otherwise} \\
      \end{cases}
    \\
    \rheavy{\redseq\seq\redseqtwo} & \eqdef &
      \rheavy{\redseq} + \rheavy{\redseqtwo}
  \end{array}
\]
\end{defi}

\begin{defi}[Weight of a Rewrite]
The {\em weight} of a rewrite $\redseq$
is a non-negative integer $\rweight{\redseq}$
defined inductively as follows:
\[
  \begin{array}{rcll}
    \rweight{\var} = \rweight{\cons} = \rweight{\rulewit} & \eqdef & 0 \\
    \rweight{\lam{\var}{\redseq}} & \eqdef & 2\,\rweight{\redseq} \\
    \rweight{\redseq\,\redseqtwo} & \eqdef &
      2\,\rweight{\redseq} + 2\,\rweight{\redseqtwo}
    \\
    \rweight{\redseq\seq\redseqtwo} & \eqdef &
      1 + \rweight{\redseq} + \rweight{\redseqtwo}
  \end{array}
\]
\end{defi}

We have the following two results, both of which are proved by induction on the context under which the 
step $\redseq \tof_{\flatRuleAnon} \redseqtwo$ takes place:

\begin{lem}[Decrease of heavy applications]
\llem{rheavy_decrease}
Let $\redseq \tof_{\flatRuleAnon} \redseqtwo$
where $\flatRuleAnon \notin \set{\flatRule{BetaM},\flatRule{EtaM}}$.
Then $\rheavy{\redseq} \geq \rheavy{\redseqtwo}$.
Furthermore if $\flatRuleAnon = \flatRule{App3}$
then $\rheavy{\redseq} > \rheavy{\redseqtwo}$.
\end{lem}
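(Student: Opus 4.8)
The plan is to proceed exactly as announced, by induction on the context under which the step $\redseq \tof_{\flatRuleAnon} \redseqtwo$ occurs. Before the case analysis, I would isolate one auxiliary observation on which the whole argument hinges: for every rule $\flatRuleAnon \notin \set{\flatRule{BetaM},\flatRule{EtaM}}$, \emph{both} the left- and the right-hand side of the rule contain an occurrence of the composition operator ``$\seq$''. Consequently, whenever $\redseq' \tof_{\flatRuleAnon} \redseqtwo'$ is such a step at any position, $\redseq'$ contains a composition (the matched redex pattern does) and so does $\redseqtwo'$ (the contractum pattern does). In particular neither $\redseq'$ nor $\redseqtwo'$ is a multistep. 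I would also record the trivial fact that a unit rewrite $\refl{\tm}$ and a multistep $\mstep$ contain no composition, whence $\rheavy{\refl{\tm}} = 0$ and any application having such an operand is \emph{not} heavy (no subapplication of a composition-free rewrite can be heavy).

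For the \textbf{base cases}, where the step is a root step, I would compute $\rheavy{\cdot}$ on both endpoints of each of the four applicable rules directly. For $\flatRule{Abs}$ the two counts coincide because abstraction is transparent for $\rheavy{\cdot}$ and composition is additive. For $\flatRule{App1}$, the redex $(\redseq\seq\redseqtwo)\,\mstep$ is not heavy (its operand $\mstep$ is a multistep), so its count is $\rheavy{\redseq}+\rheavy{\redseqtwo}+\rheavy{\mstep}$, and the contractum $(\redseq\,\refl{\rsrc{\mstep}})\seq(\redseqtwo\,\mstep)$ has the same count since neither of its two applications is heavy (one has the unit operand $\refl{\rsrc{\mstep}}$, the other the multistep operand $\mstep$); $\flatRule{App2}$ is symmetric. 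The only rule producing a strict decrease is $\flatRule{App3}$: the redex $(\redseq_1\seq\redseq_2)\,(\redseqtwo_1\seq\redseqtwo_2)$ \emph{is} heavy, since both operands contain a composition, contributing an extra $+1$, whereas in the contractum $((\redseq_1\seq\redseq_2)\,\refl{\rsrc{\redseqtwo_1}})\seq(\refl{\rtgt{\redseq_2}}\,(\redseqtwo_1\seq\redseqtwo_2))$ each application has a unit-rewrite operand and is therefore not heavy; the two counts differ by exactly $1$, giving $\rheavy{\redseq} = \rheavy{\redseqtwo} + 1 > \rheavy{\redseqtwo}$.

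For the \textbf{inductive cases}, where the redex sits under a context, the abstraction context is immediate since $\rheavy{\lam{\var}{\redseq'}} = \rheavy{\redseq'}$, and either composition context is immediate since $\rheavy{\cdot\seq\cdot}$ is the sum of the two components; in all of these the conclusion follows from the induction hypothesis, preserving strictness in the $\flatRule{App3}$ case. The delicate case, and the one I expect to be the main obstacle, is the application context, say $\redseq'\,\redseqthree \tof_{\flatRuleAnon} \redseqtwo'\,\redseqthree$ coming from $\redseq' \tof_{\flatRuleAnon} \redseqtwo'$ (the symmetric case is analogous). Here one must rule out that contracting a redex inside an operand could flip the heaviness of the enclosing application, introducing an unwanted $+1$. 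This is precisely where the auxiliary observation is used: since $\flatRuleAnon \notin \set{\flatRule{BetaM},\flatRule{EtaM}}$, both $\redseq'$ and $\redseqtwo'$ contain a composition, so the enclosing application is heavy iff $\redseqthree$ contains a composition, independently of whether the left operand is $\redseq'$ or $\redseqtwo'$. Hence the heaviness contribution is identical on both sides, and
\[
  \rheavy{\redseq'\,\redseqthree} - \rheavy{\redseqtwo'\,\redseqthree}
  =
  \rheavy{\redseq'} - \rheavy{\redseqtwo'},
\]
which is $\geq 0$ by the induction hypothesis and $> 0$ when $\flatRuleAnon = \flatRule{App3}$. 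This closes the induction and establishes both claims.
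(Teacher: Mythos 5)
Your proof is correct and takes essentially the same approach as the paper, namely induction on the context under which the step takes place, with a direct computation of $\rheavy{\cdot}$ on both sides of each root rule (equality for \flatRule{Abs}, \flatRule{App1}, \flatRule{App2}; a decrease of exactly $1$ for \flatRule{App3}, whose redex is the only heavy application among the patterns). Your auxiliary observation — that both the left- and right-hand sides of every rule other than \flatRule{BetaM} and \flatRule{EtaM} contain a composition, so the heaviness of an enclosing application cannot flip when the step occurs inside an operand — is precisely the point needed to push the induction through the application context, and you handle it correctly.
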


\begin{lem}[Decrease of weight]
\llem{rweight_decrease}
Let $\redseq \tof_{\flatRuleAnon} \redseqtwo$
where $\flatRuleAnon \in \set{\flatRule{Abs},\flatRule{App1},\flatRule{App2}}$.
Then $\rweight{\redseq} > \rweight{\redseqtwo}$.
\end{lem}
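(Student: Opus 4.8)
The plan is to proceed exactly as announced, by induction on the context $\kctx$ under which the step $\redseq \tof_{\flatRuleAnon} \redseqtwo$ takes place, so that $\redseq = \kctxof{L}$ and $\redseqtwo = \kctxof{R}$ for one of the three root redexes $L \tof R$. Before doing so, I would record a small but crucial observation: every multistep has weight $0$. Indeed, a multistep contains no occurrence of ``$\seq$'', so in the defining clauses for $\rweight{\cdot}$ the composition case never fires, and a straightforward induction on the structure of the multistep gives $\rweight{\mstep} = 0$; in particular $\rweight{\refl{\rsrc{\mstep}}} = \rweight{\refl{\rtgt{\mstep}}} = 0$, since the reflexive rewrite of a term is itself a multistep.

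For the base case $\kctx = \ctxhole$, I would compute the weights of both sides of each rule directly. For $\flatRule{Abs}$ the left-hand side has weight $2(1 + \rweight{\redseq} + \rweight{\redseqtwo})$ while the right-hand side has weight $1 + 2\rweight{\redseq} + 2\rweight{\redseqtwo}$, so the two differ by exactly $1$. The cases $\flatRule{App1}$ and $\flatRule{App2}$ are analogous: using $\rweight{\mstep} = 0$ together with the vanishing of the weights of the reflexive rewrites introduced on the right-hand side, one finds in each case that the left-hand side has weight $2 + 2\rweight{\redseq} + 2\rweight{\redseqtwo}$ and the right-hand side $1 + 2\rweight{\redseq} + 2\rweight{\redseqtwo}$ (the application former contributes its factor of $2$ to each copy of the composition, whereas the composition itself only adds $1$ per copy). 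Thus in every root case $\rweight{\redseq} = \rweight{\redseqtwo} + 1 > \rweight{\redseqtwo}$.

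For the inductive step it suffices to note that $\rweight{\cdot}$ is \emph{strictly monotone} in each immediate subterm position: from $\rweight{\redseq} > \rweight{\redseq'}$ we obtain $2\rweight{\redseq} > 2\rweight{\redseq'}$ (the case of an abstraction, and of either argument of an application) and $1 + \rweight{\redseq} + \rweight{\redseqtwo} > 1 + \rweight{\redseq'} + \rweight{\redseqtwo}$ (either argument of a composition), because all weights are non-negative integers. Hence, splitting on whether $\kctx$ descends through an abstraction, through the left or right of an application, or through the left or right of a composition, the strict inequality supplied by the induction hypothesis propagates upward to $\redseq = \kctxof{L}$ and $\redseqtwo = \kctxof{R}$, which closes the induction.

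I do not expect a genuine obstacle here; the argument is a routine well-founded-measure computation. The only point requiring care --- and the reason these three rules are grouped together, apart from $\flatRule{App3}$, which is handled by $\rheavy{\cdot}$ in \rlem{rheavy_decrease} --- is the weight bookkeeping in $\flatRule{App1}$ and $\flatRule{App2}$: the reflexive rewrites $\refl{\rsrc{\mstep}}$ and $\refl{\rtgt{\mstep}}$ appearing on the right-hand side must contribute nothing to the weight, which is exactly what the observation that multisteps have weight $0$ guarantees. Without that fact the clean net decrease of $1$ would not be immediate.
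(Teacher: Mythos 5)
Your proof is correct and takes essentially the same approach the paper announces: induction on the context under which the step $\redseq \tof_{\flatRuleAnon} \redseqtwo$ takes place, with a direct weight computation at the root for each of the three rules. Your key bookkeeping observation --- that multisteps, and in particular the unit rewrites $\refl{\rsrc{\mstep}}$ and $\refl{\rtgt{\mstep}}$ introduced on the right-hand sides of \flatRule{App1} and \flatRule{App2}, have weight $0$ --- is exactly what makes the net decrease of $1$ go through, and both the root-case computations and the strict-monotonicity argument for propagating the inequality through the context are accurate.
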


This suffices to show that $\#(\bullet)$ is well-founded.
A sketch of the entire argument of the proof of SN for $\flatteningSystem$ is presented below.

\begin{prop}
\lprop{flat_sn}
The flattening system $\flatteningSystem$ is strongly normalizing.
\end{prop}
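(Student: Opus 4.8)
The plan is to split the rules of $\flatteningSystem$ into the \emph{structural} group $A \eqdef \set{\flatRule{Abs}, \flatRule{App1}, \flatRule{App2}, \flatRule{App3}}$ and the \emph{computational} group $B \eqdef \set{\flatRule{BetaM}, \flatRule{EtaM}}$, to prove strong normalization of $\tof_A$ and $\tof_B$ separately, and then to glue the two together using the postponement property of \rlem{flat_beta_eta_postponement}. Writing $\tof_A$ and $\tof_B$ for the contextual closures of the rules in $A$ and $B$, we have $\tof \,=\, \tof_A \cup \tof_B$, so it suffices to show this union is terminating.

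First I would establish that $\tof_A$ is strongly normalizing by checking that the pair $\#(\redseq) \eqdef (\rheavy{\redseq}, \rweight{\redseq})$ strictly decreases along every $\tof_A$-step, in the lexicographic order on $\Nat \times \Nat$ (which is well-founded). There are two cases. If $\redseq \tof_{\flatRule{App3}} \redseqtwo$, then \rlem{rheavy_decrease} gives $\rheavy{\redseq} > \rheavy{\redseqtwo}$, so the first component drops regardless of the weight. If the step uses one of $\flatRule{Abs}, \flatRule{App1}, \flatRule{App2}$, then \rlem{rheavy_decrease} gives $\rheavy{\redseq} \geq \rheavy{\redseqtwo}$ while \rlem{rweight_decrease} gives $\rweight{\redseq} > \rweight{\redseqtwo}$, so $\#(\redseq) > \#(\redseqtwo)$ lexicographically. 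Since both decrease lemmas are already proved by induction on the surrounding context, this covers steps at arbitrary depth, and $\tof_A$ admits no infinite reduction. Strong normalization of $\tof_B$ is the routine observation that $\flatRule{BetaM}$ and $\flatRule{EtaM}$ fire only inside maximal multistep-shaped subterms, which are ordinary simply typed $\lambda$-terms (rule symbols acting as constants, with ``$\seq$'' merely delimiting independent components that $B$ neither creates nor destroys); on each such component $\tof_B$ is plain $\beta\eta$-reduction, which is terminating, and there are only finitely many components.

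It then remains to combine the two facts. \rlem{flat_beta_eta_postponement} can be read as the postponement inclusion $\tof_B \cdot \tof_A \,\subseteq\, \tof_A \cdot \tof_B^{+}$ (a $B$-step immediately followed by an $A$-step is replayed as one $A$-step followed by one or more $B$-steps). A straightforward induction on the number of leading $B$-steps upgrades this to $\tof_B^{*} \cdot \tof_A \,\subseteq\, \tof_A \cdot \tof_B^{*}$, i.e.\ an entire block of $B$-steps can be pushed past a following $A$-step. Now suppose, towards a contradiction, that an infinite $\tof$-reduction existed. If it contained only finitely many $A$-steps, its tail would be an infinite $\tof_B$-reduction, contradicting SN of $\tof_B$. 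Otherwise it contains infinitely many $A$-steps, and I would present it as $\redseq_0 \;(\tof_B^{*}\cdot\tof_A)\; \redseq_1 \;(\tof_B^{*}\cdot\tof_A)\; \redseq_2 \cdots$; applying $\tof_B^{*} \cdot \tof_A \subseteq \tof_A \cdot \tof_B^{*}$ to the leading block peels off a single genuine $\tof_A$-step at the front while leaving an infinite suffix that still contains infinitely many $A$-steps. Iterating produces an infinite $\tof_A$-reduction, contradicting SN of $\tof_A$. Hence no infinite $\tof$-reduction exists.

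The decrease lemmas and the bookkeeping for SN of $\tof_B$ are routine and already available, so the only place that needs real care is the last paragraph: the subtlety is that pushing one $A$-step over a preceding block of $B$-steps may \emph{duplicate} those $B$-steps (this is exactly why \rlem{flat_beta_eta_postponement} produces $\tof_B^{+}$ rather than a single $B$-step). Organizing the extraction so that each round removes precisely one $A$-step while guaranteeing that the remaining suffix is still infinite and still $A$-rich is what makes the resulting infinite $\tof_A$-reduction well-defined, and is the main obstacle of the argument.
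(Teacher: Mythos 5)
Your proposal is correct and takes essentially the same route as the paper: the same split into the structural rules (shown terminating via the lexicographic measure $\#(\redseq) = (\rheavy{\redseq}, \rweight{\redseq})$, using \rlem{rheavy_decrease} and \rlem{rweight_decrease}) and the $\beta\eta$-like rules (shown terminating by viewing multisteps as simply-typed $\lambda$-terms with rule symbols as constants), combined via the postponement property of \rlem{flat_beta_eta_postponement}. The only difference is that you spell out the infinite-reduction extraction argument that the paper compresses into the phrase ``standard rewriting techniques'', and your treatment of it (including the duplication caveat arising from $\tof_B^{+}$) is sound.
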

\begin{proof}
Recall that \flatRule{BetaM} and \flatRule{EtaM} steps can be postponed
after steps of other kinds (\rlem{flat_beta_eta_postponement}).
Hence, by standard rewriting techniques,
SN of $\flatteningSystem$
can be reduced
to SN of $\tof_{\flatRule{BetaM}} \cup
          \tof_{\flatRule{EtaM}}$
on one hand,
plus SN of $\tof_{\flatRule{Abs}} \cup
            \tof_{\flatRule{App1}} \cup
            \tof_{\flatRule{App2}} \cup
            \tof_{\flatRule{App3}}$
on the other one.

It is immediate to show that the union of \flatRule{BetaM} and \flatRule{EtaM}
is SN, given that (typable) multisteps can be understood as
simply typed $\lambda$-terms,
by regarding constants ($\cons,\constwo,\hdots$)
and rule symbols ($\rulewit,\rulewittwo,\hdots$) 
as free variables of their corresponding types.
Hence termination of $\tof_{\flatRule{BetaM}} \cup \tof_{\flatRule{EtaM}}$
is reduced to termination of $\beta\eta$-reduction in the simply-typed
$\lambda$-calculus.

To show that the system without \flatRule{BetaM} and \flatRule{EtaM}
is SN, consider the measure on rewrites
given by $\#(\redseq) \eqdef (\rheavy{\redseq}, \rweight{\redseq})$
with the lexicographic order.
It is then easy to show that
if $\redseq \tof_{\flatRuleAnon} \redseqtwo$
with $\flatRuleAnon \notin \set{\flatRule{BetaM},\flatRule{EtaM}}$
then $\#(\redseq) > \#(\redseqtwo)$.
Indeed, by \rlem{rheavy_decrease}
we know that \flatRule{App3} steps strictly decrease the first component 
and other kinds of steps do not increase it.
Moreover, by \rlem{rweight_decrease},
we know that \flatRule{Abs}, \flatRule{App1}, and \flatRule{App2}
steps strictly decrease the second component.
\end{proof}

\subsection{Confluence of the Flattening Rewrite System}

Confluence of $\flatteningSystem$ follows from Newman's lemma. 

\begin{prop}
\lprop{flat_confluent}
The flattening system $\flatteningSystem$ is confluent.
\end{prop}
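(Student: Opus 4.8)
The plan is to apply Newman's lemma. Since $\flatteningSystem$ is strongly normalizing by \rprop{flat_sn}, confluence reduces to \emph{local confluence}: whenever $\redseqtwo \tofinv \redseq \tof \redseqthree$ there is a common reduct $\redseqfour$ with $\redseqtwo \tofs \redseqfour \tofsinv \redseqthree$. I would establish this by a case analysis on the relative positions of the two contracted redexes in $\redseq$, distinguishing whether the redexes overlap at a common function-symbol position, are disjoint, or are nested with one inside a metavariable position of the other.

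The first thing I would argue is that the left-hand sides of the six rules have mutually exclusive top shapes, so that two redexes can only overlap at a shared (non-metavariable) position in the classical $\beta$/$\eta$ situations. Indeed, an \flatRule{Abs} redex is an abstraction whose body is a composition, whereas the body of an \flatRule{EtaM} redex is an application; and an application redex is classified at its root by whether its head and argument are compositions, an abstraction of a multistep, or composition-free multisteps, which cleanly separates \flatRule{App1}, \flatRule{App2}, \flatRule{App3} and \flatRule{BetaM}. Consequently the only genuine critical pairs are the two standard overlaps between \flatRule{BetaM} and \flatRule{EtaM}: one arising from $\lam{\var}{(\lam{\vartwo}{\mstep})\,\var}$ with $\var\notin\fv{\lam{\vartwo}{\mstep}}$, and one from $(\lam{\var}{\mstep\,\var})\,\msteptwo$ with $\var\notin\fv{\mstep}$. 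Both close immediately, as the two reducts coincide up to $\alpha$-conversion, exactly as for $\beta\eta$ in the simply-typed $\lambda$-calculus.

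It remains to treat disjoint and nested redexes. Disjoint redexes commute trivially. When one redex sits inside a metavariable position of the other, the only non-routine diagrams are those in which the duplicating rules \flatRule{App1}, \flatRule{App2}, \flatRule{App3} copy a sub-rewrite both literally and under an endpoint operator $\refl{\rsrc{\cdot}}$ or $\refl{\rtgt{\cdot}}$ (for instance, in \flatRule{App1} the multistep $\mstep$ reappears both in $\redseqtwo\,\mstep$ and inside $\refl{\rsrc{\mstep}}$). Contracting the inner and outer redexes in the two possible orders then leaves a literal copy and an endpoint copy that must be re-synchronized. Here I would split on the inner rule: if it lies outside $\set{\flatRule{BetaM},\flatRule{EtaM}}$, the endpoints are literally unchanged by \rremark{flat_not_beta_steps_preserve_endpoints}, so nothing extra is required; if it is \flatRule{BetaM} or \flatRule{EtaM}, the endpoint copy follows in zero or one step by \rlem{flatten_beta_eta_source_target}, while remaining a multistep by \rlem{flattening_preserves_betam_and_etam_steps}. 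In every case the diagram closes.

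The main obstacle is precisely this re-synchronization of the duplicated endpoint copies created by the \flatRule{App} rules; it is the reason the endpoint-tracking lemma \rlem{flatten_beta_eta_source_target} was established beforehand, and checking that the $\refl{\rsrc{\cdot}}$/$\refl{\rtgt{\cdot}}$ copies catch up under both the endpoint-preserving and the \flatRule{BetaM}/\flatRule{EtaM} steps is the only place where real care is needed. Once local confluence is in hand, Newman's lemma together with strong normalization (\rprop{flat_sn}) yields confluence, and hence the unique $\tof$-normal form $\flatten{\redseq}$ is well defined.
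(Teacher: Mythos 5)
Your proposal is correct and follows essentially the same route as the paper: Newman's lemma together with strong normalization (\rprop{flat_sn}) reduces confluence to local confluence, which the paper likewise establishes by analyzing how pairs of redexes interact (phrased there as an induction on $\redseq$), relying on the same supporting facts (\rremark{flat_not_beta_steps_preserve_endpoints} and \rlem{flatten_beta_eta_source_target}) that you use to re-synchronize the duplicated endpoint copies created by the application rules. The difference between your critical-pair/position analysis and the paper's structural induction is purely organizational, so nothing substantive separates the two proofs.
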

\begin{proof}
By Newman's lemma, given that $\flatteningSystem$ is SN~(\rprop{flat_sn}),
it suffices to show that it is weakly Church--Rosser (WCR).
Let $\redseq \tof \redseq_1$ and $\redseq \tof \redseq_2$. One shows that there exists a rewrite $\redseq_3$ such
that $\redseq \tofs \redseq_3$ and $\redseq_1 \tofs \redseq_3$.
The proof is by induction on $\redseq$. 
\end{proof}

\begin{prop}
\lprop{body:flat_sn}
The flattening system $\flatteningSystem$
is strongly normalizing and confluent.
\end{prop}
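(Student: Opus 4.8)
The plan is to recognize that this proposition is exactly the conjunction of two results already established in this section: strong normalization, proved in \rprop{flat_sn}, and confluence, proved in \rprop{flat_confluent}. The proof is therefore immediate---one simply invokes both. In particular, the uniqueness of the $\tof$-normal form $\flatten{\redseq}$ promised in \rdef{flattening_system} follows at once, since a strongly normalizing and confluent system has exactly one normal form per object.

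For orientation I would recall the shape of the two underlying arguments, as the real content lies there. Strong normalization is obtained by first applying the postponement result \rlem{flat_beta_eta_postponement} to split the task into SN of $\tof_{\flatRule{BetaM}}\cup\tof_{\flatRule{EtaM}}$---which reduces to termination of $\beta\eta$-reduction on simply typed terms by regarding constants and rule symbols as free variables of the appropriate types---and SN of the remaining four rules, which follows from the strictly decreasing lexicographic measure $\#(\redseq)\eqdef(\rheavy{\redseq},\rweight{\redseq})$ via \rlem{rheavy_decrease} and \rlem{rweight_decrease}. Confluence then follows from Newman's lemma: since $\flatteningSystem$ is SN, it suffices to verify local confluence (WCR), carried out by induction on $\redseq$ with a case analysis on the overlapping redexes.

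The main obstacle is not in this combining step, which is trivial, but in the local confluence check underlying \rprop{flat_confluent}. There one must analyse the critical overlaps among the flattening rules---in particular the three application rules \flatRule{App1}, \flatRule{App2}, \flatRule{App3}, which may fire at the same application node, together with their interaction with \flatRule{BetaM} and \flatRule{EtaM}---and close each resulting peak. The source/target coherence of beta/eta flattening (\rlem{flatten_beta_eta_source_target}) is what lets one reconcile the $\refl{\rsrc{\cdot}}$ and $\refl{\rtgt{\cdot}}$ annotations that the application rules introduce, so that the two divergent reducts can be joined.
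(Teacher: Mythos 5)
Your proposal is correct and matches the paper exactly: Proposition~\ref{prop:body:flat_sn} is simply the conjunction of Proposition~\ref{prop:flat_sn} (strong normalization, via postponement of \flatRule{BetaM}/\flatRule{EtaM} plus the lexicographic measure) and Proposition~\ref{prop:flat_confluent} (confluence, via Newman's lemma), which is why the paper states it without further proof. Your recap of the two underlying arguments, including the observation that the real work lies in the local confluence check, is faithful to the paper's development.
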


\subsection{Flat Permutation Equivalence.}
We now turn to the definition of the relation $\redseq \flateq \redseqtwo$
of flat permutation equivalence.
The key notion to define is the following ternary relation:

\begin{defi}[Splitting]
Let $\judgRewr{\tenv}{\mstep}{\tm}{\tmtwo}{\typ}$
and $\judgRewr{\tenv}{\mstep_1}{\tm'}{\tmthree_1}{\typ}$
and $\judgRewr{\tenv}{\mstep_2}{\tmthree_2}{\tmtwo'}{\typ}$
be multisteps.
We say that $\mstep$ {\em splits} into $\mstep_1$ and $\mstep_2$
if the following inductively defined ternary relation,
written $\judgSplit{\mstep}{\mstep_1}{\mstep_2}$, holds:
\[
  \indrule{SVar}{
  }{
    \judgSplit{\var}{\var}{\var}
  }
  \indrule{SCon}{
  }{
    \judgSplit{\cons}{\cons}{\cons}
  }
  \indrule{SRuleL}{
  }{
    \judgSplit{\rulewit}{\rulewit}{\refl{\rtgt{\rulewit}}}
  }
  \indrule{SRuleR}{
  }{
    \judgSplit{\rulewit}{\refl{\rsrc{\rulewit}}}{\rulewit}
  }
\]
\[
  \indrule{SAbs}{
    \judgSplit{\mstep}{\mstep_1}{\mstep_2}
  }{
    \judgSplit{\lam{\var}{\mstep}}{\lam{\var}{\mstep_1}}{\lam{\var}{\mstep_2}}
  }
  \indrule{SApp}{
    \judgSplit{\mstep}{\mstep_1}{\mstep_2}
    \HS
    \judgSplit{\msteptwo}{\msteptwo_1}{\msteptwo_2}
  }{
    \judgSplit{\mstep\,\msteptwo}{\mstep_1\,\msteptwo_1}{\mstep_2\,\msteptwo_2}
  }
\]
\end{defi}

\begin{exa}
\lexample{splitting}
Consider the HRS of~\rexample{rewrite}. It has the following two rewriting rules:
  $
  \langle \rulewit,
  \consof{mu}(\lam{\vartwo}{\var\,\vartwo}),
      \var\,(\consof{mu}(\lam{\vartwo}{\var\,\vartwo}))
      \rangle
$
and
$
\langle
\rulewittwo,
\consof{f}\,\var,
\consof{g}\,\var
\rangle$.  Then $\rulewit\,\rulewittwo$ splits into $(\lam{\var}{\consof{mu}\,(\lam{\vartwo}{\var\,\vartwo})})\,
 \rulewittwo$ and $\rulewit\,(\lam{\var}{\consof{g}\,\var})$:
 \begin{center}
   $\judgSplit{
     \rulewit\,\rulewittwo
   }{
     (\lam{\var}{\consof{mu}\,(\lam{\vartwo}{\var\,\vartwo})})\,
     \rulewittwo
   }{
     \rulewit\,(\lam{\var}{\consof{g}\,\var})
   }$
 \end{center}
 Note that $\judgSplit{
              \rulewit\,\rulewittwo
            }{
              (\lam{\var}{\consof{mu}\,(\lam{\vartwo}{\var\,\vartwo})})\,
              \rulewittwo
            }{
              \rulewit\,(\lam{\var}{\consof{g}\,\var})
            }$ follows from $\indrulename{SApp}$, $\indrulename{SRuleR}$ for the upper left hypothesis and $\indrulename{SRuleL}$ for the upper right one.
            
Also,  $\rulewittwo\,(\consof{mu}\,\rulewittwo)$ splits into $ (\lam{\var}{\consof{f}\,\var})\,(\consof{mu}\,\rulewittwo)$ and $\rulewittwo\,(\consof{mu}\,(\lam{\var}{\consof{g}\,\var}))$:
 \begin{center}
   $\judgSplit{
              \rulewittwo\,(\consof{mu}\,\rulewittwo)
            }{
              (\lam{\var}{\consof{f}\,\var})\,(\consof{mu}\,\rulewittwo)
            }{
              \rulewittwo\,(\consof{mu}\,(\lam{\var}{\consof{g}\,\var}))
            }$
          \end{center}
\end{exa}

\begin{defi}[Flat permutation equivalence]
\ldef{body:flat_permutation_equivalence}
Flat permutation equivalence judgments are of the form:
  $\judgFlatPerm{\tenv}{\redseq}{\tm}{\tmtwo}{{\redseq}'}{\tm'}{\tmtwo'}{\typ}$,
meaning that $\redseq$ and $\redseq'$
are equivalent rewrites,
with sources $\tm$ and $\tm'$ respectively,
and targets $\tmtwo$ and $\tmtwo'$ respectively.
The rewrites $\redseq$ and $\redseq'$ are
{\bf assumed to be in $\tof$-normal form},
which in particular means that they must be flat rewrites (\rdef{flat_multistep_and_flat_rewrite}).
Sometimes we write $\redseq \flateq \redseq'$
if $\tenv,\tm,\tmtwo,\tm',\tmtwo',\typ$ are irrelevant
or clear from the context.
Derivability is defined by
the two following axioms,
which are closed by
reflexivity, symmetry, transitivity, and closure under
{\em composition contexts}
(given by
$\sctx ::= \ctxhole \mid \sctx\seq\redseq \mid \redseq\seq\sctx$):
\begin{center}$
  \begin{array}{r@{\ }c@{\ }lrl}
    (\redseq\seq\redseqtwo)\seq\redseqthree
    & \flateq &
    \redseq\seq(\redseqtwo\seq\redseqthree)
    &
    & \flateqRule{Assoc}
    \\
    \mstep
    & \flateq &
    \flatten{\mstep_1}
    \seq
    \flatten{\mstep_2}
    &
      \text{if $\judgSplit{\mstep}{\mstep_1}{\mstep_2}$}
    & \flateqRule{Perm}
  \end{array}
  $
  \end{center}
\end{defi}

Note that in $\flateqRule{Perm}$, $\flatten{-}$ operates over multisteps. So the only rules of $\flatteningSystem$ that are applied here are the \flatRule{BetaM} and \flatRule{EtaM} rules.

\begin{exa}
With the same notation as in~\rexample{mu_permeq},
it can be checked that the rewrites
$\consof{mu}\,(\lam{\var}{\rulewittwo\,\var})
 \seq \rulewit\,(\lam{\var}{\consof{g}\,\var})$
and
$\rulewit\,(\lam{\var}{\consof{f}\,\var})
 \seq \consof{f}\,(\consof{mu}\,(\lam{\var}{\rulewittwo\,\var}))
 \seq \rulewittwo \,(\consof{mu}\,(\lam{\var}{\consof{g}\,\var}))$
are permutation equivalent by means of flattening.
Indeed, using the \flateqRule{Perm} rule three times:
\[
  \begin{array}{lll@{\ }l}
         \consof{mu}\,\rulewittwo
    \seq \rulewit\,\consof{g}
  & \flateq &
      \rulewit\,\rulewittwo
    & \text{\rexample{splitting}}                
  \\
  & \flateq &
         \rulewit\,\consof{f}
    \seq \rulewittwo\,(\consof{mu}\,\rulewittwo)
    & 
  \\
  & \flateq &
         \rulewit\,\consof{f}
    \seq (\consof{f}(\consof{mu}\,\rulewittwo)
          \seq \rulewittwo(\consof{\mu}\,\consof{g}))
    & \text{\rexample{splitting}} 
  \end{array}
\]
  Hence:
  \begin{center}
    $\begin{array}{lll}
    \flatten{(
           \consof{mu}\,(\lam{\var}{\rulewittwo\,\var})
      \seq \rulewit\,(\lam{\var}{\consof{g}\,\var})
    )}
   & = &
    \consof{mu}\,\rulewittwo \seq \rulewit\,\consof{g} \\
  & \flateq & 
    \rulewit\,\consof{f}
    \seq (\consof{f}(\consof{mu}\,\rulewittwo)
          \seq \rulewittwo(\consof{\mu}\,\consof{g})) \\
  & = &
    \flatten{(
             \rulewit\,(\lam{\var}{\consof{f}\,\var})
        \seq \consof{f}\,(\consof{mu}\,(\lam{\var}{\rulewittwo\,\var}))
        \seq \rulewittwo \,(\consof{mu}\,(\lam{\var}{\consof{g}\,\var}))
        )}
        \end{array}
        $
        \end{center}
\end{exa}

We next address soundness and completeness of $\flateq$ with respect to $\permeq$.

\subsection{Soundness of Flat Permutation Equivalence}
Soundness of flat permutation equivalence, namely that if $\judgRewr{\tenv}{\redseq}{\tm}{\tmtwo}{\typ}$
and $\judgRewr{\tenv}{\redseqtwo}{\tm'}{\tmtwo'}{\typ}$, then  $\flatten{\redseq} \flateq \flatten{\redseqtwo}$ implies $\redseq \permeq \redseqtwo$, is stated as \rprop{soundness_flateq}. It follows from the fact that reduction $\tof$ in the flattening system $\flatteningSystem$
is included in permutation equivalence~($\redseq \tof \redseqtwo$ implies $\redseq \permeq \redseqtwo$, \rlem{flattening_sound_wrt_permeq})
and, similarly, flat permutation equivalence
is included in permutation equivalence~($\redseq \flateq \redseqtwo$ implies $\redseq \permeq \redseqtwo$, \rlem{flateq_sound_wrt_permeq}). 

\begin{lem}
\llem{flattening_sound_wrt_permeq}
If $\redseq \tof \redseqtwo$ then $\redseq \permeq \redseqtwo$.
\end{lem}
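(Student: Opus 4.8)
The plan is to reduce the claim to a check of each flattening rule at the root, and then lean on the congruence structure of $\permeq$. Since $\tof$ is the closure under arbitrary contexts of the six rules of \rdef{flattening_system}, every step $\redseq\tof\redseqtwo$ has the shape $\redseq=\cctxof{\ell}$, $\redseqtwo=\cctxof{r}$ for one of those rules $\ell\tof r$ and some context $\cctx$. Because $\permeq$ is by definition the reflexive, symmetric, transitive \emph{and contextual} closure of its axioms, it is a congruence, so $\ell\permeq r$ immediately yields $\cctxof{\ell}\permeq\cctxof{r}$. Hence it suffices to verify $\ell\permeq r$ for each of the six rules. Throughout I work only with typable rewrites (as $\flatteningSystem$ and $\permeq$ require), so both sides of every equivalence invoked below are well-typed.

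The rules \flatRule{Abs} and \flatRule{EtaM} are essentially axioms of $\permeq$: \flatRule{Abs} is \permeqRule{Abs} read right-to-left (using symmetry), and \flatRule{EtaM} is the instance of \permeqRule{Eta} with $\redseq:=\mstep$. The three application rules all follow from \permeqRule{App} after collapsing the inserted unit rewrites. For \flatRule{App1} I would instantiate \permeqRule{App} with $\redseq_1:=\redseq$, $\redseq_2:=\refl{\rsrc{\mstep}}$, $\redseqtwo_1:=\redseqtwo$, $\redseqtwo_2:=\mstep$ to obtain
\[
  (\redseq\,\refl{\rsrc{\mstep}})\seq(\redseqtwo\,\mstep)
  \;\permeq\;
  (\redseq\seq\redseqtwo)\,(\refl{\rsrc{\mstep}}\seq\mstep),
\]
and then rewrite the argument with \permeqRule{IdL}, namely $\refl{\rsrc{\mstep}}\seq\mstep\permeq\mstep$, under the application context to reach $(\redseq\seq\redseqtwo)\,\mstep$; symmetry closes the case. \flatRule{App2} is the mirror image, using \permeqRule{IdR} on $\mstep\seq\refl{\rtgt{\mstep}}$, and \flatRule{App3} combines both, absorbing $\refl{\rtgt{\redseq_2}}=\refl{\rtgt{(\redseq_1\seq\redseq_2)}}$ by \permeqRule{IdR} and $\refl{\rsrc{\redseqtwo_1}}=\refl{\rsrc{(\redseqtwo_1\seq\redseqtwo_2)}}$ by \permeqRule{IdL}.

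The only delicate rule is \flatRule{BetaM}, asserting $(\lam{\var}{\mstep})\,\msteptwo\tof\mstep\subm{\var}{\msteptwo}$, because its right-hand side uses \emph{multistep} substitution $\subm$, whereas the available $\beta$-law for rewrites, \permeqRule{BetaRR} of \rlem{betaRR}, produces rewrite/rewrite substitution: $(\lam{\var}{\mstep})\,\msteptwo\permeq\mstep\subrr{\var}{\msteptwo}$. These two are not even syntactically comparable, since $\mstep\subrr{\var}{\msteptwo}=\mstep\subt{\var}{\rsrc{\msteptwo}}\seq\rtgt{\mstep}\subtr{\var}{\msteptwo}$ carries a composition while $\mstep\subm{\var}{\msteptwo}$ is a multistep. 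The crux is therefore an auxiliary fact, $\mstep\subrr{\var}{\msteptwo}\permeq\mstep\subm{\var}{\msteptwo}$, which I would prove by induction on $\mstep$: for $\mstep=\var$ the right factor collapses by \permeqRule{IdL}; for the remaining atoms ($\vartwo\neq\var$, $\cons$, and $\rulewit$, which is closed) both factors are unit rewrites and collapse by \permeqRule{IdL}/\permeqRule{IdR}; the abstraction and application cases follow from \rlem{subrr_recursion}(1) and (2) respectively, combined with the induction hypothesis and congruence. Chaining \permeqRule{BetaRR} with this fact gives $(\lam{\var}{\mstep})\,\msteptwo\permeq\mstep\subm{\var}{\msteptwo}$. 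I expect the reconciliation of $\subm$ with $\subrr$ to be the main obstacle; the care lies in tracking the sources and targets of the inserted unit rewrites so that \permeqRule{IdL} and \permeqRule{IdR} apply with exactly matching endpoints.
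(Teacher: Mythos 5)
Your proposal is correct and follows essentially the same route as the paper: reduce to the root cases via the contextual closure of $\permeq$, dispatch \flatRule{Abs}/\flatRule{EtaM} as direct instances of \permeqRule{Abs}/\permeqRule{Eta}, handle the application rules with \permeqRule{App} plus \permeqRule{IdL}/\permeqRule{IdR}, and settle \flatRule{BetaM} by chaining \permeqRule{BetaRR} with the auxiliary equivalence $\mstep\subm{\var}{\msteptwo} \permeq \mstep\subrr{\var}{\msteptwo}$. The paper proves that auxiliary fact by induction on the typing derivation of $\mstep$ rather than on its structure, but this is an inessential difference, and your identification of \flatRule{BetaM} as the one delicate case matches the paper exactly.
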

\begin{proof}
It suffices to show that all the axioms of
the flattening system $\flatteningSystem$
relate permutation equivalent rewrites. We present two sample cases:
\begin{enumerate}
\item
  \flatRule{App3}:
  \[
    \begin{array}{rcll}
      (\redseq_1\seq\redseq_2)(\redseqtwo_1\seq\redseqtwo_2)
    & \permeq &
      ((\redseq_1\seq\redseq_2)\seq\refl{\rtgt{\redseq_2}})
      (\refl{\rsrc{\redseqtwo_1}}\seq(\redseqtwo_1\seq\redseqtwo_2))
      & \text{by \permeqRule{IdL} and \permeqRule{IdR}}
    \\
    & \permeq &
      ((\redseq_1\seq\redseq_2)\,\refl{\rsrc{\redseqtwo_1}})\seq
      (\refl{\rtgt{\redseq_2}}\,(\redseqtwo_1\seq\redseqtwo_2))
      & \text{by \permeqRule{App}}
    \end{array}
  \]
\item
  \flatRule{BetaM}:
  \[
    \begin{array}{rcll}
      (\lam{\var}{\mstep})\,\msteptwo
    & \permeq &
      \mstep\subrr{\var}{\msteptwo}
      & \text{by \permeqRule{BetaRR}~(\rlem{betaRR})}
    \\
    & \permeq &
      \mstep\subm{\var}{\msteptwo}
      & 
    \end{array}
  \]
The last step  follows from the property that  $\mstep\subm{\var}{\msteptwo} \permeq \mstep\subrr{\var}{\msteptwo}$, which is proved by
induction on the derivation of $\judgRewr{\tenv,\var:\typ}{\mstep}{\tmfour_0}{\tmfour_1}{\typtwo}$.

\end{enumerate}
\end{proof}



\begin{lem}
\llem{splitting_sound_wrt_permeq}
Let $\judgRewr{\tenv}{\mstep}{\tm}{\tmtwo}{\typ}$
and $\judgRewr{\tenv}{\mstep_1}{\tm'}{\tmthree_1}{\typ}$
and $\judgRewr{\tenv}{\mstep_2}{\tmthree_2}{\tmtwo'}{\typ}$
be such that $\judgSplit{\mstep}{\mstep_1}{\mstep_2}$.
Then $\mstep \permeq \mstep_1\seq\mstep_2$
\end{lem}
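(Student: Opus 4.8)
The plan is to induct on the derivation of $\judgSplit{\mstep}{\mstep_1}{\mstep_2}$, treating each of the six rules. The four base cases will be immediate instances of the unit laws. For $\indrulename{SVar}$ we have $\mstep = \mstep_1 = \mstep_2 = \var$, and since $\rsrc{\var} = \var$ and the unit rewrite $\refl{\var}$ is just $\var$, the goal $\var \permeq \var\seq\var$ is an instance of $\permeqRule{IdL}$; the case $\indrulename{SCon}$ is identical. For $\indrulename{SRuleL}$ the goal $\rulewit \permeq \rulewit\seq\refl{\rtgt{\rulewit}}$ is precisely $\permeqRule{IdR}$, and for $\indrulename{SRuleR}$ the goal $\rulewit \permeq \refl{\rsrc{\rulewit}}\seq\rulewit$ is precisely $\permeqRule{IdL}$.

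For the inductive step $\indrulename{SAbs}$, the induction hypothesis gives $\mstep \permeq \mstep_1\seq\mstep_2$. Congruence under abstraction (part of the contextual closure defining $\permeq$) yields $\lam{\var}{\mstep} \permeq \lam{\var}{(\mstep_1\seq\mstep_2)}$, while the $\permeqRule{Abs}$ axiom gives $(\lam{\var}{\mstep_1})\seq(\lam{\var}{\mstep_2}) \permeq \lam{\var}{(\mstep_1\seq\mstep_2)}$; composing these by symmetry and transitivity delivers the claim. The step $\indrulename{SApp}$ uses both induction hypotheses, $\mstep \permeq \mstep_1\seq\mstep_2$ and $\msteptwo \permeq \msteptwo_1\seq\msteptwo_2$: congruence under application gives $\mstep\,\msteptwo \permeq (\mstep_1\seq\mstep_2)\,(\msteptwo_1\seq\msteptwo_2)$, and the $\permeqRule{App}$ axiom identifies the latter with $(\mstep_1\,\msteptwo_1)\seq(\mstep_2\,\msteptwo_2)$, which is exactly what is required.

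The one point requiring care, and the likely obstacle, is the well-typedness proviso on the axioms: as stressed after \rdef{body:permutation_equivalence}, a rule such as $\permeqRule{App}$ may be instantiated only when both of its sides are typable, and the conclusion $\mstep\permeq\mstep_1\seq\mstep_2$ already presupposes that the composite $\mstep_1\seq\mstep_2$ is itself well-formed. I would therefore first establish, by a straightforward simultaneous induction on $\judgSplit{\mstep}{\mstep_1}{\mstep_2}$, the endpoint equalities $\rsrc{\mstep_1} = \rsrc{\mstep}$, $\rtgt{\mstep_1} = \rsrc{\mstep_2}$, and $\rtgt{\mstep_2} = \rtgt{\mstep}$ (each base rule is checked directly, and both $\indrulename{SAbs}$ and $\indrulename{SApp}$ propagate them through the structure). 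These equalities, together with \rlem{endpoint:coherence}, ensure via $\indrulename{RTrans}$ (and $\indrulename{RConv}$ where needed) that $\mstep_1\seq\mstep_2$ and every intermediate rewrite appearing above is typable, so that each appeal to $\permeqRule{IdL}$, $\permeqRule{IdR}$, $\permeqRule{Abs}$, and $\permeqRule{App}$ is legitimate.
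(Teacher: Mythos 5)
Your proof is correct and takes essentially the same approach as the paper's, which is a one-line sketch of exactly this argument: induction on the derivation of $\judgSplit{\mstep}{\mstep_1}{\mstep_2}$, closing the base cases with \permeqRule{IdL}/\permeqRule{IdR} and the inductive cases with the \permeqRule{Abs}/\permeqRule{App} axioms, contextual closure, and the induction hypotheses. Your additional care with the typability side conditions (the endpoint equalities plus \rlem{endpoint:coherence} to justify that $\mstep_1\seq\mstep_2$ and the intermediate rewrites are well-typed) is a sound elaboration of a point the paper leaves implicit.
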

\begin{proof}
By induction on the derivation of $\judgSplit{\mstep}{\mstep_1}{\mstep_2}$ using \permeqRule{IdL}, \permeqRule{IdR}, \permeqRule{Abs}, \permeqRule{App} and the \ih.
\end{proof}

\begin{lem}
\llem{flateq_sound_wrt_permeq}
Let $\judgRewr{\tenv}{\redseq}{\tm}{\tmtwo}{\typ}$
and $\judgRewr{\tenv}{\redseqtwo}{\tm'}{\tmtwo'}{\typ}$
be such that $\redseq \flateq \redseqtwo$.
Then $\redseq \permeq \redseqtwo$.
\end{lem}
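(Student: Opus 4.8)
The plan is to prove the inclusion $\flateq \subseteq \permeq$ by induction on the derivation of $\redseq \flateq \redseqtwo$. Recall that $\flateq$ is generated from the two base axioms $\flateqRule{Assoc}$ and $\flateqRule{Perm}$ by closing under reflexivity, symmetry, transitivity, and composition contexts $\sctx ::= \ctxhole \mid \sctx\seq\redseq \mid \redseq\seq\sctx$. Since $\permeq$ is itself reflexive, symmetric, and transitive by \rdef{body:permutation_equivalence}, the three corresponding closure cases are discharged immediately by applying the induction hypothesis and the matching closure property of $\permeq$. For the composition-context case, where $\redseq = \sctxof{\redseqthree}$ and $\redseqtwo = \sctxof{\redseqthree'}$ with $\redseqthree \flateq \redseqthree'$, it suffices to observe that $\permeq$ is defined as a \emph{contextual} closure and is therefore in particular a congruence with respect to ``$\seq$''. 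Hence from the induction hypothesis $\redseqthree \permeq \redseqthree'$ we obtain $\sctxof{\redseqthree} \permeq \sctxof{\redseqthree'}$.

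It then remains to handle the two base axioms. The $\flateqRule{Assoc}$ axiom, $(\redseq\seq\redseqtwo)\seq\redseqthree \flateq \redseq\seq(\redseqtwo\seq\redseqthree)$, is discharged directly by the $\permeqRule{Assoc}$ axiom of permutation equivalence. For the $\flateqRule{Perm}$ axiom I must show that $\mstep \permeq \flatten{\mstep_1} \seq \flatten{\mstep_2}$ whenever $\judgSplit{\mstep}{\mstep_1}{\mstep_2}$. First, soundness of splitting (\rlem{splitting_sound_wrt_permeq}) gives $\mstep \permeq \mstep_1 \seq \mstep_2$. Second, each multistep flattens to its normal form, $\mstep_i \tofs \flatten{\mstep_i}$, and since a single $\tof$-step is included in $\permeq$ (\rlem{flattening_sound_wrt_permeq}), transitivity and reflexivity of $\permeq$ yield $\mstep_i \permeq \flatten{\mstep_i}$ for $i \in \set{1,2}$. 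Using once more that $\permeq$ is a congruence for composition, we conclude $\mstep_1 \seq \mstep_2 \permeq \flatten{\mstep_1} \seq \flatten{\mstep_2}$, and a final application of transitivity closes the case.

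The argument is essentially routine, resting on two features of $\permeq$ that are already available: that it is built as a contextual closure (so it is a congruence, in particular for composition contexts), and that both flattening steps and splitting are already known to be sound with respect to it. The only point requiring a little care is the lifting of $\mstep_i \permeq \flatten{\mstep_i}$ through the top-level composition in the $\flateqRule{Perm}$ case, which is precisely where congruence under composition contexts is invoked; no additional lemma beyond \rlem{flattening_sound_wrt_permeq} and \rlem{splitting_sound_wrt_permeq} is needed, and I do not anticipate any genuine obstacle.
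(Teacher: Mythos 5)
Your proof is correct and follows essentially the same route as the paper's: induction on the derivation of $\redseq \flateq \redseqtwo$, with the closure cases discharged by the corresponding closure properties of $\permeq$, the \flateqRule{Assoc} case by \permeqRule{Assoc}, and the \flateqRule{Perm} case by \rlem{splitting_sound_wrt_permeq}. In fact you are slightly more careful than the paper's own proof in the \flateqRule{Perm} case, where you explicitly bridge $\mstep_1\seq\mstep_2$ and $\flatten{\mstep_1}\seq\flatten{\mstep_2}$ via \rlem{flattening_sound_wrt_permeq} and congruence under composition, a detail the paper's write-up elides.
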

\begin{proof}
By induction on the derivation of $\redseq \flateq \redseqtwo$.
Reflexivity, transitivity, symmetry, and closure under composition
contexts is immediate.
The interesting case is when an axiom is applied at the root:
\begin{enumerate}
\item
  \flateqRule{Assoc}:
  Let $(\redseq\seq\redseqtwo)\seq\redseqthree \flateq
       \redseq\seq(\redseqtwo\seq\redseqthree)$.
  Then by \permeqRule{Assoc}
  also $(\redseq\seq\redseqtwo)\seq\redseqthree \permeq
       \redseq\seq(\redseqtwo\seq\redseqthree)$.
\item
  \flateqRule{Perm}:
  Let $\mstep \flateq \mstep_1\seq\mstep_2$ where
  $\judgSplit{\mstep}{\mstep_1}{\mstep_2}$.
  Then by \rlem{splitting_sound_wrt_permeq}
  we have that $\mstep \permeq \mstep_1\seq\mstep_2$.
\end{enumerate}
\end{proof}

We thus obtain soundness of flat permutation equivalence as an immediate consequence of ~\rlem{flattening_sound_wrt_permeq} and~\rlem{flateq_sound_wrt_permeq}:

\begin{prop}[Soundness of flat permutation equivalence]
\lprop{soundness_flateq}
Let $\judgRewr{\tenv}{\redseq}{\tm}{\tmtwo}{\typ}$
and $\judgRewr{\tenv}{\redseqtwo}{\tm'}{\tmtwo'}{\typ}$.
Then  $\flatten{\redseq} \flateq \flatten{\redseqtwo}$ implies $\redseq \permeq \redseqtwo$.
\end{prop}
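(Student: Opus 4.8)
The plan is to observe that the proposition follows immediately by chaining the two soundness lemmas already established, exploiting only that $\permeq$ is an equivalence relation. There is no genuine new content to prove; the real work has been isolated into \rlem{flattening_sound_wrt_permeq} and \rlem{flateq_sound_wrt_permeq}.

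First I would recall that, by definition, $\flatten{\redseq}$ is the $\tof$-normal form of $\redseq$, which is well-defined because $\flatteningSystem$ is strongly normalizing and confluent (\rprop{body:flat_sn}). In particular $\redseq \tofs \flatten{\redseq}$. Each individual flattening step preserves permutation equivalence by \rlem{flattening_sound_wrt_permeq}; since $\permeq$ is reflexive and transitive, a routine induction on the length of the reduction $\redseq \tofs \flatten{\redseq}$ yields $\redseq \permeq \flatten{\redseq}$, and symmetrically $\redseqtwo \permeq \flatten{\redseqtwo}$.

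Next, from the hypothesis $\flatten{\redseq} \flateq \flatten{\redseqtwo}$ I would invoke \rlem{flateq_sound_wrt_permeq} to obtain $\flatten{\redseq} \permeq \flatten{\redseqtwo}$. Combining the three equivalences by transitivity and symmetry of $\permeq$ gives the chain $\redseq \permeq \flatten{\redseq} \permeq \flatten{\redseqtwo} \permeq \redseqtwo$, whence $\redseq \permeq \redseqtwo$, as desired.

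The only point requiring a moment's care is the application of \rlem{flateq_sound_wrt_permeq}, whose hypotheses demand that both $\flatten{\redseq}$ and $\flatten{\redseqtwo}$ be typable rewrites. This is immediate: $\flatteningSystem$ is defined on typable rewrites and its rules relate typable rewrites only, so typability of $\redseq$ and $\redseqtwo$ (given by the assumptions of the proposition) is transported to their respective normal forms. Moreover the endpoints are preserved modulo $\termeq$ throughout flattening (\rremark{flat_not_beta_steps_preserve_endpoints} for the structural rules and \rlem{flatten_beta_eta_source_target} for the \flatRule{BetaM} and \flatRule{EtaM} rules), so the source/target data needed to instantiate the cited lemmas is available. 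Beyond this bookkeeping, the argument presents no obstacle.
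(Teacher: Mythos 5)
Your proposal is correct and follows essentially the same route as the paper: the paper likewise obtains the result as an immediate consequence of \rlem{flattening_sound_wrt_permeq} (giving $\redseq \permeq \flatten{\redseq}$ and $\redseqtwo \permeq \flatten{\redseqtwo}$ via $\redseq \tofs \flatten{\redseq}$) and \rlem{flateq_sound_wrt_permeq} (giving $\flatten{\redseq} \permeq \flatten{\redseqtwo}$), chained by transitivity and symmetry of $\permeq$. Your additional remarks on typability and preservation of endpoints are harmless bookkeeping that the paper leaves implicit.
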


\subsection{Completeness}

Completeness, namely that $\redseq \permeq \redseqtwo$
implies $\flatten{\redseq} \flateq \flatten{\redseqtwo}$, and stated
as~\rprop{flateq_complete_wrt_permeq} below, requires considerably more work.
We begin by presenting some auxiliary notions and results.
First a shorthand: 
$\fsrc{\redseq}$ and $\ftgt{\redseq}$
denote the $\tof$-normal forms of the source and target, respectively,
that is, $\flatten{(\refl{\rsrc{\redseq}})}$
and $\flatten{(\refl{\rtgt{\redseq}})}$.
  We begin by introducing certain contexts called {\em composition trees}:
  \[
    \kctx ::= \ctxhole
         \mid \kctx\seq\kctx
  \]
  For each $n \geq 1$, an {\em $n$-hole composition tree}
  is a composition tree with $n$ occurrences of the {\em hole} $\ctxhole$.
  If $\kctx$ is an $n$-hole composition tree,
  we write $\kctxof{\redseq_1,\hdots,\redseq_n}$ to stand for the
  rewrite that results from replacing the $i$-th hole of $\kctx$ for $\redseq_i$
  for each $1 \leq i \leq n$.
  For example $((\ctxhole\seq\ctxhole)\seq\ctxhole)$ is a 3-hole
  composition tree and
  $((\ctxhole\seq\ctxhole)\seq\ctxhole)\ctxof{\redseq,\redseqtwo,\redseqthree} =
   (\redseq\seq\redseqtwo)\seq\redseqthree$. Composition trees give us ready access to all the multisteps in a flat rewrite.  We next list some useful properties on flat rewrites, presented in terms of compositions trees.

   \begin{description}

     \item[P1] Rules \flatRule{Abs}, \flatRule{App1}, \flatRule{App2}, and \flatRule{App3} of~\rdef{flattening_system} can be generalized to composition trees. \begin{enumerate}
\item {\bf Generalized \flatRule{Abs}.}
  $\lam{\var}{\kctxof{\redseq_1,\hdots,\redseq_n}} \tofs
   \kctxof{\lam{\var}{\redseq_1},\hdots,\lam{\var}{\redseq_n}}$.
\item {\bf Generalized \flatRule{App1}.}
  $\kctxof{\redseq_1,\hdots,\redseq_{n-1},\redseq_n}\,\mstep \tofs
   \kctxof{
     (\redseq_1\,\refl{\rsrc{\mstep}}),
     \hdots,
     (\redseq_{n-1}\,\refl{\rsrc{\mstep}}),
     (\redseq_n\,\mstep)
   }$.
\item {\bf Generalized \flatRule{App2}.}
  $\mstep\,\kctxof{\redseq_1,\redseq_2,\hdots,\redseq_n} \tofs
   \kctxof{
     (\mstep\,\redseq_1),
     (\refl{\rtgt{\mstep}}\,\redseq_2),
     \hdots,
     (\refl{\rtgt{\mstep}}\,\redseq_n)
   }$.
\item {\bf Generalized \flatRule{App3}.}
  If $n, m > 1$ then:
  \[
    \kctxof{\redseq_1,\hdots,\redseq_n}\,
    \kctx'\ctxof{\redseqtwo_1,\hdots,\redseqtwo_m}
    \tofs
    \kctxof{
      (\redseq_1\,\refl{\rsrc{\redseqtwo_1}}),\hdots,
      (\redseq_n\,\refl{\rsrc{\redseqtwo_1}})
    }
    \seq
    \kctx'\ctxof{
      (\refl{\rtgt{\redseq_n}}\,\redseqtwo_1),\hdots,
      (\refl{\rtgt{\redseq_n}}\,\redseqtwo_m)
    }
  \]
\end{enumerate}

       \item[P2] Congruence below abstraction and application also generalize to composition trees, obtaining \emph{generalized congruence equations} in following sense: 
\[
  \begin{array}{c}
  \kctx_1\ctxof{
    \flatten{(\lam{\var}{\mstep_1})},
    \hdots,
    \flatten{(\lam{\var}{\mstep_n})}
  }
  \flateq
  \kctx_2\ctxof{
    \flatten{(\lam{\var}{\msteptwo_1})},
    \hdots,
    \flatten{(\lam{\var}{\msteptwo_m})}
    }
    \\
        \kctx_1\ctxof{
      \flatten{(\mstep_1\,\refl{\tm})},
      \hdots,
      \flatten{(\mstep_n\,\refl{\tm})}
    }
    \flateq
    \kctx_2\ctxof{
      \flatten{(\msteptwo_1\,\refl{\tm})},
      \hdots,
      \flatten{(\msteptwo_m\,\refl{\tm})}
    }
    \\
        \kctx_1\ctxof{
      \flatten{(\refl{\tm}\,\mstep_1)},
      \hdots,
      \flatten{(\refl{\tm}\,\mstep_n)}
    }
    \flateq
    \kctx_2\ctxof{
      \flatten{(\refl{\tm}\,\msteptwo_1)},
      \hdots,
      \flatten{(\refl{\tm}\,\msteptwo_m)}
    }
    \end{array}
  \]
where 
$\kctx_1\ctxof{\mstep_1,\hdots,\mstep_n} \flateq
 \kctx_2\ctxof{\msteptwo_1,\hdots,\msteptwo_m}$ and let $\tm$ be an arbitrary term.
Note that the multisteps $\mstep_i$ and $\msteptwo_i$ are in
$\tof$-normal form because the $\flateq$ relation only relates
flat rewrites.
But observe that $\lam{\var}{\mstep_i}$ and $\lam{\var}{\msteptwo_i}$
may not necessarily be in $\tof$-normal form because
there may be an \flatRule{EtaM} redex at the root. Thus the proof of the first item  requires some care.

\item[P3] Flattening of an application of two flat rewrites expressed in terms of their composition trees.
Let $\redseq = \kctx_1\ctxof{\mstep_1,\hdots,\mstep_n}$
and $\redseqtwo = \kctx_2\ctxof{\msteptwo_1,\hdots,\msteptwo_m}$
be flat rewrites.
Then
\[
  \redseq\,\redseqtwo
  \ \tofs\flateq\ %
  \kctx_1\ctxof{
    \flatten{(\mstep_1\,\fsrc{\msteptwo_1})},
    \hdots,
    \flatten{(\mstep_n\,\fsrc{\msteptwo_1})}
  }
  \seq
  \kctx_2\ctxof{
    \flatten{(\ftgt{\mstep_n}\,\msteptwo_1)},
    \hdots,
    \flatten{(\ftgt{\mstep_n}\,\msteptwo_m)}
  }
\]

\item[P4]  Equivalence for term/rewrite substitution of a composition.
Let $\tm$ be a term,
let $\mstep_1,\hdots,\mstep_n$ arbitrary multisteps,
and let $\kctx$ a composition tree.
Then:
\[
  \tm\subtr{\var}{\kctxof{\mstep_1,\hdots,\mstep_n}}
  \,\tofs\flateq\,
  \flatten{\kctxof{\tm\subtr{\var}{\mstep_1},\hdots,\tm\subtr{\var}{\mstep_n}}}
\]
\end{description}

\begin{prop}[Completeness of flat permutation equivalence]
\lprop{flateq_complete_wrt_permeq}
Let $\judgRewr{\tenv}{\redseq}{\tm}{\tmtwo}{\typ}$
and $\judgRewr{\tenv}{\redseqtwo}{\tm'}{\tmtwo'}{\typ}$.
Then $\redseq \permeq \redseqtwo$ implies $\flatten{\redseq} \flateq \flatten{\redseqtwo}$.
\end{prop}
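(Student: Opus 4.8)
The plan is to proceed by induction on the derivation of $\redseq \permeq \redseqtwo$, using that $\permeq$ is the reflexive, symmetric, transitive, and contextual closure of the seven axioms of \rdef{body:permutation_equivalence}. The reflexive case is immediate, and the symmetric and transitive cases follow at once because $\flateq$ is itself closed under symmetry and transitivity. A basic observation used throughout is that flattening commutes with top-level composition, $\flatten{(\redseq\seq\redseqtwo)} = \flatten{\redseq}\seq\flatten{\redseqtwo}$, since the outermost ``$\seq$'' is never a redex of any $\flatteningSystem$-rule and confluence (\rprop{flat_confluent}) guarantees unique normal forms.

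For the axiom cases I would argue as follows. The monoidal axioms are handled at the flat level: \permeqRule{Assoc} is exactly \flateqRule{Assoc}; and for \permeqRule{IdL}, \permeqRule{IdR} one notes that every flat multistep $\mstepn$ admits the splittings $\judgSplit{\mstepn}{\refl{\rsrc{\mstepn}}}{\mstepn}$ and $\judgSplit{\mstepn}{\mstepn}{\refl{\rtgt{\mstepn}}}$ (taking \indrulename{SRuleR}/\indrulename{SRuleL} at rule positions and \indrulename{SVar}/\indrulename{SCon} elsewhere), so that \flateqRule{Perm} yields $\mstepn \flateq \fsrc{\mstepn}\seq\mstepn$ and $\mstepn \flateq \mstepn\seq\ftgt{\mstepn}$, absorbing the unit rewrite. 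The congruence axiom \permeqRule{Abs} is discharged by the generalized \flatRule{Abs} rule (property P1), which drives both $\lam{\var}{(\redseq\seq\redseqtwo)}$ and $(\lam{\var}{\redseq})\seq(\lam{\var}{\redseqtwo})$ to a common normal form, making the two sides literally equal. The axiom \permeqRule{App} is the delicate one and is settled using property P3, which rewrites an application of two flat rewrites into a composition expressed over composition trees. Finally the $\beta\eta$ axioms \permeqRule{BetaTR}, \permeqRule{BetaRT}, \permeqRule{Eta} are matched by \flatRule{BetaM}, \flatRule{EtaM}, and \flatRule{App1}/\flatRule{App2} together with the substitution-of-a-composition property P4, which relates $\flatten{(\tm\subtr{\var}{\kctxof{\mstep_1,\hdots,\mstep_n}})}$ to $\flatten{\kctxof{\tm\subtr{\var}{\mstep_1},\hdots,\tm\subtr{\var}{\mstep_n}}}$.

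The inductive cases for contextual closure carry most of the weight. Composition contexts are easy: from $\flatten{(\redseq\seq\redseqthree)} = \flatten{\redseq}\seq\flatten{\redseqthree}$ and the induction hypothesis $\flatten{\redseq} \flateq \flatten{\redseqtwo}$, closure of $\flateq$ under composition contexts immediately gives $\flatten{(\redseq\seq\redseqthree)} \flateq \flatten{(\redseqtwo\seq\redseqthree)}$, and symmetrically on the right. For abstraction and application contexts, I would write $\flatten{\redseq} = \kctx_1\ctxof{\mstep_1,\hdots,\mstep_n}$ and $\flatten{\redseqtwo} = \kctx_2\ctxof{\msteptwo_1,\hdots,\msteptwo_m}$ and invoke the generalized congruence equations (property P2): from $\kctx_1\ctxof{\mstep_1,\hdots,\mstep_n} \flateq \kctx_2\ctxof{\msteptwo_1,\hdots,\msteptwo_m}$ one obtains the corresponding equivalence after pushing an abstraction inside, or after applying to, or being applied to, a fixed term; the general application contexts $(-)\,\redseqthree$ and $\redseqthree\,(-)$ are then reduced, via P3, to the case of a fixed argument.

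The main obstacle is the application case: both the \permeqRule{App} axiom and the application-context case hinge on reconciling the reshaping performed by P3 (and the generalized \flatRule{App3} rule) with the induction hypothesis, so that the two composition-tree decompositions can be matched step-by-step under $\flateq$. This is subtle precisely because of the \flatRule{EtaM} phenomenon flagged in P2: although each $\mstep_i,\msteptwo_i$ is in $\tof$-normal form, an abstraction $\lam{\var}{\mstep_i}$ or an application $\mstep_i\,\refl{\tm}$ need not be, so $\flatten{(\lam{\var}{\mstep_i})}$ and $\flatten{(\mstep_i\,\refl{\tm})}$ can trigger further \flatRule{EtaM}/\flatRule{BetaM} steps that must be shown to respect the $\flateq$-equivalence propagated through the composition tree. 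Establishing that these residual $\beta\eta$-interactions preserve $\flateq$ is the technical crux on which completeness ultimately rests.
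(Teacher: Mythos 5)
Your plan reproduces the paper's proof skeleton almost exactly: the same induction on the derivation of $\redseq \permeq \redseqtwo$, the same observation that flattening commutes with top-level composition ($\flatten{(\redseq\seq\redseqtwo)} = \flatten{\redseq}\seq\flatten{\redseqtwo}$), the same splitting-based absorption of units for \permeqRule{IdL}/\permeqRule{IdR}, and the same division of labour among P1--P4 for the $\beta\eta$ axioms and for contextual closure. However, there is a genuine gap exactly at the point you defer: the root \permeqRule{App} case. You say P3 ``settles'' it and then locate the residual difficulty in the \flatRule{EtaM} phenomenon of P2, but that is not what this case turns on, and your proposal never supplies the idea that actually closes it.

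Concretely, apply P3 to each application on the left-hand side $(\redseq_1\,\redseq_2)\seq(\redseqtwo_1\,\redseqtwo_2)$: writing $F_1,A_1$ for the resulting blocks of $\redseq_1\,\redseq_2$ (function-side steps applied to the frozen source of the argument, then argument-side steps under the frozen target of the function) and $F_2,A_2$ for those of $\redseqtwo_1\,\redseqtwo_2$, the left-hand side becomes $F_1 \seq A_1 \seq F_2 \seq A_2$. Flattening the right-hand side $(\redseq_1\seq\redseqtwo_1)\,(\redseq_2\seq\redseqtwo_2)$ via generalized \flatRule{App3} instead yields $F_1 \seq F_2 \seq A_1 \seq A_2$ (with adjusted endpoints). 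Matching the two therefore requires commuting the argument block $A_1$ past the function block $F_2$ under $\flateq$, and neither P3 nor \flatRule{App3} provides such a commutation. The paper proves a dedicated swapping property, by induction on the composition tree, whose essential content is that a ``diagonal'' multistep mediates between the two orders via \flateqRule{Perm}: from the two splittings $\judgSplit{\mstep\,\msteptwo}{\mstep\,\refl{\rsrc{\msteptwo}}}{\refl{\rtgt{\mstep}}\,\msteptwo}$ and $\judgSplit{\mstep\,\msteptwo}{\refl{\rsrc{\mstep}}\,\msteptwo}{\mstep\,\refl{\rtgt{\msteptwo}}}$, both orderings of the two component steps are $\flateq$-equal to $\flatten{(\mstep\,\msteptwo)}$, hence to each other. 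This block swap is the actual crux of completeness. Two smaller omissions in the same vein: gluing the composition trees requires the endpoint-coincidence claims (e.g.\ $\ftgt{\mstep_n} = \fsrc{\mstepB_1}$), which follow from \rlem{permeq_endpoints_are_termeq}, \rremark{flat_not_beta_steps_preserve_endpoints}, \rlem{flatten_beta_eta_source_target} and confluence, and which your proposal silently assumes; and the \flatRule{EtaM} issue you single out is a real but local subtlety in the proof of P2 itself, not the obstacle in the \permeqRule{App} case.
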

\begin{proof}
By induction on the derivation of $\redseq \permeq \redseqtwo$.
In the proof, sometimes we implicitly use the fact that $\tof$ is
strongly normalizing~(\rprop{flat_sn})
and confluent~(\rprop{flat_confluent}).
In particular, note that
$\flatten{(\redseq\seq\redseqtwo)} = \flatten{\redseq}\seq\flatten{\redseqtwo}$
and more in general
$\flatten{\kctxof{\redseq_1,\hdots,\redseq_n}} =
 \kctxof{\flatten{\redseq_1},\hdots,\flatten{\redseq_n}}$.
In the inductive proof, the cases for reflexivity,
symmetry, and transitivity are immediate. We analyze the cases when
a rule is applied at the root, as well as congruence closure under
rewrite constructors. Root cases {\bf \permeqRule{Assoc}} and {\bf \permeqRule{Abs}} are straightforward. Root case {\bf \permeqRule{IdR}} is similar to {\bf \permeqRule{IdL}}, so among the two we only consider the latter. Likewise, the root case {\bf \permeqRule{BetaRT}} is similar to {\bf \permeqRule{BetaTR}}, so among the two we only consider the latter. Congruence case {\bf Congruence under an abstraction} follows from P1 (generalized \flatRule{Abs}), P2 and the \ih. Congruence case {\bf Congruence under a composition} follows from the \ih. 
This leaves the following cases:
\begin{enumerate}
\item {\bf \permeqRule{IdL}.}
  Let $\refl{\rsrc{\redseq}}\seq\redseq \permeq \redseq$.
  Let $\flatten{\redseq} = \kctxof{\mstep_1,\hdots,\mstep_n}$.
  Note that $\rsrc{(\flatten{\redseq})} = \rsrc{\mstep_1}$.
  Then:
  \[
    \begin{array}{rcll}
      \flatten{(\refl{\rsrc{\redseq}}\seq\redseq)}
    & = &
      \flatten{(\refl{\rsrc{\redseq}})} \seq \flatten{\redseq}
    \\
    & = &
      \flatten{(\refl{\rsrc{(\flatten{\redseq})}})} \seq \flatten{\redseq}
      & \text{since 
          $\refl{\rsrc{(\flatten{\redseq})}} \tofs \flatten{(\refl{\rsrc{\redseq}})}$
        }
    \\
    & = &
      \flatten{(\refl{\rsrc{\mstep_1}})} \seq \flatten{\redseq}
      & \text{since $\rsrc{(\flatten{\redseq})} = \rsrc{\mstep_1}$}
    \\
    & = &
      \flatten{(\refl{\rsrc{\mstep_1}})} \seq \kctxof{\mstep_1,\hdots,\mstep_n}
    \\
    & = &
      \kctxof{(\flatten{(\refl{\rsrc{\mstep_1}})}\seq\mstep_1),\hdots,\mstep_n}
      & \text{by \flateqRule{Assoc}} 
    \\
    & \flateq &
      \kctxof{\mstep_1,\hdots,\mstep_n}
      & \text{since $\judgSplit{\mstep_1}{\refl{\rsrc{\mstep_1}}}{\mstep_1}$}
    \\
    & = &
      \flatten{\redseq}
    \end{array}
  \]
  In the second step above, note that    $\refl{\rsrc{(\flatten{\redseq})}} \tofs \flatten{(\refl{\rsrc{\redseq}})}$ follows from the fact that 
  steps other than \flatRule{BetaM} and \flatRule{EtaM}
preserve the endpoints,
while 
\flatRule{BetaM} and \flatRule{EtaM} reduction steps commute
with taking the endpoints. 
Hence we have that
$\refl{\rsrc{\redseq}} \tofs \refl{\rsrc{(\flatten{\redseq})}}$ and we conclude by confluence of flattening.
\item {\bf \permeqRule{App}.}
  Let $(\redseq_1\,\redseq_2)\seq(\redseqtwo_1\,\redseqtwo_2)
       \permeq
       (\redseq_1\seq\redseqtwo_1)\,(\redseq_2\seq\redseqtwo_2)$.
  Consider the $\tof$-normal forms of each rewrite:
  \[
    \begin{array}{ll}
      \flatten{\redseq_1} = \kctx_1\ctxof{\mstep_1,\hdots,\mstep_n}
    &
      \flatten{\redseq_2} = \kctx_2\ctxof{\msteptwo_1,\hdots,\msteptwo_m}
    \\
      \flatten{\redseqtwo_1} = \kctxB_1\ctxof{\mstepB_1,\hdots,\mstepB_p}
    &
      \flatten{\redseqtwo_2} = \kctxB_2\ctxof{\msteptwoB_1,\hdots,\msteptwoB_q}
    \end{array}
  \]
  Before going on, we make the following claim:
  \[
  \ftgt{\mstep_n} = \fsrc{\mstepB_1}
  \HS\text{ and }\HS
  \ftgt{\msteptwo_m} = \fsrc{\msteptwoB_1}
  \HS\HS\text{$(\star)$}
  \]
  For the first equality,
  note that $\redseq_1$ and $\redseqtwo_1$ are composable,
  so $\rtgt{\redseq_1} \termeq \rsrc{\redseqtwo_2}$
  are $\beta\eta$-equivalent terms.
  Moreover, by \rremark{flat_not_beta_steps_preserve_endpoints}
  and \rlem{flatten_beta_eta_source_target}
  we have that $\rtgt{(\flatten{\redseq_1})} \termeq \rtgt{\redseq_1}$
  and $\rsrc{(\flatten{\redseqtwo_1})} \termeq \rsrc{\redseqtwo_1}$.
  This means that $\rtgt{\mstep_n} \termeq \rsrc{\mstepB_1}$,
  so by confluence and strong normalization of flattening
  $\ftgt{\mstep_n} = \fsrc{\mstepB_1}$.
  Similarly, for the second equality,
  since $\redseq_2$ and $\redseqtwo_2$ are composable,
  we have that $\ftgt{\msteptwo_m} = \fsrc{\msteptwoB_1}$.

  Furthermore, we claim that the two following conditions hold:
  \begin{enumerate}
  \item[{\bf (I)}]
        $\redseq_1\,\redseq_2 \tofs\flateq
          \kctx_1\ctxof{
            \flatten{(\mstep_1\,\fsrc{\msteptwo_1})},
            \hdots,
            \flatten{(\mstep_n\,\fsrc{\msteptwo_1})}
          }
          \seq
          \kctx_2\ctxof{
            \flatten{(\fsrc{\mstepB_1}\,\msteptwo_1)},
            \hdots,
            \flatten{(\fsrc{\mstepB_1}\,\msteptwo_m)}
          }$
  \item[{\bf (II)}]
        $\redseqtwo_1\,\redseqtwo_2 \tofs\flateq
          \kctxB_1\ctxof{
            \flatten{(\mstepB_1\,\ftgt{\msteptwo_m})},
            \hdots,
            \flatten{(\mstepB_p\,\ftgt{\msteptwo_m})}
          }
          \seq
          \kctxB_2\ctxof{
            \flatten{(\ftgt{\mstepB_p}\,\msteptwoB_1)},
            \hdots,
            \flatten{(\ftgt{\mstepB_p}\,\msteptwoB_q)}
          }$
  \end{enumerate}
  To prove {\bf (I)}, note that:
  \[
    \begin{array}{rlll}
    &&
      \redseq_1\,\redseq_2
    \\
    & \tofs\flateq &
      \kctx_1\ctxof{
        \flatten{(\mstep_1\,\fsrc{\msteptwo_1})},
        \hdots,
        \flatten{(\mstep_n\,\fsrc{\msteptwo_1})}
      }
      \seq
      \kctx_2\ctxof{
        \flatten{(\ftgt{\mstep_n}\,\msteptwo_1)},
        \hdots,
        \flatten{(\ftgt{\mstep_n}\,\msteptwo_m)}
      }
      & \text{by P3 above}
    \\
    & = &
      \kctx_1\ctxof{
        \flatten{(\mstep_1\,\fsrc{\msteptwo_1})},
        \hdots,
        \flatten{(\mstep_n\,\fsrc{\msteptwo_1})}
      }
      \seq
      \kctx_2\ctxof{
        \flatten{(\fsrc{\mstepB_1}\,\msteptwo_1)},
        \hdots,
        \flatten{(\fsrc{\mstepB_1}\,\msteptwo_m)}
      }
      & \text{by the claim $(\star)$ above}
    \end{array}
  \]
  The proof of {\bf (II)} is symmetric to the proof of {\bf (I)}.

  To conclude the proof of the \flateqRule{App} case,
  let us rewrite the left-hand side.
  We use the associativity rule (\flateqRule{Assoc}) implicitly:
    \[
      \begin{array}{rlll}
      &&
        (\redseq_1\,\redseq_2)\seq(\redseqtwo_1\,\redseqtwo_2)
      \\
      & \tofs\flateq &
        \kctx_1\ctxof{
          \flatten{(\mstep_1\,\fsrc{\msteptwo_1})},
          \hdots,
          \flatten{(\mstep_n\,\fsrc{\msteptwo_1})}
        }
        \seq
        \kctx_2\ctxof{
          \flatten{(\fsrc{\mstepB_1}\,\msteptwo_1)},
          \hdots,
          \flatten{(\fsrc{\mstepB_1}\,\msteptwo_m)}
        }
        \seq \\
        &&
        \kctxB_1\ctxof{
          \flatten{(\mstepB_1\,\ftgt{\msteptwo_m})},
          \hdots,
          \flatten{(\mstepB_p\,\ftgt{\msteptwo_m})}
        }
        \seq
        \kctxB_2\ctxof{
          \flatten{(\ftgt{\mstepB_p}\,\msteptwoB_1)},
          \hdots,
          \flatten{(\ftgt{\mstepB_p}\,\msteptwoB_q)}
        }
        \\&&\HS\text{by claims {\bf (I)} and {\bf (II)}}
      \\
      & \flateq &
         \kctx_1\ctxof{
           \flatten{(\mstep_1\,\fsrc{\msteptwo_1})},
           \hdots,
           \flatten{(\mstep_n\,\fsrc{\msteptwo_1})}
         }
         \seq
         \kctxB_1\ctxof{
           \flatten{(\mstepB_1\,\fsrc{\msteptwo_1})},
           \hdots,
           \flatten{(\mstepB_p\,\fsrc{\msteptwo_1})}
         }
         \seq
         \\&&
         \kctx_2\ctxof{
           \flatten{(\ftgt{\mstepB_p}\,\msteptwo_1)},
           \hdots,
           \flatten{(\ftgt{\mstepB_p}\,\msteptwo_m)}
         }
         \seq
         \kctxB_2\ctxof{
           \flatten{(\ftgt{\mstepB_p}\,\msteptwoB_1)},
           \hdots,
           \flatten{(\ftgt{\mstepB_p}\,\msteptwoB_q)}
         }
         \\&&\HS\text{($\star$)} 
      \end{array}
    \]
    The $(\star)$ above refers to the following swapping property which can proved by induction on $\kctx_1$:
    \[
  \begin{array}{rcll}
  &
    \kctx_1\ctxof{
      \flatten{(\mstep_1\,\refl{\rsrc{\msteptwo_1}})},
      \hdots,
      \flatten{(\mstep_n\,\refl{\rsrc{\msteptwo_1}})}
    }
    \seq
    \kctx_2\ctxof{
      \flatten{(\refl{\rtgt{\mstep_n}}\,\msteptwo_1)},
      \hdots,
      \flatten{(\refl{\rtgt{\mstep_n}}\,\msteptwo_m)}
    }
  \\
  \flateq &
    \kctx_2\ctxof{
      \flatten{(\refl{\rsrc{\mstep_1}}\,\msteptwo_1)},
      \hdots,
      \flatten{(\refl{\rsrc{\mstep_1}}\,\msteptwo_m)}
    }
    \seq
    \kctx_1\ctxof{
      \flatten{(\mstep_1\,\refl{\rtgt{\msteptwo_m}})},
      \hdots,
      \flatten{(\mstep_n\,\refl{\rtgt{\msteptwo_m}})}
    }
  \end{array}
\]
  where $\kctx_1,\kctx_2$ are arbitrary composition trees 
and 
$\mstep_1,\hdots,\mstep_n,\msteptwo_1,\hdots,\msteptwo_m$ are arbitrary multisteps.

    On the other hand, rewriting the right-hand side:
    \[
      \begin{array}{rcll}
      &&
        (\redseq_1\seq\redseqtwo_1)\,(\redseq_2\seq\redseqtwo_2)
      \\
      & \tofs &
        (
          \kctx_1\ctxof{\mstep_1,\hdots,\mstep_n}
          \seq
          \kctxB_1\ctxof{\mstepB_1,\hdots,\mstepB_p}
        )
        \,
        (
          \kctx_2\ctxof{\msteptwo_1,\hdots,\msteptwo_m}
          \seq
          \kctxB_2\ctxof{\msteptwoB_1,\hdots,\msteptwoB_q}
        )
      \\
        & \tofs &
        \kctx_1\ctxof{
          (\mstep_1\,\refl{\rsrc{\msteptwo_1}}),
          \hdots,
          (\mstep_n\,\refl{\rsrc{\msteptwo_1}})
        }
        \seq
        \kctxB_1\ctxof{
          (\mstepB_1\,\refl{\rsrc{\msteptwo_1}}),
          \hdots,
          (\mstepB_p\,\refl{\rsrc{\msteptwo_1}})
        }
        \seq
        \\&&
        \kctx_2\ctxof{
          (\refl{\rtgt{\mstepB_p}}\,\msteptwo_1),
          \hdots,
          (\refl{\rtgt{\mstepB_p}}\,\msteptwo_m)
        }
        \seq
        \kctxB_2\ctxof{
          (\refl{\rtgt{\mstepB_p}}\,\msteptwoB_1),
          \hdots,
          (\refl{\rtgt{\mstepB_p}}\,\msteptwoB_q)
        }
        \\&&\HS\text{by P1 (generalized
                     \flatRule{App3})} 
      \\
      & \tofs &
         \kctx_1\ctxof{
           \flatten{(\mstep_1\,\fsrc{\msteptwo_1})},
           \hdots,
           \flatten{(\mstep_n\,\fsrc{\msteptwo_1})}
         }
         \seq
         \kctxB_1\ctxof{
           \flatten{(\mstepB_1\,\fsrc{\msteptwo_1})},
           \hdots,
           \flatten{(\mstepB_p\,\fsrc{\msteptwo_1})}
         }
         \seq
         \\&&
         \kctx_2\ctxof{
           \flatten{(\ftgt{\mstepB_p}\,\msteptwo_1)},
           \hdots,
           \flatten{(\ftgt{\mstepB_p}\,\msteptwo_m)}
         }
         \seq
         \kctxB_2\ctxof{
           \flatten{(\ftgt{\mstepB_p}\,\msteptwoB_1)},
           \hdots,
           \flatten{(\ftgt{\mstepB_p}\,\msteptwoB_q)}
         }
      \end{array}
    \]

\item {\bf \permeqRule{BetaTR}.}
  Let $(\lam{\var}{\refl{\tm}})\,\redseq \permeq \tm\subtr{\var}{\redseq}$,
  and suppose that $\flatten{\redseq} = \kctxof{\mstep_1,\hdots,\mstep_n}$.
  First note that,
  $(\lam{\var}{\refl{\tm}})\,\redseq
   \tofs
   \kctxof{
     ((\lam{\var}{\refl{\tm}})\,\mstep_1),
     \hdots,
     ((\lam{\var}{\refl{\tm}})\,\mstep_n)
   }$.
  Indeed, if $n = 1$ this is immediate, and if $n > 1$
  this is a consequence of P1 (generalized \flatRule{App2}). 
  Hence:
  \[
    \begin{array}{rlll}
      (\lam{\var}{\refl{\tm}})\,\redseq
    & \tofs &
      \kctxof{
        ((\lam{\var}{\refl{\tm}})\,\mstep_1),
        \hdots,
        ((\lam{\var}{\refl{\tm}})\,\mstep_n)
      }
    \\
    & \tofs &
      \kctxof{
        \tm\subm{\var}{\mstep_1},
        \hdots,
        \tm\subm{\var}{\mstep_n}
      }
      & \text{by \flatRule{BetaM} ($n$ times)}
    \\
    & = &
      \kctxof{
        \tm\subtr{\var}{\mstep_1},
        \hdots,
        \tm\subtr{\var}{\mstep_n}
      }
      & \text{by \rremark{multistep_substitution_vs_subtr_subrt}(2)}
    \\
    & \tofs &
      \flatten{
        \kctxof{\tm\subtr{\var}{\mstep_1},\hdots,\tm\subtr{\var}{\mstep_n}}
      }
    \\
    & \flateq\tofsinv &
      \tm\subtr{\var}{\kctxof{\mstep_1,\hdots,\mstep_n}}
      & \text{by P4}
    \\
    & \tofsinv &
      \tm\subtr{\var}{\redseq}
      & \text{by \rlem{flattening_below_subs}(1)}
    \end{array}
  \]
\item {\bf \permeqRule{Eta}.}
  Let $\lam{\var}{\redseq\,\var} \permeq \redseq$
  where $\var\notin\fv{\redseq}$.
  Let $\flatten{\redseq} = \kctxof{\mstep_1,\hdots,\mstep_n}$.
  It suffices to note that
  $\flatten{(\lam{\var}{\redseq\,\var})} = \flatten{\redseq}$. Indeed:
  \[
    \begin{array}{rlll}
      \lam{\var}{\redseq\,\var}
    & \tofs &
      \lam{\var}{\flatten{\redseq}\,\var}
    \\
    & = &
      \lam{\var}{\kctxof{\mstep_1,\hdots,\mstep_n}\,\var}
    \\
    & \tofs &
      \lam{\var}{
        \kctxof{(\mstep_1\,\var),\hdots,(\mstep_n\,\var)}
      }
      & \text{by P1 (generalized
              \flatRule{App1})}
    \\
    & \tofs &
      \kctxof{\lam{\var}{(\mstep_1\,\var)},\hdots,\lam{\var}{(\mstep_n\,\var)}}
      & \text{by P1 (generalized
              \flatRule{Abs})}
    \\
    & \tofs &
      \kctxof{\mstep_1,\hdots,\mstep_n}
      & \text{by \flatRule{EtaM} ($n$ times)}
    \\
    & = &
      \flatten{\redseq}
    \end{array}
  \]
  Note that we may apply the \flatRule{EtaM} rule
  because, for each $1 \leq i \leq n$,
  we have that $\var \notin \fv{\mstep_i}$.
  This in turn is justified by noting that
  $\var \notin \fv{\kctxof{\mstep_1,\hdots,\mstep_n}}
             = \fv{\flatten{\redseq}}$,
  which is a consequence of the fact that
  flattening does not create free variables.
\item {\bf Congruence under an application.}
  Let $\redseq_1\,\redseq_2 \permeq \redseqtwo_1\,\redseqtwo_2$
  be derived from $\redseq_1 \permeq \redseqtwo_1$
  and $\redseq_2 \permeq \redseqtwo_2$.
  Consider the $\tof$-normal forms of each rewrite:
  \[
    \begin{array}{ll}
      \flatten{\redseq_1} = \kctx_1\ctxof{\mstep_1,\hdots,\mstep_n}
    &
      \flatten{\redseq_2} = \kctx_2\ctxof{\msteptwo_1,\hdots,\msteptwo_m}
    \\
      \flatten{\redseqtwo_1} = \kctxB_1\ctxof{\mstepB_1,\hdots,\mstepB_p}
    &
      \flatten{\redseqtwo_2} = \kctxB_2\ctxof{\msteptwoB_1,\hdots,\msteptwoB_q}
    \end{array}
  \]
  By \ih we have that
  $\flatten{\redseq_1} \flateq \flatten{\redseqtwo_1}$
  and
  $\flatten{\redseq_2} \flateq \flatten{\redseqtwo_2}$.
  Before going on, we make the following claim:
  \[
    \fsrc{\msteptwo_1} = \fsrc{\msteptwoB_1}
    \HS\text{ and }\HS
    \fsrc{\mstep_n} = \fsrc{\mstepB_p}
    \HS\HS(\star)
  \]
  For the first equality, note that
  $\redseq_2 \permeq \redseqtwo_2$,
  so $\rsrc{\redseq_2} \termeq \rsrc{\redseqtwo_2}$
  are $\beta\eta$-equivalent terms by \rlem{permeq_endpoints_are_termeq}.
  Moreover, by \rremark{flat_not_beta_steps_preserve_endpoints}
  and \rlem{flatten_beta_eta_source_target}
  we have that $\rsrc{(\flatten{\redseq_2})} \termeq \rsrc{\redseq_2}$
  and $\rsrc{(\flatten{\redseqtwo_2})} \termeq \rsrc{\redseqtwo_2}$.
  This means that $\rsrc{\msteptwo_1} \termeq \rsrc{\msteptwoB_1}$,
  so by confluence and strong normalization of flattening
  $\fsrc{\msteptwo_1} = \fsrc{\msteptwoB_1}$.
  Similarly, for the second equality,
  since $\redseq_1 \permeq \redseqtwo_1$,
  we have that $\ftgt{\mstep_n} = \ftgt{\mstepB_p}$.
  Then:
  \[
    \begin{array}{rlll}
    &&
      \redseq_1\,\redseq_2
    \\
    & \tofs\flateq &
      \kctx_1\ctxof{
        \flatten{(\mstep_1\,\fsrc{\msteptwo_1})},
        \hdots,
        \flatten{(\mstep_n\,\fsrc{\msteptwo_1})}
      }
      \seq
      \kctx_2\ctxof{
        \flatten{(\ftgt{\mstep_n}\,\msteptwo_1)},
        \hdots,
        \flatten{(\ftgt{\mstep_n}\,\msteptwo_m)}
      }
      & \text{by P3}
    \\
    & \flateq &
      \kctxB_1\ctxof{
        \flatten{(\mstepB_1\,\fsrc{\msteptwo_1})},
        \hdots,
        \flatten{(\mstepB_p\,\fsrc{\msteptwo_1})}
      }
      \seq
      \kctx_2\ctxof{
        \flatten{(\ftgt{\mstep_n}\,\msteptwo_1)},
        \hdots,
        \flatten{(\ftgt{\mstep_n}\,\msteptwo_m)}
      }
      & \text{by P2 (
              as $\flatten{\redseq_1} \flateq \flatten{\redseqtwo_1}$)}
    \\
    & \flateq &
      \kctxB_1\ctxof{
        \flatten{(\mstepB_1\,\fsrc{\msteptwo_1})},
        \hdots,
        \flatten{(\mstepB_p\,\fsrc{\msteptwo_1})}
      }
      \seq
      \kctxB_2\ctxof{
        \flatten{(\ftgt{\mstep_n}\,\msteptwoB_1)},
        \hdots,
        \flatten{(\ftgt{\mstep_n}\,\msteptwoB_q)}
      }
      & \text{by P2 (
              as $\flatten{\redseq_2} \flateq \flatten{\redseqtwo_2}$)}
    \\
    & = &
      \kctxB_1\ctxof{
        \flatten{(\mstepB_1\,\fsrc{\msteptwoB_1})},
        \hdots,
        \flatten{(\mstepB_p\,\fsrc{\msteptwoB_1})}
      }
      \seq
      \kctxB_2\ctxof{
        \flatten{(\ftgt{\mstepB_p}\,\msteptwoB_1)},
        \hdots,
        \flatten{(\ftgt{\mstepB_p}\,\msteptwoB_q)}
      }
      & \text{by the claim $(\star)$ above}
    \\
    & \tofsinv &
      \redseqtwo_1\,\redseqtwo_2
      & \text{by P3}
    \end{array}
  \]
\end{enumerate}
\end{proof}

\begin{thm}[Soundness and completeness of flat permutation equivalence]
\lthm{body:soundness_completeness_flateq}
Let $\judgRewr{\tenv}{\redseq}{\tm}{\tmtwo}{\typ}$
and $\judgRewr{\tenv}{\redseqtwo}{\tm'}{\tmtwo'}{\typ}$.
Then $\redseq \permeq \redseqtwo$
if and only if $\flatten{\redseq} \flateq \flatten{\redseqtwo}$.
\end{thm}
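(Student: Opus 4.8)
The plan is to observe that the two directions of the stated biconditional have already been established as separate results earlier in this section, so the theorem follows by simply combining them. For the forward direction, assuming $\redseq \permeq \redseqtwo$, I would invoke \rprop{flateq_complete_wrt_permeq} (completeness), which yields $\flatten{\redseq} \flateq \flatten{\redseqtwo}$ directly. For the backward direction, assuming $\flatten{\redseq} \flateq \flatten{\redseqtwo}$, I would invoke \rprop{soundness_flateq} (soundness), which yields $\redseq \permeq \redseqtwo$. Both propositions are stated under exactly the same typing hypotheses as the theorem, namely $\judgRewr{\tenv}{\redseq}{\tm}{\tmtwo}{\typ}$ and $\judgRewr{\tenv}{\redseqtwo}{\tm'}{\tmtwo'}{\typ}$, so no additional bookkeeping or side conditions need to be checked, and the two implications assemble immediately into the ``if and only if''.

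Since the theorem is merely the conjunction of two already-proven statements, there is no genuine obstacle remaining at this stage; the substantive work lies entirely in the two underlying propositions, and in particular in the machinery they depend on. The harder of the two is completeness (\rprop{flateq_complete_wrt_permeq}), whose proof proceeds by induction on the derivation of $\redseq \permeq \redseqtwo$ and relies on the generalized congruence and flattening properties \textbf{P1}--\textbf{P4}, together with strong normalization (\rprop{flat_sn}) and confluence (\rprop{flat_confluent}) of the flattening system $\flatteningSystem$. Soundness (\rprop{soundness_flateq}), by contrast, reduces to the two inclusions of $\tof$ and of $\flateq$ into $\permeq$, established in \rlem{flattening_sound_wrt_permeq} and \rlem{flateq_sound_wrt_permeq}. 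Thus the role of this theorem is organizational: it records that flattening, equipped with flat permutation equivalence, provides a sound and complete semantics for permutation equivalence of arbitrary rewrites, which is precisely the property exploited in the subsequent treatment of projection and decidability.
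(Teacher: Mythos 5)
Your proposal is correct and matches the paper exactly: the theorem is stated immediately after \rprop{soundness_flateq} and \rprop{flateq_complete_wrt_permeq}, and its proof is precisely the conjunction of those two propositions (the paper gives no further argument, since the hypotheses coincide and the two implications assemble directly). Your identification of where the real work lies --- in the inductive completeness proof with properties \textbf{P1}--\textbf{P4} and in the soundness lemmas --- is also accurate.
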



\section{Projection}
\lsec{projection}

This section presents projection equivalence.
Two rewrites $\redseq$ and $\redseqtwo$ are said to be projection equivalent if the
steps performed by $\redseq$ are included in those performed by $\redseqtwo$
and vice-versa. We proceed in stages as follows.
First, we define {\em projection} of
multisteps over multisteps~(\rdef{projection_operator})
and prove some of its properties~(\rprop{body:properties_of_projection}).
Second, we extend
projection to flat rewrites~(\rdef{projection_for_rewrites}).
Third, we extend
projection to arbitrary rewrites~(\rdef{projection_for_arbitrary_rewrites})
and, again, we prove some of its properties~(\rprop{body:properties_of_arbitrary_projection}).
Finally, we show that the induced notion of {\em projection equivalence} 
turns out to coincide with
permutation equivalence~(\rthm{body:projection_equivalence}).

\subsection{Projection for Multisteps.}
Consider the rewrites $\consof{mu}\,\rulewittwo$ and $\rulewit\,\consof{f}$, using the notation of~\rexample{mu_permeq}, each representing one step. Since rewrites are subject to $\beta\eta$-equivalence, to define projection one must ``line up'' rule symbols with the left-hand side of the rewrite rules they witness\footnote{See also the discussion on p.~120 of~\cite{thesis:bruggink:08}.}. For example, if the above two multisteps were rewritten as $(\lam{\vartwo}{\consof{mu}\,(\lam{\var}{\vartwo\,\var})})\,\rulewittwo$ and $\rulewit\,(\lam{\var}{\consof{f}\,\var})$, respectively, then one can reason inductively as follows to compute the projection of the former over the latter (the inference rules themselves are introduced in~\rdef{projection_judgement}):
    \[
      \indrule{\small{ProjApp}}{\indrule{\small{ProjRuleR}}
        {\emptyPremise}
        {\judgProj{\lam{\vartwo}{\consof{mu}\,(\lam{\var}{\vartwo\,\var})}}{\rulewit}{\lam{\vartwo}{\vartwo\,(\consof{mu}\,(\lam{\var}{\vartwo\,\var}))}}}
        \indrule{\small{ProjRuleL}}
        {\emptyPremise}
        {\judgProj{\rulewittwo}{\lam{\var}{\consof{f}\,\var}}{\rulewittwo}}
      }
      {\judgProj{ (\lam{\vartwo}{\consof{mu}\,(\lam{\var}{\vartwo\,\var})})\,\rulewittwo}{\rulewit\,(\lam{\var}{\consof{f}\,\var})}{(\lam{\vartwo}{\vartwo\,(\consof{mu}\,(\lam{\var}{\vartwo\,\var}))})\,\rulewittwo}}
    \]
The flat normal form of $(\lam{\vartwo}{\vartwo\,(\consof{mu}\,(\lam{\var}{\vartwo\,\var}))})\,\rulewittwo$
is the rewrite 
$\rulewittwo\,(\consof{mu}\,\rulewittwo)$. Hence we would deduce
$\judgProj{\consof{mu}\,\rulewittwo}{\rulewit\,\consof{f}}{\rulewittwo\,(\consof{mu}\,\rulewittwo)}$.
We begin by introducing an auxiliary notion of projection on coinitial multisteps that may not be flat (\ie~may not be in $\flatRule{BetaM},\flatRule{EtaM}$-normal form) called \emph{weak projection}. We then make use of this notion, to define projection for flat multisteps (\rdef{projection_operator}).

\begin{defi}[Weak projection and compatibility]
  \ldef{projection_judgement}
Let $\judgRewr{\tenv}{\mstep}{\tm}{\tmtwo}{\typ}$
and $\judgRewr{\tenv}{\msteptwo}{\tm'}{\tmthree}{\typ}$
be multisteps, not necessarily in normal form,
such that $\tm \termeq \tm'$.
The judgment $\judgProj{\mstep}{\msteptwo}{\mstepthree}$
is defined as follows:
\[
{\small
  \begin{array}{c}
  \indrule{ProjVar}{
  }{
    \judgProj{\var}{\var}{\var}
  }
  \indrule{ProjCon}{
  }{
    \judgProj{\cons}{\cons}{\cons}
    }
    \\
    \\
  \indrule{ProjRule}{
  }{
    \judgProj{\rulewit}{\rulewit}{\rtgt{\rulewit}}
  }
  \indrule{ProjRuleL}{
  }{
    \judgProj{\rulewit}{\rsrc{\rulewit}}{\rulewit}
  }
  \indrule{ProjRuleR}{
  }{
    \judgProj{\rsrc{\rulewit}}{\rulewit}{\rtgt{\rulewit}}
    }
    \\
    \\
  \indrule{ProjAbs}{
    \judgProj{\mstep}{\msteptwo}{\mstepthree}
  }{
    \judgProj{
       \lam{\var}{\mstep}
     }{
       \lam{\var}{\msteptwo}
     }{
       \lam{\var}{\mstepthree}
     }
  }
  \indrule{ProjApp}{
    \judgProj{\mstep_1}{\msteptwo_1}{\mstepthree_1}
    \HS
    \judgProj{\mstep_2}{\msteptwo_2}{\mstepthree_2}
  }{
    \judgProj{
       \mstep_1\,\mstep_2
     }{
       \msteptwo_1\,\msteptwo_2
     }{
       \mstepthree_1\,\mstepthree_2
     }
  }
  \end{array}
}
\]

We say that $\mstep$ and $\msteptwo$ are {\em compatible},
written $\judgCompat{\mstep}{\msteptwo}$
if, intuitively speaking,
$\mstep$ and $\msteptwo$ are coinitial,
and are ``almost'' $\eta$-expanded and $\beta$-normal forms,
with the exception that
the head of the term may be the source of a rule,
\ie a term of the form $\rsrc{\rulewit}$.
Compatibility is defined as follows:
\[
  \begin{array}{c}
  \indrule{CVar}{
    (\judgCompat{\mstep_i}{\msteptwo_i})_{i=1}^{m}
  }{
    \judgCompat{
      \lam{\varseq{\var}}{\vartwo\,\varseq{\mstep}}
    }{
      \lam{\varseq{\var}}{\vartwo\,\varseq{\msteptwo}}
    }
  }
  \indrule{CCon}{
    (\judgCompat{\mstep_i}{\msteptwo_i})_{i=1}^{m}
  }{
    \judgCompat{
      \lam{\varseq{\var}}{\cons\,\varseq{\mstep}}
    }{
      \lam{\varseq{\var}}{\cons\,\varseq{\msteptwo}}
    }
  }
  \indrule{CRule}{
    (\judgCompat{\mstep_i}{\msteptwo_i})_{i=1}^{m}
  }{
    \judgCompat{
      \lam{\varseq{\var}}{\rulewit\,\varseq{\mstep}}
    }{
      \lam{\varseq{\var}}{\rulewit\,\varseq{\msteptwo}}
    }
  }
  \\
  \\
  \indrule{CRuleL}{
    (\judgCompat{\mstep_i}{\msteptwo_i})_{i=1}^{m}
  }{
    \judgCompat{
      \lam{\varseq{\var}}{\rulewit\,\varseq{\mstep}}
    }{
      \lam{\varseq{\var}}{\rsrc{\rulewit}\,\varseq{\msteptwo}}
    }
  }
  \indrule{CRuleR}{
    (\judgCompat{\mstep_i}{\msteptwo_i})_{i=1}^{m}
  }{
    \judgCompat{
      \lam{\varseq{\var}}{\rsrc{\rulewit}\,\varseq{\mstep}}
    }{
      \lam{\varseq{\var}}{\rulewit\,\varseq{\msteptwo}}
    }
  }
  \end{array}
\]
\end{defi}
The interesting cases are the two last rules, which state essentially
that a rule symbol is compatible with its source term.
Clearly if $\judgCompat{\mstep}{\msteptwo}$, then there exists a unique $\mstepthree$ such that $\judgProj{\mstep}{\msteptwo}{\mstepthree}$.
Moreover, weak projection is coherent with respect to flattening:
\begin{lem}[Coherence of projection]
\llem{coherence_of_projection}
Let $\mstep_1,\msteptwo_1,\mstep_2,\msteptwo_2$ be multisteps
such that
  $\judgCompat{\mstep_1}{\msteptwo_1}$,
  $\judgCompat{\mstep_2}{\msteptwo_2}$,
  $\flatten{\mstep_1} = \flatten{\mstep_2}$,
  $\flatten{\msteptwo_1} = \flatten{\msteptwo_2}$,
  $\judgProj{\mstep_1}{\msteptwo_1}{\mstepthree_1}$,
  and
  $\judgProj{\mstep_2}{\msteptwo_2}{\mstepthree_2}$.
Then $\flatten{\mstepthree_1} = \flatten{\mstepthree_2}$.
\end{lem}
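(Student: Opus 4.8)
The plan is to prove that weak projection, which is a deterministic function of a compatible pair $(\mstep,\msteptwo)$, factors through flattening: the flat normal form of its output depends only on $\flatten{\mstep}$ and $\flatten{\msteptwo}$. Two observations organize the argument. First, on multisteps the only applicable flattening rules are \flatRule{BetaM} and \flatRule{EtaM}, so $\flatten{(\cdot)}$ is just $\beta\eta$-normalisation, reading constants and rule symbols as free variables. Second, and crucially, the rules \indrulename{CRule}, \indrulename{CRuleL}, \indrulename{CRuleR} force every rule occurrence to appear as an \emph{atomic head}, either as $\rulewit$ or as the source $\rsrc{\rulewit}$ treated as an opaque symbol; in particular compatibility is incompatible with $\eta$-expanding (or $\beta$-unfolding) a source head $\rsrc{\rulewit}$ into its internal structure. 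Consequently the only freedom in a compatible representative of a fixed pair of flat normal forms is $\eta$-expansion at the top level (leading abstractions and trailing variable arguments) and inside the arguments $\varseq{\mstep}$.

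First I would record that the hypotheses pin down which compatibility rule is used at each rule position. At such a position one side carries the symbol $\rulewit$ and the other carries either $\rulewit$ or the source $\rsrc{\rulewit}$; since $\rulewit$ survives flattening whereas $\rsrc{\rulewit}$ is unfolded into a rule-symbol-free term, the pattern of ``which side carries the symbol'' is visible in $\flatten{\mstep}$ and $\flatten{\msteptwo}$. Hence the three possibilities (\indrulename{CRule} and \indrulename{CRuleR} both producing $\rtgt{\rulewit}$, and \indrulename{CRuleL} producing $\rulewit$) are determined by the equalities $\flatten{\mstep_1}=\flatten{\mstep_2}$ and $\flatten{\msteptwo_1}=\flatten{\msteptwo_2}$. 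Thus the two projection derivations agree on the ``rule skeleton'' and may differ only by $\eta$-expansion.

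The technical core is a commutation lemma: weak projection commutes with compatible $\eta$-expansion up to flattening. Concretely, if $\judgCompat{\mstep}{\msteptwo}$ with $\judgProj{\mstep}{\msteptwo}{\mstepthree}$, and $(\mstep',\msteptwo')$ is obtained by $\eta$-expanding a corresponding pair of subterms in both components, so that $\judgCompat{\mstep'}{\msteptwo'}$ still holds (which is possible precisely because the expansion is never applied inside a source head) and $\judgProj{\mstep'}{\msteptwo'}{\mstepthree'}$, then $\flatten{\mstepthree}=\flatten{\mstepthree'}$. I would prove this by induction on the projection derivation; the characteristic base case is the expansion of an \indrulename{ProjRule} step, which turns the output $\rtgt{\rulewit}$ into $\lam{\var}{\rtgt{\rulewit}\,\var}$ and then flattens back by an \flatRule{EtaM} step. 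Here I would lean on confluence and strong normalisation of flattening (\rprop{flat_confluent}, \rprop{flat_sn}), together with \rremark{flat_not_beta_steps_preserve_endpoints} and \rlem{flatten_beta_eta_source_target}, to control how the expanded arguments interact with the source/target endpoints produced by the rule cases. To finish, I would exhibit a common compatible $\eta$-expansion $(\mstep^\star,\msteptwo^\star)$ of both $(\mstep_1,\msteptwo_1)$ and $(\mstep_2,\msteptwo_2)$: the maximal compatible expansion that $\eta$-expands every non-source position to $\eta$-long form while keeping each source head $\rsrc{\rulewit}$ atomic. By the second paragraph the two pairs share the same rule skeleton, so each $\eta$-expands up to $(\mstep^\star,\msteptwo^\star)$; iterating the commutation lemma then yields $\flatten{\mstepthree_1}=\flatten{\mstepthree^\star}=\flatten{\mstepthree_2}$, where $\mstepthree^\star$ is the projection of the common form.

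I expect the main obstacle to be the commutation lemma together with the existence of the common compatible expansion, both of which hinge on the requirement that compatibility keeps source heads atomic, so that the relevant ``$\beta$-normal'' and ``$\eta$-long'' forms must be taken in a reading where each $\rsrc{\rulewit}$ is an opaque atom rather than the $\lambda$-term it abbreviates. The delicate point is transferring the hypotheses into this reading: I must argue that $\flatten{\mstep_1}=\flatten{\mstep_2}$ (in which sources \emph{are} unfolded) together with compatibility forces $\mstep_1$ and $\mstep_2$ to coincide up to $\eta$ in the sources-as-atoms reading. This is where orthogonality enters, since each left-hand side is a pattern and left-linear, so that $\flatten{(\rsrc{\rulewit}\,\varseq{\mstep})}=\flatten{(\rsrc{\rulewit}\,\varseq{\mstep}')}$ forces the arguments to agree up to flattening, and hence the source heads cannot be silently identified with structurally different terms.
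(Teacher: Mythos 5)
Your proposal is correct in substance, and its crux coincides with the paper's, but the architecture is genuinely different, so a comparison is worthwhile. The paper's proof is a single simultaneous induction on the derivation of $\judgCompat{\mstep_1}{\msteptwo_1}$: it first observes (as you do, via the ``which side carries the rule symbol'' pattern) that the hypotheses $\flatten{\mstep_1}=\flatten{\mstep_2}$ and $\flatten{\msteptwo_1}=\flatten{\msteptwo_2}$ force the last rule of the derivation of $\judgCompat{\mstep_2}{\msteptwo_2}$ to be the same as that of $\judgCompat{\mstep_1}{\msteptwo_1}$, and then closes each case by applying the induction hypothesis to the paired arguments, the alignment being supplied by exactly the matching property you invoke in your final paragraph: if $\flatten{(\rsrc{\rulewit}\,\mstep_1\hdots\mstep_n)} = \flatten{(\rsrc{\rulewit}\,\msteptwo_1\hdots\msteptwo_n)}$ then $\flatten{\mstep_i}=\flatten{\msteptwo_i}$ for all $i$, a direct consequence of left-hand sides being left-linear patterns. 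So orthogonality enters both proofs at the same point and for the same reason. What you do differently is to factor the argument through a third object: you first canonicalize, arguing that the two pairs agree up to $\eta$ in the sources-as-atoms reading, then prove a commutation lemma (weak projection commutes with compatible $\eta$-expansion up to flattening, by induction on the projection derivation), and finally join both pairs at a common maximal compatible expansion. This buys modularity: the $\eta$-bookkeeping, which the paper's induction leaves implicit (for instance when the two compatible pairs have different arities at a position because one is more expanded than the other), becomes an explicit, reusable lemma. The price is extra machinery the paper avoids: the expansion step must be the structured one, $\lam{\varseq{\var}}{a\,\varseq{\mstep}} \rightsquigarrow \lam{\varseq{\var}\vartwo}{a\,\varseq{\mstep}\,\vartwo}$ applied simultaneously to corresponding positions (naive $\eta$-expansion of an argument that is itself an abstraction would create $\beta$-redexes and exit the compatible fragment), you must check that each such step preserves compatibility, prove the commutation lemma, and justify that the maximal expansion exists and is reached from both pairs. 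Both routes work; the paper's is shorter because the same-last-rule observation plus the matching property let the two compatibility derivations be compared case by case without ever leaving the class of compatible pairs, whereas yours isolates the $\eta$-invariance of projection as a standalone fact.
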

\begin{proof}
The proof proceeds by induction on the
derivation of $\judgCompat{\mstep_1}{\msteptwo_1}$. Note that the last rule used
to derive $\judgCompat{\mstep_2}{\msteptwo_2}$
must be the same as the last rule used
to derive $\judgCompat{\mstep_1}{\msteptwo_1}$.
  The core of the argument in these cases is to apply the \ih
using the following \emph{matching property}:
If $\flatten{(\rsrc{\rulewit}\,\mstep_1\hdots\mstep_n)} =
    \flatten{(\rsrc{\rulewit}\,\msteptwo_1\hdots\msteptwo_n)}$
then $\flatten{\mstep_i} = \flatten{\msteptwo_i}$
for all $1 \leq i \leq n$. This property is well-known and a direct consequence of the assumption that LHS of rewrite rule are rule-patterns.
\qedhere

\end{proof}

\begin{lem}[Compatibilization of coinitial multisteps]
\llem{compatibilization_of_coinitial_multisteps}
Let $\mstep,\msteptwo$ be multisteps
such that $\fsrc{\mstep} = \fsrc{\msteptwo}$.
Then there exist multisteps $\mstepB,\msteptwoB$
such that $\judgCompat{\mstepB}{\msteptwoB}$
and moreover $\flatten{\mstepB} = \flatten{\mstep}$
and $\flatten{\msteptwoB} = \flatten{\msteptwo}$.
\end{lem}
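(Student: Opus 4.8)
The plan is to build $\mstepB$ and $\msteptwoB$ by a simultaneous recursion that reconstructs a common $\eta$-long skeleton of the shared source, inserting rule symbols exactly where $\mstep$ (resp.\ $\msteptwo$) contracts a redex. First I reduce the goal: for a multistep $\nu$ the only applicable flattening rules are \flatRule{BetaM} and \flatRule{EtaM}, so $\flatten{\nu}$ is nothing but the $\beta\eta$-normal form of $\nu$ when rule symbols are read as typed constants. Consequently $\flatten{\mstepB}=\flatten{\mstep}$ is equivalent to $\mstepB \termeq \mstep$, and similarly for $\msteptwoB$; moreover I may freely replace $\mstep$ and $\msteptwo$ by their $\eta$-long $\beta$-normal forms. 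After this normalization the hypothesis $\fsrc{\mstep}=\fsrc{\msteptwo}$ reads simply as $\rsrc{\mstep}\termeq\rsrc{\msteptwo}$, and I write $\tm=\lam{\varseq{\var}}{a\,\tm_1\ldots\tm_k}$ for the $\eta$-long $\beta$-normal form of this common source.

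I proceed by induction on $\tm$. Both $\mstep$ and $\msteptwo$, being $\eta$-long, carry the $\lambda$-prefix $\varseq{\var}$, so it remains to inspect their heads, each of which is a variable, a constant, or a rule symbol; the last case occurs exactly when the multistep contracts a redex at the root. \emph{If neither head is a rule symbol}, both heads equal $a$ (the root symbol is erased in neither source), their argument subterms are pairwise coinitial by decomposition of the rigid spine $a\,\tm_1\ldots\tm_k$, and the \ih{} yields compatible lifts of the arguments that assemble via \indrulename{CVar}/\indrulename{CCon}. \emph{If both heads are rule symbols}, orthogonality forbids a root overlap of two distinct rules, so both are the \emph{same} symbol $\rulewit$; injectivity of instantiation of the pattern $\rsrc{\rulewit}$ makes the corresponding arguments pairwise coinitial, and the \ih{} together with \indrulename{CRule} finishes the case. \emph{If exactly one head is a rule symbol} $\rulewit$ (say that of $\mstep$), the recursive lifts of the argument pairs are combined with \indrulename{CRuleL} (and symmetrically \indrulename{CRuleR} when the rule symbol is on $\msteptwo$'s side). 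In every case the assembled $\mstepB,\msteptwoB$ are $\beta\eta$-convertible to $\mstep,\msteptwo$, since they differ only by $\eta$-expansion and by unfolding the head $\rulewit$ into $\rsrc{\rulewit}$, so the required flattening equalities hold.

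The crux is the mixed case, where $\mstep\termeq\lam{\varseq{\var}}{\rulewit\,\nu_1\ldots\nu_n}$ contracts the root redex for $\rulewit$ but $\msteptwo$ does not. Here I must re-express $\msteptwo$, up to $\termeq$, as $\rsrc{\rulewit}\,\pi_1\ldots\pi_n$ with each $\pi_i$ coinitial with $\nu_i$, so that \indrulename{CRuleL} applies after recursion on the pairs $(\nu_i,\pi_i)$. This is precisely where the hypotheses on the HRS are used: since the root of $\tm$ is an instance of the pattern $\rsrc{\rulewit}=\lam{\varseq{\var}}{\ell}$ and the system is orthogonal (no critical pairs), none of $\msteptwo$'s rule symbols can sit in the rigid part of $\ell$ — such an occurrence would create an overlap — so they all lie within the pattern holes. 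The higher-order pattern condition then lets me read off unique hole-fillers $\pi_i$ from $\msteptwo$ and guarantees $\rsrc{\pi_i}\termeq\rsrc{\nu_i}$, reusing exactly the matching property invoked in the proof of \rlem{coherence_of_projection}. Establishing this decomposition cleanly — and checking that the induction measure (the size of $\tm$) strictly decreases, since each $\pi_i$ and each $\nu_i$ lives at a pattern hole and is thus a proper subterm — is the main technical obstacle; the base case, $\tm$ a variable or constant, is immediate, as then $\mstep=\msteptwo=\tm$ and \indrulename{CVar}/\indrulename{CCon} apply with no arguments.
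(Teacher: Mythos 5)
Your proposal follows essentially the same route as the paper's proof: normalize both multisteps to $\beta$-normal, $\eta$-long (i.e.\ $\tofnoeta,\etalong$-normal) representatives, induct with a case analysis on the heads, use orthogonality to force equal rule symbols in the rule/rule case, and in the mixed case decompose the constant-headed multistep as an instance of $\rsrc{\rulewit}$ applied to multistep fillers before assembling with \indrulename{CVar}/\indrulename{CCon}/\indrulename{CRule}/\indrulename{CRuleL}/\indrulename{CRuleR}. The decomposition you identify as the main technical obstacle is precisely the paper's \rlem{coinitial_cons_rule} (Constructor/rule matching) --- a genuinely stronger statement than the injectivity-style matching property you cite from \rlem{coherence_of_projection} --- and the paper carries out the ``fillers strictly decrease'' bookkeeping by inducting on the number of applications in the multisteps rather than on the size of the common source, but these are inessential variations on the same argument.
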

\begin{proof}
We begin with a few observations.
Recall that if $\mstep \tofs \mstep'$
then $\rsrc{\mstep} \termeq \rsrc{\mstep'}$,
which is a consequence of
\rremark{flat_not_beta_steps_preserve_endpoints}
and \rlem{flatten_beta_eta_source_target}.
This means that, without loss of generality, we may assume that
$\mstep,\msteptwo$ are in $\tofnoeta,\etalong$-normal form.
Note that, given this assumption,
we have that $\altsrc{\mstep}$ and $\altsrc{\mstep}$ are
in $\etalong$-normal form.
Moreover,
$\flatten{(\altsrc{\mstep})}
 = \flatten{(\rsrc{\mstep})}
 = \flatten{(\rsrc{\msteptwo})}
 = \flatten{(\altsrc{\msteptwo})}$.
From this,
we have that
$\flattennoeta{(\altsrc{\mstep})}
 = \flattennoeta{(\altsrc{\msteptwo})}$.

The proof proceeds by induction
on the sum of the number of applications in $\mstep$ and $\msteptwo$,
using a straightforward characterization of flat multisteps. 
We consider three cases, depending on whether the head of $\mstep$
is a variable, a constant, or a rule symbol:
\begin{enumerate}
\item
  \label{compatibilization_of_coinitial_multisteps__case_var}
  {\bf Variable,
       $\mstep = \lam{\var_1\hdots\var_n}{\vartwo\,\mstep_1\hdots\mstep_m}$.}
  Note that
  $\flattennoeta{(\altsrc{\mstep})} =
   \lam{\var_1\hdots\var_n}{
     \vartwo\,\flattennoeta{(\altsrc{\mstep_1})}
              \hdots
              \flattennoeta{(\altsrc{\mstep_m})}
   }$.
  Since $\flattennoeta{(\altsrc{\mstep})} =
         \flattennoeta{(\altsrc{\msteptwo})}$,
  the head of $\msteptwo$ cannot be a constant or a rule symbol.
  Hence the only possibility is that
  $\msteptwo = \lam{\var_1\hdots\var_p}{\vartwo\,\msteptwo_1\hdots\msteptwo_q}$.
  Furthermore, it can only be the case that $p = n$ and $q = m$
  and $\flattennoeta{(\altsrc{\mstep_i})} =
       \flattennoeta{(\altsrc{\msteptwo_i})}$
  for all $1 \leq i \leq m$.
  This in turn implies that
  $\fsrc{\mstep_i} = \fsrc{\msteptwo_i}$ for all $1 \leq i \leq m$.
  Hence, by \ih, for each $1 \leq i \leq m$,
  there exist multisteps such that
  $\judgCompat{\mstepB_i}{\msteptwoB_i}$
  where $\flatten{\mstepB_i} = \flatten{\mstep_i}$
  and $\flatten{\msteptwoB_i} = \flatten{\msteptwo_i}$.
  To conclude, note that
  taking
  $\mstepB := \lam{\var_1\hdots\var_n}{\vartwo\,\mstepB_1\hdots\mstepB_m}$
  and
  $\msteptwoB := \lam{\var_1\hdots\var_n}{\vartwo\,\msteptwoB_1\hdots\msteptwoB_m}$
  we have that
  $\judgCompat{\mstepB}{\msteptwoB}$ by the \indrulename{CVar} rule,
  and moreover
  $\flatten{\mstepB} = \flatten{\mstep}$
  and
  $\flatten{\msteptwoB} = \flatten{\msteptwo}$.
\item
  {\bf Constant,
    $\mstep = \lam{\var_1\hdots\var_n}{\cons\,\mstep_1\hdots\mstep_m}$.
  }
  Note that
  $\flattennoeta{(\altsrc{\mstep})} =
   \lam{\var_1\hdots\var_n}{
     \cons\,\flattennoeta{(\altsrc{\mstep_1})}\hdots
            \flattennoeta{(\altsrc{\mstep_m})}
   }$.
  Since $\flattennoeta{(\altsrc{\msteptwo})} =
         \flattennoeta{(\altsrc{\mstep})}$,
  the head of $\msteptwo$ cannot be a variable.
  We consider two cases, depending on whether the head of $\msteptwo$
  is a constant or a rule symbol:
  \begin{enumerate}
  \item
    {\bf Constant,
      $\msteptwo = \lam{\var_1\hdots\var_p}{
                     \cons\,\msteptwo_1\hdots\msteptwo_q
                   }$.}
    The proof of this case proceeds similarly
    as for case~\ref{compatibilization_of_coinitial_multisteps__case_var},
    when the heads of $\mstep$ and $\msteptwo$ are both variables.
  \item
    \label{compatibilization_of_coinitial_multisteps__case_cons_rule}
    {\bf Rule symbol,
      $\msteptwo = \lam{\var_1\hdots\var_p}{\rulewit\,\msteptwo_1\hdots\msteptwo_q}$.
    }
    Recall that we assume that $\mstep$ and $\msteptwo$ are
    in $\etalong$-normal form. This implies that $n = p$.
    Then
    $\flattennoeta{(\altsrc{\msteptwo})} =
     \flattennoeta{
       (\lam{\var_1\hdots\var_n}{
         \altsrc{\rulewit}\,\altsrc{\msteptwo_1}\hdots\altsrc{\msteptwo_q}
       })
     }$.
    By \rlem{coinitial_cons_rule}, there exist multisteps
    $\mstepthree_1,\hdots,\mstepthree_q$
    such that
    $\mstep =
     \flattennoeta{
       (\lam{\var_1\hdots\var_n}{
         \altsrc{\rulewit}\,\mstepthree_1\hdots\mstepthree_q
       })
     }$
    and $\flattennoeta{(\altsrc{\mstepthree_i})} =
         \flattennoeta{(\altsrc{\msteptwo_i})}$
    for all $1 \leq i \leq q$.
    Moreover, each $\mstepthree_i$ has less applications than $\mstep$.
    So by \ih, for each $1 \leq i \leq q$
    there exist multisteps such that
    $\judgCompat{\mstepthreeB_i}{\msteptwoB_i}$
    where
    $\flatten{\mstepthreeB_i} = \flatten{\mstepthree_i}$.
    and
    $\flatten{\msteptwoB_i} = \flatten{\msteptwo_i}$.
    Taking
      $\mstepB :=
       \lam{\var_1\hdots\var_n}{
         \altsrc{\rulewit}\,\mstepthreeB_1\hdots\mstepthreeB_q
       }$
    and
      $\msteptwoB :=
       \lam{\var_1\hdots\var_n}{
         \rulewit\,\msteptwoB_1\hdots\msteptwoB_q
       }$
    we have that $\judgCompat{\mstepB}{\msteptwoB}$.
    It is easy to note that $\flatten{\msteptwoB} = \flatten{\msteptwo}$.
    Moreover:
    \[
      \begin{array}{rlll}
        \flatten{\mstepB}
      & = &
        \flatten{(\lam{\var_1\hdots\var_n}{
          \altsrc{\rulewit}\,\mstepthreeB_1\hdots\mstepthreeB_q
        })}
      \\
      & = &
        \flatten{(\lam{\var_1\hdots\var_n}{
          \altsrc{\rulewit}\,\mstepthree_1\hdots\mstepthree_q
        })}
        & \text{by confluence of flattening~(\rprop{flat_confluent})}
      \\
      & = &
        \flatten{(\flattennoeta{(\lam{\var_1\hdots\var_n}{
          \altsrc{\rulewit}\,\mstepthree_1\hdots\mstepthree_q
        })})}
        & \text{by confluence of flattening~(\rprop{flat_confluent})}
      \\
      & = &
        \flatten{\mstep}
      \end{array}
    \]
  \end{enumerate}
\item
  {\bf Rule symbol,
    $\mstep = \lam{\var_1\hdots\var_n}{\rulewit\,\mstep_1\hdots\mstep_m}$.
  }
  Note that
  $\flattennoeta{(\altsrc{\mstep})} =
   \lam{\var_1\hdots\var_n}{
     \flattennoeta{(
       \altsrc{\rulewit}\,\altsrc{\mstep_1}\hdots
                          \altsrc{\mstep_m}
     )}
   }$.
  Since $\flattennoeta{(\altsrc{\msteptwo})} =
         \flattennoeta{(\altsrc{\mstep})}$,
  the head of $\msteptwo$ cannot be a variable.
  We consider two cases, depending on whether the head of $\msteptwo$
  is a constant or a rule symbol:
  \begin{enumerate}
  \item
    {\bf Constant,
      $\msteptwo = \lam{\var_1\hdots\var_p}{\cons\,\msteptwo_1\hdots\msteptwo_q}$.}
    The proof of this case proceeds symmetrically
    as for case~\ref{compatibilization_of_coinitial_multisteps__case_cons_rule},
    when the head of $\mstep$ is a constant
    and the head of $\msteptwo$ is a rule symbol.
  \item
    {\bf Rule symbol,
      $\msteptwo = \lam{\var_1\hdots\var_p}{\rulewittwo\,\msteptwo_1\hdots\msteptwo_q}$.
    }
    Recall that we assume that $\mstep$ and $\msteptwo$ are in
    $\etalong$-normal form. This implies that $n = p$.
    Then
    $\flattennoeta{(\altsrc{\msteptwo})} =
     \lam{\var_1\hdots\var_n}{
       \flattennoeta{(
         \altsrc{\rulewittwo}\,\altsrc{\msteptwo_1}\hdots\altsrc{\msteptwo_q}
       )}
     }$.
    Note that this implies that
    $\flattennoeta{(\altsrc{\rulewit}\,\altsrc{\mstep_1}\hdots\altsrc{\msteptwo_m})}
     =
     \flattennoeta{(\altsrc{\rulewittwo}\,\altsrc{\msteptwo_1}\hdots\altsrc{\msteptwo_q})}$.
    By orthogonality,
    this means that $\rulewit = \rulewittwo$
    and $m = q$, and moreover
    $\flattennoeta{(\altsrc{\mstep_i})} = \flattennoeta{(\altsrc{\msteptwo_i})}$
    for all $1 \leq i \leq m$.
    By \ih, for each $1 \leq i \leq m$,
    there exist multisteps such that
    $\judgCompat{\mstepB_i}{\msteptwoB_i}$,
    where $\flatten{\mstepB_i} = \flatten{\mstep_i}$
    and $\flatten{\msteptwoB_i} = \flatten{\msteptwo_i}$.
    Taking
    $\mstepB =
     \lam{\var_1\hdots\var_n}{\rulewit\,\mstepB_1\hdots\mstepB_m}$
    and
    $\msteptwoB =
     \lam{\var_1\hdots\var_n}{\rulewit\,\msteptwoB_1\hdots\msteptwoB_m}$
    it is then easy to check that
    $\judgCompat{\mstepB}{\msteptwoB}$,
    and moreover $\flatten{\mstepB} = \flatten{\mstep}$
    and $\flatten{\msteptwoB} = \flatten{\msteptwo}$.
  \end{enumerate}
\end{enumerate}
\end{proof}

Thus for arbitrary, coinitial multisteps $\mstep$ and $\msteptwo$, it suffices to show that we can always find corresponding \emph{compatible} ``almost'' $\eta$-expanded and $\beta$-normal forms, as mentioned above.
\begin{prop}[Existence and uniqueness of projection]
\lprop{body:existence_uniqueness_projection}
Let $\mstep,\msteptwo$ be such that $\rsrc{\mstep} \termeq \rsrc{\msteptwo}$.
Then:
\begin{enumerate}
\item
  \resultName{Existence.}
  There exist multisteps $\mstepC,\msteptwoC,\mstepthreeC$
  such that $\flatten{\mstepC} = \flatten{\mstep}$
  and $\flatten{\msteptwoC} = \flatten{\msteptwo}$
  and $\judgProj{\mstepC}{\msteptwoC}{\mstepthreeC}$.
\item
  \resultName{Compatibility.}
  Furthermore, $\mstepC$ and $\msteptwoC$ can
  be chosen in such a way that $\judgCompat{\mstepC}{\msteptwoC}$.
\item
  \resultName{Uniqueness.}
  If
  $\flatten{(\mstepC')} = \flatten{\mstep}$
  and
  $\flatten{(\msteptwoC')} = \flatten{\msteptwo}$
  and
  $\judgProj{\mstepC'}{\msteptwoC'}{\mstepthreeC'}$
  then $\flatten{(\mstepthreeC')} = \flatten{\mstepthree}$.
\end{enumerate}
\end{prop}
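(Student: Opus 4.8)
The plan is to assemble the three claims from the two preparatory results already in hand: the compatibilization lemma (\rlem{compatibilization_of_coinitial_multisteps}), which manufactures compatible representatives with prescribed flat normal forms, and the coherence lemma (\rlem{coherence_of_projection}), which guarantees that compatible representatives with matching flat forms project to the same flat result. The first task is to bridge the hypothesis $\rsrc{\mstep}\termeq\rsrc{\msteptwo}$ to the equality-of-flat-sources hypothesis of \rlem{compatibilization_of_coinitial_multisteps}. I would show $\fsrc{\mstep}=\fsrc{\msteptwo}$ as follows: $\fsrc{\mstep}=\flatten{(\refl{\rsrc{\mstep}})}$, and on a unit rewrite $\refl{\tm}$ the only applicable flattening rules are $\flatRule{BetaM}$ and $\flatRule{EtaM}$ (the remaining rules need a composition to fire), so flattening a term coincides with $\beta\eta$-normalization. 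Since $\tof$ is strongly normalizing (\rprop{flat_sn}) and confluent (\rprop{flat_confluent}), and $\rsrc{\mstep}$ and $\rsrc{\msteptwo}$ are $\beta\eta$-equivalent, their flat normal forms coincide, giving $\fsrc{\mstep}=\fsrc{\msteptwo}$.

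\emph{Existence} and \emph{Compatibility} then follow in one stroke. Applying \rlem{compatibilization_of_coinitial_multisteps} to $\mstep,\msteptwo$ yields multisteps $\mstepC,\msteptwoC$ with $\judgCompat{\mstepC}{\msteptwoC}$, $\flatten{\mstepC}=\flatten{\mstep}$, and $\flatten{\msteptwoC}=\flatten{\msteptwo}$. By the observation recorded right after \rdef{projection_judgement}, compatibility of $\mstepC,\msteptwoC$ guarantees a unique $\mstepthreeC$ with $\judgProj{\mstepC}{\msteptwoC}{\mstepthreeC}$. This single pair simultaneously witnesses items (1) and (2).

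For \emph{Uniqueness}, given any projectable pair $\mstepC',\msteptwoC'$ with $\flatten{\mstepC'}=\flatten{\mstep}=\flatten{\mstepC}$, $\flatten{\msteptwoC'}=\flatten{\msteptwo}=\flatten{\msteptwoC}$, and $\judgProj{\mstepC'}{\msteptwoC'}{\mstepthreeC'}$, I want $\flatten{\mstepthreeC'}=\flatten{\mstepthreeC}$. When $\mstepC',\msteptwoC'$ happen to be compatible this is immediate: both $(\mstepC,\msteptwoC)$ and $(\mstepC',\msteptwoC')$ are then compatible pairs with pairwise equal flat forms, and \rlem{coherence_of_projection} delivers exactly $\flatten{\mstepthreeC'}=\flatten{\mstepthreeC}$.

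The main obstacle is precisely that item (3) only assumes \emph{projectability} of $\mstepC',\msteptwoC'$, not \emph{compatibility}, and projectable pairs need not be compatible: the rules $\indrulename{ProjRule}$, $\indrulename{ProjRuleL}$, $\indrulename{ProjRuleR}$ apply to bare, non-$\eta$-expanded rule symbols, whereas the compatibility rules require heads to be fully applied. To close this gap I would prove an auxiliary \emph{coherence-of-projection-under-flattening} statement: replacing a projectable pair by compatible representatives with the same flat source (again via \rlem{compatibilization_of_coinitial_multisteps}) leaves the flat normal form of the projection unchanged. I expect this to go by induction on the derivation of $\judgProj{\mstepC'}{\msteptwoC'}{\mstepthreeC'}$, tracking how $\flatRule{BetaM}$ and $\flatRule{EtaM}$ steps performed on the inputs propagate to the output, and relying on the matching property for rule-patterns already exploited in the proof of \rlem{coherence_of_projection}. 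Once this invariance is established, any projectable $(\mstepC',\msteptwoC')$ reduces to a compatible pair with identical flat projection, and a final appeal to \rlem{coherence_of_projection} completes the argument.
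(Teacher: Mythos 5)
Your proposal is correct and follows essentially the same route as the paper: existence and compatibility come directly from \rlem{compatibilization_of_coinitial_multisteps} together with the unique projection for compatible pairs, and uniqueness is handled exactly as in the paper's proof, namely by showing (by induction on the projection derivation) that $\flatRule{BetaM}$/$\flatRule{EtaM}$ flattening steps can be propagated through a projectable triple until a compatible one is reached, and then closing with \rlem{coherence_of_projection}. The auxiliary invariance statement you propose is precisely the ``claim'' the paper proves inside its own argument, so no genuinely different decomposition is involved.
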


\begin{proof}
Note that $\fsrc{\mstep} = \fsrc{\msteptwo}$,
so by \rlem{compatibilization_of_coinitial_multisteps}
there exist multisteps $\mstepB,\msteptwoB$
such that $\judgCompat{\mstepB}{\msteptwoB}$
and moreover $\flatten{\mstepB} = \flatten{\mstep}$
and $\flatten{\msteptwoB} = \flatten{\msteptwo}$.
Since for any $\judgCompat{\mstep}{\msteptwo}$, there exists a unique $\mstepthree$
such that $\judgProj{\mstep}{\msteptwo}{\mstepthree}$, we deduce that there exists a multistep $\mstepthreeB$
such that $\judgProj{\mstepB}{\msteptwoB}{\mstepthreeB}$.
This proves items~1. and~2.
For item~3., first a {\bf claim.} Let $\judgProj{\mstep}{\msteptwo}{\mstepthree}$.
Then there exist multisteps
$\mstepB,\msteptwoB,\mstepthreeB$
such that:
\begin{enumerate}
  \item $\mstep \tofs \mstepB$
  and
  $\msteptwo \tofs \msteptwoB$
  and
  $\mstepthree \tofs \mstepthreeB$
\item
  $\judgCompat{\mstepB}{\msteptwoB}$
\item
  $\judgProj{\mstepB}{\msteptwoB}{\mstepthreeB}$
\end{enumerate}
{\bf Proof of the claim.} First, if $\judgCompat{\mstep}{\msteptwo}$ holds, we are done.
Otherwise, by strong normalization of flattening~(\rprop{flat_sn})
it suffices to show that
if $\judgProj{\mstep}{\msteptwo}{\mstepthree}$
and $\judgCompat{\mstep}{\msteptwo}$ does {\em not} hold,
then there exist steps
$\mstep \tof \mstepB$
and
$\msteptwo \tof \msteptwoB$
and
$\mstepthree \tof \mstepthreeB$
such that $\judgProj{\mstepB}{\msteptwoB}{\mstepthreeB}$.
This can be proved by induction on the derivation of
$\judgProj{\mstep}{\msteptwo}{\mstepthree}$. {\bf End of proof of claim.}

Now suppose that
$\mstepB',\msteptwoB',\mstepthreeB'$ are such that
$\flatten{(\mstepB')} = \flatten{\mstep}$
and
$\flatten{(\msteptwoB')} = \flatten{\msteptwo}$
and
$\judgProj{\mstepB'}{\msteptwoB'}{\mstepthreeB'}$.
By the claim 
this implies that
there exist multisteps $\mstepB'',\msteptwoB'',\mstepthreeB''$
such that
$\mstepB' \tofs \mstepB''$
and
$\msteptwoB' \tofs \msteptwoB''$
and
$\mstepthreeB' \tofs \mstepthreeB''$,
and moreover
$\judgCompat{\mstepB''}{\msteptwoB''}$
and $\judgProj{\mstepB''}{\msteptwoB''}{\mstepthreeB''}$.
Note that
$\flatten{\mstepB'} = \flatten{\mstepB''}$
given that $\mstepB' \tofs \mstepB''$.
Similarly,
$\flatten{\msteptwoB'} = \flatten{\msteptwoB''}$
and
$\flatten{\mstepthreeB'} = \flatten{\mstepthreeB''}$.
Hence by \rlem{coherence_of_projection}
we may conclude that
$\flatten{\mstepthreeB} = \flatten{\mstepthreeB'}$,
as required.
\end{proof}

We can now define projection on arbitrary coinitial rewrites as follows.

\begin{defi}[Projection operator for multisteps]
\ldef{projection_operator}
Let $\mstep,\msteptwo$ be such that $\rsrc{\mstep} \termeq \rsrc{\msteptwo}$.
We write $\mstep/\msteptwo$
for the unique multistep of the form $\flatten{\mstepthreeC}$
such that there exist $\mstepC,\msteptwoC$
such that $\flatten{\mstepC} = \flatten{\mstep}$
and $\flatten{\msteptwoC} = \flatten{\msteptwo}$
and $\judgProj{\mstepC}{\msteptwoC}{\mstepthreeC}$,
as guaranteed by~\rprop{body:existence_uniqueness_projection}.
The proof is constructive (this relies on the HRS being orthogonal),
thus providing an effective method to compute $\mstep/\msteptwo$.
\end{defi}

\begin{prop}[Properties of projection for multisteps]
  \lprop{body:properties_of_projection}
  \lprop{body:cube_lemma}
\quad
\begin{enumerate}
\item $\mstep/\msteptwo = \flatten{(\mstep/\msteptwo)} = \flatten{\mstep}/\flatten{\msteptwo}$
\item Projection commutes with abstraction and application,
      that is,
      $(\lam{\var}{\mstep})/(\lam{\var}{\msteptwo}) = \flatten{(\lam{\var}{(\mstep/\msteptwo)})}$
      and
      $(\mstep_1\,\mstep_2)/(\msteptwo_1\,\msteptwo_2) =
       \flatten{((\mstep_1/\msteptwo_1)\,(\mstep_2/\msteptwo_2))}$,
      provided that
      $\mstep_1/\msteptwo_1$ and $\mstep_2/\msteptwo_2$
      are defined.
\item
  The set of multisteps with the projection operator form
  a residual system~\cite[Def.~8.7.2]{Terese03}:
  \begin{enumerate}
  \item 
    $(\mstep/\msteptwo)/(\mstepthree/\msteptwo) =
     (\mstep/\mstepthree)/(\msteptwo/\mstepthree)$,
    known as the \resultName{Cube Lemma}.
  \item $\mstep/\mstep = \ftgtb{\mstep}$
    and, as particular cases:
    $\refl{\tm}/\refl{\tm} = \flatten{\refl{\tm}}$,
    $\var/\var = \var$, 
    $\cons/\cons = \cons$, and 
    $\rulewit/\rulewit = \ftgtb{\rulewit}$.
  \item $\fsrcb{\mstep}/\mstep = \ftgtb{\mstep}$
        and, as a particular case,
        $\fsrcb{\rulewit}/\rulewit = \ftgtb{\rulewit}$.
  \item $\mstep/\fsrcb{\mstep} = \flatten{\mstep}$
        and, as a particular case,
        $\rulewit/\fsrcb{\rulewit} = \rulewit$.
  \end{enumerate}
\end{enumerate}
\end{prop}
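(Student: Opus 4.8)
The plan is to establish the three items in order, since each relies on the previous ones. \emph{Item~1} is essentially immediate from \rdef{projection_operator}. By construction $\mstep/\msteptwo$ has the form $\flatten{\mstepthreeC}$, which is a $\tof$-normal form and hence invariant under flattening, giving $\mstep/\msteptwo = \flatten{(\mstep/\msteptwo)}$. For the second equality I would observe that the witnessing data $\mstepC,\msteptwoC,\mstepthreeC$ for the value of $\mstep/\msteptwo$ also witness the value of $\flatten{\mstep}/\flatten{\msteptwo}$: indeed $\flatten{\mstepC} = \flatten{\mstep} = \flatten{(\flatten{\mstep})}$ and likewise for $\msteptwo$, so by the uniqueness part of \rprop{body:existence_uniqueness_projection} the two projections coincide.

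For \emph{item~2} I would first invoke \rlem{compatibilization_of_coinitial_multisteps} to replace the operands by compatible representatives. In the abstraction case, choose $\mstepC,\msteptwoC$ with $\judgCompat{\mstepC}{\msteptwoC}$, $\flatten{\mstepC} = \flatten{\mstep}$, $\flatten{\msteptwoC} = \flatten{\msteptwo}$ and a unique $\judgProj{\mstepC}{\msteptwoC}{\mstepthreeC}$. Since every compatibility rule has a leading (possibly empty) block of $\lambda$'s, $\lam{\var}{\mstepC}$ and $\lam{\var}{\msteptwoC}$ are again compatible, and rule \indrulename{ProjAbs} yields $\judgProj{\lam{\var}{\mstepC}}{\lam{\var}{\msteptwoC}}{\lam{\var}{\mstepthreeC}}$. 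By confluence of flattening (\rprop{flat_confluent}) we have $\flatten{(\lam{\var}{\mstepC})} = \flatten{(\lam{\var}{\mstep})}$ and similarly for $\msteptwo$, so \rdef{projection_operator} gives $(\lam{\var}{\mstep})/(\lam{\var}{\msteptwo}) = \flatten{(\lam{\var}{\mstepthreeC})}$; a final appeal to confluence rewrites this as $\flatten{(\lam{\var}{(\mstep/\msteptwo)})}$, using $\mstep/\msteptwo = \flatten{\mstepthreeC}$. The application case is identical, using \indrulename{ProjApp} in place of \indrulename{ProjAbs} and checking that the component coinitiality hypotheses make $\mstepC_1\,\mstepC_2$ and $\msteptwoC_1\,\msteptwoC_2$ legitimate operands.

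The three unit laws of the residual system, \emph{items~3(b)--3(d)}, follow by reading off the weak-projection rules on compatible representatives. For $\mstep/\mstep$ one checks $\judgCompat{\mstep}{\mstep}$ using \indrulename{CVar}, \indrulename{CCon}, \indrulename{CRule}; the derivation of $\judgProj{\mstep}{\mstep}{\mstepthreeC}$ then replaces each rule symbol $\rulewit$ by $\rtgt{\rulewit}$ via \indrulename{ProjRule}, so the witness is exactly $\rtgt{\mstep}$ and $\mstep/\mstep = \ftgtb{\mstep}$. For $\fsrcb{\mstep}/\mstep$ one uses \indrulename{CRuleR} and \indrulename{ProjRuleR} to again substitute targets for rule sources, giving $\ftgtb{\mstep}$; dually $\mstep/\fsrcb{\mstep}$ uses \indrulename{CRuleL} and \indrulename{ProjRuleL}, leaving $\mstep$ untouched and yielding $\flatten{\mstep}$.

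The real work is the Cube Lemma, \emph{item~3(a)}. By item~1 I may compute every projection through flat normal forms, and the strategy is to reduce the three coinitial multisteps $\mstep,\msteptwo,\mstepthree$ to a common compatible skeleton — a three-way strengthening of \rlem{compatibilization_of_coinitial_multisteps} in which, at each position, the heads of the three operands are the same variable, the same constant, or the same rule symbol $\rulewit$ / its source $\rsrc{\rulewit}$ (agreement of the rule being forced by orthogonality and the pattern condition, via the matching property used in \rlem{coherence_of_projection}). On such representatives the weak projection is syntax-directed, so the identity can be proved by induction on the skeleton, pushing projection through $\lambda$ and application with item~2; variable and constant leaves are trivial. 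At a rule leaf one performs the case analysis on which of $\mstep,\msteptwo,\mstepthree$ fires its redex (head $\rulewit$) versus keeps the source (head $\rsrc{\rulewit}$), checking in each combination that $(\mstep/\msteptwo)/(\mstepthree/\msteptwo)$ and $(\mstep/\mstepthree)/(\msteptwo/\mstepthree)$ produce the same flat residual, where \indrulename{ProjRule}, \indrulename{ProjRuleL}, \indrulename{ProjRuleR} decide whether a residual survives as $\rulewit$ or collapses to the substituted target $\rtgt{\rulewit}$. The structural identity is then transported to the actual (flat) projections by \rlem{coherence_of_projection}, which guarantees independence from the chosen representatives modulo flattening. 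I expect the main obstacle to be precisely these rule-firing leaves together with the simultaneous compatibilization: once a redex fires, its source pattern is replaced by a structurally different target, so the bookkeeping that the two projection orders yield the same flat residual — rather than the induction itself — is where orthogonality and the matching property are indispensable.
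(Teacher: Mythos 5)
Your items 1, 2, and 3(b)--(d) are essentially the paper's own proof: unfold \rdef{projection_operator}, invoke the uniqueness part of \rprop{body:existence_uniqueness_projection} together with confluence of flattening (\rprop{flat_confluent}), and for the unit laws exhibit the weak projection judgments $\judgProj{\mstep}{\mstep}{\rtgt{\mstep}}$, $\judgProj{\rsrc{\mstep}}{\mstep}{\rtgt{\mstep}}$ and $\judgProj{\mstep}{\rsrc{\mstep}}{\mstep}$, which hold for arbitrary $\mstep$ by induction. One inaccuracy there: you route these judgments through compatibility, but $\judgCompat{\mstep}{\mstep}$ is false in general --- compatibility demands ``almost'' $\beta$-normal, $\eta$-expanded forms, so it fails for, say, $\mstep = (\lam{\var}{\var})\,\cons$. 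This is harmless only because neither \rdef{projection_operator} nor the uniqueness statement requires compatibility, just the projection judgment itself, which is exactly what the paper checks.

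The genuine gap is item 3(a), the Cube Lemma (which, for what it is worth, the paper's written proof of this proposition also omits). Your sketch names plausible ingredients --- compatibilization, orthogonality, the matching property, induction on a common skeleton --- but it stops precisely where the difficulty starts. First, the ``three-way compatibilization'' you rely on is asserted, not proved; \rlem{compatibilization_of_coinitial_multisteps} is pairwise, and making three multisteps simultaneously compatible (in particular when one has head $\rulewit$ and another head $\rsrc{\rulewit}$ with differently shaped arguments) is its own nontrivial induction. Second, and decisively, the structural induction does not close at rule positions. If $\msteptwo$ fires $\rulewit$ while $\mstep$ keeps $\rsrc{\rulewit}$, then by \indrulename{ProjRuleR}, \indrulename{ProjApp} and \flatRule{BetaM}-flattening, $\mstep/\msteptwo$ is the right-hand side of $\rulewit$ with the residuals $\mstep_i/\msteptwo_i$ substituted into it; since the right-hand side may duplicate or erase its variables, $(\mstep/\msteptwo)/(\mstepthree/\msteptwo)$ becomes a projection between \emph{instantiated right-hand sides}, which is no longer an instance of your leaf case analysis, and the inductive hypothesis speaks only about the components $\mstep_i,\msteptwo_i,\mstepthree_i$, not about these substituted multisteps. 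What is missing is a substitution lemma for projection, roughly $(\tm\subm{\var}{\mstep})/(\tm\subm{\var}{\msteptwo}) = \flatten{(\tm\subm{\var}{(\mstep/\msteptwo)})}$ together with its variants where rule symbols occur in the context, which is what would transport the componentwise cube identity through the common right-hand-side context. That lemma carries the entire weight of duplication and erasure; without it the proof as sketched does not go through.
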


\begin{proof}
We prove each item:
\begin{enumerate}
\item $\mstep/\msteptwo = \flatten{(\mstep/\msteptwo)}$:
  By definition
  there exist $\mstepB,\msteptwoB,\mstepthreeB$
  such that $\flatten{\mstepB} = \flatten{\mstep}$
  and $\flatten{\msteptwoB} = \flatten{\msteptwo}$
  and $\judgProj{\mstepB}{\msteptwoB}{\mstepthreeB}$
  where $\mstep/\msteptwo = \flatten{\mstepthreeB}$.
  Then
  $\mstep/\msteptwo
  = \flatten{\hat{\mstepthree}}
  = \flatten{(\flatten{\hat{\mstepthree}})}
  = \flatten{(\mstep/\msteptwo)}$.
\item $\mstep/\msteptwo = \flatten{\mstep}/\flatten{\msteptwo}$:
  By definition of $\mstep/\msteptwo$
  there exist $\mstepB,\msteptwoB,\mstepthreeB$
  such that $\flatten{\mstepB} = \flatten{\mstep}$
  and $\flatten{\msteptwoB} = \flatten{\msteptwo}$
  and $\judgProj{\mstepB}{\msteptwoB}{\mstepthreeB}$
  where $\mstep/\msteptwo = \flatten{\mstepthreeB}$.
  Then since $\flatten{\mstep} = \flatten{(\flatten{\mstep})}$
  and $\flatten{\msteptwo} = \flatten{(\flatten{\msteptwo})}$,
  the triple $\mstepB,\msteptwoB,\mstepthreeB$
  also fulfills the conditions for the definition of
  $\flatten{\mstep}/\flatten{\msteptwo}$.
  But~\rprop{body:existence_uniqueness_projection} ensures uniqueness,
  so
  $\mstep/\msteptwo
  = \flatten{\mstepthreeB}
  = \flatten{\mstep}/\flatten{\msteptwo}$.
\item $\mstep/\mstep = \ftgt{\mstep}$:
  It suffices to note that $\judgProj{\mstep}{\mstep}{\rtgt{\mstep}}$ 
  holds, as can be checked by induction on $\mstep$.
\item $\mstep/\fsrc{\mstep} = \flatten{\mstep}$:
  It suffices to note that
  $\judgProj{\mstep}{\rsrc{\mstep}}{\mstep}$,
  as can be checked by induction on $\mstep$.
\item $\fsrc{\mstep}/\mstep = \ftgt{\mstep}$:
  It suffices to note that
  $\judgProj{\rsrc{\mstep}}{\mstep}{\rtgt{\mstep}}$,
  as can be checked by induction on $\mstep$.
\item $(\lam{\var}{\mstep})/(\lam{\var}{\msteptwo})
      = \flatten{(\lam{\var}{(\mstep/\msteptwo)})}$:
  Observe that the left-hand side of the equation
  is defined if and only if the right-hand
  side is defined, given that
  $\rsrc{(\lam{\var}{\mstep})} \termeq \rsrc{(\lam{\var}{\msteptwo})}$
  if and only if $\rsrc{\mstep} \termeq \rsrc{\msteptwo}$,
  which is easy to check.

  By definition of $\mstep/\msteptwo$
  there exist $\mstepB,\msteptwoB,\mstepthreeB$
  such that $\flatten{\mstepB} = \flatten{\mstep}$
  and $\flatten{\msteptwoB} = \flatten{\msteptwo}$
  and $\judgProj{\mstepB}{\msteptwoB}{\mstepthreeB}$
  where $\mstep/\msteptwo = \flatten{\mstepthreeB}$.
  Then note, by confluence of flattening~(\rprop{flat_confluent}),
  that
  $\flatten{(\lam{\var}{\mstepB})}
   = \flatten{(\lam{\var}{\flatten{\mstepB}})}
   = \flatten{(\lam{\var}{\flatten{\mstep}})}
   = \flatten{(\lam{\var}{\mstep})}$
  and, similarly,
  $\flatten{(\lam{\var}{\msteptwoB})} = \flatten{(\lam{\var}{\msteptwo})}$.
  Moreover, by the \indrulename{ProjAbs} rule,
  $\judgProj{
     \lam{\var}{\mstepB}
   }{
     \lam{\var}{\msteptwoB}
   }{
     \lam{\var}{\mstepthreeB}
   }$.
  By uniqueness of projection~(\rprop{body:existence_uniqueness_projection})
  this means that
  $(\lam{\var}{\mstep})/(\lam{\var}{\msteptwo})
  = \flatten{(\lam{\var}{\mstepthreeB})}
  = \flatten{(\lam{\var}{\flatten{\mstepthreeB}})}
  = \flatten{(\lam{\var}{(\mstep/\msteptwo)})}$.
\item $(\mstep_1\,\mstep_2)/(\msteptwo_1\,\msteptwo_2) =
       \flatten{((\mstep_1/\msteptwo_1)\,(\mstep_2/\msteptwo_2))}$:
  Observe that, if the right-hand side of the equation is defined,
  then the left-hand side is also defined, given that
  if $\rsrc{\mstep_1} \termeq \rsrc{\msteptwo_1}$ 
  and $\rsrc{\mstep_2} \termeq \rsrc{\msteptwo_2}$
  then
  $\rsrc{(\mstep_1\,\mstep_2)} \termeq \rsrc{(\msteptwo_1\,\msteptwo_2)}$.

  Note that, by hypothesis, the right-hand side of the equation is defined.
  By definition of $\mstep_1/\msteptwo_1$
  there exist $\mstepB_1,\msteptwoB_1,\mstepthreeB_1$
  such that $\flatten{\mstepB_1} = \flatten{\mstep_1}$
  and $\flatten{\msteptwoB_1} = \flatten{\msteptwo_1}$
  and $\judgProj{\mstepB_1}{\msteptwoB_1}{\mstepthreeB_1}$
  where $\mstep_1/\msteptwo_1 = \flatten{\mstepthreeB_1}$.
  Similarly,
  by definition of $\mstep_2/\msteptwo_2$
  there exist $\mstepB_2,\msteptwoB_2,\mstepthreeB_2$
  such that $\flatten{\mstepB_2} = \flatten{\mstep_2}$
  and $\flatten{\msteptwoB_2} = \flatten{\msteptwo_2}$
  and $\judgProj{\mstepB_2}{\msteptwoB_2}{\mstepthreeB_2}$
  where $\mstep_2/\msteptwo_2 = \flatten{\mstepthreeB_2}$.

  Note, by confluence of flattening~(\rprop{flat_confluent}),
  $\flatten{(\mstepB_1\,\mstepB_2)}
  = \flatten{(\flatten{\mstepB_1}\,\flatten{\mstepB_2})}
  = \flatten{(\flatten{\mstep_1}\,\flatten{\mstep_2})}
  = \flatten{(\mstep_1\,\mstep_2)}$ and,
  similarly,
  $\flatten{(\msteptwoB_1\,\msteptwoB_2)}
  = \flatten{(\msteptwo_1\,\msteptwo_2)}$.
  Moreover, by the \indrulename{ProjApp} rule,
  $\judgProj{
     \mstepB_1\,\mstepB_2
   }{
     \msteptwoB_1\,\msteptwoB_2
   }{
     \mstepthreeB_1\,\mstepthreeB_2
   }$.
  By uniqueness of projection~\rprop{body:existence_uniqueness_projection}
  this means that
  $\mstep_1\,\mstep_2/\msteptwo_1\,\msteptwo_2
  = \flatten{(\mstepthreeB_1\,\mstepthreeB_2)}
  = \flatten{(\flatten{\mstepthreeB_1}\,\flatten{\mstepthreeB_2})}
  = \flatten{((\mstep_1/\msteptwo_1)\,(\mstep_2/\msteptwo_2))}$.
\end{enumerate}
\end{proof}

\begin{exa}
Let $\rulewittwo : \lam{\var}{\consof{f}\,\var} \to \lam{\var}{\consof{g}\,\var}$.
Then:
\[
  \begin{array}{lll}
  
    (\lam{\var}{(\lam{\var}{\consof{f}\,\var})\,\var})/(\lam{\var}{\rulewittwo\,\var})
  & = &
    \flatten{(
      \lam{\var}{
          ((\lam{\var}{\consof{f}\,\var})\,\var)/(\rulewittwo\,\var)
      }
    )}
  \\
   & = &
    \flatten{(
      \lam{\var}{
        \flatten{(
          ((\lam{\var}{\consof{f}\,\var})/\rulewittwo)
          (\var/\var)
        )}
      }
    )}
  \\
  & = &
    \flatten{(
      \lam{\var}{
        \flatten{(
          (\lam{\var}{\consof{g}\,\var})\,\var
        )}
      }
        )}
        \\
  & = &
    \flatten{(
      \lam{\var}{
        \consof{g}\,\var
      }
        )}
        \\
  & = &
    \consof{g}
  \end{array}
\]
\end{exa}

\subsection{Projection for Flat Rewrites.}
The projection operator from~\rdef{projection_operator}
is extended to operate on flat rewrites.
One may try to define
$\redseq/\redseqtwo$
using equations such as
$(\redseq_1\seq\redseq_2)/\redseqtwo =
 (\redseq_1/\redseqtwo)\seq(\redseq_2/(\redseqtwo/\redseq_1))$.
However, it is not {\em a priori} clear that this recursive definition
is well-founded\footnote{Another way to prove
well-foundedness is by interpretation,
as done in~\cite[Example~6.5.43]{Terese03}.}.
This is why the following definition proceeds in three stages:

\begin{defi}[Projection operator for flat rewrites]
\ldef{projection_for_rewrites}
We define:
\begin{enumerate}
\item 
  projection of a flat multistep over a coinitial flat rewrite
  ($\mstep\projmr\redseq$),
  by induction on $\redseq$;
\item 
  projection of a flat rewrite over a coinitial flat multistep
  ($\redseq\projrm\mstep$),
  by induction on $\redseq$;
and
\item 
  projection of a flat rewrite over a coinitial flat rewrite
  ($\redseq\projrr\redseqtwo$)
  by induction on $\redseqtwo$,
  as follows:
\end{enumerate}  
\[
  \begin{array}{rcl@{\hspace{1cm}}rcl}
     \mstep\projmr\msteptwo
     & \eqdef &
     \mstep/\msteptwo
   &
     \mstep\projmr(\redseq_1\seq\redseq_2)
     & \eqdef &
      (\mstep\projmr\redseq_1)\projmr\redseq_2
  \\
    \msteptwo\projrm\mstep
    & \eqdef &
    \msteptwo/\mstep
  &
    (\redseq_1\seq\redseq_2)\projrm\mstep
    & \eqdef &
    (\redseq_1\projrm\mstep)\seq(\redseq_2\projrm(\mstep\projmr\redseq_1))
  \\
    \redseq\projrr\mstep
      & \eqdef &
      \redseq\projrm\mstep
    &
      \redseq\projrr(\redseqtwo_1\seq\redseqtwo_2)
      & \eqdef &
      (\redseq\projrr\redseqtwo_1)\projrr\redseqtwo_2
  \end{array}
\]
\end{defi}

Note that $\projrr$ generalizes $\projrm$ and $\projmr$
in the sense that $\mstep\projmr\redseq = \mstep\projrr\redseq$
and $\redseq\projrm\mstep = \redseq\projrr\mstep$.
With these definitions, the key equation
$
  (\redseq_1\seq\redseq_2)\projrr\redseqtwo =
  (\redseq_1\projrr\redseqtwo)\seq(\redseq_2\projrr(\redseqtwo\projrr\redseq_1))
$
can be shown to hold.
From this point on, we overload $\redseq/\redseqtwo$
to stand for either of these projection operators.
The key equation ensures that this abuse of notation is harmless.

Next we address some important properties of projection
for flat rewrites.
First, projection of a rewrite over a sequence, and of a sequence
over a rewrite, obey the expected equations
$
  \redseq/(\redseqtwo_1\seq\redseqtwo_2) =
  (\redseq/\redseqtwo_1)/\redseqtwo_2
$
and
  $(\redseq_1\seq\redseq_2)/\redseqtwo =
  (\redseq_1/\redseqtwo)\seq(\redseq_2/(\redseqtwo/\redseq_1))$.
  The former follows from \rdef{projection_for_rewrites} and the latter is
  proved by induction on $\sz{\redseq_1}+\sz{\redseq_2}+\sz{\redseqtwo}$ for
  $\sz{\cdot}$ an appropriate notion of size of rewrites. Second, flat
  permutation equivalence is a congruence with respect to projection:
  if $\redseq \flateq \redseqtwo$ 
and $\redseqthree$ is an arbitrary flat rewrite
coinitial to $\redseq$, then both
(1) $\redseqthree/\redseq = \redseqthree/\redseqtwo$;
and (2) $\redseq/\redseqthree \flateq \redseqtwo/\redseqthree$ (\rprop{congruence_flateq_projection}).
The proof of both items is by induction on $\redseq \flateq \redseqtwo$; the more challenging case is when $\redseq \flateq \redseqtwo$ follows from the $\flateqRule{Perm}$ rule:
\begin{itemize}
\item For (1), we prove (\llem{projection_of_splittings_and_rewrites}):
If $\judgSplit{\mstep_1}{\mstep_2}{\mstep_3}$
and $\redseq$ is an arbitrary flat rewrite coinitial to $\mstep_1$,
then $\redseq/\mstep_1 = (\redseq/\mstep_2)/\mstep_3$.

\item For (2), we prove (\rprop{generalized_flateq_perm}):
If $\judgSplit{\mstep_1}{\mstep_2}{\mstep_3}$
and $\msteptwo$ is an arbitrary multistep coinitial to $\mstep_1$,
then
$\judgSplitF{
   (\mstep_1/\msteptwo)
 }{
   (\mstep_2/\msteptwo)
 }{
   (\mstep_3/(\msteptwo/\mstep_2))
 }$.
\end{itemize}

\begin{lem}
\llem{projrr_sequence}
If $(\redseq_1\seq\redseq_2)$ and $\redseqtwo$ are coinitial flat
rewrites, then:
\[
  (\redseq_1\seq\redseq_2)\projrr\redseqtwo =
  (\redseq_1\projrr\redseqtwo)\seq(\redseq_2\projrr(\redseqtwo\projrr\redseq_1))
\]
\end{lem}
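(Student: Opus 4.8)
The plan is to prove the identity by well-founded induction on the measure $\sz{\redseq_1} + \sz{\redseq_2} + \sz{\redseqtwo}$, where for a flat rewrite $\redseq$ I take $\sz{\redseq}$ to be its \emph{length}, i.e.\ its number of constituent flat multisteps, defined by $\sz{\mstepn} \eqdef 1$ and $\sz{\redseqn \seq \redseqntwo} \eqdef \sz{\redseqn} + \sz{\redseqntwo}$ (well-defined up to associativity of ``$\seq$''). The essential preliminary, which I would establish first by a routine induction following \rdef{projection_for_rewrites}, is that projection \emph{preserves length in its first argument}, i.e.\ $\sz{\redseq \projrr \redseqtwo} = \sz{\redseq}$: indeed $\redseq \projrm \mstep$ leaves the composition structure of $\redseq$ intact (each multistep factor is sent to a single multistep via the operator of \rdef{projection_operator}), and $\redseq \projrr \redseqtwo$ iterates $\projrm$ over the multisteps of $\redseqtwo$, so the number of factors is never altered. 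Throughout I will also use freely the associativity equation $\redseq\projrr(\redseqtwo_1\seq\redseqtwo_2) = (\redseq\projrr\redseqtwo_1)\projrr\redseqtwo_2$, which is immediate from \rdef{projection_for_rewrites}, and the fact that $\projrr$ generalizes $\projmr$ and $\projrm$.

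In the base case $\redseqtwo = \msteptwo$ is a single flat multistep. Here I would unfold the definition of $\projrm$ on a composition, giving $(\redseq_1\seq\redseq_2)\projrm\msteptwo = (\redseq_1\projrm\msteptwo)\seq(\redseq_2\projrm(\msteptwo\projmr\redseq_1))$, and then re-express each factor through $\projrr$: since $\redseq_1\projrr\msteptwo = \redseq_1\projrm\msteptwo$ and $\msteptwo\projrr\redseq_1 = \msteptwo\projmr\redseq_1$, and since $\msteptwo\projmr\redseq_1$ is itself a multistep so that $\redseq_2\projrr(\msteptwo\projmr\redseq_1) = \redseq_2\projrm(\msteptwo\projmr\redseq_1)$, this is exactly the required right-hand side.

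For the inductive step $\redseqtwo = \redseqtwo_1\seq\redseqtwo_2$, write $A \eqdef \redseq_1\projrr\redseqtwo_1$ and $B \eqdef \redseq_2\projrr(\redseqtwo_1\projrr\redseq_1)$, and show that both sides collapse to the common form $(A\projrr\redseqtwo_2)\seq\bigl(B\projrr(\redseqtwo_2\projrr A)\bigr)$. For the left-hand side, $(\redseq_1\seq\redseq_2)\projrr(\redseqtwo_1\seq\redseqtwo_2) = ((\redseq_1\seq\redseq_2)\projrr\redseqtwo_1)\projrr\redseqtwo_2$; the inner projection equals $A\seq B$ by the induction hypothesis on $\redseqtwo_1$, and a second application of the hypothesis to $(A\seq B)\projrr\redseqtwo_2$ yields the common form. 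For the right-hand side, the first factor $\redseq_1\projrr(\redseqtwo_1\seq\redseqtwo_2) = A\projrr\redseqtwo_2$ holds by definition, while for the second factor I would first compute $(\redseqtwo_1\seq\redseqtwo_2)\projrr\redseq_1 = (\redseqtwo_1\projrr\redseq_1)\seq(\redseqtwo_2\projrr A)$ by the hypothesis (applied to the composition $\redseqtwo_1\seq\redseqtwo_2$ projected over $\redseq_1$), and then use the definition of $\projrr$ on a composition to get $\redseq_2\projrr\bigl((\redseqtwo_1\projrr\redseq_1)\seq(\redseqtwo_2\projrr A)\bigr) = B\projrr(\redseqtwo_2\projrr A)$. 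The two sides therefore agree.

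The delicate point, and the step I expect to be the main obstacle, is verifying that each of these three appeals to the induction hypothesis strictly decreases the measure. The first and second are easy: projecting over $\redseqtwo_1$ lowers the third summand, and projecting $\redseqtwo_1\seq\redseqtwo_2$ over $\redseq_1$ trades the positive quantity $\sz{\redseq_2} \geq 1$ for the cost of the composition. The third application, to $(A\seq B)\projrr\redseqtwo_2$, is precisely where length-preservation is indispensable: it guarantees $\sz{A} = \sz{\redseq_1}$ and $\sz{B} = \sz{\redseq_2}$, so that $\sz{A}+\sz{B}+\sz{\redseqtwo_2} = \sz{\redseq_1}+\sz{\redseq_2}+\sz{\redseqtwo_2} < \sz{\redseq_1}+\sz{\redseq_2}+\sz{\redseqtwo}$, even though $A$ and $B$ may be much larger rewrites than $\redseq_1$ and $\redseq_2$. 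Without the observation that projection cannot increase the \emph{number} of multistep factors, the naive structural measure would fail to decrease, since projection can duplicate work inside individual multisteps.
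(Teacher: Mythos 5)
Your proof is correct and takes essentially the same route as the paper, whose proof is precisely an induction on $\sz{\redseq_1} + \sz{\redseq_2} + \sz{\redseqtwo}$ for ``an appropriate notion of size'' of flat rewrites. Your explicit choice of size (number of multistep factors) together with the observation that projection preserves this length in its first argument is exactly what is needed to make that measure decrease in each of the three appeals to the induction hypothesis, so your write-up supplies the details the paper leaves implicit.
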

\begin{proof}
The proof is by induction on $\sz{\redseq_1} + \sz{\redseq_2} + \sz{\redseqtwo}$.
\end{proof}

\begin{lem}[Basic properties of rewrite projection]
\llem{basic_properties_of_rewrite_projection}
Let $\redseq$ stand for a flat rewrite.
Then both 
$\fsrc{\redseq}/\redseq = \ftgt{\redseq}$ and
$\redseq/\fsrc{\redseq} = \redseq$ hold.
\end{lem}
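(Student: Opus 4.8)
The plan is to prove both equations simultaneously by structural induction on the flat rewrite $\redseq$, following its grammar $\redseq ::= \mstep \mid \redseq_1 \seq \redseq_2$ (\rdef{flat_multistep_and_flat_rewrite}). The key preliminary observation is that $\fsrc{\redseq} = \flatten{(\refl{\rsrc{\redseq}})}$ is a single flat multistep, since flattening a unit rewrite performs only $\beta$- and $\eta$-normalization and introduces no compositions. Hence $\fsrc{\redseq}/\redseq$ unfolds through the $\projmr$ clauses of \rdef{projection_for_rewrites} and $\redseq/\fsrc{\redseq}$ through the $\projrm$ clauses. I would also record the coercion identity $\fsrc{\redseq} = \fsrcb{\redseq}$ (the shorthand $\fsrc{\cdot}$ uses $\refl{\rsrc{\cdot}}$, which is merely the source term coerced to a rewrite), so that the multistep-level facts of \rprop{body:properties_of_projection} apply verbatim.

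For the first equation $\fsrc{\redseq}/\redseq = \ftgt{\redseq}$, the base case $\redseq = \mstep$ is exactly \rprop{body:properties_of_projection}(3), namely $\fsrcb{\mstep}/\mstep = \ftgtb{\mstep}$. For the inductive step $\redseq = \redseq_1 \seq \redseq_2$, I would unfold $\fsrc{\redseq}\projmr(\redseq_1\seq\redseq_2) = (\fsrc{\redseq}\projmr\redseq_1)\projmr\redseq_2$ by \rdef{projection_for_rewrites}, then use $\fsrc{\redseq} = \fsrc{\redseq_1}$ (as $\rsrc{(\redseq_1\seq\redseq_2)} = \rsrc{\redseq_1}$) together with the induction hypothesis to obtain $\fsrc{\redseq_1}\projmr\redseq_1 = \ftgt{\redseq_1}$. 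Composability of $\redseq_1\seq\redseq_2$ forces $\rtgt{\redseq_1}\termeq\rsrc{\redseq_2}$, whence $\ftgt{\redseq_1} = \fsrc{\redseq_2}$ by confluence and strong normalization of flattening (\rprop{flat_confluent}, \rprop{flat_sn}). A second application of the induction hypothesis then gives $\ftgt{\redseq_1}\projmr\redseq_2 = \fsrc{\redseq_2}\projmr\redseq_2 = \ftgt{\redseq_2} = \ftgt{\redseq}$.

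For the second equation $\redseq/\fsrc{\redseq} = \redseq$ I would proceed once the first equation is available for all flat rewrites. The base case $\redseq = \mstep$ is \rprop{body:properties_of_projection}(3), $\mstep/\fsrcb{\mstep} = \flatten{\mstep} = \mstep$, the last equality holding because $\mstep$ is in $\tof$-normal form. For the inductive step, writing $\mstep = \fsrc{\redseq} = \fsrc{\redseq_1}$, the $\projrm$ sequence clause of \rdef{projection_for_rewrites} yields $\redseq/\fsrc{\redseq} = (\redseq_1\projrm\fsrc{\redseq_1})\seq(\redseq_2\projrm(\fsrc{\redseq_1}\projmr\redseq_1))$. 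The first factor is $\redseq_1$ by the induction hypothesis; and since $\fsrc{\redseq_1}\projmr\redseq_1 = \ftgt{\redseq_1} = \fsrc{\redseq_2}$ by the already-established first equation and the composability argument above, the second factor is $\redseq_2/\fsrc{\redseq_2} = \redseq_2$ by the induction hypothesis, giving $\redseq_1\seq\redseq_2 = \redseq$.

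The step requiring the most care is the repeated identification $\ftgt{\redseq_1} = \fsrc{\redseq_2}$ at each composition node: it is not definitional but rests on composability forcing the endpoints to be $\beta\eta$-equivalent, and on confluence and strong normalization of the flattening system collapsing $\beta\eta$-equivalent terms (coerced to unit rewrites) to one common $\tof$-normal form. Apart from this, the argument is bookkeeping: verifying the coinitiality side-conditions so that every projection invoked is defined (all endpoints in sight are $\beta\eta$-equivalent), and keeping straight the coercion between the $\fsrc{\cdot},\ftgt{\cdot}$ forms and the $\fsrcb{\cdot},\ftgtb{\cdot}$ forms used in \rprop{body:properties_of_projection}. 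I expect no genuinely hard step once the first equation is in hand for use within the second.
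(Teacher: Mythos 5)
Your proof is correct and takes essentially the same route as the paper's: induction on the flat rewrite $\redseq$, with \rprop{body:properties_of_projection} supplying the multistep base cases, and the first equation $\fsrc{\redseq_1}/\redseq_1 = \ftgt{\redseq_1}$ fed into the composition case of the second, whose unfolding via the sequence clause of projection matches the paper's use of \rlem{projrr_sequence} (identical to your definitional $\projrm$ clause when the right argument is a multistep). Your explicit justification of $\ftgt{\redseq_1} = \fsrc{\redseq_2}$ through composability plus confluence and strong normalization of flattening is the same identification the paper makes, only spelled out more carefully.
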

\begin{proof}
Both items are proved by induction on $\redseq$ and make use of~\rprop{body:properties_of_projection}. We present the proof of the second one as a sample. 
  If $\redseq = \mstep$ is a multistep,
  then $\mstep/\fsrc{\mstep} = \flatten{\mstep} = \mstep$
  by~\rprop{body:properties_of_projection}, using the fact that
  $\redseq$ is flat by hypothesis.
  If $\redseq = \redseq_1\seq\redseq_2$ then:
  \[
  \begin{array}{rlll}
  &&
    (\redseq_1\seq\redseq_2)/\fsrc{(\redseq_1\seq\redseq_2)}
  \\
  & = &
    (\redseq_1/\fsrc{(\redseq_1\seq\redseq_2)})
    \seq
    (\redseq_2/(\fsrc{(\redseq_1\seq\redseq_2)}/\redseq_1))
    & \text{by \rlem{projrr_sequence}}
  \\
  & = &
    (\redseq_1/\fsrc{(\redseq_1\seq\redseq_2)})
    \seq
    (\redseq_2/(\fsrc{\redseq_1}/\redseq_1))
  \\
  & = &
    (\redseq_1/\fsrc{(\redseq_1\seq\redseq_2)})
    \seq
    (\redseq_2/\ftgt{\redseq_1})
    & \text{by item~1. of this lemma}
  \\
  & = &
    (\redseq_1/\fsrc{\redseq_1})
    \seq
    (\redseq_2/\fsrc{\redseq_2})
    & \text{as $\rsrc{\redseq_1\seq\redseq_2} = \rsrc{\redseq_1}$
            and $\rtgt{\redseq_1} = \rsrc{\redseq_2}$}
  \\
  & = &
    \redseq_1\seq\redseq_2
    & \text{by \ih}
  \end{array}
  \]
\end{proof}

First consider the relation $\tofnoeta$, which is defined like $\tof$
but without the $\flatRule{EtaM}$ rule.

\begin{lem}[Canonical $\mapsto^\circ,\etalong$-normal splitting]
\llem{canonical_splitting}
If $\judgSplit{\mstep_1}{\mstep_2}{\mstep_3}$
then there exist $\mstep'_1,\mstep'_2,\mstep'_3$
such that $\judgSplit{\mstep'_1}{\mstep'_2}{\mstep'_3}$
where
$\flatten{\mstep_1} = \flatten{(\mstep'_1)}$
and
$\flatten{\mstep_2} = \flatten{(\mstep'_2)}$
and
$\flatten{\mstep_3} = \flatten{(\mstep'_3)}$,
and moreover $\mstep'$ is in $\tofnoeta,\etalong$-normal form.
\end{lem}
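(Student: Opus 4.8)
The plan is to argue by induction on the derivation of $\judgSplit{\mstep_1}{\mstep_2}{\mstep_3}$, producing at each stage a splitting $\judgSplit{\mstep'_1}{\mstep'_2}{\mstep'_3}$ whose three components are simultaneously in $\beta$-normal form (equivalently, $\tofnoeta$-normal form, since the only $\tofnoeta$-rule that can fire on a multistep is \flatRule{BetaM}) and in $\etalong$-normal form, while maintaining $\flatten{\mstep'_i} = \flatten{\mstep_i}$ for $i \in \{1,2,3\}$. Throughout I would rely on the fact that $\flatteningSystem$ is confluent and strongly normalizing (\rprop{flat_confluent}, \rprop{flat_sn}), so that $\flatten{\cdot}$ is well defined and, as used repeatedly in earlier proofs, $\flatten{(\lam{\var}{\redseq})}$ and $\flatten{(\redseq\,\redseqtwo)}$ depend only on the flattenings of the immediate subterms. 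This makes the flattening bookkeeping routine in the inductive cases.

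In the base cases I would $\eta$-expand the leaves. For \indrulename{SVar} and \indrulename{SCon} the three components coincide, so I take their common $\etalong$-normal form $\etaexp{\var}$ (resp.\ $\etaexp{\cons}$) and re-derive the splitting by iterating \indrulename{SAbs} and \indrulename{SApp} down to \indrulename{SVar}/\indrulename{SCon} leaves; since flattening undoes $\eta$-expansion through \flatRule{EtaM}, one has $\flatten{\etaexp{\var}} = \var = \flatten{\var}$, giving the flattening condition. The delicate base cases are \indrulename{SRuleL}/\indrulename{SRuleR}, where the two sides carry structurally different data: a rule symbol $\rulewit$ on one side and the full term $\refl{\rtgt{\rulewit}}$ (resp.\ $\refl{\rsrc{\rulewit}}$) on the other. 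Because rule endpoints are by definition already in $\beta\etalong$-normal form, the term side is already canonical; I would $\eta$-expand the rule-symbol side to $\lam{\varseq{\var}}{\rulewit\,\etaexp{\var_1}\ldots\etaexp{\var_n}}$ and re-derive the splitting, which forces the opposite component to be $\lam{\varseq{\var}}{\refl{\rtgt{\rulewit}}\,\etaexp{\var_1}\ldots\etaexp{\var_n}}$, a term containing \flatRule{BetaM}-redexes whose \flatRule{BetaM}-normal form is exactly the canonical endpoint.

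The inductive case \indrulename{SAbs} is immediate: apply the induction hypothesis and prepend $\lam{\var}{-}$, using confluence to conclude $\flatten{(\lam{\var}{\mstep'_i})} = \flatten{(\lam{\var}{\mstep_i})}$. The substantial case is \indrulename{SApp}, where a splitting of $\mstep\,\msteptwo$ is built from splittings of $\mstep$ and $\msteptwo$. Here I cannot simply $\eta$-expand the function part $\mstep$: in spine position its head must remain unexpanded, and if the induction hypothesis turned $\mstep$ into an abstraction the reassembled application would become a \flatRule{BetaM}-redex. I would therefore reorganize the argument around the head--spine decomposition $\mstep_1 = \lam{\varseq{\var}}{h\,\varseq{\mstep}}$, as in \rlem{compatibilization_of_coinitial_multisteps}, $\eta$-expanding only the arguments and the trailing result type, and then \flatRule{BetaM}-normalizing the redexes introduced by the rule-symbol leaves of the companion components.

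The main obstacle is keeping the three components synchronized through these \flatRule{BetaM}-normalizations. I expect to need a \emph{substitution lemma for splitting}: if $\judgSplit{\mstep}{\mstep_1}{\mstep_2}$ and $\judgSplit{\msteptwo}{\msteptwo_1}{\msteptwo_2}$ then $\judgSplit{\mstep\subm{\var}{\msteptwo}}{\mstep_1\subm{\var}{\msteptwo_1}}{\mstep_2\subm{\var}{\msteptwo_2}}$, provable by a direct induction on the first splitting. With it, every \flatRule{BetaM}-contraction on $\mstep_1$ lifts to coordinated \flatRule{BetaM}-contractions on $\mstep_2$ and $\mstep_3$ that preserve the splitting; the remaining redexes, which live only in the companion components (the $\refl{\rtgt{\rulewit}}\,\etaexp{\var_1}\ldots\etaexp{\var_n}$ produced at rule leaves), are contracted with the help of \rlem{flattening_preserves_betam_and_etam_steps} and shown not to affect the endpoints, hence the flattenings, of the other components by \rremark{flat_not_beta_steps_preserve_endpoints} and \rlem{flatten_beta_eta_source_target}. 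Confluence and strong normalization of $\flatteningSystem$ then guarantee that the resulting $\beta$-normal, $\etalong$-normal triple has the same flattenings as the original, closing the induction.
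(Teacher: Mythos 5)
There is a genuine gap, and it sits in the very first sentence of your plan: you strengthen the goal to producing a splitting $\judgSplit{\mstep'_1}{\mstep'_2}{\mstep'_3}$ \emph{all three} of whose components are simultaneously in $\tofnoeta$- and $\etalong$-normal form. That invariant is unachievable, and the lemma deliberately does not ask for it: only the split multistep $\mstep'_1$ is required to be in $\tofnoeta,\etalong$-normal form (this is also how the lemma is used in \rlem{projection_splitting_over_multistep}, where one assumes w.l.o.g.\ that $\mstep_1$ is normal and assumes nothing about $\mstep_2,\mstep_3$). The obstruction is built into the splitting rules themselves: \indrulename{SRuleL} and \indrulename{SRuleR} force the companion of a rule symbol $\rulewit$ to be the closed term $\refl{\rtgt{\rulewit}}$ or $\refl{\rsrc{\rulewit}}$, which for any rule whose left-hand side has at least one free variable is a $\lambda$-abstraction. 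Whenever such a $\rulewit$ occurs applied in $\mstep'_1$ --- and it must occur applied if $\mstep'_1$ is $\etalong$-normal, since $\rulewit$ then has functional type --- rule \indrulename{SApp} forces the companion component to contain $\refl{\rtgt{\rulewit}}\,\cdots$ (resp.\ $\refl{\rsrc{\rulewit}}\,\cdots$), i.e.\ a \flatRule{BetaM}-redex. So no triple containing an applied rule symbol can be related by the splitting judgment while having all three components $\beta$-normal. You in fact notice these redexes in your \indrulename{SRuleL}/\indrulename{SRuleR} discussion, but your fix --- contracting the ``companion-only'' redexes --- destroys the judgment: splitting is a purely syntactic relation whose derivation forces the three components to share the same abstraction/application skeleton (up to the rule-symbol leaves), and after contracting a redex that exists only in $\mstep'_2$ or $\mstep'_3$ there is no derivation of $\judgSplit{\mstep'_1}{\mstep'_2}{\mstep'_3}$ any more. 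Your final triple would satisfy the normal-form and flattening conditions but not the splitting relation, so the induction invariant, and with it the lemma's conclusion, collapses.

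The salvageable part of your proposal is the coordinated-contraction idea: your ``substitution lemma for splitting'' is sound and is essentially the first stage of the paper's proof, which transports the splitting along the reduction $\mstep_1 \tofs \flatten{\mstep_1}$, obtaining $\judgSplit{\flatten{\mstep_1}}{\mstep''_2}{\mstep''_3}$ with $\mstep_2 \tofs \mstep''_2$ and $\mstep_3 \tofs \mstep''_3$, by induction on the length of the reduction (a \flatRule{BetaM}/\flatRule{EtaM}-step in $\mstep_1$ induces matching steps in the companions). The paper then runs a second induction on the shape of the flat multistep $\flatten{\mstep_1}$, by case analysis on its head (variable, constant, rule symbol), $\eta$-expanding so that \emph{only} $\mstep'_1$ is in $\tofnoeta,\etalong$-normal form and letting the companions be whatever the splitting rules dictate, redexes included, while tracking only $\flatten{\mstep'_2} = \flatten{\mstep_2}$ and $\flatten{\mstep'_3} = \flatten{\mstep_3}$. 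If you weaken your invariant accordingly --- normality for the first component, equality of flattenings for the other two --- your argument essentially becomes the paper's.
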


\begin{proof}
Suppose  $\judgSplit{\mstep_1}{\mstep_2}{\mstep_3}$. 
Then 
we have that
$\judgSplit{\mstep''_1}{\mstep''_2}{\mstep''_3}$
where
$\mstep''_1 = \flatten{\mstep_1}$
and $\mstep_2 \tofs \mstep''_2$
and $\mstep_3 \tofs \mstep''_3$.
This fact may be proved  by induction on the length of a reduction to normal form $\mstep_1 \tofs \flatten{\mstep_1}$. One then proceeds by induction on the shape of $\mstep_1''$,
it suffices to show that there exist $\mstep'_1,\mstep'_2,\mstep'_3$
such that
$\judgSplit{\mstep'_1}{\mstep'_2}{\mstep'_3}$
where $\flatten{(\mstep'_1)} = \flatten{(\mstep''_1)}$
and $\flatten{(\mstep'_2)} = \flatten{(\mstep''_2)}$
and $\flatten{(\mstep'_3)} = \flatten{(\mstep''_3)}$,
and moreover $\mstep'_1$ is in $\tofnoeta,\etalong$-normal form. This is done by direct inspection considering each of the cases: $\mstep''_1$ headed by a variable, $\mstep''_1$ headed by a constant, and $\mstep''_1$ headed by a rule symbol.
\end{proof}

\begin{defi}[$\eta$-expanded source]
If $\mstep$ is a multistep in $\etalong$-normal form,
we write $\altsrc{\mstep}$ for the source of $\mstep$
in which the sources of rule symbols are also $\eta$-expanded.
More precisely:
\[
  \begin{array}{rcll}
    \altsrc{\var} & \eqdef & \var \\
    \altsrc{\cons} & \eqdef & \cons \\
    \altsrc{\rulewit} & \eqdef & \tm'
    & \text{
        if $\rewr{\rulewit}{\tm}{\tmtwo}{\typ} \in \ruleset$
        and $\tm'$ is the $\etalong$-normal form of $\tm$
      } \\
    \altsrc{(\lam{\var}{\mstep})} & \eqdef & \lam{\var}{\altsrc{\tm}} \\
    \altsrc{(\mstep\,\msteptwo)} & \eqdef & \altsrc{\mstep}\,\altsrc{\msteptwo}
  \end{array}
\]
Note that $\altsrc{\mstep} \tofs \rsrc{\mstep}$
so, in particular,
$\flatten{(\altsrc{\mstep})} = \flatten{(\rsrc{\mstep})} = \fsrc{\mstep}$.
\end{defi}

\begin{lem}[Flattening of $\eta$-expanded multisteps]
\llem{flattening_of_eta_expanded_multisteps}
Let $\mstep,\msteptwo$
be multisteps in $\etalong$-normal form
such that
$\flatten{\mstep} = \flatten{\msteptwo}$.
Then $\flattennoeta{\mstep} = \flattennoeta{\msteptwo}$.
\end{lem}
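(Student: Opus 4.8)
The plan is to identify both flattening operators, restricted to multisteps, with standard normalization in the simply-typed $\lambda$-calculus, and then to appeal to the uniqueness of $\eta$-long $\beta$-normal forms. Since a multistep contains no composition operator ``$\seq$'', none of the rules \flatRule{Abs}, \flatRule{App1}, \flatRule{App2}, \flatRule{App3} can fire inside it; hence $\tof$ restricted to multisteps reduces to $\tof_{\flatRule{BetaM}} \cup \tof_{\flatRule{EtaM}}$, which---regarding constants and rule symbols as typed free variables, exactly as in the proof of \rprop{flat_sn}---is $\beta\eta$-reduction. Dropping \flatRule{EtaM} leaves only \flatRule{BetaM}, i.e.\ $\beta$-reduction. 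Using confluence and strong normalization (\rprop{flat_confluent}, \rprop{flat_sn}), it follows that $\flatten{\mstep}$ is the $\beta\eta$-normal form of $\mstep$ and that $\flattennoeta{\mstep}$ is its $\beta$-normal form, and similarly for $\msteptwo$.

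First I would observe that an $\etalong$-normal form is in particular $\beta$-normal, so it contains no \flatRule{BetaM}-redex; hence $\flattennoeta{\mstep} = \mstep$ and $\flattennoeta{\msteptwo} = \msteptwo$. It therefore suffices to prove $\mstep = \msteptwo$. Since $\flatten{\mstep}$ and $\flatten{\msteptwo}$ are reached from $\mstep$ and $\msteptwo$ by $\beta\eta$-reduction, we have $\mstep \termeq \flatten{\mstep}$ and $\msteptwo \termeq \flatten{\msteptwo}$, and combining these with the hypothesis $\flatten{\mstep} = \flatten{\msteptwo}$ gives $\mstep \termeq \msteptwo$. Both $\mstep$ and $\msteptwo$ are, by assumption, in $\etalong$-normal form, i.e.\ $\mstep = \betaetanf{\mstep}$ and $\msteptwo = \betaetanf{\msteptwo}$. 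As $\betaetanf{\cdot}$ is a canonical representative of each $\beta\eta$-equivalence class of simply-typed terms, $\mstep \termeq \msteptwo$ forces $\mstep = \betaetanf{\mstep} = \betaetanf{\msteptwo} = \msteptwo$, and we conclude $\flattennoeta{\mstep} = \mstep = \msteptwo = \flattennoeta{\msteptwo}$.

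The routine parts are the two identifications of $\flatten{\cdot}$ and $\flattennoeta{\cdot}$ with $\beta\eta$- and $\beta$-normalization on multisteps, together with the classical fact that $\betaetanf{\cdot}$ is well-defined and canonical for $\beta\eta$-equality. I expect the only genuine obstacle to be the bookkeeping needed to treat rule symbols and constants as typed free variables, so that ``$\eta$-long $\beta$-normal form'' makes sense for multisteps; in particular one must ensure that $\mstep$ and $\msteptwo$ carry the same type, which is exactly what $\flatten{\mstep} = \flatten{\msteptwo}$ guarantees, placing both in a single $\beta\eta$-class to which the standard results then apply verbatim.
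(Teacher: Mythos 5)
Your opening claim --- that a multistep in $\etalong$-normal form is in particular $\beta$-normal, so that $\flattennoeta{\mstep} = \mstep$ --- is false, and the rest of the proof collapses with it. Being in $\etalong$-normal form means being fully $\eta$-expanded; it does not exclude \flatRule{BetaM}-redexes. This is exactly why the paper writes ``$\tofnoeta,\etalong$-normal form'' whenever both conditions are needed (e.g.\ in \rlem{canonical_splitting} and \rlem{compatibilization_of_coinitial_multisteps}); under your reading that conjunction would be redundant. Worse, the lemma is invoked (in the proof of \rlem{coinitial_cons_rule}) precisely on multisteps that are $\eta$-long but \emph{not} $\beta$-normal, namely $\altsrc{\mstep}$ and $\altsrc{\msteptwo}$, obtained by replacing each rule symbol $\rulewit$ by the $\etalong$-normal form of $\rsrc{\rulewit}$; since $\rsrc{\rulewit}$ is a $\lambda$-abstraction that may occur in applied position, these terms contain $\beta$-redexes. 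Consequently your argument establishes only the degenerate instance in which $\mstep$ and $\msteptwo$ are additionally $\beta$-normal (where the conclusion just reads $\mstep = \msteptwo$), not the statement the paper actually needs.

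A correct proof must relate $\flatten{\mstep}$ to $\flattennoeta{\mstep}$ for non-$\beta$-normal $\mstep$, and that is where the real content lies. The two ingredients, which are the ones the paper's proof invokes, are: (i) \flatRule{EtaM}-contractions can be postponed after \flatRule{BetaM}-steps (the classical $\eta$-postponement for simply-typed terms, viewing constants and rule symbols as typed free variables --- note this is \emph{not} \rlem{flat_beta_eta_postponement}, which concerns a different postponement), so that $\flatten{\mstep}$ is reached from the $\beta$-normal form $\flattennoeta{\mstep}$ by \flatRule{EtaM}-steps alone; and (ii) $\beta$-reduction preserves $\etalong$-normal forms, so $\flattennoeta{\mstep}$ and $\flattennoeta{\msteptwo}$ are again $\eta$-long, hence $\eta$-long $\beta$-normal terms. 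From $\flatten{\mstep} = \flatten{\msteptwo}$ and the uniqueness of the $\eta$-long representative of a given $\eta$-contracted $\beta$-normal form, one then concludes $\flattennoeta{\mstep} = \flattennoeta{\msteptwo}$. Your closing appeal to canonicity of $\eta$-long $\beta$-normal forms is the right kind of tool, but it has to be applied to $\flattennoeta{\mstep}$ and $\flattennoeta{\msteptwo}$, not to $\mstep$ and $\msteptwo$ themselves.
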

\begin{proof}
This essentially follows from the fact that \flatRule{EtaM}-redexes
may be postponed after \flatRule{BetaM}-redexes
(a standard result, regarding multisteps
as terms of the simply-typed $\lambda$-calculus) and that
flattening preserves
$\etalong$-normal forms.
\end{proof}

\begin{lem}[Constructor/rule matching]
\llem{coinitial_cons_rule}
Let $\mstep = \lam{\var_1\hdots\var_n}{\cons\,\mstep_1\hdots\mstep_m}$
and also
let $\msteptwo = \lam{\vartwo_1\hdots\vartwo_p}{
                   \rulewit\,\msteptwo_1\hdots\msteptwo_q}$.
Let $\rewr{\rulewit}{l_0}{r_0}{\typ}\in\ruleset$
and let $l,r$ be the $\toetaexp$-normal forms of $l_0,r_0$ respectively.
Suppose that $\rsrc{\mstep} \termeq \rsrc{\msteptwo}$
and, moreover, that $\mstep$ and $\msteptwo$ are in
$\tofnoeta,\etalong$-normal form. Then there exist multisteps
$\mstepthree_1,\ldots,\mstepthree_q$ such that:
\[
  \mstep = \flattennoeta{
             (\lam{\var_1\hdots\var_n}{
                \refl{l}\,\mstepthree_1\hdots\mstepthree_q})
           }
\]
and
$\flattennoeta{(\altsrc{\mstepthree_i})}
 = \flattennoeta{(\altsrc{\msteptwo_i})}$
for all $1 \leq i \leq q$.\smallskip\\
Furthermore, the multisteps $\mstepthree_i$
can be chosen in such a way that they are in $\tofnoeta,\etalong$-normal form,
and the number of applications in each
multistep $\mstepthree_i$, for all $1 \leq i \leq k$,
is strictly less than the number of applications in $\mstep$.
\end{lem}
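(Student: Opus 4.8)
The plan is to carry out a higher-order pattern matching argument, exploiting that the source $l_0 = \rsrc{\rulewit}$ of the rule symbol is of the form $\lam{w_1\hdots w_q}{\ell}$ with $\ell$ a pattern at base type, and that the rigid skeleton of $\ell$ is forced to coincide with that of the body of $\mstep$. First I would unfold the sources. Since source commutes with the term constructors, $\rsrc{\mstep} = \lam{\var_1\hdots\var_n}{\cons\,\rsrc{\mstep_1}\hdots\rsrc{\mstep_m}}$, whereas $\rsrc{\msteptwo}$ $\beta$-reduces to the instance $\lam{\vartwo_1\hdots\vartwo_p}{\ell[w_i := \rsrc{\msteptwo_i}]}$ of $\ell$. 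Passing to the $\eta$-expanded sources $\altsrc{\mstep},\altsrc{\msteptwo}$, which are already $\etalong$-normal, and using \rlem{flattening_of_eta_expanded_multisteps} together with confluence of flattening, the hypothesis $\rsrc{\mstep} \termeq \rsrc{\msteptwo}$ collapses to the single equation $\flattennoeta{(\altsrc{\mstep})} = \flattennoeta{(\altsrc{\msteptwo})}$ between $\beta\etalong$-normal forms (which in particular forces $n = p$). Because $\ell$ is a pattern not $\eta$-equivalent to a variable, its head is a constant, and comparing the heads of the two sides forces that constant to be exactly $\cons$.

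The core step is to extract the $\mstepthree_i$ by descending into $\mstep$ in lockstep with the rigid skeleton of $l$. At each position where $l$ displays a pattern variable $w_i$ applied to a list of distinct bound variables $z_{i,1}\hdots z_{i,k_i}$, I locate the corresponding subterm $\mstep'$ of $\mstep$ and set $\mstepthree_i \eqdef \lam{z_{i,1}\hdots z_{i,k_i}}{\mstep'}$. That this descent is well defined and exhausts all of $\mstep$ is exactly the accessibility of pattern-variable positions afforded by $l$ being a higher-order pattern: the rigid part of $l$ (constants and bound variables) matches that of $\mstep$ symbol for symbol, and the only freedom sits at the flexible positions $w_i\,\seqt{z}$. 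With this choice, substituting the $\mstepthree_i$ into $l$ and $\beta$-normalizing (\ie flattening with $\tofnoeta$) reconstructs the body $\cons\,\mstep_1\hdots\mstep_m$, the re-abstraction over the $z$'s being precisely what the $\beta$-reductions consume to reinsert $\mstep'$ at the flexible position; this yields $\mstep = \flattennoeta{(\lam{\var_1\hdots\var_n}{\refl{l}\,\mstepthree_1\hdots\mstepthree_q})}$. Since every $\mstepthree_i$ comes from a proper subterm of $\mstep$ (the head application on $\cons$ lies outside all of them), it has strictly fewer applications than $\mstep$; and as $\mstep$ is $\tofnoeta,\etalong$-normal, each $\mstepthree_i$ can be chosen $\tofnoeta,\etalong$-normal as well.

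It remains to check the source condition $\flattennoeta{(\altsrc{\mstepthree_i})} = \flattennoeta{(\altsrc{\msteptwo_i})}$. By construction the instance of $l$ under $\{w_i \mapsto \rsrc{\mstepthree_i}\}$ and the instance under $\{w_i \mapsto \rsrc{\msteptwo_i}\}$ have one and the same $\beta\etalong$-normal form, namely $\flattennoeta{(\altsrc{\mstep})}$. Invoking the matching property for patterns already used in the proof of \rlem{coherence_of_projection} --- the well-known consequence of left-hand sides being patterns, together with orthogonality --- the two substitutions must agree argument by argument after flattening, which is exactly the desired equality. I expect the main obstacle to be the bookkeeping around $\eta$: one must consistently work with $\eta$-expanded sources and with $\flattennoeta$ in place of $\flatten$, so that the matching decomposition and the reconstruction equation align with the $\etalong$-normality hypotheses on $\mstep$ and $\msteptwo$. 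Once this normalization discipline is fixed, the combinatorial pattern matching itself is routine.
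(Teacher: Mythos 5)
Your overall route is the same as the paper's: reduce the hypothesis to a single equation $\flattennoeta{(\altsrc{\mstep})} = \flattennoeta{(\altsrc{\msteptwo})}$ between $\beta$-normal $\etalong$-forms (via \rlem{flattening_of_eta_expanded_multisteps}), compare heads to force $n=p$ and the head constant of $l$ to be $\cons$, extract the $\mstepthree_i$ at the flexible positions of $l$, and recover the source condition from the injectivity of pattern instantiation. Your direct extraction $\mstepthree_i \eqdef \lam{\seqt{z}}{\mstep'}$, with $\mstep'$ a subterm of $\mstep$, even handles the ``Furthermore'' clause more cleanly than the paper does, since subterms of the $\tofnoeta,\etalong$-normal $\mstep$ at unapplied base-type positions are automatically in normal form and have strictly fewer applications.

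However, there is a genuine gap at the central step. You justify ``the rigid part of $l$ matches that of $\mstep$ symbol for symbol, and the only freedom sits at the flexible positions'' purely by $l$ being a higher-order pattern. That is not sufficient, because $\mstep$ is a multistep, not a term: it may contain rule symbols, and the hypothesis constrains only $\rsrc{\mstep}$, in which every rule symbol of $\mstep$ has been unfolded into the left-hand side of its rule. The pattern property of $l$ cannot exclude that part of $l$'s rigid skeleton is supplied by such an unfolding, \ie that a rule symbol of $\mstep$ sits at, and overlaps, a rigid position of $l$; in that case your descent gets stuck and no factorization $\mstep = \flattennoeta{(\lam{\var_1\hdots\var_n}{\refl{l}\,\mstepthree_1\hdots\mstepthree_q})}$ exists. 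Concretely, take rules $\rulewit : \cons\,(\constwo\,\var) \rewto \cdots$ and $\rulewitthree : \constwo\,\consthree \rewto \cdots$ (these have a critical pair, so the HRS is not orthogonal), and set $\mstep = \cons\,\rulewitthree$ and $\msteptwo = \rulewit\,\refl{\consthree}$: then $\rsrc{\mstep} = \cons\,(\constwo\,\consthree) \termeq \rsrc{\msteptwo}$, yet $\mstep$ does not factor through $l$, because $\rulewitthree$ is a rule symbol and not an application of $\constwo$. So the lemma itself is false without orthogonality, and the place where no-critical-pairs must be invoked is exactly your decomposition step --- which is where the paper's proof invokes it (``By orthogonality of $\ruleset$, there exist $\mstepthree_1,\hdots,\mstepthree_k$ such that $\ldots$''). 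You do cite orthogonality, but only in the final paragraph for the argument-by-argument agreement of the two substitutions, a step that in fact needs only the matching/injectivity property of patterns (left-linearity plus distinct bound-variable arguments). Once the appeal to no-critical-pairs is moved to the decomposition step, the rest of your argument goes through.
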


\begin{proof}
Note that the condition $\rsrc{\mstep} \termeq \rsrc{\msteptwo}$
implies that $\flatten{(\rsrc{\mstep})} = \flatten{(\rsrc{\msteptwo})}$.
Then also $\flatten{(\altsrc{\mstep})} = \flatten{(\altsrc{\msteptwo})}$.
By \rlem{flattening_of_eta_expanded_multisteps} this implies that:
\begin{equation}
  \flattennoeta{(\altsrc{\mstep})} = \flattennoeta{(\altsrc{\msteptwo})}
  \label{eq:orthogonal:multisteps}
\end{equation}
Suppose
$l=\lam{\varthree_1}{\ldots \lam{\varthree_k}{\constwo\,r_1\ldots r_{k'}}}$.
From~(\ref{eq:orthogonal:multisteps}) we have:
\[
  \flattennoeta{(\lam{\var_1\hdots\var_n}{
                  \cons\,\altsrc{\mstep_1}\hdots\altsrc{\mstep_m}})}
  =
  \flattennoeta{(\lam{\vartwo_1\hdots\vartwo_p}{
                  l\,\altsrc{\msteptwo}_1\hdots\altsrc{\msteptwo_q}})}
\]
Since
$\cons\,\altsrc{\mstep_1}\hdots\altsrc{\mstep_m}$ and
$l\,\altsrc{\msteptwo}_1\hdots\altsrc{\msteptwo_q}$ have base
types, then $n=p$
and we may assume that $\var_i=\vartwo_i$, for all $1 \leq i \leq p$.
Moreover,
\[
  \flattennoeta{(\cons\,\altsrc{\mstep_1}\hdots\altsrc{\mstep_m})}
  = \flattennoeta{(l\,\altsrc{\msteptwo}_1\hdots\altsrc{\msteptwo_q})}
\]
Therefore $\constwo=\cons$, $q=k$ and $m=k'$, and:
\[
  \flattennoeta{(\altsrc{\mstep_i})}
  =
  \flattennoeta{
    (r_i\subm{\varthree_1}{\altsrc{\msteptwo}_1}
        \ldots
        \subm{\varthree_k}{\altsrc{\msteptwo_k}})
  }
\]
for all $1 \leq i \leq m$.
Note that each $r_i$ is part of the pattern of
the rewrite rule $l$. By orthogonality of $\ruleset$,
there exist $\mstepthree_1,\hdots,\mstepthree_k$ such that:
\[
  \mstep_i =
  \flattennoeta{(r_i\subm{\varthree_1}{\mstepthree_1}
                    \ldots
                    \subm{\varthree_k}{\mstepthree_k})}
  \HS\text{for all $1 \leq i \leq m$}
\]
and 
\[
  \flattennoeta{(\altsrc{\mstepthree_i})} 
  = \flattennoeta{(\altsrc{\msteptwo_i})}
  \HS\text{for all $1 \leq i \leq k$}
\]
Therefore:
\[
  \mstep
  = \lam{\var_1\hdots\var_n}{\cons\,\mstep_1\ldots \mstep_m}
  =
  \flattennoeta{
    (\lam{\var_1\hdots\var_n}{l\,\mstepthree_1\hdots\mstepthree_k})
  }
\]

Furthermore, we claim that each $\mstepthree_i$, for $1 \leq i \leq k$,
can be chosen in such a way that it is in $\tofnoeta,\etalong$-normal form
and with strictly less applications than $\mstep$.
First, note that if the $\mstepthree_i$ are not in $\tofnoeta$-normal form,
then we may take $\mstepthreeB_i := \flattennoeta{\mstepthree_i}$
instead, and we still have that
$
  \mstep
  =
  \flattennoeta{
    (\lam{\var_1\hdots\var_n}{l\,\mstepthree_1\hdots\mstepthree_k})
  }
  =
  \flattennoeta{
    (\lam{\var_1\hdots\var_n}{
       l\,\mstepthreeB_1
          \hdots
          \mstepthreeB_k
     }
    )
  }
$
by confluence of flattening~(\rprop{flat_confluent}).
So let us assume that the $\mstepthree_i$ are in $\tofnoeta$-normal form.
Furthermore, suppose that
$\mstepthree_i$ is of arity $N$,
\ie its type is of the form
$\typ_{1} \imp \hdots \imp \typ_{N} \imp \btyp$
with $\btyp$ a base type,
and suppose that
$\mstepthree_i = \lam{\varfour_1\hdots\varfour_M}{\mstepthree'_i}$
where $\mstepthree'_i$ is not a $\lambda$-abstraction.
Note that $M \leq N$.
Take
$\mstepthreeB_i :=
 \lam{\varfour_1\hdots\varfour_M\varfour_{M+1}\hdots\varfour_{N}}{
   \mstepthree'_i\,\varfour_{M+1}\hdots\varfour_{N}
 }$.
Note that each $\mstepthreeB_i$ is in $\tofnoeta$-normal form.

To conclude, we claim that
$
  \mstep
  =
  \flattennoeta{
    (\lam{\var_1\hdots\var_n}{\refl{l}\,\mstepthreeB_1\hdots\mstepthreeB_k})
  }
$
and that each $\mstepthreeB_i$
has strictly less applications than $\mstep$.
To see this, note that
each variable $\varthree_1,\hdots,\varthree_k$.
occurs free exactly once in the body of $l$,
applied to different bound variables.
More precisely, for a fixed index $1 \leq i_0 \leq k$
there is exactly one $1 \leq j_0 \leq m$
such that $\varthree_{i_0}$ occurs free in $r_j$.
Then the multistep that results from substituting in $r_j$
each $\varthree_i$ by $\mstepthreeB_i$ other than for $i = i_0$
contains exactly one occurrence of $\varthree_{i_0}$
applied to $N$ variables, where $N$ is the arity of $\mstepthreeB_{i_0}$,
that is, it is of the form:
\[
  r_j\subm{\varthree_1}{\mstepthree_1}
     \hdots
     \subm{\varthree_{(i_0-1)}}{\mstepthree_{(i_0-1)}}
     \subm{\varthree_{(i_0+1)}}{\mstepthree_{(i_0+1)}}
     \hdots
     \subm{\varthree_k}{\mstepthree_k}
  =
  \cctxof{\varthree_{i_0}\,\varfour_1 \hdots \varfour_N}
\]
where $\varfour_1,\hdots,\varfour_N$ are bound by $\cctx$
and the hole of $\cctx$ is not applied.
As a consequence, we have that:
\[
  \begin{array}{rll}
    \mstep
  & = &
    \lam{\var_1\hdots\var_n}{
      \cons\mstep_1
           \hdots
           \mstep_{i_0-1}\,
           \cctxof{\flattennoeta{(\mstepthreeB_{i_0}\,\varfour_1\hdots\,\varfour_N)}}\,
           \mstep_{i_0+1}
           \hdots
           \mstep_m
    }
  \\
  & = &
    \lam{\var_1\hdots\var_n}{
      \cons\mstep_1
           \hdots
           \mstep_{i_0-1}\,
           \cctxof{
             \flattennoeta{
               ((\lam{\varfour_1\hdots\varfour_N}{
                   \mstepthree_{i_0}\,\varfour_{M+1}\hdots\varfour_{N}
                 })
                 \,\varfour_1\hdots\,\varfour_N)
             }
           }\,
           \mstep_{i_0+1}
           \hdots
           \mstep_m
    }
  \\
  & = &
    \lam{\var_1\hdots\var_n}{
      \cons\mstep_1
           \hdots
           \mstep_{i_0-1}\,
           \cctxof{
             \mstepthree_{i_0}\,\varfour_{M+1}\hdots\varfour_{N}
           }\,
           \mstep_{i_0+1}
           \hdots
           \mstep_m
    }
  \end{array}
\]
Hence, since $\mstep$ is in $\etalong$-normal form,
and $\mstepthree_{i_0}\,\varfour_{M+1}\hdots\varfour_{N}$
is a subterm of $\mstep$ which is not applied,
we know that $\mstepthree_{i_0}\,\varfour_{M+1}\hdots\varfour_{N}$
is in $\etalong$-normal form.
Hence
$\mstepthreeB_{i_0} =
 \lam{\varfour_1\hdots\varfour_n}{
   \mstepthree_{i_0}\,\varfour_{M+1}\hdots\varfour_{N}
 }$
is also in $\etalong$-normal form.

Finally,
note that
$
  \flattennoeta{
    (\lam{\var_1\hdots\var_n}{\refl{l}\,\mstepthreeB_1\hdots\mstepthreeB_k})
  }
$
is in $\tofnoeta,\etalong$-normal form
and that
we have that
$\flatten{\mstep} =
 \flatten{
   (\lam{\var_1\hdots\var_n}{\refl{l}\,\mstepthree_1\hdots\mstepthree_k})
 }
 =
 \flatten{
   (\lam{\var_1\hdots\var_n}{\refl{l}\,\mstepthreeB_1\hdots\mstepthreeB_k})
 }$
so by \rlem{flattening_of_eta_expanded_multisteps}
we have that
$\flattennoeta{\mstep} =
 \flattennoeta{
   (\lam{\var_1\hdots\var_n}{\refl{l}\,\mstepthreeB_1\hdots\mstepthreeB_k})
 }$.
Moreover, for each $1 \leq i_0 \leq k$
the multistep
$\mstepthreeB_{i_0} =
 \lam{\var_1\hdots\var_N}{\mstepthree_{i_0}\,\var_{M+1}\hdots\var_N}$
has the same number of applications as
$\mstepthree_{i_0}\,\var_{M+1}\hdots\var_M$,
which is a subterm of $\mstep$,
and has strictly less applications than $\mstep$.
\end{proof}

\begin{prop}[Generalized \flateqRule{Perm} rule]
\lprop{generalized_flateq_perm}
If $\judgSplit{\mstep}{\mstep_1}{\mstep_2}$
then $\flatten{\mstep} \flateq \flatten{\mstep_1}\seq\flatten{\mstep_2}$.
\\
Note that this generalizes the \flateqRule{Perm} rule,
which requires $\mstep$ to be in $\tof$-normal form.
\end{prop}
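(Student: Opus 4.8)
The plan is to reduce the general statement to the \flateqRule{Perm} axiom of \rdef{body:flat_permutation_equivalence}, which already gives $\mstep \flateq \flatten{\mstep_1}\seq\flatten{\mstep_2}$ whenever $\judgSplit{\mstep}{\mstep_1}{\mstep_2}$ \emph{and} $\mstep$ is in $\tof$-normal form. Since $\mstep$ is a multistep, no composition occurs in it, so the only $\flatteningSystem$-rules that can fire on it are \flatRule{BetaM} and \flatRule{EtaM} (the rules \flatRule{Abs}, \flatRule{App1}, \flatRule{App2}, \flatRule{App3} all require a composition in a subterm). In particular the reduction $\mstep \tofs \flatten{\mstep}$ stays within multisteps and $\flatten{\mstep}$ is itself a $\tof$-normal flat multistep. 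The goal is therefore to transport the given splitting along this reduction.

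The key lemma I would isolate is that \emph{splitting lifts along flattening}: if $\judgSplit{\mstep}{\mstep_1}{\mstep_2}$ and $\mstep \tof \mstep'$, then there exist $\mstep'_1, \mstep'_2$ with $\judgSplit{\mstep'}{\mstep'_1}{\mstep'_2}$ and $\mstep_1 \tofs \mstep'_1$ and $\mstep_2 \tofs \mstep'_2$; this is essentially the one-step form of the auxiliary fact used at the start of the proof of \rlem{canonical_splitting}. I would prove it by locating the contracted redex inside $\mstep$ and noting that the derivation of $\judgSplit{\mstep}{\mstep_1}{\mstep_2}$ decomposes along the surrounding context through \indrulename{SAbs} and \indrulename{SApp}, so it suffices to treat the redex itself. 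For a \flatRule{BetaM} redex $(\lam{\var}{\mstep})\,\msteptwo$ the splitting is forced to have the shape $\judgSplit{(\lam{\var}{\mstep})\,\msteptwo}{(\lam{\var}{\mstep_1})\,\msteptwo_1}{(\lam{\var}{\mstep_2})\,\msteptwo_2}$ with $\judgSplit{\mstep}{\mstep_1}{\mstep_2}$ and $\judgSplit{\msteptwo}{\msteptwo_1}{\msteptwo_2}$, and contracting it yields the candidate components $\mstep_1\subm{\var}{\msteptwo_1}$ and $\mstep_2\subm{\var}{\msteptwo_2}$; that these again form a splitting follows from a \emph{substitution lemma for splitting} (if $\judgSplit{\mstep}{\mstep_1}{\mstep_2}$ and $\judgSplit{\msteptwo}{\msteptwo_1}{\msteptwo_2}$ then $\judgSplit{\mstep\subm{\var}{\msteptwo}}{\mstep_1\subm{\var}{\msteptwo_1}}{\mstep_2\subm{\var}{\msteptwo_2}}$), proved by induction on the derivation of $\judgSplit{\mstep}{\mstep_1}{\mstep_2}$, the base case \indrulename{SVar} being where the two substitutions meet. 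For an \flatRule{EtaM} redex $\lam{\var}{\mstep\,\var}$ the components are $\lam{\var}{\mstep_1\,\var}$ and $\lam{\var}{\mstep_2\,\var}$, and since splitting never introduces free variables ($\fv{\mstep_i}\subseteq\fv{\mstep}$, because rule symbols are only ever replaced by the closed terms $\refl{\rsrc{\rulewit}}$ and $\refl{\rtgt{\rulewit}}$), the side condition $\var\notin\fv{\mstep_i}$ persists and \flatRule{EtaM} contracts both components to $\mstep_1$ and $\mstep_2$.

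Iterating this lemma along $\mstep \tofs \flatten{\mstep}$ produces a splitting $\judgSplit{\flatten{\mstep}}{\mstep'_1}{\mstep'_2}$ with $\mstep_1 \tofs \mstep'_1$ and $\mstep_2 \tofs \mstep'_2$. Now $\flatten{\mstep}$ is by construction in $\tof$-normal form, so the \flateqRule{Perm} axiom applies directly and gives $\flatten{\mstep} \flateq \flatten{\mstep'_1}\seq\flatten{\mstep'_2}$. Finally, confluence and strong normalization of $\flatteningSystem$ (\rprop{flat_confluent} and \rprop{flat_sn}) guarantee uniqueness of $\tof$-normal forms, so from $\mstep_1 \tofs \mstep'_1$ and $\mstep_2 \tofs \mstep'_2$ we obtain $\flatten{\mstep'_1} = \flatten{\mstep_1}$ and $\flatten{\mstep'_2} = \flatten{\mstep_2}$; substituting these literal equalities yields $\flatten{\mstep} \flateq \flatten{\mstep_1}\seq\flatten{\mstep_2}$, as required.

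I expect the main obstacle to be the substitution lemma for splitting invoked in the \flatRule{BetaM} case, since it must reconcile the two ways a rule symbol may be split (\indrulename{SRuleL} versus \indrulename{SRuleR}) with the way the substituted multistep $\msteptwo$ is itself split; tracking which side each occurrence of a substituted variable lands on is where the care lies. The \flatRule{EtaM} case is comparatively routine once free-variable containment has been recorded.
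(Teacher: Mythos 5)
Your proposal is correct and takes essentially the same route as the paper: transport the splitting along the reduction $\mstep \tofs \flatten{\mstep}$ (the paper asserts exactly this fact and defers its proof to an induction on the length of that reduction), apply the \flateqRule{Perm} axiom to the $\tof$-normal form, and then use strong normalization together with confluence of flattening to identify $\flatten{(\mstep'_i)}$ with $\flatten{\mstep_i}$. The only difference is one of detail: you explicitly prove the lifting fact via a one-step lemma, a substitution lemma for splitting (for \flatRule{BetaM}), and the free-variable containment argument (for \flatRule{EtaM}), all of which the paper leaves implicit.
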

\begin{proof}
Suppose  $\judgSplit{\mstep}{\mstep_1}{\mstep_2}$. 
Then 
we have that
$\judgSplit{\flatten{\mstep}}{\mstep'_1}{\mstep'_2}$
such that $\mstep_1 \tofs \mstep'_1$
and $\mstep_2 \tofs \mstep'_2$. This fact may be proved  by induction on the length of a reduction
to normal form $\mstep \tofs \flatten{\mstep}$.
By the \flateqRule{Perm} rule,
$\flatten{\mstep} \flateq \flatten{(\mstep'_1)} \seq \flatten{(\mstep'_2)}$.
Moreover, by strong normalization~(\rprop{flat_sn})
and confluence~(\rprop{flat_confluent}) of flattening,
$\flatten{\mstep_1} = \flatten{(\mstep'_1)}$
and
$\flatten{\mstep_2} = \flatten{(\mstep'_2)}$,
which means that
$\flatten{\mstep} \flateq \flatten{\mstep_1} \seq \flatten{\mstep_2}$.
\end{proof}

\begin{defi}[Splitting, up to flattening]
We write $\judgSplitF{\mstep_1}{\mstep_2}{\mstep_3}$
if there exist multisteps $\mstep'_1,\mstep'_2,\mstep'_3$
such that $\judgSplit{\mstep'_1}{\mstep'_2}{\mstep'_3}$,
and moreover $\flatten{(\mstep'_1)} = \mstep_1$
and $\flatten{(\mstep'_2)} = \mstep_2$
and $\flatten{(\mstep'_3)} = \mstep_3$.
\end{defi}

\begin{lem}[Projection of a splitting over a multistep]
\llem{projection_splitting_over_multistep}
If $\judgSplit{\mstep_1}{\mstep_2}{\mstep_3}$
and $\msteptwo$ is an arbitrary multistep coinitial to $\mstep_1$,
then
$\judgSplitF{
   (\mstep_1/\msteptwo)
 }{
   (\mstep_2/\msteptwo)
 }{
   (\mstep_3/(\msteptwo/\mstep_2))
 }$.
\end{lem}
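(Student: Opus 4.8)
The plan is to argue by induction on the derivation of $\judgSplit{\mstep_1}{\mstep_2}{\mstep_3}$. Before starting I would normalize the data: by \rlem{canonical_splitting} there are $\mstep_1',\mstep_2',\mstep_3'$ in $\tofnoeta,\etalong$-normal form with $\judgSplit{\mstep_1'}{\mstep_2'}{\mstep_3'}$ and $\flatten{\mstep_i}=\flatten{\mstep_i'}$ for $i=1,2,3$. Since projection depends only on flat normal forms (\rprop{body:properties_of_projection}(1), $\mstep/\msteptwo=\flatten{\mstep}/\flatten{\msteptwo}$), and a splitting-up-to-flattening judgment is determined entirely by the flattenings of its three arguments, replacing the $\mstep_i$ by the $\mstep_i'$ alters neither the goal nor the three projections occurring in it. Hence I may assume the splitting derivation is over normal multisteps, which lets the projections be read off structurally via the weak-projection judgment $\judgProj{\cdot}{\cdot}{\cdot}$ and the commutation laws of \rprop{body:properties_of_projection}(2), after aligning $\mstep_1$ and $\msteptwo$ into compatible representatives through \rlem{compatibilization_of_coinitial_multisteps} and \rprop{body:existence_uniqueness_projection}.

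The congruence cases \indrulename{SAbs} and \indrulename{SApp} are then mechanical. In \indrulename{SApp} we have $\mstep_1=\mstep_1^a\,\mstep_1^b$, $\mstep_2=\mstep_2^a\,\mstep_2^b$, $\mstep_3=\mstep_3^a\,\mstep_3^b$ coming from $\judgSplit{\mstep_1^a}{\mstep_2^a}{\mstep_3^a}$ and $\judgSplit{\mstep_1^b}{\mstep_2^b}{\mstep_3^b}$, and a coinitial $\msteptwo$ which (up to flattening and $\etalong$-expansion) decomposes as $\msteptwo^a\,\msteptwo^b$. Using $(\cdot\,\cdot)/(\cdot\,\cdot)=\flatten{((\cdot/\cdot)(\cdot/\cdot))}$ and invariance under flattening, all four projections split componentwise; crucially the middle multistep $\mstep_2=\mstep_2^a\,\mstep_2^b$ also decomposes, so that $\msteptwo/\mstep_2=\flatten{((\msteptwo^a/\mstep_2^a)(\msteptwo^b/\mstep_2^b))}$ and $\mstep_3/(\msteptwo/\mstep_2)=\flatten{((\mstep_3^a/(\msteptwo^a/\mstep_2^a))(\mstep_3^b/(\msteptwo^b/\mstep_2^b)))}$ line up exactly with the two instances of the \ih. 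Reassembling the $\Leftrightarrow$-witnesses supplied by the \ih under the \indrulename{SApp} rule, and using that flattening commutes with application, yields the required judgment. The \indrulename{SAbs} case is identical, using the abstraction commutation law and closure of $\Leftrightarrow$ under \indrulename{SAbs}.

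The leaf cases \indrulename{SVar}, \indrulename{SCon}, \indrulename{SRuleL}, \indrulename{SRuleR} are where the residual identities of \rprop{body:cube_lemma} do the work. For \indrulename{SRuleL}, where $\mstep_1=\mstep_2=\rulewit$ and $\mstep_3=\refl{\rtgt{\rulewit}}$, the term $\refl{\rtgt{\rulewit}}$ flattens to the source $\fsrcb{(\msteptwo/\rulewit)}$ of the residual $\msteptwo/\rulewit$, so $\mstep_3/(\msteptwo/\mstep_2)=\fsrcb{(\msteptwo/\rulewit)}/(\msteptwo/\rulewit)=\ftgtb{(\msteptwo/\rulewit)}$. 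Coincidence of the residual targets, $\ftgtb{(\rulewit/\msteptwo)}=\ftgtb{(\msteptwo/\rulewit)}$ — which I would derive from the Cube Lemma together with $\mstep/\mstep=\ftgtb{\mstep}$ and $\fsrcb{\mstep}/\mstep=\ftgtb{\mstep}$ — lets me rewrite this as $\ftgtb{(\rulewit/\msteptwo)}$, and I conclude with the trivial splitting $\judgSplit{\mstep}{\mstep}{\refl{\rtgt{\mstep}}}$ at $\mstep=\rulewit/\msteptwo$. The case \indrulename{SRuleR} is symmetric, using $\mstep/\fsrcb{\mstep}=\flatten{\mstep}$ and $\judgSplit{\mstep}{\refl{\rsrc{\mstep}}}{\mstep}$, while \indrulename{SVar} and \indrulename{SCon} collapse all three projections to $\ftgtb{\msteptwo}$, split by $\judgSplit{\tm}{\tm}{\tm}$ for the pure term $\tm$. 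The two trivial splittings and $\judgSplit{\tm}{\tm}{\tm}$ are each established by a routine side induction on the multistep.

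The main obstacle is the alignment required to compute projections structurally in the rule and application cases: when the head of $\mstep_1$ is a rule symbol, a coinitial $\msteptwo$ may either contract the same redex or merely match its pattern, so the claimed componentwise decomposition of the residuals must be justified by \rlem{compatibilization_of_coinitial_multisteps} together with the matching property used in \rlem{coherence_of_projection}, which rests on orthogonality and the rule-pattern condition. Once that alignment is secured, the residual-target coincidence is genuinely derivable from \rprop{body:cube_lemma} rather than merely plausible, and the remaining work is bookkeeping with flattening under confluence and strong normalization (\rprop{body:flat_sn}).
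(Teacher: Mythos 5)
Your leaf cases are sound: deriving the coincidence of residual targets $\ftgtb{(\rulewit/\msteptwo)} = \ftgtb{(\msteptwo/\rulewit)}$ from the Cube Lemma and the residual identities of \rprop{body:properties_of_projection}, and closing with the trivial splittings, is a correct (and rather elegant) abstract packaging of the computations the paper performs concretely at the head. The genuine gap is in the \indrulename{SApp} case, and it is \emph{not} the alignment issue you flag in your closing paragraph (head of $\mstep_1$ a rule symbol); it is the dual configuration, which your induction cannot survive. The decomposition you assert --- that a coinitial $\msteptwo$ decomposes, up to flattening and $\etalong$-expansion, as $\msteptwo^a\,\msteptwo^b$ with $\msteptwo^a$ coinitial to $\mstep_1^a$ and $\msteptwo^b$ coinitial to $\mstep_1^b$ --- is false whenever $\msteptwo$ contracts a redex whose left-hand side pattern properly spans the application node being processed. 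Concretely, take a rule $\rulewitthree$ with left-hand side $\cons\,(\constwo\,\var)\,\vartwo$, let $\mstep_1 = \cons\,(\constwo\,\mstep_{11})\,\mstep_{12}$ with its splitting derived by \indrulename{SApp} and \indrulename{SCon}, and let $\msteptwo = \rulewitthree\,\msteptwo_1\,\msteptwo_2$ be coinitial to it. At the outermost application your decomposition still exists ($\msteptwo^a = \rulewitthree\,\msteptwo_1$, $\msteptwo^b = \msteptwo_2$), but the recursive call must then handle $\mstep_1^a = \cons\,(\constwo\,\mstep_{11})$ against $\rulewitthree\,\msteptwo_1$, and there the \indrulename{SApp} case fails outright: any multistep coinitial to $\cons$ flattens to $\cons$ (a flat multistep whose source is $\beta\eta$-equal to a constant can contain no rule symbols), so every candidate decomposition $\msteptwo^{aa}\,\msteptwo^{ab}$ has $\flatten{(\msteptwo^{aa}\,\msteptwo^{ab})}$ headed by $\cons$, whereas $\flatten{(\rulewitthree\,\msteptwo_1)} = \rulewitthree\,\flatten{\msteptwo_1}$ is headed by $\rulewitthree$. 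Your induction thus generates a (true) instance of the lemma that your own method cannot prove.

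The root cause is that in this configuration compatibilization (\rlem{compatibilization_of_coinitial_multisteps}) and the matching property (\rlem{coinitial_cons_rule}) realign $\mstep_1$, not $\msteptwo$: the compatible representative of $\mstep_1$ has head $\rsrc{\rulewitthree}$ applied to the multisteps sitting at the pattern's \emph{variable} positions, a spine that cuts across the application nodes of $\mstep_1$, and hence across the \indrulename{SApp} nodes of the derivation you are inducting on. This is precisely why the paper does not induct on the splitting derivation node by node: it inducts on the number of applications of $\mstep_1$ in $\tofnoeta,\etalong$-normal form, processes a whole spine $\lam{\vec{\var}}{a\,\mstep_{11}\hdots\mstep_{1n}}$ at once, performs a case analysis on the heads of \emph{both} $\mstep_1$ and $\msteptwo$, and only then applies the inductive hypothesis to the arguments of the common, pattern-aligned spine. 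To repair your proof you would either have to adopt that induction, or add an auxiliary lemma stating that a splitting of a pattern instance restricts to splittings at the pattern's variable positions while splitting the rigid skeleton trivially --- which is essentially the content of the constant-head case that the paper's proof leaves implicit.
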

\begin{proof}
Suppose that
$\judgSplit{\mstep_1}{\mstep_2}{\mstep_3}$.
Recall from~\rprop{body:properties_of_projection}
that $\mstep_1/\msteptwo$ does not depend on the representative
of $\mstep_1$ (up to flattening),
and similarly for $\mstep_2/\msteptwo$ and $\mstep_3/(\msteptwo/\mstep_2)$.
Hence
by \rlem{canonical_splitting} we may assume without loss of generality
that $\mstep_1$ is in $\tofnoeta,\etalong$-normal form.
We may also assume without loss of generality that $\msteptwo$ is in
$\tofnoeta,\etalong$-normal form. 
We show that
  $\judgSplitF{
     (\mstep_1/\msteptwo)
   }{
     (\mstep_2/\msteptwo)
   }{
     (\mstep_3/(\msteptwo/\mstep_2))
   }$
holds,
by induction on the number of applications in $\mstep_1$
and by case analysis on the head of $\mstep_1$, using the fact that it is
in $\tofnoeta,\etalong$-normal form. We omit the cases where $\mstep_1$ headed
by a variable or a constant and present the one where it is headed by a rule
symbol.

  Then $\mstep_1 = \lam{\vec{\var}}{\rulewit\,\mstep_{11}\hdots\mstep_{1n}}$.
  Note that $\judgSplit{\mstep_1}{\mstep_2}{\mstep_3}$
  must be derived by a number of instances of the \indrulename{SAbs} rule,
  followed by $n$ instances of \indrulename{SApp} rule,
  followed by an instance
  of either \indrulename{SRuleL} or \indrulename{SRuleL} at the head.
  We consider two subcases:
  \begin{enumerate}
  \item \indrulename{SRuleL}:
    Then
    $\mstep_2 = \lam{\vec{\var}}{\rulewit\,\mstep_{21}\hdots\mstep_{2n}}$
    and
    $\mstep_3 = \lam{\vec{\var}}{\rtgt{\rulewit}\,\mstep_{31}\hdots\mstep_{3n}}$
    where $\judgSplit{\mstep_{1i}}{\mstep_{2i}}{\mstep_{3i}}$
    for all $1 \leq i \leq n$.
    Since $\mstep_1$ and $\msteptwo$ are coinitial,
    the head of $\msteptwo$ must be a constant or a rule symbol.
    We consider two further subcases:
    \begin{enumerate}
    \item {\bf $\msteptwo$ headed by a constant.}
      \label{projection_splitting_over_multistep__case_ruleL_const}
      By \rlem{coinitial_cons_rule}
      we have that
      $\msteptwo =
       \flattennoeta{(\lam{\vec{\var}}{
         \altsrc{\rulewit}\,\msteptwo_1\hdots\msteptwo_n
       })}$,
      where $\mstep_{1i}$ and $\msteptwo_i$ are coinitial
      for all $1 \leq i \leq n$.
      By~\rprop{body:properties_of_projection} we have that:
      \[
      \begin{array}{rll}
        \mstep_1/\msteptwo
      & = &
        \flatten{(\lam{\var}{
          (\rulewit/\rsrc{\rulewit})
            (\mstep_{11}/\msteptwo_1)
            \hdots
            (\mstep_{1n}/\msteptwo_n)
        })}
      \\
        \mstep_2/\msteptwo
      & = &
        \flatten{(\lam{\var}{
          (\rulewit/\rsrc{\rulewit})
            (\mstep_{21}/\msteptwo_1)
            \hdots
            (\mstep_{2n}/\msteptwo_n)
        })}
      \\
        \mstep_3/(\msteptwo/\mstep_2)
      & = &
        \flatten{(\lam{\var}{
          (\rtgt{\rulewit}/(\rsrc{\rulewit}/\rulewit))
            (\mstep_{31}/(\msteptwo_1/\mstep_{21}))
            \hdots
            (\mstep_{3n}/(\msteptwo_n/\mstep_{2n}))
        })}
      \end{array}
      \]
      By \ih we know that
      $\judgSplitF{
         (\mstep_{1i}/\msteptwo_i)
       }{
         (\mstep_{2i}/\msteptwo_i)
       }{
         (\mstep_{3i}/(\msteptwo_i/\mstep_{2i}))
       }$ for all $1 \leq i \leq n$.
      Moreover,
      note that
      $\rulewit/\fsrc{\rulewit} = \rulewit$
      and
      $\ftgt{\rulewit}/(\fsrc{\rulewit}/\rulewit)
      = \ftgt{\rulewit}/\ftgt{\rulewit}
      = \ftgt{\rulewit}$
      by~\rprop{body:properties_of_projection},
      and that $\judgSplit{\rulewit}{\rulewit}{\rtgt{\rulewit}}$.
      Hence:
      \[
        \judgSplitFTable{
          \lam{\vec{\var}}{
            \rulewit\,
              (\mstep_{11}/\msteptwo_1)
              \hdots
              (\mstep_{1n}/\msteptwo_n)
          }
        }{
          \lam{\vec{\var}}{
            \rulewit\,
              (\mstep_{21}/\msteptwo_1)
              \hdots
              (\mstep_{2n}/\msteptwo_n)
          }
        }{
          \lam{\vec{\var}}{
            \rtgt{\rulewit}\,
              (\mstep_{31}/(\msteptwo_1/\mstep_{21}))
              \hdots
              (\mstep_{3n}/(\msteptwo_n/\mstep_{2n}))
          }
        }
      \]
      This in turn implies that
      $\judgSplitF{
         \mstep_1/\msteptwo
       }{
         \mstep_2/\msteptwo
       }{
         \mstep_3/(\msteptwo/\mstep_1)
       }$.
    \item {\bf $\msteptwo$ headed by a rule symbol.}
      Then by orthogonality
      $\msteptwo = \lam{\vec{\var}}{\rulewit\,\msteptwo_1\hdots\msteptwo_n}$.
      By~\rprop{body:properties_of_projection} we have that:
      \[
      \begin{array}{rll}
        \mstep_1/\msteptwo
      & = &
        \flatten{(\lam{\var}{
          (\rulewit/\rulewit)
            (\mstep_{11}/\msteptwo_1)
            \hdots
            (\mstep_{1n}/\msteptwo_n)
        })}
      \\
        \mstep_2/\msteptwo
      & = &
        \flatten{(\lam{\var}{
          (\rulewit/\rulewit)
            (\mstep_{21}/\msteptwo_1)
            \hdots
            (\mstep_{2n}/\msteptwo_n)
        })}
      \\
        \mstep_3/(\msteptwo/\mstep_2)
      & = &
        \flatten{(\lam{\var}{
          (\rtgt{\rulewit}/(\rulewit/\rulewit))
            (\mstep_{31}/(\msteptwo_1/\mstep_{21}))
            \hdots
            (\mstep_{3n}/(\msteptwo_n/\mstep_{2n}))
        })}
      \end{array}
      \]
      Then the proof proceeds similarly
      as for case~\ref{projection_splitting_over_multistep__case_ruleL_const},
      when the head of $\msteptwo$ is a constant,
      noting that $\rulewit/\rulewit = \ftgt{\rulewit}$
      and that
      $\ftgt{\rulewit}/(\rulewit/\rulewit)
      = \ftgt{\rulewit}/\ftgt{\rulewit}
      = \ftgt{\rulewit}$
      by~\rprop{body:properties_of_projection},
      and that $\judgSplit{\rtgt{\rulewit}}{\rtgt{\rulewit}}{\rtgt{\rulewit}}$.
    \end{enumerate}
  \item \indrulename{SRuleR}:
    Then
    $\mstep_2 = \lam{\vec{\var}}{\rsrc{\rulewit}\,\mstep_{21}\hdots\mstep_{2n}}$
    and
    $\mstep_3 = \lam{\vec{\var}}{\rulewit\,\mstep_{31}\hdots\mstep_{3n}}$
    where $\judgSplit{\mstep_{1i}}{\mstep_{2i}}{\mstep_{3i}}$
    for all $1 \leq i \leq n$.
    Since $\mstep_1$ and $\msteptwo$ are coinitial,
    the head of $\msteptwo$ must be a constant or a rule symbol.
    We consider two further subcases:
    \begin{enumerate}
    \item {\bf $\msteptwo$ headed by a constant.}
      By \rlem{coinitial_cons_rule}
      we have that
      $\msteptwo =
       \flattennoeta{(\lam{\vec{\var}}{
         \altsrc{\rulewit}\,\msteptwo_1\hdots\msteptwo_n
       })}$,
      where $\mstep_{1i}$ and $\msteptwo_i$ are coinitial
      for all $1 \leq i \leq n$. Hence:
      \[
      \begin{array}{rll}
        \mstep_1/\msteptwo
      & = &
        \flatten{(\lam{\var}{
          (\rulewit/\rsrc{\rulewit})\,
            (\mstep_{11}/\msteptwo_1)
            \hdots
            (\mstep_{1n}/\msteptwo_n)
        })}
      \\
        \mstep_2/\msteptwo
      & = &
        \flatten{(\lam{\var}{
          (\rsrc{\rulewit}/\rsrc{\rulewit})\,
            (\mstep_{21}/\msteptwo_1)
            \hdots
            (\mstep_{2n}/\msteptwo_n)
        })}
      \\
        \mstep_3/(\msteptwo/\mstep_2)
      & = &
        \flatten{(\lam{\var}{
          (\rulewit/(\rsrc{\rulewit}/\rsrc{\rulewit}))\,
            (\mstep_{31}/(\msteptwo_1/\mstep_{21}))
            \hdots
            (\mstep_{3n}/(\msteptwo_n/\mstep_{2n}))
        })}
      \end{array}
      \]
      Then the proof proceeds similarly
      as for case~\ref{projection_splitting_over_multistep__case_ruleL_const},
      noting that $\rulewit/\fsrc{\rulewit} = \rulewit$,
      that $\fsrc{\rulewit}/\fsrc{\rulewit} = \fsrc{\rulewit}$,
      and that
      $\rulewit/(\fsrc{\rulewit}/\fsrc{\rulewit})
      = \rulewit/\fsrc{\rulewit}
      = \rulewit$
      by~\rprop{body:properties_of_projection},
      and moreover that
      $\judgSplit{\rulewit}{\rsrc{\rulewit}}{\rulewit}$.
    \item {\bf $\msteptwo$ headed by a rule symbol.}
      Then by orthogonality
      $\msteptwo = \lam{\vec{\var}}{\rulewit\,\msteptwo_1\hdots\msteptwo_n}$.
      Hence:
      \[
      \begin{array}{rll}
        \mstep_1/\msteptwo
      & = &
        \flatten{(\lam{\var}{
          (\rulewit/\rulewit)\,
            (\mstep_{11}/\msteptwo_1)
            \hdots
            (\mstep_{1n}/\msteptwo_n)
        })}
      \\
        \mstep_2/\msteptwo
      & = &
        \flatten{(\lam{\var}{
          (\rsrc{\rulewit}/\rulewit)\,
            (\mstep_{21}/\msteptwo_1)
            \hdots
            (\mstep_{2n}/\msteptwo_n)
        })}
      \\
        \mstep_3/(\msteptwo/\mstep_2)
      & = &
        \flatten{(\lam{\var}{
          (\rulewit/(\rulewit/\rsrc{\rulewit}))\,
            (\mstep_{31}/(\msteptwo_1/\mstep_{21}))
            \hdots
            (\mstep_{3n}/(\msteptwo_n/\mstep_{2n}))
        })}
      \end{array}
      \]
      Then the proof proceeds similarly
      as for case~\ref{projection_splitting_over_multistep__case_ruleL_const},
      noting that $\rulewit/\rulewit = \ftgt{\rulewit}$,
      that $\fsrc{\rulewit}/\rulewit = \ftgt{\rulewit}$,
      and that
      $\rulewit/(\rulewit/\fsrc{\rulewit})
      = \rulewit/\rulewit
      = \ftgt{\rulewit}$
      by~\rprop{body:properties_of_projection},
      and moreover that
      $\judgSplit{\rtgt{\rulewit}}{\rtgt{\rulewit}}{\rtgt{\rulewit}}$.
    \end{enumerate}
  \end{enumerate}
\end{proof}

\begin{lem}[Projection of splittings and rewrites]
\llem{projection_of_splittings_and_rewrites}
Suppose $\judgSplit{\mstep_1}{\mstep_2}{\mstep_3}$. Let $\redseq$ be an arbitrary flat rewrite coinitial to $\mstep_1$ and let $\msteptwo$ be an arbitrary multistep coinitial to $\mstep_1$. Then:
\begin{enumerate}
  \item 
$\judgSplitF{
   (\mstep_1/\redseq)
 }{
   (\mstep_2/\redseq)
 }{
   (\mstep_3/(\redseq/\mstep_2))
 }$.
\item 
$\msteptwo/\mstep_1 = (\msteptwo/\mstep_2)/\mstep_3$
  \item
$\redseq/\mstep_1 = (\redseq/\mstep_2)/\mstep_3$.

 \end{enumerate}
\end{lem}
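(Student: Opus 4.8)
The three items of \rlem{projection_of_splittings_and_rewrites} should be proved in the order (1), (2), (3), since each builds on the previous. The whole statement is a lifting of \rprop{generalized_flateq_perm} and \rlem{projection_splitting_over_multistep} from the single-multistep case to the case of an arbitrary flat rewrite $\redseq$. I would organize the proof around the observation that a flat rewrite is (up to \flateqRule{Assoc}) a composition $\mstep'_1\seq\hdots\seq\mstep'_k$ of flat multisteps, so the natural induction is on the structure of $\redseq$ (equivalently, on the number of multisteps composing it), using the already-established base case $\redseq = \msteptwo$ a single multistep. The projection recursion of \rdef{projection_for_rewrites} and the key sequence equation of \rlem{projrr_sequence} are the engine that drives the inductive step.

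**Items (1) and (2): lifting over a composite left argument.**

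For item (2), the base case $\redseq = \msteptwo$ a multistep is exactly \rlem{projection_splitting_over_multistep}, which gives $\judgSplitF{(\mstep_1/\msteptwo)}{(\mstep_2/\msteptwo)}{(\mstep_3/(\msteptwo/\mstep_2))}$; from this, taking $\msteptwo/\mstep_1 = (\msteptwo/\mstep_2)/\mstep_3$ follows because a splitting $\judgSplitF{\mu}{\mu_1}{\mu_2}$ forces $\mu/\nu$ to factor as the projections of $\mu_1$ and $\mu_2$ — more directly, item (2) is the ``flat-permutation-equivalent endpoints'' reading of item (1), and I expect to read it off as a corollary. For the inductive step of (2), I would write $\redseq = \redseq_a \seq \redseq_b$; but note that here $\msteptwo$ is a single multistep being projected over $\mstep_1,\mstep_2,\mstep_3$, so (2) really only needs the multistep-versus-split lemma and does not itself recurse on $\redseq$. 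So (2) is essentially \rlem{projection_splitting_over_multistep} restated, and I would dispatch it quickly.

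**Item (3): the main inductive argument.**

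Item (3), $\redseq/\mstep_1 = (\redseq/\mstep_2)/\mstep_3$, is where the real work lies, and I would prove it by induction on the structure of $\redseq$. The base case $\redseq = \msteptwo$ is exactly item (2). For the inductive step $\redseq = \redseq_a\seq\redseq_b$, I compute using \rlem{projrr_sequence}:
\[
  (\redseq_a\seq\redseq_b)/\mstep_1
  =
  (\redseq_a/\mstep_1)\seq(\redseq_b/(\mstep_1/\redseq_a)).
\]
By the induction hypothesis on $\redseq_a$, the first factor is $(\redseq_a/\mstep_2)/\mstep_3$. The second factor requires rewriting $\mstep_1/\redseq_a$; here I expect to need the previously-established projection equation $\mstep_1/\redseq_a = (\mstep_1/\mstep_2\ldots)$ — more precisely, I must relate $\mstep_1/\redseq_a$ to the projections over $\mstep_2$ and $\mstep_3$, invoking item (1)/(2) to track how the splitting $\judgSplit{\mstep_1}{\mstep_2}{\mstep_3}$ interacts with projection over $\redseq_a$, and then apply the induction hypothesis to $\redseq_b$. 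The heart of the matching computation will be a Cube-Lemma-style manipulation (\rprop{body:cube_lemma}) to commute the projections $/\mstep_2$, $/\mstep_3$, and $/\redseq_a$ into the required order.

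**The main obstacle.**

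The principal difficulty will be the bookkeeping in the inductive step of (3): after applying \rlem{projrr_sequence}, the inner argument of the second factor is $\mstep_1/\redseq_a$, whose split-structure is governed by a residual of the original splitting $\judgSplit{\mstep_1}{\mstep_2}{\mstep_3}$ along $\redseq_a$. Making this precise requires pushing item (1) through the composition — i.e.\ knowing that $\judgSplitF{(\mstep_1/\redseq_a)}{(\mstep_2/\redseq_a)}{(\mstep_3/(\redseq_a/\mstep_2))}$ — and then feeding this residual splitting into the induction hypothesis for $\redseq_b$. Establishing item (1) for composite $\redseq$ (by its own induction on $\redseq$, with base case \rlem{projection_splitting_over_multistep}) is therefore a prerequisite I would complete first, and the repeated use of the residual-system laws of \rprop{body:cube_lemma} to realign the nested projections is the step most likely to demand careful, fully-explicit calculation rather than a one-line appeal.
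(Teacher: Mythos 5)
Your treatment of items (1) and (3) matches the paper's proof: item (1) by induction on $\redseq$ with \rlem{projection_splitting_over_multistep} as the base case, and item (3) by induction on $\redseq$, using item (2) as the base case and, in the composition case $\redseq = \redseq_a\seq\redseq_b$, combining \rlem{projrr_sequence}, the residual splitting $\judgSplitF{(\mstep_1/\redseq_a)}{(\mstep_2/\redseq_a)}{(\mstep_3/(\redseq_a/\mstep_2))}$ supplied by item (1), and the induction hypothesis on $\redseq_b$. That part of your plan is sound.

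The genuine gap is item (2). You assert that it ``is essentially \rlem{projection_splitting_over_multistep} restated'' and can be dispatched quickly, or alternatively read off from item (1) as a corollary. Neither is true, because the direction of projection is reversed: in \rlem{projection_splitting_over_multistep} the splitting components $\mstep_1,\mstep_2,\mstep_3$ appear on the \emph{left} of the projections (they are projected over $\msteptwo$), whereas in item (2) they appear on the \emph{right} ($\msteptwo$ is projected over them). Knowing how $\mstep_1/\msteptwo$ splits says nothing directly about $\msteptwo/\mstep_1$, and the paper accordingly proves item (2) by a separate, substantial induction: after using \rprop{body:properties_of_projection} and \rlem{canonical_splitting} to put $\mstep_1$ and $\msteptwo$ in $\tofnoeta,\etalong$-normal form, it inducts on the number of applications in $\mstep_1$ with case analysis on its head, the constant and rule-symbol cases requiring \rlem{coinitial_cons_rule} and orthogonality of the HRS. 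The tempting shortcut --- note that $\flatten{\mstep_1} \flateq \flatten{\mstep_2}\seq\flatten{\mstep_3}$ (\rprop{generalized_flateq_perm}) and that $\msteptwo/(\mstep_2\seq\mstep_3) = (\msteptwo/\mstep_2)/\mstep_3$ by \rdef{projection_for_rewrites}, so item (2) would follow if projection were invariant under $\flateq$ in its right argument --- is circular: that invariance is exactly \rprop{congruence_flateq_projection}(1), which is proved \emph{after} this lemma and whose \flateqRule{Perm} case appeals to \rlem{projection_of_splittings_and_rewrites}(3), hence to item (2) itself. Without an independent proof of item (2), the base case of your induction for item (3) is unsupported, and the lemma as a whole does not go through.
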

\begin{proof}
The proof of the first item is by induction on $\redseq$. The key case is when $\redseq$ is a multistep (the base case), in which we resort to  \rlem{projection_splitting_over_multistep}. For the second item, suppose that $\judgSplit{\mstep_1}{\mstep_2}{\mstep_3}$.
Recall from~\rprop{body:properties_of_projection}
that $\msteptwo/\mstep_1$ does not depend on the representative
of $\mstep_1$ (up to flattening),
and similarly for $(\mstep/\mstep_2)/\mstep_3$.
Hence
by \rlem{canonical_splitting} we may assume without loss of generality
that $\mstep_1$ is in $\tofnoeta,\etalong$-normal form.
We may also assume without loss of generality that $\msteptwo$ is in
$\tofnoeta,\etalong$-normal form. 
We proceed by induction on the number of applications in $\mstep_1$
and by case analysis on the head of $\mstep_1$, using the fact that it is
in $\tofnoeta,\etalong$-normal form. The case where $\mstep_1$ headed by a variable relies on~\rprop{body:properties_of_projection}. The case where it is headed by a 
  constant additionally relies on \rlem{coinitial_cons_rule} and the \ih.  The case where it is headed by a rule symbol, in addition it makes use of orthogonality of the HRS.

For the third item, we proceed by induction on $\redseq$.
\begin{enumerate}
\item {\bf Multistep, $\redseq = \msteptwo$.}
  Then this is an immediate consequence of
  \rlem{projection_of_splittings_and_rewrites}(2) 
\item {\bf Composition, $\redseq = \redseq_1\seq\redseq_2$.}
  Note by \rlem{projection_of_splittings_and_rewrites}(1) 
  that
  $\judgSplitF{
     (\mstep_1/\redseq_1)
   }{
     (\mstep_2/\redseq_1)
   }{
     (\mstep_3/(\redseq_1/\mstep_2))
   }$
  holds, that is,
  there exist multisteps such that
  $\judgSplit{\mstepB_1}{\mstepB_2}{\mstepB_3}$
  and such that
  $\flatten{\mstepB_1} = \mstep_1/\redseq_1$
  and
  $\flatten{\mstepB_2} = \mstep_2/\redseq_1$
  and
  $\flatten{\mstepB_3} = \mstep_3/(\redseq_1/\mstep_2)$.
  Then:
  \[
    \begin{array}{rlll}
      (\redseq_1\seq\redseq_2)/\mstep_1
    & = &
      (\redseq_1/\mstep_1)\seq(\redseq_2/(\mstep_1/\redseq_1))
    \\
    & = &
      ((\redseq_1/\mstep_2)/\mstep_3)\seq(\redseq_2/(\mstep_1/\redseq_1))
      & \text{by \ih}
    \\
    & = &
      ((\redseq_1/\mstep_2)/\mstep_3)\seq(\redseq_2/\mstepB_1)
      & \text{by~\rprop{body:properties_of_projection}}
    \\
    & = &
      ((\redseq_1/\mstep_2)/\mstep_3)\seq((\redseq_2/\mstepB_2)/\mstepB_3)
      & \text{by \ih}
    \\
    & = &
      ((\redseq_1/\mstep_2)/\mstep_3)\seq
      ((\redseq_2/(\mstep_2/\redseq_1))/(\mstep_3/(\redseq_1/\mstep_2)))
      & \text{by~\rprop{body:properties_of_projection}}
    \\
    & = &
      ((\redseq_1/\mstep_2)\seq(\redseq_2/(\mstep_2/\redseq_1)))
      /\mstep_3
    \\
    & = &
      ((\redseq_1\seq\redseq_2)/\mstep_2)/\mstep_3
    \end{array}
  \]
\end{enumerate}
\end{proof}

Given the above, From this point on we can safely overload $\redseq/\redseqtwo$
to stand for
$\redseq \projmr \redseqtwo$ (if $\redseq$ is a multistep),
$\redseq \projrm \redseqtwo$ (if $\redseqtwo$ is a multistep),
or $\redseq \projrr \redseqtwo$ (in the general case).

\begin{prop}[Congruence of $\flateq$ with respect to projection]
\lprop{congruence_flateq_projection}
Let $\redseq \flateq \redseqtwo$,
and let $\redseqthree$ be an arbitrary flat rewrite
coinitial to $\redseq$.
Then:
\begin{enumerate}
\item $\redseqthree/\redseq = \redseqthree/\redseqtwo$
\item $\redseq/\redseqthree \flateq \redseqtwo/\redseqthree$
\end{enumerate}
\end{prop}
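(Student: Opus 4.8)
The plan is to prove both items simultaneously by induction on the derivation of $\redseq \flateq \redseqtwo$. They must be handled together because, in the congruence cases, establishing (2) for the conclusion already needs (1) of the induction hypothesis on the subderivation. The closure cases under reflexivity, symmetry, and transitivity are routine: for transitivity through an intermediate $\redseq'$, I would chain the two instances of (1) (resp.\ of (2)) coming from the induction hypothesis, using that flat permutation equivalence preserves the source modulo $\termeq$, so that $\redseqthree$ remains coinitial to $\redseq'$ and $\redseqtwo$ and all projections are defined.

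The root case $\flateqRule{Assoc}$ and the two composition-context closure cases reduce to the basic projection equations $\redseq/(\redseqtwo_1\seq\redseqtwo_2) = (\redseq/\redseqtwo_1)/\redseqtwo_2$ and $(\redseq_1\seq\redseq_2)/\redseqtwo = (\redseq_1/\redseqtwo)\seq(\redseq_2/(\redseqtwo/\redseq_1))$ (\rlem{projrr_sequence}), together with closure of $\flateq$ under composition contexts. For example, in the case $\redseq\seq\redseqfour \flateq \redseqtwo\seq\redseqfour$ derived from $\redseq \flateq \redseqtwo$: item (1) follows since $\redseqthree/(\redseq\seq\redseqfour) = (\redseqthree/\redseq)/\redseqfour = (\redseqthree/\redseqtwo)/\redseqfour = \redseqthree/(\redseqtwo\seq\redseqfour)$ by the induction hypothesis for (1); and item (2) follows by expanding both sides with the second equation, noting that the ``remainder'' factor $\redseqfour/(\redseqthree/\redseq) = \redseqfour/(\redseqthree/\redseqtwo)$ coincides by the induction hypothesis for (1), while the differing factor is $\flateq$-related by the induction hypothesis for (2), so that closure of $\flateq$ under composition contexts applies. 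The case $\redseqfour\seq\redseq$ is symmetric, applying the induction hypothesis to $\redseqthree/\redseqfour$ in place of $\redseqthree$. For $\flateqRule{Assoc}$ itself, both sides of (1) normalize to the same iterated projection $((\redseqthree/\redseq_1)/\redseq_2)/\redseq_3$, and both sides of (2) expand to the two re-associations of a common threefold composition, related by a single application of $\flateqRule{Assoc}$.

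The crux is the root case $\flateqRule{Perm}$, where $\mstep \flateq \flatten{\mstep_1}\seq\flatten{\mstep_2}$ with $\judgSplit{\mstep}{\mstep_1}{\mstep_2}$ and $\mstep$ a flat multistep. For item (1), I would compute $\redseqthree/(\flatten{\mstep_1}\seq\flatten{\mstep_2}) = (\redseqthree/\flatten{\mstep_1})/\flatten{\mstep_2} = (\redseqthree/\mstep_1)/\mstep_2$, the last equality holding because projection depends only on the flat normal forms of its arguments (\rprop{body:properties_of_projection}); then \rlem{projection_of_splittings_and_rewrites}(3) applied to $\judgSplit{\mstep}{\mstep_1}{\mstep_2}$ gives $(\redseqthree/\mstep_1)/\mstep_2 = \redseqthree/\mstep$, which is exactly (1). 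For item (2), I would first apply \rlem{projection_of_splittings_and_rewrites}(1) to obtain $\judgSplitF{(\mstep/\redseqthree)}{(\mstep_1/\redseqthree)}{(\mstep_2/(\redseqthree/\mstep_1))}$; unfolding the definition of splitting up to flattening and feeding the witnessing genuine splitting into \rprop{generalized_flateq_perm} yields $\mstep/\redseqthree \flateq (\mstep_1/\redseqthree)\seq(\mstep_2/(\redseqthree/\mstep_1))$. It then remains to recognize the right-hand side as $(\flatten{\mstep_1}\seq\flatten{\mstep_2})/\redseqthree$, which follows from \rlem{projrr_sequence} together with the flattening-invariance of projection (\rprop{body:properties_of_projection}), using $\flatten{\mstep_1}/\redseqthree = \mstep_1/\redseqthree$ and $\redseqthree/\flatten{\mstep_1} = \redseqthree/\mstep_1$.

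The main obstacle is concentrated entirely in the $\flateqRule{Perm}$ case, and it is in turn deferred to \rlem{projection_of_splittings_and_rewrites} and \rprop{generalized_flateq_perm}: the delicate fact is that projecting a splitting over a multistep again yields a splitting (up to flattening), whose proof hinges on orthogonality of the HRS and the constructor/rule matching of \rlem{coinitial_cons_rule}. Once these are available, the present proposition is a largely mechanical assembly of the projection equations with the congruence closure of $\flateq$.
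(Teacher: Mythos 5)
Your proposal is correct and takes essentially the same route as the paper's own proof: induction on the derivation of $\redseq \flateq \redseqtwo$, with the \flateqRule{Assoc} and composition-context cases discharged via \rlem{projrr_sequence} plus item~(1) of the induction hypothesis, and the crucial \flateqRule{Perm} case handled exactly as in the paper, by \rlem{projection_of_splittings_and_rewrites} (items~3 and~1) together with \rprop{generalized_flateq_perm} and the flattening-invariance of projection from \rprop{body:properties_of_projection}. The only difference is that you spell out the routine congruence cases in more detail than the paper does.
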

\begin{proof}
Both items are induction on the derivation of $\redseq \flateq \redseqtwo$.
  The reflexivity, symmetry, and transitivity cases are immediate.
  We analyze the cases in which an axiom is applied at the root,
  as well as congruence closure below composition contexts.
\begin{enumerate}
\item First item. The interesting case is when $\redseq \flateq \redseqtwo$ follows from \flateqRule{Perm}. 
    Let $\mstep_1 \flateq \flatten{\mstep_2}\seq\flatten{\mstep_3}$
    be derived from $\judgSplit{\mstep_1}{\mstep_2}{\mstep_3}$.
    Then:
    \[
      \begin{array}{rlll}
        \redseqthree/\mstep_1
      & = & 
        (\redseqthree/\mstep_2)/\mstep_3
        & \text{by \rlem{projection_of_splittings_and_rewrites}(3)} 
      \\
      & = &
        (\redseqthree/\flatten{\mstep_2})/\flatten{\mstep_3}
        & \text{by~\rprop{body:properties_of_projection}}
      \end{array}
    \]
  \item Second item. The case where $\redseq \flateq \redseqtwo$ follows from \flateqRule{Assoc} follows from \rlem{projrr_sequence} and \flateqRule{Assoc}  itself. Similarly, the case where $\redseq \flateq \redseqtwo$ follows from congruence of composition (left or right) follows from \rlem{projrr_sequence}, the \ih and 
                item~1. of this proposition. The interesting case, once again, is when $\redseq \flateq \redseqtwo$ follows from 
    \flateqRule{Perm}.
    Let $\mstep_1 \flateq \flatten{\mstep_2}\seq\flatten{\mstep_3}$
    be derived from $\judgSplit{\mstep_1}{\mstep_2}{\mstep_3}$.
    Then by \rlem{projection_of_splittings_and_rewrites}(1) 
    we have that
    $\judgSplitF{
       (\mstep_1/\redseqthree)
     }{
       (\mstep_2/\redseqthree)
     }{
       (\mstep_3/(\redseqthree/\mstep_2))
     }$,
    that is, there exist multisteps such that
    $\judgSplit{\mstepB_1}{\mstepB_2}{\mstepB_3}$
    and such that
    $\flatten{\mstepB_1} = \mstep_1/\redseqthree$
    and
    $\flatten{\mstepB_2} = \mstep_2/\redseqthree$
    and
    $\flatten{\mstepB_3} = \mstep_3/(\redseqthree/\mstep_2)$.
    Hence:
    \[
      \begin{array}{rlll}
        \mstep_1/\redseqthree
      & = &
        \flatten{\mstepB_1}
      \\
      & \flateq &
        \flatten{\mstepB_1}\seq\flatten{\mstepB_2}
        & \text{by~\rprop{generalized_flateq_perm},
                as $\judgSplit{\mstepB_1}{\mstepB_2}{\mstepB_3}$}
      \\
      & = &
        (\mstep_2/\redseqthree)\seq(\mstep_3/(\redseqthree/\mstep_2))
      \\
      & = &
        (\flatten{\mstep_2}/\redseqthree)\seq
        (\flatten{\mstep_3}/(\redseqthree/\mstep_2))
        & \text{by \rprop{body:properties_of_projection}}
      \\
      & = &
        (\flatten{\mstep_2}\seq\flatten{\mstep_3})/\redseqthree
      \end{array}
    \]
\end{enumerate}
\end{proof}

The projection of a rewrite over itself is always empty;
specifically $\redseq/\redseq \flateq \ftgt{\redseq}$ where, recall from above, $\ftgt{\redseq}=\ftgtb{\redseq}$.

\begin{lem}[Self-erasure]
\llem{self_erasure}
$\redseq/\redseq \flateq \ftgt{\redseq}$
\end{lem}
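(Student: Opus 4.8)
The plan is to prove $\redseq/\redseq \flateq \ftgt{\redseq}$ by structural induction on the flat rewrite $\redseq$, following the grammar $\redseqn ::= \mstepn \mid \redseqn\seq\redseqntwo$. In the base case $\redseq = \mstep$ is a flat multistep, and the claim is immediate: by \rprop{body:properties_of_projection}(3b) we have $\mstep/\mstep = \ftgtb{\mstep} = \ftgt{\mstep}$, so the two sides are literally equal and $\flateq$ holds by reflexivity.

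For the inductive step $\redseq = \redseq_1\seq\redseq_2$, I would first expand the self-projection with the key equation \rlem{projrr_sequence}, obtaining $(\redseq_1\seq\redseq_2)/(\redseq_1\seq\redseq_2) = \big(\redseq_1/(\redseq_1\seq\redseq_2)\big)\seq\big(\redseq_2/((\redseq_1\seq\redseq_2)/\redseq_1)\big)$, and then collapse each factor to $\ftgt{\redseq_2}$ separately. The ingredients I would use repeatedly are: the equation $\redseq/(\redseqtwo_1\seq\redseqtwo_2) = (\redseq/\redseqtwo_1)/\redseqtwo_2$ from \rdef{projection_for_rewrites}; the basic identities $\fsrc{\redseqthree}/\redseqthree = \ftgt{\redseqthree}$ and $\redseqthree/\fsrc{\redseqthree} = \redseqthree$ of \rlem{basic_properties_of_rewrite_projection}; the induction hypotheses $\redseq_i/\redseq_i \flateq \ftgt{\redseq_i}$; and the two congruence properties of \rprop{congruence_flateq_projection}, namely $\redseqthree/\redseq = \redseqthree/\redseqtwo$ and $\redseq/\redseqthree \flateq \redseqtwo/\redseqthree$ whenever $\redseq\flateq\redseqtwo$. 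A small but essential observation is that, since $\redseq_1$ and $\redseq_2$ are composable, $\rtgt{\redseq_1}\termeq\rsrc{\redseq_2}$, whence $\ftgt{\redseq_1} = \fsrc{\redseq_2}$; this follows from \rremark{flat_not_beta_steps_preserve_endpoints}, \rlem{flatten_beta_eta_source_target}, and confluence and strong normalization of flattening (\rprop{flat_confluent}, \rprop{flat_sn}), exactly as in the congruence-under-application case of \rprop{flateq_complete_wrt_permeq}.

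Concretely, for the left factor I would compute $\redseq_1/(\redseq_1\seq\redseq_2) = (\redseq_1/\redseq_1)/\redseq_2 \flateq \ftgt{\redseq_1}/\redseq_2 = \fsrc{\redseq_2}/\redseq_2 = \ftgt{\redseq_2}$, using the induction hypothesis together with \rprop{congruence_flateq_projection}(2) for the middle step, the endpoint identity for the next, and \rlem{basic_properties_of_rewrite_projection} for the last. For the right factor I would first simplify the divisor: by \rlem{projrr_sequence} and the induction hypothesis, $(\redseq_1\seq\redseq_2)/\redseq_1 = (\redseq_1/\redseq_1)\seq\big(\redseq_2/(\redseq_1/\redseq_1)\big) \flateq \fsrc{\redseq_2}\seq\redseq_2$, this time using \rprop{congruence_flateq_projection}(1) and $\redseq_2/\fsrc{\redseq_2} = \redseq_2$; then $\redseq_2/((\redseq_1\seq\redseq_2)/\redseq_1) = \redseq_2/(\fsrc{\redseq_2}\seq\redseq_2) = (\redseq_2/\fsrc{\redseq_2})/\redseq_2 = \redseq_2/\redseq_2 \flateq \ftgt{\redseq_2}$ by \rprop{congruence_flateq_projection}(1), \rdef{projection_for_rewrites}, \rlem{basic_properties_of_rewrite_projection}, and the induction hypothesis. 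Combining the two factors gives $(\redseq_1\seq\redseq_2)/(\redseq_1\seq\redseq_2) \flateq \ftgt{\redseq_2}\seq\ftgt{\redseq_2}$. Finally, since $\ftgt{\redseq_2}$ is a flat multistep containing no rule symbols we have the trivial splitting $\judgSplit{\ftgt{\redseq_2}}{\ftgt{\redseq_2}}{\ftgt{\redseq_2}}$, so by \rprop{generalized_flateq_perm} (or directly by \flateqRule{Perm}) $\ftgt{\redseq_2} \flateq \ftgt{\redseq_2}\seq\ftgt{\redseq_2}$; together with $\ftgt{(\redseq_1\seq\redseq_2)} = \ftgt{\redseq_2}$ this yields $(\redseq_1\seq\redseq_2)/(\redseq_1\seq\redseq_2) \flateq \ftgt{(\redseq_1\seq\redseq_2)}$, as required.

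The main obstacle I anticipate is bookkeeping rather than conceptual: one must track which of the three overloaded projection operators is in play at each nested occurrence and apply the correct congruence direction (\rprop{congruence_flateq_projection}(1) when the changing rewrite is the divisor, (2) when it is the dividend), while checking at every step that the endpoints agree so the projections are defined. The one genuinely non-routine point is the endpoint bridge $\ftgt{\redseq_1} = \fsrc{\redseq_2}$, which is what makes the left factor collapse to $\ftgt{\redseq_2}$, together with the final idempotence $\ftgt{\redseq_2}\seq\ftgt{\redseq_2}\flateq\ftgt{\redseq_2}$ obtained from the trivial splitting of a rule-symbol-free multistep.
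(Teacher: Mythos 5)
Your proof is correct and rests on the same skeleton as the paper's: structural induction on the flat rewrite, with \rlem{projrr_sequence}, \rlem{basic_properties_of_rewrite_projection}, \rprop{congruence_flateq_projection}, and the endpoint bridge $\ftgt{\redseq_1} = \fsrc{\redseq_2}$ carrying the inductive case; the base case is identical. The difference is in how the inductive case is organized. The paper decomposes the \emph{divisor} first, writing $(\redseq_1\seq\redseq_2)/(\redseq_1\seq\redseq_2) = ((\redseq_1\seq\redseq_2)/\redseq_1)/\redseq_2$ by definition of $\projrr$, then expands with \rlem{projrr_sequence} and works down a single chain in which the empty first factor $\ftgt{\redseq_2}$ is discharged by \permeqRule{IdL} together with the soundness/completeness theorem (\rthm{body:soundness_completeness_flateq}). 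You decompose the \emph{dividend} first, collapse the two resulting factors to $\ftgt{\redseq_2}$ independently (by computations that are step-for-step the paper's), and then absorb the duplication via $\ftgt{\redseq_2}\seq\ftgt{\redseq_2} \flateq \ftgt{\redseq_2}$, justified by the trivial splitting $\judgSplit{\ftgt{\redseq_2}}{\ftgt{\redseq_2}}{\ftgt{\redseq_2}}$ --- valid since $\ftgt{\redseq_2}$ contains no rule symbols, by \indrulename{SVar}, \indrulename{SCon}, \indrulename{SAbs}, and \indrulename{SApp} --- together with the \flateqRule{Perm} axiom. That last step is the only substantive divergence, and it is a modest gain: your argument stays entirely inside the axioms of $\flateq$ and the projection calculus, whereas the paper's route for the corresponding step passes through permutation equivalence and the completeness theorem. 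Both proofs are sound, and your bookkeeping of the two congruence directions (literal equality when the divisor changes, $\flateq$ when the dividend changes) is exactly right.
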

\begin{proof}
By induction on $\redseq$.
If $\redseq = \mstep$ is a multistep, then $\mstep/\mstep = \ftgt{\mstep}$
by~\rprop{body:properties_of_projection}.
If $\redseq = \redseq_1\seq\redseq_2$ then:
\[
\begin{array}{rlll}
&&
  (\redseq_1\seq\redseq_2)/(\redseq_1\seq\redseq_2)
\\
& = &
  ((\redseq_1\seq\redseq_2)/\redseq_1)/\redseq_2
\\
& = &
  ((\redseq_1/\redseq_1)
   \seq
   (\redseq_2/(\redseq_1/\redseq_1)))/\redseq_2
  & \text{by \rlem{projrr_sequence}}
\\
& = &
  ((\redseq_1/\redseq_1)/\redseq_2)
  \seq
  ((\redseq_2/(\redseq_1/\redseq_1))/(\redseq_2/(\redseq_1/\redseq_1)))
  & \text{by \rlem{projrr_sequence}}
\\
& \flateq &
  (\ftgt{\redseq_1}/\redseq_2)
  \seq
  ((\redseq_2/(\redseq_1/\redseq_1))/(\redseq_2/(\redseq_1/\redseq_1)))
  & \text{by \ih and \rprop{congruence_flateq_projection}}
\\
& = &
  (\fsrc{\redseq_2}/\redseq_2)
  \seq
  ((\redseq_2/(\redseq_1/\redseq_1))/(\redseq_2/(\redseq_1/\redseq_1)))
  & \text{as $\rtgt{\redseq_1} \termeq \rsrc{\redseq_2}$}
\\
& = &
  \ftgt{\redseq_2}
  \seq
  ((\redseq_2/(\redseq_1/\redseq_1))/(\redseq_2/(\redseq_1/\redseq_1)))
  & \text{by \rlem{basic_properties_of_rewrite_projection}}
\\
& = &
  (\redseq_2/(\redseq_1/\redseq_1))/(\redseq_2/(\redseq_1/\redseq_1))
  & \text{by \permeqRule{IdL} and \rthm{body:soundness_completeness_flateq}}
\\
& = &
  (\redseq_2/\ftgt{\redseq_1})/(\redseq_2/\ftgt{\redseq_1})
  & \text{by \ih and \rprop{congruence_flateq_projection}}
\\
& = &
  (\redseq_2/\fsrc{\redseq_2})/(\redseq_2/\fsrc{\redseq_2})
  & \text{as $\rtgt{\redseq_1} \termeq \rsrc{\redseq_2}$}
\\
& = &
  \redseq_2/\redseq_2
  & \text{by \rlem{basic_properties_of_rewrite_projection}}
\\
& \flateq &
  \ftgt{\redseq_2}
  & \text{by \ih}
\\
& = &
  \ftgt{(\redseq_1\seq\redseq_2)}
  &
\end{array}
\]
\end{proof}

Finally, an important property is that
$\redseq\seq(\redseqtwo/\redseq) \flateq
 \redseqtwo\seq(\redseq/\redseqtwo)$,
 corresponding to a strong form of confluence. Its proof makes use of~\rlem{multistep_permutation} below.

\begin{lem}[Multistep permutation]
\llem{multistep_permutation}
Let $\mstep,\msteptwo$ be coinitial multisteps.
Then:
\[
  \mstep\seq(\msteptwo/\mstep) \permeq \msteptwo\seq(\mstep/\msteptwo)
\]
\end{lem}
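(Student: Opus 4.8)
The plan is to reduce the statement to flat permutation equivalence and then exhibit a single \emph{join} multistep whose two sequentializations are precisely the two sides of the equation. First I would apply \rprop{soundness_flateq}: since $\mstep\seq(\msteptwo/\mstep)$ and $\msteptwo\seq(\mstep/\msteptwo)$ are rewrites, it suffices to prove $\flatten{(\mstep\seq(\msteptwo/\mstep))} \flateq \flatten{(\msteptwo\seq(\mstep/\msteptwo))}$. Because flattening distributes over top-level composition and the projections $\msteptwo/\mstep$, $\mstep/\msteptwo$ are already flat (\rprop{body:properties_of_projection}(1)), these flattenings equal $\flatten{\mstep}\seq(\msteptwo/\mstep)$ and $\flatten{\msteptwo}\seq(\mstep/\msteptwo)$, so the goal becomes
\[
  \flatten{\mstep}\seq(\msteptwo/\mstep)
  \ \flateq\
  \flatten{\msteptwo}\seq(\mstep/\msteptwo).
\]

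Since $\mstep$ and $\msteptwo$ are coinitial we have $\rsrc{\mstep} \termeq \rsrc{\msteptwo}$, hence $\fsrc{\mstep} = \fsrc{\msteptwo}$, and \rlem{compatibilization_of_coinitial_multisteps} yields compatible representatives $\judgCompat{\mstepB}{\msteptwoB}$ with $\flatten{\mstepB} = \flatten{\mstep}$ and $\flatten{\msteptwoB} = \flatten{\msteptwo}$. The crux is the following claim, proved by induction on the derivation of $\judgCompat{\mstepB}{\msteptwoB}$: there is a multistep $\mstepfour$ (the \emph{join} of $\mstepB$ and $\msteptwoB$) together with weak projections $\judgProj{\msteptwoB}{\mstepB}{\mstepthree}$ and $\judgProj{\mstepB}{\msteptwoB}{\mstepthree'}$ such that $\judgSplit{\mstepfour}{\mstepB}{\mstepthree}$ and $\judgSplit{\mstepfour}{\msteptwoB}{\mstepthree'}$. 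The definition of $\mstepfour$ mirrors the compatibility derivation: under \indrulename{CVar} and \indrulename{CCon} the head is kept and the construction recurses into the arguments, while under \indrulename{CRule}, \indrulename{CRuleL}, and \indrulename{CRuleR} the join fires the head rule symbol $\rulewit$. The two required splittings are then obtained by splitting this head occurrence of $\rulewit$ with \indrulename{SRuleL} on a side that fires the rule and with \indrulename{SRuleR} on a side that does not, matching the heads $\rulewit$, $\refl{\rsrc{\rulewit}}$, and $\refl{\rtgt{\rulewit}}$ produced by the splitting rules against those produced by \indrulename{ProjRule}, \indrulename{ProjRuleL}, and \indrulename{ProjRuleR} (recalling that terms are coerced to rewrites via $\refl{\cdot}$). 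The two splittings are symmetric, reflecting the symmetry between \indrulename{CRuleL} and \indrulename{CRuleR}, so one inductive statement suffices.

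Granting the claim, I conclude as follows. By \rdef{projection_operator} and uniqueness of projection (\rprop{body:existence_uniqueness_projection}) we have $\msteptwo/\mstep = \flatten{\mstepthree}$ and $\mstep/\msteptwo = \flatten{\mstepthree'}$. Applying the generalized permutation rule \rprop{generalized_flateq_perm} to $\judgSplit{\mstepfour}{\mstepB}{\mstepthree}$ gives $\flatten{\mstepfour} \flateq \flatten{\mstepB}\seq\flatten{\mstepthree} = \flatten{\mstep}\seq(\msteptwo/\mstep)$, and applying it to $\judgSplit{\mstepfour}{\msteptwoB}{\mstepthree'}$ gives $\flatten{\mstepfour} \flateq \flatten{\msteptwoB}\seq\flatten{\mstepthree'} = \flatten{\msteptwo}\seq(\mstep/\msteptwo)$. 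By symmetry and transitivity of $\flateq$ the displayed goal follows, and \rprop{soundness_flateq} delivers $\mstep\seq(\msteptwo/\mstep) \permeq \msteptwo\seq(\mstep/\msteptwo)$.

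The main obstacle is the claim itself, namely the simultaneous construction of the join and the verification of both splittings. The delicate points are the rule-symbol cases, where one must check that the residual heads generated by weak projection (the contractum $\rtgt{\rulewit}$ when a rule is fired on both sides or created, and a persisting $\rulewit$ when a rule is untouched on one side) coincide with the right-hand components produced by \indrulename{SRuleL}/\indrulename{SRuleR}, and that the join $\mstepfour$ is well typed. The latter rests on $\mstepB$ and $\msteptwoB$ sharing a common source together with the orthogonality-based matching already exploited in \rlem{coinitial_cons_rule} and \rprop{body:existence_uniqueness_projection}; everything else is a routine transport along $\flatten{\cdot}$.
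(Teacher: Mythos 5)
Your proposal is correct, but it follows a genuinely different route from the paper's proof. The paper works directly at the level of $\permeq$: it first reduces to the case where $\mstep$ and $\msteptwo$ are in $\tofnoeta,\etalong$-normal form (justified by \rprop{body:properties_of_projection} and \rlem{flattening_sound_wrt_permeq}), then proceeds by induction on the number of applications in $\mstep$ with a case analysis on the heads of both multisteps, using \rlem{coinitial_cons_rule} and orthogonality to align constant heads against rule-symbol heads, computing the projections componentwise, and reassembling the two sides with \permeqRule{Abs}, \permeqRule{App}, \permeqRule{IdL} and \permeqRule{IdR}. You instead transport the goal to $\flateq$ via \rprop{soundness_flateq} and exhibit both sides as two sequentializations of a single join multistep $\mstepfour$, concluding by applying \rprop{generalized_flateq_perm} twice; your induction is over the derivation of $\judgCompat{\mstepB}{\msteptwoB}$ obtained from \rlem{compatibilization_of_coinitial_multisteps}, rather than over application count. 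I checked your key claim: in each of the five compatibility cases the join (keep the head for \indrulename{CVar}/\indrulename{CCon}, fire $\rulewit$ at the head for the three rule cases) admits both required splittings, with \indrulename{SRuleL}/\indrulename{SRuleR} at the head matching the residual heads $\rtgt{\rulewit}$ and $\rulewit$ produced by \indrulename{ProjRule}, \indrulename{ProjRuleL} and \indrulename{ProjRuleR}, and uniqueness of projection (\rprop{body:existence_uniqueness_projection}) indeed identifies $\flatten{\mstepthree}$ with $\msteptwo/\mstep$ and $\flatten{\mstepthree'}$ with $\mstep/\msteptwo$. What your route buys is modularity and symmetry: the orthogonality/pattern-matching work is done exactly once, inside compatibilization, instead of being redone across the head cases; the statement's symmetry is manifest, since both sides arise from one common multistep; and the join object anticipates the $\join$ used later in the standardization section. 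What the paper's route buys is independence from the heavier machinery: it stays close to the axioms of $\permeq$ and needs neither the detour through $\flateq$ and back nor uniqueness of projection as a black box. Since \rprop{soundness_flateq}, \rlem{compatibilization_of_coinitial_multisteps}, \rprop{body:existence_uniqueness_projection} and \rprop{generalized_flateq_perm} all precede this lemma in the paper, your argument involves no circularity.
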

\begin{proof}
Recall that if $\flatten{\mstep} = \flatten{(\mstep')}$
and $\flatten{\msteptwo} = \flatten{(\msteptwo')}$,
then $\mstep/\msteptwo = \mstep'/\msteptwo'$
by~\rprop{body:properties_of_projection}.
Moreover, by~\rlem{flattening_sound_wrt_permeq}
if $\mstep$ is a step and $\mstep'$ is its $\tofnoeta,\etalong$-normal
form, then $\mstep \permeq \mstep'$.
Hence, to prove the statement of the lemma,
it suffices to show that
$\mstep\seq(\msteptwo/\mstep)
 \permeq
 \msteptwo\seq(\mstep/\msteptwo)$
assuming that $\mstep,\msteptwo$ are coinitial multisteps
in $\tofnoeta,\etalong$-normal form.

The proof proceeds by induction on the number of applications
in $\mstep$, and by case analysis on the head of $\mstep$.
We provide a sample case, namely when $\mstep$ headed by a constant. The others
are developed similarly. The case where $\mstep$ is headed by a rule symbol
makes use of the fact that the HRS is orthogonal.

Suppose, therefore, that the $\mstep$ headed by a constant.
  Then $\mstep = \lam{\vec{\var}}{\cons\,\mstep_1\hdots\mstep_n}$.
  Since $\mstep$ and $\msteptwo$ are coinitial, the head of $\msteptwo$
  can be either $\cons$ or a rule symbol $\rulewit$.
  We consider two subcases:
  \begin{enumerate}
  \item {\bf $\msteptwo$ headed by a constant.}
    Then $\msteptwo = \lam{\vec{\var}}{\cons\,\msteptwo_1\hdots\msteptwo_n}$.
  Note that:
  \[
    \begin{array}{rlll}
    &&
      \mstep\seq(\msteptwo/\mstep)
    \\
    & = &
      (\lam{\vec{\var}}{\cons\,\mstep_1\hdots\mstep_n})
      \seq
      (
        (\lam{\vec{\var}}{\cons\,\msteptwo_1\hdots\msteptwo_n})
      /
        (\lam{\vec{\var}}{\cons\,\mstep_1\hdots\mstep_n})
      )
    \\
    & = &
      (\lam{\vec{\var}}{\cons\,\mstep_1\hdots\mstep_n})
      \seq
      \flatten{(\lam{\vec{\var}}{
          \cons\,
            (\msteptwo_1/\mstep_1)
            \hdots
            (\msteptwo_n/\mstep_n)
      })}
      & \text{by~\rprop{body:properties_of_projection}}
    \\
    & \permeq &
      (\lam{\vec{\var}}{\cons\,\mstep_1\hdots\mstep_n})
      \seq
      (\lam{\vec{\var}}{
          \cons\,
            (\msteptwo_1/\mstep_1)
            \hdots
            (\msteptwo_n/\mstep_n)
      })
      & \text{by~\rlem{flattening_sound_wrt_permeq}}
    \\
    & \permeq &
      \lam{\vec{\var}}{
        \cons\,
          (\mstep_1\seq(\msteptwo_1/\mstep_1))
          \hdots
          (\mstep_n\seq(\msteptwo_n/\mstep_n))
      }
      & \text{by~\permeqRule{Abs}, \permeqRule{App}}
    \end{array}
  \]
  Similarly:
  \[
    \msteptwo\seq(\mstep/\msteptwo)
    \permeq
    \lam{\vec{\var}}{
      \cons\,
        (\msteptwo_1\seq(\mstep_1/\msteptwo_1))
        \hdots
        (\msteptwo_n\seq(\mstep_n/\msteptwo_n))
    }
  \]
  Moreover, by \ih we have that
  $\mstep_i\seq(\msteptwo_i/\mstep_i) \permeq
   \msteptwo_i\seq(\mstep_i/\msteptwo_i)$
  for all $1 \leq i \leq n$, so:
  \[
    \lam{\vec{\var}}{
      \cons\,
        (\mstep_1\seq(\msteptwo_1/\mstep_1))
        \hdots
        (\mstep_n\seq(\msteptwo_n/\mstep_n))
    }
    \permeq
    \lam{\vec{\var}}{
      \cons\,
        (\msteptwo_1\seq(\mstep_1/\msteptwo_1))
        \hdots
        (\msteptwo_n\seq(\mstep_n/\msteptwo_n))
    }
    \]
    
  \item {\bf $\msteptwo$ headed by a rule symbol.}
    \label{multistep_permutation__case_cons_rule}
    Then $\msteptwo = \lam{\vec{\var}}{\rulewit\,\msteptwo_1\hdots\msteptwo_q}$.
    By \rlem{coinitial_cons_rule}
    we have that
    $\mstep_1 = \flattennoeta{(
                  \lam{\vec{\var}}{\altsrc{\rulewit}\,\mstep'_1\hdots\mstep'_q
                })}$
    where $\mstep'_i$ and $\msteptwo_i$ are coinitial
    for all $1 \leq i \leq q$,
    and each $\mstep'_i$ has strictly less applications than $\mstep$.
    Note that:
    \[
      \begin{array}{rlll}
      &&
        \mstep\seq(\msteptwo/\mstep)
      \\
      & = &
        \flattennoeta{
          (\lam{\vec{\var}}{\altsrc{\rulewit}\,\mstep'_1\hdots\mstep'_q})
        }
        \seq
        (
          (\lam{\vec{\var}}{\rulewit\,\msteptwo_1\hdots\msteptwo_q})
        /
          \flattennoeta{
            (\lam{\vec{\var}}{\altsrc{\rulewit}\,\mstep'_1\hdots\mstep'_q})
          }
        )
      \\
      & = &
        \flattennoeta{
          (\lam{\vec{\var}}{\altsrc{\rulewit}\,\mstep'_1\hdots\mstep'_q})
        }
        \seq
        \flatten{(\lam{\vec{\var}}{
           (\rulewit/\rsrc{\rulewit})
             (\msteptwo_1/\mstep'_1)
             \hdots
             (\msteptwo_q/\mstep'_q)
        })}
        & \text{by~\rprop{body:properties_of_projection}}
      \\
      & \permeq &
        (\lam{\vec{\var}}{\rsrc{\rulewit}\,\mstep'_1\hdots\mstep'_q})
        \seq
        (\lam{\vec{\var}}{
           (\rulewit/\rsrc{\rulewit})
             (\msteptwo_1/\mstep'_1)
             \hdots
             (\msteptwo_q/\mstep'_q)
        })
        & \text{by~\rlem{flattening_sound_wrt_permeq}}
      \\
      & \permeq &
        (\lam{\vec{\var}}{
           (\rsrc{\rulewit}\seq(\rulewit/\rsrc{\rulewit}))
             (\mstep'_1\seq(\msteptwo_1/\mstep'_1))
             \hdots
             (\mstep'_q\seq(\msteptwo_q/\mstep'_q))
        })
        & \text{by~\permeqRule{Abs}, \permeqRule{App}}
      \end{array}
    \]
    Similarly:
    \[
      \msteptwo\seq(\mstep/\msteptwo)
      \permeq
      (\lam{\vec{\var}}{
         (\rulewit\seq(\rsrc{\rulewit}/\rulewit))
           (\msteptwo_1\seq(\mstep'_1/\msteptwo_1))
           \hdots
           (\msteptwo_q\seq(\mstep'_q/\msteptwo_q))
      })
    \]
    Note that:
    \[
      \begin{array}{rlll}
        \rsrc{\rulewit}\seq(\rulewit/\rsrc{\rulewit})
      & = &
        \rsrc{\rulewit}\seq\rulewit
        & \text{by~\rprop{body:properties_of_projection}}
      \\
      & = &
        \rulewit
        & \text{by~\permeqRule{IdL}}
      \\
      & = &
        \rulewit\seq\rtgt{\rulewit}
        & \text{by~\permeqRule{IdR}}
      \\
      & \permeq &
        \rulewit\seq\ftgt{\rulewit}
        & \text{by~\rlem{flattening_sound_wrt_permeq}}
      \\
      & = &
         \rulewit\seq(\rsrc{\rulewit}/\rulewit)
      \end{array}
    \]
    Moreover, by \ih we know that
    $\mstep'_i\seq(\msteptwo_i/\mstep'_i) \permeq
     \msteptwo_i\seq(\mstep'_i/\msteptwo_i)$
    for all $1 \leq i \leq q$.
    Hence:
    \[
      \begin{array}{rl}
      &
      \lam{\vec{\var}}{
         (\rsrc{\rulewit}\seq(\rulewit/\rsrc{\rulewit}))
           (\mstep'_1\seq(\msteptwo_1/\mstep'_1))
           \hdots
           (\mstep'_q\seq(\msteptwo_q/\mstep'_q))
      }
      \\
      \permeq &
      \lam{\vec{\var}}{
         (\rulewit\seq(\rsrc{\rulewit}/\rulewit))
           (\msteptwo_1\seq(\mstep'_1/\msteptwo_1))
           \hdots
           (\msteptwo_q\seq(\mstep'_q/\msteptwo_q))
      }
      \end{array}
    \]
  \end{enumerate}
\end{proof}

\begin{lem}[Rewrite permutation]
\llem{rewrite_permutation}
If $\redseq,\redseqtwo$ are coinitial flat rewrites then:
\[
  \redseq\seq(\redseqtwo/\redseq)
  \flateq
  \redseqtwo\seq(\redseq/\redseqtwo)
\]
\end{lem}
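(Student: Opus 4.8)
The plan is to argue by induction on the total number of flat-multistep factors of $\redseq$ and $\redseqtwo$, regarding flat rewrites modulo associativity (via \flateqRule{Assoc}) as nonempty sequences of flat multisteps. The statement $\redseq\seq(\redseqtwo/\redseq)\flateq\redseqtwo\seq(\redseq/\redseqtwo)$ is symmetric in $\redseq$ and $\redseqtwo$ because $\flateq$ is symmetric, so whenever at least one argument is a composition I may assume it is $\redseq$. The base case is when both arguments are single flat multisteps, $\redseq=\mstep$ and $\redseqtwo=\msteptwo$: here \rlem{multistep_permutation} gives $\mstep\seq(\msteptwo/\mstep)\permeq\msteptwo\seq(\mstep/\msteptwo)$, and it remains to transport this equivalence from $\permeq$ to $\flateq$.

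For the inductive step I would write $\redseq=\redseq_1\seq\redseq_2$ and use the two projection identities $\redseqtwo/(\redseq_1\seq\redseq_2)=(\redseqtwo/\redseq_1)/\redseq_2$ (the expected equation recorded after \rdef{projection_for_rewrites}) and $(\redseq_1\seq\redseq_2)/\redseqtwo=(\redseq_1/\redseqtwo)\seq(\redseq_2/(\redseqtwo/\redseq_1))$ (\rlem{projrr_sequence}). Rewriting the left-hand side and using \flateqRule{Assoc} together with the closure of $\flateq$ under composition contexts, the calculation would run:
\[
  \begin{array}{rlll}
  &&
    (\redseq_1\seq\redseq_2)\seq(\redseqtwo/(\redseq_1\seq\redseq_2))
  \\
  & = &
    (\redseq_1\seq\redseq_2)\seq((\redseqtwo/\redseq_1)/\redseq_2)
  \\
  & \flateq &
    \redseq_1\seq(\redseq_2\seq((\redseqtwo/\redseq_1)/\redseq_2))
    & \text{\flateqRule{Assoc}}
  \\
  & \flateq &
    \redseq_1\seq((\redseqtwo/\redseq_1)\seq(\redseq_2/(\redseqtwo/\redseq_1)))
    & \text{\ih}
  \\
  & \flateq &
    (\redseq_1\seq(\redseqtwo/\redseq_1))\seq(\redseq_2/(\redseqtwo/\redseq_1))
    & \text{\flateqRule{Assoc}}
  \\
  & \flateq &
    (\redseqtwo\seq(\redseq_1/\redseqtwo))\seq(\redseq_2/(\redseqtwo/\redseq_1))
    & \text{\ih}
  \\
  & \flateq &
    \redseqtwo\seq((\redseq_1/\redseqtwo)\seq(\redseq_2/(\redseqtwo/\redseq_1)))
    & \text{\flateqRule{Assoc}}
  \\
  & = &
    \redseqtwo\seq((\redseq_1\seq\redseq_2)/\redseqtwo)
  \end{array}
\]
The two $\flateq$-steps marked \ih are the induction hypothesis for the pairs $(\redseq_2,\redseqtwo/\redseq_1)$ and $(\redseq_1,\redseqtwo)$, which are coinitial flat rewrites with strictly fewer factors than $(\redseq,\redseqtwo)$.

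The main obstacle is not the calculation but making the induction measure legitimate and bridging $\permeq$ with $\flateq$ in the base case. For the measure, the use of the \ih on $(\redseq_2,\redseqtwo/\redseq_1)$ requires that projection does not increase the number of composition factors, i.e. that $\redseqtwo/\redseq_1$ has the same number of factors as $\redseqtwo$; this is a direct induction from \rdef{projection_for_rewrites}, but it must be recorded. For the base case, I would note that projections of flat multisteps are themselves flat (by \rdef{projection_operator} the multistep $\mstep/\msteptwo$ is a $\tof$-normal form), so $\mstep\seq(\msteptwo/\mstep)$ and $\msteptwo\seq(\mstep/\msteptwo)$ are already flat rewrites equal to their own $\tof$-normal forms; applying the completeness direction of \rprop{flateq_complete_wrt_permeq} to the $\permeq$ produced by \rlem{multistep_permutation} then yields the required $\flateq$. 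Thus all the genuine difficulty is concentrated in \rlem{multistep_permutation}, whose orthogonality-driven head analysis is taken as given.
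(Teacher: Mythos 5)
Your proof is correct and takes essentially the same route as the paper's: the base case (transporting \rlem{multistep_permutation} from $\permeq$ to $\flateq$ via completeness, using that both sides are already $\tof$-normal) and the inductive calculation (via the iterated-projection equation, \rlem{projrr_sequence}, and \flateqRule{Assoc}) coincide step-for-step with the paper's proof. The only difference is bookkeeping: where the paper runs a structural induction on $\redseq$ with a nested induction on $\redseqtwo$ to cover the multistep-versus-composition case, you eliminate that case by symmetry of $\flateq$ under a factor-counting measure, which is sound given your correctly flagged (and easily checked) observation that projection preserves the number of multistep factors.
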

\begin{proof}
We proceed by induction on $\redseq$:
\begin{enumerate}
\item {\bf Multistep, $\redseq = \mstep$.}
  To prove
    $\mstep\seq(\redseqtwo/\mstep)
    \flateq
    \redseqtwo\seq(\mstep/\redseqtwo)$
  we proceed by a nested induction on $\redseqtwo$:
  \begin{enumerate}
  \item {\bf Multistep, $\redseqtwo = \msteptwo$.}
    By \rlem{multistep_permutation}
    we have that
    $\mstep\seq(\msteptwo/\mstep) \permeq \msteptwo\seq(\mstep/\msteptwo)$.
    By~\rthm{body:soundness_completeness_flateq}
    this implies that
    $\flatten{\mstep}\seq\flatten{(\msteptwo/\mstep)}
     \flateq
     \flatten{\msteptwo}\seq\flatten{(\mstep/\msteptwo)}$.
    But $\mstep$ and $\msteptwo$ are flat,
    and by \rprop{body:properties_of_projection} we know that
    $\mstep/\msteptwo = \flatten{(\mstep/\msteptwo)}$
    and
    $\msteptwo/\mstep = \flatten{(\msteptwo/\mstep)}$.
    Hence
    $\mstep\seq(\msteptwo/\mstep) \flateq \msteptwo\seq(\mstep/\msteptwo)$.
  \item {\bf Composition, $\redseqtwo = \redseqtwo_1\seq\redseqtwo_2$.}
    To alleviate the notation,
    we work implicitly modulo the \flateqRule{Assoc} rule.
    Note that:
    \[
      \begin{array}{rcll}
        \mstep\seq((\redseqtwo_1\seq\redseqtwo_2)/\mstep)
      & = &
        \mstep\seq(\redseqtwo_1/\mstep)\seq(\redseqtwo_2/(\mstep/\redseqtwo_1))
      \\
      & \flateq &
        \redseqtwo_1\seq(\mstep/\redseqtwo_1)
        \seq(\redseqtwo_2/(\mstep/\redseqtwo_1))
        & \text{by \ih}
      \\
      & \flateq &
        \redseqtwo_1\seq
        \redseqtwo_2\seq
        ((\mstep/\redseqtwo_1)/\redseqtwo_2)
        & \text{by \ih}
      \\
      & = &
        \redseqtwo_1\seq
        \redseqtwo_2\seq
        (\mstep/(\redseqtwo_1\seq\redseqtwo_2))
      \end{array}
    \]
  \end{enumerate}
\item {\bf Composition, $\redseq = \redseq_1\seq\redseq_2$.}
  To alleviate the notation,
  we work implicitly modulo the \flateqRule{Assoc} rule.
  Note that:
  \[
    \begin{array}{rlll}
      \redseq_1\seq\redseq_2\seq(\redseqtwo/(\redseq_1\seq\redseq_2))
    & = &
      \redseq_1\seq\redseq_2\seq((\redseqtwo/\redseq_1)/\redseq_2)
    \\
    & \flateq &
      \redseq_1\seq(\redseqtwo/\redseq_1)\seq(\redseq_2/(\redseqtwo/\redseq_1))
      & \text{by \ih}
    \\
    & \flateq &
      \redseqtwo\seq(\redseq_1/\redseqtwo)\seq(\redseq_2/(\redseqtwo/\redseq_1))
      & \text{by \ih}
    \\
    & = &
      \redseqtwo\seq((\redseq_1\seq\redseq_2)/\redseqtwo)
      & \text{by \rlem{projrr_sequence}}
    \end{array}
  \]
\end{enumerate}
\end{proof}


\subsection{Projection for Arbitrary Rewrites.}
As a final step, the projection operator of~\rdef{projection_for_rewrites}
may be extended to arbitrary rewrites by flattening first.

\begin{defi}[Projection operator for arbitrary rewrites]
\ldef{projection_for_arbitrary_rewrites}
Let $\redseq,\redseqtwo$ be arbitrary coinitial rewrites.
Their projection is defined as
$\redseq\proja\redseqtwo \eqdef \flatten{\redseq}/\flatten{\redseqtwo}$.
\end{defi}

We next address some basic properties of projection over arbitrary rewrites (\rprop{body:properties_of_arbitrary_projection}) including that permutation equivalence is a congruence with respect to it. Since $\redseq\proja\redseqtwo $ is defined in terms of projection over flat rewrites, we first prove some auxiliary results on that operator, namely \rlem{proja_abstraction} and \rlem{proja_application}.

\begin{lem}[Projection of abstraction]
\llem{proja_abstraction}
The following hold:
\begin{enumerate}
\item
  If $\mstep$ is a flat multistep and $\redseq$ is a coinitial flat rewrite,
  then
  $\flatten{(\lam{\var}{\mstep})} \projmr \flatten{(\lam{\var}{\redseq})} = \flatten{(\lam{\var}{(\mstep\projmr\redseq))}}$.
\item
  If $\redseq$ is a flat rewrite and $\mstep$ is a coinitial flat multistep,
  then
  $\flatten{(\lam{\var}{\redseq})} \projrm \flatten{(\lam{\var}{\mstep})} = \flatten{(\lam{\var}{(\redseq\projrm\mstep))}}$.
\item
  If $\redseq,\redseqtwo$ are coinitial flat rewrites,
  then
  $\flatten{(\lam{\var}{\redseq})} \projrr \flatten{(\lam{\var}{\redseqtwo})} = \flatten{(\lam{\var}{(\redseq\projrr\redseqtwo))}}$.
\end{enumerate}
\end{lem}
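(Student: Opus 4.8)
The plan is to prove the three items in sequence---item~1, then item~2, then item~3---following the recursive structure of the projection operators in \rdef{projection_for_rewrites}, so that each item is available when the next one is proved. Throughout, the recurring workhorse is the identity
\[
  \flatten{(\lam{\var}{(\redseq\seq\redseqtwo)})}
  =
  \flatten{(\lam{\var}{\redseq})}\seq\flatten{(\lam{\var}{\redseqtwo})},
\]
which holds because $\lam{\var}{(\redseq\seq\redseqtwo)} \tof_{\flatRule{Abs}} (\lam{\var}{\redseq})\seq(\lam{\var}{\redseqtwo})$ and flattening is confluent and strongly normalizing (\rprop{flat_confluent}, \rprop{flat_sn}). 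I would also record at the outset that coinitiality is preserved along the recursion: whenever a projection such as $\mstep\projmr\redseq_1$ or $\redseq\projrr\redseqtwo_1$ is formed, its source is $\beta\eta$-equivalent to the target of the left operand, hence to the source of the next operand, so every subprojection invoked below is defined.

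For item~1 I would proceed by induction on the flat rewrite $\redseq$. In the base case $\redseq = \msteptwo$ is a flat multistep, and the claim reduces to the multistep facts of \rprop{body:properties_of_projection}: unfolding $\projmr$ on multisteps and using $\flatten{X}/\flatten{Y}=X/Y$ together with the fact that projection commutes with abstraction,
\[
  \flatten{(\lam{\var}{\mstep})}\projmr\flatten{(\lam{\var}{\msteptwo})}
  = (\lam{\var}{\mstep})/(\lam{\var}{\msteptwo})
  = \flatten{(\lam{\var}{(\mstep/\msteptwo)})}.
\]
In the composition case $\redseq = \redseq_1\seq\redseq_2$, I rewrite the right operand with the displayed identity and unfold $\projmr$ over a composition, obtaining $(\flatten{(\lam{\var}{\mstep})}\projmr\flatten{(\lam{\var}{\redseq_1})})\projmr\flatten{(\lam{\var}{\redseq_2})}$; the induction hypothesis for $\redseq_1$ turns the inner projection into $\flatten{(\lam{\var}{(\mstep\projmr\redseq_1)})}$, and setting $\mstepthree := \mstep\projmr\redseq_1$ (a flat multistep by \rprop{body:properties_of_projection}) the induction hypothesis for $\redseq_2$ finishes, after refolding with $\mstep\projmr(\redseq_1\seq\redseq_2) = (\mstep\projmr\redseq_1)\projmr\redseq_2$.

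Items~2 and~3 follow the same template. Item~2 is by induction on $\redseq$: the composition case unfolds $\projrm$ as $(\redseq_1\projrm\mstep)\seq(\redseq_2\projrm(\mstep\projmr\redseq_1))$, applies the induction hypothesis to both $\redseq_1$ and $\redseq_2$ and, crucially, item~1 (already established) to the subterm $\mstep\projmr\redseq_1$ occurring inside the second projection, then reassembles with the displayed identity. Item~3 is by induction on $\redseqtwo$, matching the recursion of $\projrr$ on its second argument: the base case where $\redseqtwo$ is a multistep is exactly item~2, while the composition case $\redseqtwo = \redseqtwo_1\seq\redseqtwo_2$ uses $\redseq\projrr(\redseqtwo_1\seq\redseqtwo_2) = (\redseq\projrr\redseqtwo_1)\projrr\redseqtwo_2$ and applies the induction hypothesis twice, with $\redseq\projrr\redseqtwo_1$ a flat rewrite coinitial to $\redseqtwo_2$.

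The routine-but-delicate part---and the step I expect to cause the most friction---is the bookkeeping of well-definedness and coinitiality. Because flattening may fire an \flatRule{EtaM} redex at the root of $\lam{\var}{\mstep}$, the normal form $\flatten{(\lam{\var}{\mstep})}$ need not literally be an abstraction, so one cannot naively ``strip the $\lambda$''; instead every manipulation must stay at the level of flat normal forms and lean on \rprop{body:properties_of_projection}(2), which already absorbs these $\eta$-discrepancies into the outer flattening. Confirming that each intermediate projection is coinitial with the next operand (so that the recursive clauses of \rdef{projection_for_rewrites} are defined) is what ties the induction together, and I would make this source/target reasoning explicit rather than leave it implicit.
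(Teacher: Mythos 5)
Your proposal is correct and takes essentially the same route as the paper's proof: item~1 by induction on $\redseq$ using \rprop{body:properties_of_projection} and confluence of flattening (\rprop{flat_confluent}), item~2 by the same induction reusing item~1 for the subterm $\mstep\projmr\redseq_1$, and item~3 by induction on $\redseqtwo$ reusing item~2. Your explicit handling of the \flatRule{EtaM}-at-the-root subtlety and the coinitiality bookkeeping merely spells out what the paper leaves implicit.
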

\begin{proof}
The first item is by  induction on $\redseq$ using \rprop{body:properties_of_projection} and ~\rprop{flat_confluent}. The second is similar and also uses the first item. The third is by induction on $\redseqtwo$ and uses  ~\rprop{flat_confluent} and the second item.
\end{proof}

\begin{lem}[Projection of application]
\llem{proja_application}
The following hold:
\begin{enumerate}
\item
  If $\mstep_1,\mstep_2$ are flat multisteps and
  $\redseq_1,\redseq_2$ are flat rewrites
  such that $\mstep_1$ and $\redseq_1$ are coinitial
  and $\mstep_2$ and $\redseq_2$ are coinitial,
  then
  $\flatten{(\mstep_1\,\mstep_2)} \projmr \flatten{(\redseq_1\,\redseq_2)}
   = \flatten{((\mstep_1\projmr\redseq_1)\,(\mstep_2\projmr\redseq_2))}$.
\item
  If $\redseq_1,\redseq_2$ are flat rewrites and
  $\mstep_1,\mstep_2$ are flat multisteps
  such that $\redseq_1$ and $\mstep_1$ are coinitial
  and $\redseq_2$ and $\mstep_2$ are coinitial,
  then
  $\flatten{(\redseq_1\,\redseq_2)} \projrm \flatten{(\mstep_1\,\mstep_2)}
   = \flatten{((\redseq_1\projrm\mstep_1)\,(\redseq_2\projrm\mstep_2))}$.
\item
  If $\redseq_1,\redseq_2,\redseqtwo_1,\redseqtwo_2$ are flat rewrites
  such that $\redseq_1$ and $\redseqtwo_1$ are coinitial
  and $\redseq_2$ and $\redseqtwo_2$ are coinitial,
  then
  $\flatten{(\redseq_1\,\redseq_2)} \projrr \flatten{(\redseqtwo_1\,\redseqtwo_2)}
   = \flatten{((\redseq_1\projrr\redseqtwo_1)\,(\redseq_2\projrr\redseqtwo_2))}$.
\end{enumerate}
\end{lem}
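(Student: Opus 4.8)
The plan is to prove the three items in the order stated, mirroring the structure of \rlem{proja_abstraction}: item~(1) is the base, item~(2) builds on item~(1), and item~(3) builds on item~(2). The only genuinely new difficulty compared with the abstraction case is that flattening an application $\redseq_1\,\redseq_2$ does not split into independent pieces but interleaves the two arguments through the rule \flatRule{App3}; taming this interleaving is where the work concentrates. Throughout I would discharge the implicit well-definedness conditions by checking that coinitiality of $\mstep_1,\redseq_1$ and of $\mstep_2,\redseq_2$ propagates to every sub-projection, so that each occurrence of ``$/$'' is defined.

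For item~(1) I would argue by induction on the total number of composition operators in $\redseq_1$ and $\redseq_2$, case-splitting on whether each of $\redseq_1,\redseq_2$ is a single flat multistep or a top-level composition. If both are multisteps, say $\redseq_1=\msteptwo_1$ and $\redseq_2=\msteptwo_2$, then $\mstep_i\projmr\msteptwo_i=\mstep_i/\msteptwo_i$ and the claim is exactly the commutation of projection with application recorded in \rprop{body:properties_of_projection}, together with confluence of flattening (\rprop{flat_confluent}). If $\redseq_1$ is a composition and $\redseq_2=\msteptwo_2$ a multistep, I would flatten $\redseq_1\,\msteptwo_2$ one step with \flatRule{App1} to expose a top-level ``$\seq$'', unfold the defining equation of $\projmr$ over the resulting composition, and apply the induction hypothesis to the strictly smaller components, using $\mstep_2\projmr\refl{\rsrc{\msteptwo_2}}=\mstep_2$ (which follows from $\fsrc{\mstep_2}=\fsrc{\msteptwo_2}$ and the identity $\mstep/\fsrc{\mstep}=\flatten{\mstep}$ of \rprop{body:properties_of_projection}). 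The case $\redseq_1=\msteptwo_1$, $\redseq_2$ a composition is symmetric via \flatRule{App2}, and the case where both are compositions is handled by first applying \flatRule{App3} and then \flatRule{App1}/\flatRule{App2} to the two halves, each of which mentions only one of $\redseq_1,\redseq_2$ as a genuine composition and hence has strictly smaller measure.

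For item~(2) I would proceed analogously, performing the same flattening analysis of $\flatten{(\redseq_1\,\redseq_2)}$ but now unfolding the defining equation $(\redseq_a\seq\redseq_b)\projrm\mstep=(\redseq_a\projrm\mstep)\seq(\redseq_b\projrm(\mstep\projmr\redseq_a))$; the base case, in which the left operand reduces to a single multistep, appeals to item~(1) and \rprop{body:properties_of_projection}. Item~(3) is then by induction on $\redseqtwo_1,\redseqtwo_2$ (the right operand of $\projrr$), unfolding $\redseq\projrr(\redseqtwo_a\seq\redseqtwo_b)=(\redseq\projrr\redseqtwo_a)\projrr\redseqtwo_b$, using \rlem{projrr_sequence} to rearrange the resulting compositions, \rprop{flat_confluent} to commute flattening past the constructors, and item~(2) in the base case.

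The main obstacle I anticipate is the doubly-compositional case of item~(1): after a \flatRule{App3} step the first half is $(\redseq_1\,\refl{\rsrc{\redseqtwo_1}})$ and the second is $(\refl{\rtgt{\redseq_1}}\,\redseqtwo_2)$, so one must verify that projecting $\flatten{(\mstep_1\,\mstep_2)}$ over their flattened composition reassembles, after reflattening, into $\flatten{((\mstep_1\projmr\redseq_1)\,(\mstep_2\projmr\redseq_2))}$. Keeping track of the composition trees and of the source/target endpoints that glue the two halves together---via endpoint identities such as $\mstep/\fsrc{\mstep}=\flatten{\mstep}$ and $\fsrc{\mstep}/\mstep=\ftgt{\mstep}$ from \rprop{body:properties_of_projection}---is the delicate bookkeeping, even though each individual reduction step is routine.
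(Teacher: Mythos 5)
Your proposal is correct and takes essentially the same approach as the paper's proof: the same four-way case split on whether each of $\redseq_1,\redseq_2$ is a multistep or a composition, the same use of \flatRule{App1}/\flatRule{App2}/\flatRule{App3} to expose a top-level composition before unfolding the defining equations of the projection operators, the same appeals to \rprop{body:properties_of_projection}, \rprop{flat_confluent} and \rprop{congruence_flateq_projection}, and the same layering of items (2) and (3) on item (1). The only (inessential) difference is the induction measure---you induct on the total number of composition operators in $\redseq_1$ and $\redseq_2$, whereas the paper inducts on the length of the longest $\tof$-reduction of $\redseq_1\,\redseq_2$ to its normal form, justified by strong normalization of flattening---and both measures strictly decrease in every recursive call.
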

\begin{proof}
All items use the fact that $\tof$ is strongly normalizing~(\rprop{flat_sn}),
and proceed by induction on the length of the longest reduction of their target. We present the proof of the first item as a sample.
  We proceed by induction on the length of the longest reduction
  $\redseq_1\,\redseq_2 \tofs \flatten{(\redseq_1\,\redseq_2)}$,
  considering four cases, depending on whether
  each of $\redseq_1$ and $\redseq_2$ is a multistep or a composition:
  \begin{enumerate}
  \item
    If both are multisteps,
    \ie $\redseq_1 = \msteptwo_1$ and $\redseq_2 = \msteptwo_2$:
    \[
      \begin{array}{rcll}
        \flatten{(\mstep_1\,\mstep_2)}\projmr\flatten{(\msteptwo_1\,\msteptwo_2)}
      & = &
        \flatten{(\mstep_1\,\mstep_2)}/\flatten{(\msteptwo_1\,\msteptwo_2)}
        & \text{by definition}
      \\
      & = &
        \flatten{((\mstep_1\,\mstep_2)/(\msteptwo_1\,\msteptwo_2))}
        & \text{by \rprop{body:properties_of_projection}}
      \\
      & = &
        \flatten{((\mstep_1/\msteptwo_1)\,(\mstep_2/\msteptwo_2))}
        & \text{by \rprop{body:properties_of_projection}}
      \\
      & = &
        \flatten{((\mstep_1\projmr\msteptwo_1)\,(\mstep_2\projmr\msteptwo_2))}
        & \text{by definition}
      \end{array}
    \]
  \item
    If $\redseq_1 = \msteptwo_1$ is a multistep
    and $\redseq_2 = \redseq_{21}\seq\redseq_{22}$ is a sequence:
    \[
      \begin{array}{rcll}
      &&
        \flatten{(\mstep_1\,\mstep_2)}\projmr\flatten{(\msteptwo_1\,(\redseq_{21}\seq\redseq_{22}))}
      \\
      & = &
        \flatten{(\mstep_1\,\mstep_2)}\projmr(
          \flatten{(\msteptwo_1\,\redseq_{21})}
          \seq
          \flatten{(\rtgt{\msteptwo_1}\,\redseq_{22})}
        )
        & \text{by~\rprop{flat_confluent}}
      \\
      & = &
        (\flatten{(\mstep_1\,\mstep_2)}
           \projmr\flatten{(\msteptwo_1\,\redseq_{21})})
           \projmr\flatten{(\rtgt{\msteptwo_1}\,\redseq_{22})}
        & \text{by definition}
      \\
      & = &
        \flatten{((\mstep_1\projmr\msteptwo_1)\,(\mstep_2\projmr\redseq_{21}))}
          \projmr\flatten{(\rtgt{\msteptwo_1}\,\redseq_{22})}
        & \text{by \ih}
      \\
      & = &
        \flatten{(((\mstep_1\projmr\msteptwo_1)\projmr\rtgt{\msteptwo_1})\,((\mstep_2\projmr\redseq_{21})\projmr\redseq_{22}))}
        & \text{by \ih}
      \\
      & = &
        \flatten{((\mstep_1\projmr(\msteptwo_1\seq\rtgt{\msteptwo_1}))\,(\mstep_2\projmr(\redseq_{21}\seq\redseq_{22})))}
        & \text{by definition}
      \\
      & = &
        \flatten{((\mstep_1\projmr(\msteptwo_1\seq\ftgt{\msteptwo_1}))\,(\mstep_2\projmr(\redseq_{21}\seq\redseq_{22})))}
        & \text{by~\rprop{flat_confluent}}
      \\
      & = &
        \flatten{((\mstep_1\projmr\msteptwo_1))\,(\mstep_2\projmr(\redseq_{21}\seq\redseq_{22})))}
        & \text{by \rprop{congruence_flateq_projection}, since $\msteptwo_1\seq\ftgt{\msteptwo_1} \flateq \msteptwo_1$}
      \end{array}
    \]
    To justify that the \ih may be applied, note that
    $\msteptwo_1\,(\redseq_{21}\seq\redseq_{22})
     \tof
     (\msteptwo_1\,\redseq_{21})\seq(\rtgt{\msteptwo_1}\,\redseq_{22})$.
  \item
    If $\redseq_1 = \redseq_{11}\seq\redseq_{12}$ is a sequence
    and $\redseq_2 = \msteptwo_2$ is a multistep, the proof is similar
    as for the previous case.
  \item
    If both are sequences, \ie 
    $\redseq_1 = \redseq_{11}\seq\redseq_{12}$
    and $\redseq_2 = \redseq_{21}\seq\redseq_{22}$:
    \[
      \begin{array}{rcll}
      &&
        \flatten{(\mstep_1\,\mstep_2)}\projmr
          \flatten{((\redseq_{11}\seq\redseq_{12})\,(\redseq_{21}\seq\redseq_{22}))}
      \\
      & = &
        \flatten{(\mstep_1\,\mstep_2)}\projmr
          (\flatten{((\redseq_{11}\seq\redseq_{12})\,\rsrc{\redseq_{21}}))}
           \seq
           \flatten{(\rtgt{\redseq_{12}}\,(\redseq_{21}\seq\redseq_{22}))})
        & \text{by~\rprop{flat_confluent}}
      \\
      & = &
        (\flatten{(\mstep_1\,\mstep_2)}
          \projmr\flatten{((\redseq_{11}\seq\redseq_{12})\,\rsrc{\redseq_{21}}))})
          \projmr\flatten{(\rtgt{\redseq_{12}}\,(\redseq_{21}\seq\redseq_{22}))}
        & \text{by definition}
      \\
      & = &
        \flatten{(
          (\mstep_1\projmr(\redseq_{11}\seq\redseq_{12}))\,
          (\mstep_2\projmr\rsrc{\redseq_{21}})
        )}
          \projmr\flatten{(\rtgt{\redseq_{12}}\,(\redseq_{21}\seq\redseq_{22}))}
        & \text{by \ih}
      \\
      & = &
        \flatten{(
          ((\mstep_1\projmr(\redseq_{11}\seq\redseq_{12}))\projmr\rtgt{\redseq_{12}})\,
          ((\mstep_2\projmr\rsrc{\redseq_{21}})\projmr(\redseq_{21}\seq\redseq_{22}))
        )}
        & \text{by \ih}
      \\
      & = &
        \flatten{(
          (\mstep_1\projmr((\redseq_{11}\seq\redseq_{12})\seq\rtgt{\redseq_{12}}))\,
          (\mstep_2\projmr(\rsrc{\redseq_{21}}\seq(\redseq_{21}\seq\redseq_{22})))
        )}
        & \text{by definition}
      \\
      & = &
        \flatten{(
          (\mstep_1\projmr((\redseq_{11}\seq\redseq_{12})\seq\ftgt{\redseq_{12}}))\,
          (\mstep_2\projmr(\fsrc{\redseq_{21}}\seq(\redseq_{21}\seq\redseq_{22})))
        )}
        & \text{by~\rprop{flat_confluent}}
      \\
      & = &
        \flatten{(
          (\mstep_1\projmr(\redseq_{11}\seq\redseq_{12}))\,
          (\mstep_2\projmr(\redseq_{21}\seq\redseq_{22}))
        )}
        & \text{by \rprop{congruence_flateq_projection}}
      \end{array}
    \]
    To justify that the \ih may be applied, note that
    $(\redseq_{11}\seq\redseq_{12})\,(\redseq_{21}\seq\redseq_{22})
     \tof
     (\redseq_{11}\seq\redseq_{12})\,\rsrc{\redseq_{21}}\seq
     \rtgt{\redseq_{12}}\,(\redseq_{21}\,\redseq_{22})$.
    To justify the last equality, note that 
    $(\redseq_{11}\seq\redseq_{12})\seq\ftgt{\redseq_{12}} \flateq \redseq_{11}\seq\redseq_{12}$
    and
    $\fsrc{\redseq_{21}}\seq(\redseq_{21}\seq\redseq_{22}) \flateq \redseq_{21}\seq\redseq_{22}$.
  \end{enumerate}
\end{proof}

\begin{prop}[Properties of projection for arbitrary rewrites]
\lprop{body:properties_of_arbitrary_projection}
\quad
\begin{enumerate}
\item
  Projection of a rewrite over a sequence and of a sequence over a rewrite
  obey the expected equations
  $\redseq\proja(\redseqtwo_1\seq\redseqtwo_2) = (\redseq\proja\redseqtwo_1)\proja\redseqtwo_2$
  and
  $(\redseq_1\seq\redseq_2)\proja\redseqtwo = (\redseq_1\proja\redseqtwo)\seq(\redseq_2\proja(\redseqtwo\proja\redseq_1))$.
\item
  Projection commutes with abstraction and application, that is:
  \begin{enumerate}
  \item
    $(\lam{\var}{\redseq})\proja(\lam{\var}{\redseqtwo}) \permeq \lam{\var}{(\redseq\proja\redseqtwo)}$,
    and more precisely
    $(\lam{\var}{\redseq})\proja(\lam{\var}{\redseqtwo}) \tofsinv \lam{\var}{(\redseq\proja\redseqtwo)}$.
  \item
    If $\redseq_1,\redseqtwo_1$ are coinitial and $\redseq_2,\redseqtwo_2$ are coinitial,
    then
    $(\redseq_1\,\redseq_2)\proja(\redseqtwo_1\,\redseqtwo_2) \permeq
     (\redseq_1\proja\redseqtwo_1)\,(\redseq_2\proja\redseqtwo_2)$,
    and more precisely
    $(\redseq_1\,\redseq_2)\proja(\redseqtwo_1\,\redseqtwo_2) \tofsinv
     (\redseq_1\proja\redseqtwo_1)\,(\redseq_2\proja\redseqtwo_2)$.
  \end{enumerate}
\item
  The projection of a rewrite over itself is always empty,
  $\redseq\proja\redseq \permeq \rtgt{\redseq}$.
\item
  Permutation equivalence is a congruence with respect to projection,
  namely if $\redseq \permeq \redseqtwo$
  then $\redseqthree\proja\redseq = \redseqthree\proja\redseqtwo$
  and $\redseq\proja\redseqthree \permeq \redseqtwo\proja\redseqthree$.
\item
  The key equation
  $\redseq\seq(\redseqtwo\proja\redseq) \permeq \redseqtwo\seq(\redseq\proja\redseqtwo)$
  holds.
\end{enumerate}
\end{prop}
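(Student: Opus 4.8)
The plan is to reduce every item to the corresponding statement about the flat projection operator $/$, which has already been established for flat rewrites, and then to transfer it back through flattening using the soundness/completeness bridge. Throughout I will use that $\redseq\proja\redseqtwo \eqdef \flatten{\redseq}/\flatten{\redseqtwo}$ (\rdef{projection_for_arbitrary_rewrites}); that flattening is idempotent and confluent, so $\flatten{(\flatten{\redseq})} = \flatten{\redseq}$ and $\flatten{(\redseq\seq\redseqtwo)} = \flatten{\redseq}\seq\flatten{\redseqtwo}$; and that the flat projection of two flat rewrites is again flat (by \rprop{body:properties_of_projection}(1) and the composition clauses of \rdef{projection_for_rewrites}), whence $\flatten{(\redseq\proja\redseqtwo)} = \redseq\proja\redseqtwo$. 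This last observation is what makes the nested projections on the right-hand sides collapse to the flat operator applied to already-flattened arguments.

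For item 1 I simply unwind the definitions. For the first equation, $\redseq\proja(\redseqtwo_1\seq\redseqtwo_2) = \flatten{\redseq}/(\flatten{\redseqtwo_1}\seq\flatten{\redseqtwo_2})$, which equals $(\flatten{\redseq}/\flatten{\redseqtwo_1})/\flatten{\redseqtwo_2}$ by the defining clause of $\projrr$ on a composition; since $\flatten{\redseq}/\flatten{\redseqtwo_1}$ is flat this is exactly $(\redseq\proja\redseqtwo_1)\proja\redseqtwo_2$. For the second equation I invoke \rlem{projrr_sequence} on the flattened arguments and re-read the subprojections as $\proja$. Thus item 1 is immediate once the definitions are unfolded.

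Items 2--5 each rest on one already-proven flat-level result, plus the bridge. For item 2 I apply \rlem{proja_abstraction}(3) and \rlem{proja_application}(3) to the flattened arguments, using $\flatten{(\lam{\var}{\redseq})} = \flatten{(\lam{\var}{\flatten{\redseq}})}$ (and similarly for application) by confluence, to obtain the precise equalities $(\lam{\var}{\redseq})\proja(\lam{\var}{\redseqtwo}) = \flatten{(\lam{\var}{(\redseq\proja\redseqtwo)})}$, etc.; the $\tofsinv$ and $\permeq$ forms follow because $\lam{\var}{(\redseq\proja\redseqtwo)} \tofs \flatten{(\lam{\var}{(\redseq\proja\redseqtwo)})}$ and $\tof \subseteq \permeq$ (\rlem{flattening_sound_wrt_permeq}). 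For item 3, $\redseq\proja\redseq = \flatten{\redseq}/\flatten{\redseq} \flateq \ftgt{(\flatten{\redseq})}$ by \rlem{self_erasure}; since flattening preserves endpoints modulo $\termeq$ I identify $\ftgt{(\flatten{\redseq})}$ with $\ftgt{\redseq}$ by confluence, and then $\ftgt{\redseq} = \flatten{(\refl{\rtgt{\redseq}})} \permeq \refl{\rtgt{\redseq}}$, yielding $\redseq\proja\redseq \permeq \rtgt{\redseq}$ after turning $\flateq$ into $\permeq$ via \rlem{flateq_sound_wrt_permeq}. For item 4 I first convert $\redseq\permeq\redseqtwo$ into $\flatten{\redseq}\flateq\flatten{\redseqtwo}$ with \rthm{body:soundness_completeness_flateq}, and apply the two parts of \rprop{congruence_flateq_projection}: part (1) gives the literal equality $\redseqthree\proja\redseq = \redseqthree\proja\redseqtwo$, and part (2) gives $\redseq\proja\redseqthree \flateq \redseqtwo\proja\redseqthree$, upgraded to $\permeq$ by \rlem{flateq_sound_wrt_permeq}. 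For item 5 I replace $\redseq,\redseqtwo$ by $\flatten{\redseq},\flatten{\redseqtwo}$ under the contextual closure of $\permeq$ (justified by $\redseq\permeq\flatten{\redseq}$ and $\redseqtwo\permeq\flatten{\redseqtwo}$), then apply \rlem{rewrite_permutation} to the coinitial flat rewrites $\flatten{\redseq},\flatten{\redseqtwo}$ and use $\flateq\subseteq\permeq$.

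These are routine transfers; the only points demanding care are the repeated appeals to idempotence and confluence of flattening, needed to normalize quantities such as $\flatten{(\lam{\var}{\redseq})}$, $\ftgt{(\flatten{\redseq})}$, and $\flatten{(\redseq\proja\redseqtwo)}$ into the shapes required by the flat-level lemmas, together with the bookkeeping ensuring that compositions produced when substituting $\flatten{\redseq}$ for $\redseq$ inside a $\permeq$-context stay well-typed (guaranteed by endpoint coherence modulo $\termeq$, \rlem{permeq_endpoints_are_termeq} and \rlem{endpoint:coherence}). I expect the main obstacle, modest as it is, to be item 5: one must verify that the coinitiality hypothesis of \rlem{rewrite_permutation} is met by $\flatten{\redseq}$ and $\flatten{\redseqtwo}$, and that the endpoint mismatches introduced by flattening are absorbed by the $\termeq$-conversion built into the typing of rewrites.
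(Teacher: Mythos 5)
Your proposal is correct and follows essentially the same route as the paper's proof: unfold $\proja$ into flat projection of flattened rewrites, then discharge each item via the corresponding flat-level result (\rlem{projrr_sequence}, \rlem{proja_abstraction}, \rlem{proja_application}, \rlem{self_erasure}, \rprop{congruence_flateq_projection}, \rlem{rewrite_permutation}), bridging back with \rthm{body:soundness_completeness_flateq} and the inclusions of $\tof$ and $\flateq$ in $\permeq$. The points you flag as needing care (idempotence/confluence of flattening, endpoint coherence for item 5) are exactly the ones the paper also handles, implicitly or by citing \rprop{flat_confluent}.
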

  \begin{proof}
We prove each item separately: 
\begin{enumerate}
  \item Projection of composition:
  \begin{itemize}
  \item On one hand:
    \[
      \begin{array}{rcll}
        \redseq\proja(\redseqtwo_1\seq\redseqtwo_2)
      & = &
        \flatten{\redseq}/(\flatten{\redseqtwo_1}\seq\flatten{\redseqtwo_2})
        & \text{by definition}
      \\
      & = &
        (\flatten{\redseq}/\flatten{\redseqtwo_1})/\flatten{\redseqtwo_2}
      \\
      & = &
        (\redseq\proja\redseqtwo_1)/\flatten{\redseqtwo_2}
        & \text{by definition}
      \\
      & = &
        \flatten{(\redseq\proja\redseqtwo_1)}/\flatten{\redseqtwo_2}
        & \text{as $\redseq\proja\redseqtwo_1$ is flat by construction}
      \\
      &= &
        (\redseq\proja\redseqtwo_1)\proja\redseqtwo_2
        & \text{by definition}
      \end{array}
    \]
  \item On the other hand:
    \[
      \begin{array}{rcll}
        (\redseq_1\seq\redseq_2)\proja\redseqtwo
      & = &
        \flatten{(\redseq_1\seq\redseq_2)}/\flatten{\redseqtwo}
        & \text{by definition}
      \\
      & = &
        (\flatten{\redseq_1}\seq\flatten{\redseq_2})/\flatten{\redseqtwo}
      \\
      & = &
        (\flatten{\redseq_1}/\flatten{\redseqtwo})\seq
        (\flatten{\redseq_2}/(\flatten{\redseqtwo}/\flatten{\redseq_1}))
      \\
      & = &
        (\redseq_1\proja\redseqtwo)\seq
        (\flatten{\redseq_2}/(\redseqtwo\proja\redseq_1))
        & \text{by definition}
      \\
      & = &
        (\redseq_1\proja\redseqtwo)\seq
        (\flatten{\redseq_2}/\flatten{(\redseqtwo\proja\redseq_1)})
        & \text{as $\redseqtwo\proja\redseq_1$ is flat by construction}
      \\
      & = &
        (\redseq_1\proja\redseqtwo)\seq(\redseq_2\proja(\redseqtwo\proja\redseq_1))
        & \text{by definition}
      \end{array}
    \]
  \end{itemize}
\item Projection commutes with
  \begin{enumerate}
    \item Abstraction:
  \[
    \begin{array}{rcll}
      (\lam{\var}{\redseq})\proja(\lam{\var}{\redseqtwo})
    & = &
      \flatten{(\lam{\var}{\redseq})}/\flatten{(\lam{\var}{\redseqtwo})}
      & \text{by definition}
    \\
    & = &
      \flatten{(\lam{\var}{\flatten{\redseq}})}/\flatten{(\lam{\var}{\flatten{\redseqtwo}})}
      & \text{by \rprop{flat_confluent}}
    \\
    & = &
      \flatten{(\lam{\var}{(\flatten{\redseq}/\flatten{\redseqtwo})})}
      & \text{by \rlem{proja_abstraction}}
    \\
    & \tofsinv &
      \lam{\var}{(\flatten{\redseq}/\flatten{\redseqtwo})}
    \\
    & = &
      \lam{\var}{(\redseq\proja\redseqtwo)}
      & \text{by definition}
    \end{array}
  \]
\item Application:
  \[
    \begin{array}{rcll}
      (\redseq_1\,\redseq_2)\proja(\redseqtwo_1\,\redseqtwo_2)
    & \eqdef &
      \flatten{(\redseq_1\,\redseq_2)}/\flatten{(\redseqtwo_1\,\redseqtwo_2)}
      & \text{by definition}
    \\
    & \eqdef &
      \flatten{(\flatten{\redseq_1}\,\flatten{\redseq_2})}/
                \flatten{(\flatten{\redseqtwo_1}\,\flatten{\redseqtwo_2})}
      & \text{by \rprop{flat_confluent}}
    \\
    & \eqdef &
      \flatten{(
       (\flatten{\redseq_1}/\flatten{\redseqtwo_1})\,
       (\flatten{\redseq_2}/\flatten{\redseqtwo_2}))}
      & \text{by \rlem{proja_application}}
    \\
    & \tofsinv &
       (\flatten{\redseq_1}/\flatten{\redseqtwo_1})\,
       (\flatten{\redseq_2}/\flatten{\redseqtwo_2})
    \\
    & = &
      (\redseq_1\proja\redseqtwo_1)\,(\redseq_2\proja\redseqtwo_2)
      & \text{by definition}
    \end{array}
  \]
\end{enumerate}

\item Self-erasure:
  \[
    \begin{array}{rcll}
      \redseq\proja\redseq
    & = &
      \flatten{\redseq}/\flatten{\redseq}
      & \text{by definition}
    \\
    & \flateq &
      \ftgtb{\redseq}
      & \text{by \rlem{self_erasure}}
    \\
    & \tofsinv &
      \rtgt{\redseq}
    \end{array}
  \]
  It suffices to recall that
  flat permutation equivalence ($\flateq$) and
  flattening ($\tof$)
  are both included in permutation equivalence ($\permeq$).
\item
  Congruence of projection:
  Let $\redseq \permeq \redseqtwo$.
  By \rthm{body:soundness_completeness_flateq} this means that
  $\flatten{\redseq} \flateq \flatten{\redseqtwo}$. Then:
  \begin{itemize}
  \item On one hand:
    \[
      \begin{array}{rcll}
        \redseqthree\proja\redseq
      & = &
        \flatten{\redseqthree}/\flatten{\redseq}
        & \text{by definition}
      \\
      & = &
        \flatten{\redseqthree}/\flatten{\redseqtwo}
        & \text{by \rprop{congruence_flateq_projection}}
      \\
      & = &
        \redseqthree\proja\redseqtwo
        & \text{by definition}
      \end{array}
    \]
  \item On the other hand:
    \[
      \begin{array}{rcll}
        \redseq\proja\redseqthree
      & = &
        \flatten{\redseq}/\flatten{\redseqthree}
        & \text{by definition}
      \\
      & \flateq &
        \flatten{\redseqtwo}/\flatten{\redseqthree}
        & \text{by \rprop{congruence_flateq_projection}}
      \\
      & = &
        \redseqtwo\proja\redseqthree
        & \text{by definition}
      \end{array}
    \]
  \end{itemize}
\item Permutation:
  \[
    \begin{array}{rcll}
      \redseq\seq(\redseqtwo\proja\redseq)
    & = &
      \redseq\seq(\flatten{\redseqtwo}/\flatten{\redseq})
      & \text{by definition}
    \\
    & \tofs &
      \flatten{\redseq}\seq(\flatten{\redseqtwo}/\flatten{\redseq})
    \\
    & \flateq &
      \flatten{\redseqtwo}\seq(\flatten{\redseq}/\flatten{\redseqtwo})
      & \text{by \rlem{rewrite_permutation}}
    \\
    & \tofsinv &
      \redseqtwo\seq(\flatten{\redseq}/\flatten{\redseqtwo})
    \\
    & = &
      \redseqtwo\seq(\redseq\proja\redseqtwo)
      & \text{by definition}
    \end{array}
  \]
  It suffices to recall that
  flat permutation equivalence ($\flateq$) and
  flattening ($\tof$)
  are both included in permutation equivalence ($\permeq$).
\end{enumerate}
\end{proof}

\subsection{Characterization of Permutation Equivalence in Terms of Projection.}
Finally, we are able to characterize permutation equivalence
$\redseq \permeq \redseqtwo$ as the condition that the projections
$\redseq\proja\redseqtwo$ and $\redseqtwo\proja\redseq$ are both empty.
Indeed:

\begin{thm}[Projection equivalence]
\lthm{body:projection_equivalence}
Let $\redseq,\redseqtwo$ be arbitrary coinitial rewrites.
Then $\redseq \permeq \redseqtwo$
if and only if
$\redseq\proja\redseqtwo \permeq \rtgt{\redseqtwo}$
and
$\redseqtwo\proja\redseq \permeq \rtgt{\redseq}$.
\end{thm}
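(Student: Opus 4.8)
The plan is to derive the theorem almost entirely from the properties of the projection operator collected in \rprop{body:properties_of_arbitrary_projection}, treating the two implications separately. Throughout I recall the convention that $\rtgt{\redseq}$ denotes the unit rewrite $\refl{\rtgt{\redseq}}$, and I will use freely that $\permeq$ is a congruence (it is defined as a contextual closure), so in particular it is compatible with composition $\seq$.

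For the forward implication, assume $\redseq \permeq \redseqtwo$. Applying the congruence-of-projection property (item~4 of \rprop{body:properties_of_arbitrary_projection}) with $\redseqthree := \redseq$ gives the \emph{equality} $\redseq\proja\redseq = \redseq\proja\redseqtwo$, while self-erasure (item~3) gives $\redseq\proja\redseq \permeq \rtgt{\redseq}$; hence $\redseq\proja\redseqtwo \permeq \rtgt{\redseq}$. Symmetrically, taking $\redseqthree := \redseqtwo$ yields $\redseqtwo\proja\redseq = \redseqtwo\proja\redseqtwo \permeq \rtgt{\redseqtwo}$. It remains to reconcile the endpoints: since $\redseq\permeq\redseqtwo$, \rlem{permeq_endpoints_are_termeq} gives $\rtgt{\redseq}\termeq\rtgt{\redseqtwo}$, and $\beta\eta$-equivalent terms induce permutation-equivalent unit rewrites (their common $\tof$-normal form coincides, so $\refl{\rtgt{\redseq}}\permeq\refl{\rtgt{\redseqtwo}}$ by \rthm{body:soundness_completeness_flateq}). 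Composing, $\redseq\proja\redseqtwo \permeq \rtgt{\redseq} \permeq \rtgt{\redseqtwo}$ and $\redseqtwo\proja\redseq \permeq \rtgt{\redseqtwo} \permeq \rtgt{\redseq}$, as required.

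For the backward implication, assume $\redseq\proja\redseqtwo \permeq \rtgt{\redseqtwo}$ and $\redseqtwo\proja\redseq \permeq \rtgt{\redseq}$. The engine is the ``key equation'' (item~5 of \rprop{body:properties_of_arbitrary_projection}), namely $\redseq\seq(\redseqtwo\proja\redseq) \permeq \redseqtwo\seq(\redseq\proja\redseqtwo)$. Using the hypothesis $\redseqtwo\proja\redseq \permeq \rtgt{\redseq}$ together with congruence under the context $\redseq\seq\ctxhole$, and then $\permeqRule{IdR}$, the left-hand side satisfies $\redseq\seq(\redseqtwo\proja\redseq) \permeq \redseq\seq\rtgt{\redseq} \permeq \redseq$; symmetrically the right-hand side satisfies $\redseqtwo\seq(\redseq\proja\redseqtwo) \permeq \redseqtwo\seq\rtgt{\redseqtwo} \permeq \redseqtwo$. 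Chaining these equivalences with the key equation by transitivity gives $\redseq \permeq \redseqtwo$.

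Given the infrastructure already in place, the theorem itself is short: all of the genuine difficulty has been pushed into establishing \rprop{body:properties_of_arbitrary_projection}, in particular the key permutation equation (item~5), which in turn rests on \rlem{rewrite_permutation} and the confluence and strong normalization of flattening. The only subtlety internal to this proof is the endpoint bookkeeping in the forward direction---matching $\rtgt{\redseq}$ against $\rtgt{\redseqtwo}$ up to $\termeq$---which I expect to be the one place requiring a small additional argument rather than a direct appeal to a cited property.
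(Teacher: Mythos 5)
Your proof is correct and takes essentially the same approach as the paper: both directions rest on items 3--5 of \rprop{body:properties_of_arbitrary_projection}, and your backward direction is literally the paper's chain $\redseq \permeq \redseq\seq(\redseqtwo\proja\redseq) \permeq \redseqtwo\seq(\redseq\proja\redseqtwo) \permeq \redseqtwo$. The only deviation is in the forward direction, where the paper instead applies the congruence property in the other argument, obtaining $\redseq\proja\redseqtwo \permeq \redseqtwo\proja\redseqtwo \permeq \rtgt{\redseqtwo}$ directly, so the endpoint-reconciliation step you correctly identified as the one subtlety ($\rtgt{\redseq}\permeq\rtgt{\redseqtwo}$ via common flat normal forms) is a valid but avoidable detour.
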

\begin{proof}
$(\Rightarrow)$
  Suppose that $\redseq \permeq \redseqtwo$.
  Then, by~\rprop{body:properties_of_arbitrary_projection},
  $\redseq\proja\redseqtwo \permeq \redseqtwo\proja\redseqtwo \permeq \rtgt{\redseqtwo}$.
  Symmetrically, $\redseqtwo\proja\redseq \permeq \rtgt{\redseq}$.
$(\Leftarrow)$
  Let $\redseq\proja\redseqtwo \permeq \rtgt{\redseqtwo}$
  and $\redseqtwo\proja\redseq \permeq \rtgt{\redseq}$.
  Then, by~\rprop{body:properties_of_arbitrary_projection},
  $
    \redseq
    \,\,\,\permeq\,\,\,
    \redseq\seq\rtgt{\redseq}
    \,\,\,\permeq\,\,\,
    \redseq\seq(\redseqtwo\proja\redseq)
    \,\,\,\permeq\,\,\,
    \redseqtwo\seq(\redseq\proja\redseqtwo)
    \,\,\,\permeq\,\,\,
    \redseqtwo\seq\rtgt{\redseqtwo}
    \,\,\,\permeq\,\,\,
    \redseqtwo
  $.
\end{proof}

Since flattening and projection are computable,
\rthm{body:soundness_completeness_flateq} and \rthm{body:projection_equivalence}
together provide an {\bf effective method to decide permutation equivalence}
$\redseq \permeq \redseqtwo$ for arbitrary rewrites.
Indeed, to test whether 
$\redseq\proja\redseqtwo \permeq \rtgt{\redseqtwo}$,
note by~\rthm{body:soundness_completeness_flateq}
that this is equivalent to testing whether
$\redseq\proja\redseqtwo \flateq \ftgtb{\redseqtwo}$,
so it suffices to check that $\redseq\proja\redseqtwo$ is {\em empty},
\ie it contains no rule symbols.
This is justified by the fact that if $\mstep$ has no rule symbols and
$\mstep \flateq \redseq$, then $\redseq$ has no rule symbols.

\begin{lem}[Characterization of empty multisteps]
\llem{characterization_of_empty_multisteps}
Let $\mstep$ be a flat multistep.
Then the following are equivalent:
\begin{enumerate}
\item $\mstep$ is a term, \ie $\mstep = \refl{\tm}$.
\item There exists a term $\tm$ such that $\mstep \flateq \refl{\tm}$.
\item There exists a term $\tm$ such that if
      $\mstep \flateq \redseq$
      then there exists a composition context $\kctx$
      such that
      $\redseq = \kctxof{\refl{\tm},\hdots,\refl{\tm}}$.
\item There exists a term $\tm$ such that if
      $\mstep \flateq \msteptwo$ then $\msteptwo = \refl{\tm}$.
\end{enumerate}
\end{lem}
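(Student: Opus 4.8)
The plan is to prove the four conditions equivalent by establishing the cycle $(1)\Rightarrow(2)\Rightarrow(3)\Rightarrow(4)\Rightarrow(1)$. First I would record two observations that make three of the four implications immediate: a flat multistep is a term exactly when it has no rule symbols, and since $\flateq$ only compares rewrites in $\tof$-normal form, I may assume throughout that the multisteps and leaves involved are $\beta\eta$-normal. Then $(1)\Rightarrow(2)$ is reflexivity of $\flateq$ at the witnessing $\tm$; $(3)\Rightarrow(4)$ holds because a multistep $\msteptwo$ with $\mstep\flateq\msteptwo$ gets the form $\kctxof{\refl{\tm},\hdots,\refl{\tm}}$ by $(3)$, but containing no composition forces the context to be $\ctxhole$, so $\msteptwo=\refl{\tm}$; and $(4)\Rightarrow(1)$ follows since $\mstep\flateq\mstep$ by reflexivity forces $\mstep=\refl{\tm}$. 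This reduces the entire statement to the implication $(2)\Rightarrow(3)$.

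The engine for $(2)\Rightarrow(3)$ will be an auxiliary invariant I call \emph{emptiness preservation}: if $\redseq$ is a flat rewrite with no rule symbols and $\redseq\flateq\redseqtwo$, then $\redseqtwo$ has no rule symbols either. Granting it, $(2)\Rightarrow(3)$ proceeds as follows. Fix $\tm$ with $\mstep\flateq\refl{\tm}$, and let $\redseq$ be any flat rewrite with $\mstep\flateq\redseq$; by symmetry and transitivity $\refl{\tm}\flateq\redseq$, so emptiness preservation gives that $\redseq$ has no rule symbols. Writing $\redseq=\kctxof{\refl{\tm_1},\hdots,\refl{\tm_n}}$ as a composition of its (necessarily term) leaves, it remains to see every $\tm_i$ equals $\tm$. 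This is pure endpoint bookkeeping: typability of $\redseq$ makes consecutive leaves composable, so $\tm_i\termeq\tm_{i+1}$; soundness of $\flateq$ into $\permeq$ (\rlem{flateq_sound_wrt_permeq}) together with \rlem{permeq_endpoints_are_termeq} gives $\rsrc{\redseq}=\tm_1\termeq\tm$; and since each $\refl{\tm_i}$ is in $\tof$-normal form, each $\tm_i$ is $\beta\eta$-normal, whence $\tm_i\termeq\tm$ collapses to $\tm_i=\tm$. Thus $\redseq=\kctxof{\refl{\tm},\hdots,\refl{\tm}}$, which is exactly $(3)$.

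It remains to prove emptiness preservation, which I would establish by induction on the derivation of $\redseq\flateq\redseqtwo$, proving the symmetric statement that each generating step relates two rewrites with the \emph{same} emptiness status. Reflexivity, transitivity and closure under composition contexts are routine, and the \flateqRule{Assoc} axiom leaves the multiset of leaves unchanged. The crux is the \flateqRule{Perm} axiom $\mstep\flateq\flatten{\mstep_1}\seq\flatten{\mstep_2}$ with $\judgSplit{\mstep}{\mstep_1}{\mstep_2}$, where I must show $\mstep$ is empty iff both $\flatten{\mstep_1}$ and $\flatten{\mstep_2}$ are. The forward direction is easy: if $\mstep$ has no rule symbols then the splitting derivation can use neither \indrulename{SRuleL} nor \indrulename{SRuleR}, so only \indrulename{SVar}, \indrulename{SCon}, \indrulename{SAbs}, \indrulename{SApp} occur, forcing $\mstep_1=\mstep_2=\mstep$, and flattening never introduces rule symbols.

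The hard part—and the main obstacle of the whole proof—is the converse: I must rule out that flattening erases all rule symbols of a non-empty split component and so makes it empty. In general the \flatRule{BetaM} rule \emph{can} erase rule symbols through a vacuous $\beta$-redex such as $(\lam{\var}{\cons})\,\rulewit\tof\cons$, so the argument must exploit the special shape of split components. The key point is that $\mstep$ is already $\beta$-normal, so every $\beta$-redex created in $\mstep_1$ (resp.\ $\mstep_2$) arises from replacing a rule symbol $\rulewit$ by its source $\rsrc{\rulewit}$, which is a left-hand-side \emph{pattern}; by left-linearity of the orthogonal HRS each such pattern binds its variables exactly once, so the ensuing \flatRule{BetaM} reductions are non-erasing, and any rule symbol kept by the splitting survives into the flattened form. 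I expect to discharge this using the same orthogonality and matching machinery already exploited in \rlem{coinitial_cons_rule} and \rlem{coherence_of_projection}; turning the informal ``non-erasing'' argument into a rigorous invariant on the split components is the principal technical work.
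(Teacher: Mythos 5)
Your skeleton matches the paper's: the same cycle $(1)\Rightarrow(2)\Rightarrow(3)\Rightarrow(4)\Rightarrow(1)$, the same three easy implications, and the crux isolated in the backward use of \flateqRule{Perm} inside $(2)\Rightarrow(3)$. Your decomposition of $(2)\Rightarrow(3)$ — propagate the weak invariant ``no rule symbols'' through the $\flateq$-derivation, then recover that all leaves equal $\refl{\tm}$ by endpoint bookkeeping — is a genuinely different and perfectly sound reduction (the paper instead carries the stronger invariant $\redseq = \kctxof{\refl{\tm},\hdots,\refl{\tm}}$ through a single induction). The problem is your justification of the crux. The claim that ``every $\beta$-redex created in $\mstep_1$ (resp.\ $\mstep_2$) arises from replacing a rule symbol $\rulewit$ by its source $\rsrc{\rulewit}$,'' and hence that ``any rule symbol kept by the splitting survives into the flattened form,'' is false for the \emph{second} component: redexes in $\mstep_2$ are created by \indrulename{SRuleL}, which inserts the \emph{target} $\rtgt{\rulewit}$, and right-hand sides are neither patterns nor linear — they may erase. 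The paper itself contains the counterexample (just before the definition of strong equivalence in the standardization section): with $\rulewit : \lam{\var}{\cons\,\var} \to \lam{\var}{\constwo}$ and $\rulewittwo : \consthree \to \consthree'$ one has $\judgSplit{\rulewit\,\rulewittwo}{\rulewit\,\refl{\consthree}}{(\lam{\var}{\refl{\constwo}})\,\rulewittwo}$, and the rule symbol $\rulewittwo$, although kept by the splitting in the second component, is erased by flattening: $\flatten{((\lam{\var}{\refl{\constwo}})\,\rulewittwo)} = \refl{\constwo}$. So the uniform non-erasure lemma you plan to make rigorous is simply not true, and the backward direction of your ``emptiness preservation'' does not follow from it.

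The invariant you want is nevertheless correct, but the argument has to exploit the asymmetry between the two components. If \emph{some} occurrence of a rule symbol in $\mstep$ is split by \indrulename{SRuleL}, that occurrence is kept in $\mstep_1$, and there your left-linearity argument does apply (all redexes in $\mstep_1$ come from sources, which are linear patterns), so $\flatten{\mstep_1}$ is non-empty. If instead \emph{every} occurrence is split by \indrulename{SRuleR}, then nothing in $\mstep$ was replaced when forming the second component, so $\mstep_2$ is literally $\mstep$ and $\flatten{\mstep_2} = \mstep$ is non-empty. Note also how the paper avoids erasure and linearity reasoning altogether: its key claim (if $\judgSplit{\mstep_1}{\mstep_2}{\mstep_3}$ with $\flatten{\mstep_2} = \flatten{\mstep_3} = \refl{\tm}$, then $\mstep_1 = \mstep_2 = \mstep_3$) is proved by structural induction on the flat multistep $\mstep_1$ with case analysis on its head; since a rule symbol is not an abstraction, a rule symbol occurring at the \emph{head} of a split component can never be consumed by \flatRule{BetaM}, so the rule-symbol-head case is refuted on the spot and the variable/constant-head cases recurse into the arguments. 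Either repair works, but as written your proof of the crucial direction fails.
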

\begin{proof}
\quad
\begin{itemize}
\item $(1 \implies 2)$
  Let $\mstep = \refl{\tm}$.
  Then it is immediate as $\mstep \flateq \refl{\tm}$.
\item $(2 \implies 3)$
  Let $\mstep \flateq \refl{\tm}$
  and suppose that $\mstep \flateq \redseq$.
  We claim that there is a composition context $\kctx$ such that
  $\redseq = \kctxof{\refl{\tm},\hdots,\refl{\tm}}$.
  First, note that we have that $\refl{\tm} \flateq \redseq$. 
  This means that there is a sequence of applications of the axioms
  defining flat permutation equivalence such that
  $\refl{\tm} = \redseq_0
        \flateq \redseq_1
        \flateq \redseq_2
        \hdots
        \flateq \redseq_n = \redseq$.
  We proceed by induction on $n$.
  If $n = 0$, it is trivial taking $\kctx := \ctxhole$.
  For the inductive step, it suffices to show
  that if $\redseq_i$ is of the form
  $\redseq_i = \kctxof{\refl{\tm},\hdots,\refl{\tm}}$
  then $\redseq_{i+1}$ is of the form
  $\redseq_i = \kctx'\ctxof{\refl{\tm},\hdots,\refl{\tm}}$.
  The case for the \flateqRule{Assoc} rule is immediate.
  The interesting case is the \flateqRule{Perm} rule.
  There are two subcases, depending on whether the \flateqRule{Perm}
  rule is applied forwards or backwards:
  \begin{enumerate}
  \item {\bf Forwards application of \flateqRule{Perm}.}
    That is, $\redseq_i = \sctxof{\refl{\tm}}$
    and $\redseq_{i+1} = \sctxof{\flatten{\mstep_1}\seq\flatten{\mstep_2}}$
    where $\sctx$ is a composition context
    and $\judgSplit{\refl{\tm}}{\mstep_1}{\mstep_2}$.
    Given that $\tm$ has no rule symbols, it is easy to check by
    induction on $\tm$
    that $\mstep_1 = \refl{\tm}$ and $\mstep_2 = \refl{\tm}$.
    This concludes the proof of this case.
  \item {\bf Backwards application of \flateqRule{Perm}.}
    That is, $\redseq_i = \sctxof{\refl{\tm}\seq\refl{\tm}}$
    and $\redseq_{i+1} = \sctxof{\mstep_1}$
    where $\judgSplit{\mstep}{\mstep_2}{\mstep_3}$
    and $\flatten{\mstep_2} = \flatten{\mstep_3} = \refl{\tm}$.
    Note that $\mstep_1$ is a flat multistep.
    To finish the proof,
    it suffices to show that,
    $\mstep_1 = \mstep_2 = \mstep_3$.
    This is implied by the following claim.

    {\bf Claim.}
    Let $\judgSplit{\mstep_1}{\mstep_2}{\mstep_3}$
    where $\flatten{\mstep_2} = \flatten{\mstep_3} = \refl{\tm}$
    for some term $\tm$.
    Then $\mstep_1 = \mstep_2 = \mstep_3$. \\
    {\em Proof of the claim.}
    We proceed by induction on $\mstep_1$, following the characterization of
    $\tof$-normal multisteps, 
    There are three subcases, depending on the head of $\mstep_1$:
    \begin{enumerate}
    \item {\bf $\mstep_1$ headed by a variable.}
      Then
      $\mstep_1 = \lam{\vec{\var}}{\var\,\msteptwo_{11}\hdots\msteptwo_{1n}}$.
      The judgment $\judgSplit{\mstep_1}{\mstep_2}{\mstep_3}$
      must be derived by a number of
      applications of the \indrulename{SAbs} rule,
      followed by $n$ applications of the \indrulename{SApp} rule,
      followed by an application of the \indrulename{SVar} rule.
      Hence
      $\mstep_2 = \lam{\vec{\var}}{\var\,\msteptwo_{21}\hdots\msteptwo_{2n}}$
      and
      $\mstep_3 = \lam{\vec{\var}}{\var\,\msteptwo_{31}\hdots\msteptwo_{3n}}$,
      where moreover $\judgSplit{\msteptwo_{1i}}{\msteptwo_{2i}}{\msteptwo_{3i}}$
      for all $1 \leq i \leq n$.
      Since $\flatten{\mstep_2} = \flatten{\mstep_3}$
      then also $\flatten{\mstep_{2i}} = \flatten{\mstep_{3i}}$
      for all $1 \leq i \leq n$,
      and moreover $\flatten{\mstep_{2i}}$ must be a term
      (\ie without occurrences of rule symbols),
      for otherwise there would be a rule symbol in $\flatten{\mstep_2}$.
      Then by \ih we have that
      $\flatten{\mstep_{1i}} = \flatten{\mstep_{2i}} = \flatten{\mstep_{3i}}$
      for all $1 \leq i \leq n$.
      This concludes the proof.
    \item {\bf $\mstep_1$ headed by a constant.}
      Similar to the previous case, when $\mstep_1$ is headed by a variable.
    \item {\bf $\mstep_1$ headed by a rule symbol.}
      We argue that this case is impossible.
      Indeed, if
      $\mstep_1$ is of the form
      $\lam{\vec{\var}}{\rulewit\,\msteptwo_{11}\hdots\msteptwo_{1n}}$
      then the judgment $\judgSplit{\mstep_1}{\mstep_2}{\mstep_3}$
      must be derived by a number of
      applications of the \indrulename{SAbs} rule,
      followed by $n$ applications of the \indrulename{SApp} rule,
      followed by an application of either \indrulename{SRuleL}
      or \indrulename{SRuleR} at the head.
      If the judgment is derived
      by an application of \indrulename{SRuleL} at the head,
      then we have that
      $\mstep_2 = \lam{\vec{\var}}{\rulewit\,\msteptwo_{21}\hdots\msteptwo_{2n}}$
      and
      $\mstep_3 = \lam{\vec{\var}}{\rtgt{\rulewit}\,\msteptwo_{31}\hdots\msteptwo_{3n}}$
      and moreover that
      $\judgSplit{\mstep_{1i}}{\mstep_{2i}}{\mstep_{3i}}$
      for all $1 \leq i \leq n$.
      But this is impossible, given that $\flatten{\mstep_2}$ would contain
      a rule symbol, contradicting the fact that
      $\flatten{\mstep_2} = \refl{\tm}$.
      Using a similar argument, we note that the judgment
      cannot be derived by an application of the \indrulename{SRuleR} rule
      at the head.
    \end{enumerate}
  \end{enumerate}
\item $(3 \implies 4)$
  Suppose that there exists a term $\tm$
  such that if $\mstep \flateq \redseq$
  then $\redseq$ is of the form
  $\redseq = \kctxof{\refl{\tm},\hdots,\refl{\tm}}$.
  Moreover, suppose that $\mstep \flateq \msteptwo$.
  Then by hypothesis $\msteptwo$ must be of the form
  $\kctxof{\refl{\tm},\hdots,\refl{\tm}}$.
  Since $\msteptwo$ is a multistep, containing no compositions,
  then $\kctxof = \ctxhole$ and indeed $\msteptwo = \refl{\tm}$.
\item $(4 \implies 1)$
  Suppose that there exists a term $\tm$
  such that if $\mstep \flateq \msteptwo$
  then $\msteptwo = \refl{\tm}$.
  Then in particular, since $\mstep \flateq \mstep$,
  we have that $\mstep = \refl{\tm}$.
\end{itemize}
\end{proof}


\section{Standardization}
\lsec{standardization}

A rewrite is considered \emph{standard} if the steps of the computation
are ordered in such a way that they cannot be moved further upfront.
The process that reorders steps, converting rewrites into their standard
forms, is called \emph{standardization}.
Any two rewrites that convert to the same
standard rewrite are considered {\em standardization equivalent}.
In this section we introduce the
{\em standardization rewrite system}~(\rdef{standardization_rewrite_system}),
we show that it is
{\em strongly normalizing and confluent}~(\rprop{body:standardization_sn_cr}),
subject to a finiteness condition
and up to a notion of strong equivalence.
From this we derive
a {\em standardization result}~(\rthm{body:standardization_theorem}).

To formulate
standardization, it is convenient to deal with sequential
rewrites,
consisting of a sequence of composed flat multisteps
ending in a unit rewrite,
\ie rewrites of the form
$\mstep_1\seq\mstep_2\seq\hdots\seq\mstep_n\seq\refl{\tm}$ with $n \geq 0$.
Composition (``$\seq$'') is assumed to be right-associative.
The {\em length} of a sequential rewrite is
$\len{\mstep_1\seq\mstep_2\seq\hdots\seq\mstep_n\seq\refl{\tm}} \eqdef n$.
Each flat rewrite is equivalent to some (non-unique)
sequential rewrite, by associating compositions,
and using the equivalence $\mstep \flateq \mstep\seq\ftgtb{\mstep}$ to
ensure that the last multistep is empty.

\begin{defi}[Standardization rewrite system]
\ldef{standardization_rewrite_system}
The standardization rewrite system
is given by the set of all sequential rewrites,
with a reduction rule $\redseq \tostd \redseqtwo$ defined by:
\[
  \indrule{\stdrule{Del}}{
  }{
    \refl{\tm}\seq\redseq
    \tostd
    \redseq
  }
  \indrule{\stdrule{Pull}}{
    \mstep_{12} \flateq \mstep_1\seq\mstep_2
    \HS
    \mstep_{23} \flateq \mstep_2\seq\mstep_3
    \HS
    \mstep_2 \text{\ is non-empty}
  }{
    \mstep_1\seq\mstep_{23}\seq\redseq
    \tostd
    \mstep_{12}\seq\mstep_3\seq\redseq
  }
\]
and closed by a congruence rule \stdrule{Cong}
stating that $\redseq \tostd \redseqtwo$
implies $\mstep\seq\redseq \tostd \mstep\seq\redseqtwo$.
\end{defi}
Rule \stdrule{Del} removes empty steps from the rewrite
(other than the very last one). Rule \stdrule{Pull} ``pulls''
steps upfront.
\begin{exa}
Given typed constants $\fdup : (\iota \imp \iota) \to \iota$
and $\fun, \funtwo : \iota \to \iota$,
and rewriting rules
$\rewr{
  \rulewit
 }{
  \lam{\var}{\lam{\vartwo}{\fdup\,(\lam{\varthree}{\var\,\varthree})\,\vartwo}}
 }{
  \lam{\var}{\lam{\vartwo}{\var\,(\var\,\vartwo)}}
 }{
  (\iota\to\iota)\to\iota\to\iota
 }$
and
$\rewr{
  \rulewittwo
 }{
  \lam{\var}{\fun(\fun(\var))}
 }{
  \lam{\var}{\funtwo(\var)}
 }{
  \iota\to\iota
 }$,
note first that:
\[
  \begin{array}{l@{\HS\HS}l}
    \fdup(\rulewit\,\fun) :
    \fdup\,(\fdup\,\fun)
    \rewto
    \fdup\,(\lam{\var}{\fun\,(\fun\,\var)})
  &
    \fdup(\rulewittwo) :
    \fdup\,(\lam{\var}{\fun\,(\fun\,\var)})
    \rewto
    \fdup\,\funtwo
  \\
    \rulewit\,\funtwo :
    \fdup\,\funtwo
    \rewto
    \lam{\var}{\funtwo\,(\funtwo\,\var)}
  &
    \rulewit\,(\rulewit\,\fun) :
    \fdup\,(\fdup\,\fun)
    \rewto
    \lam{\var}{\fun\,(\fun\,(\fun\,(\fun\,\var)))}
  \\
    \rulewit\,\rulewittwo :
    \fdup\,(\lam{\var}{\fun\,(\fun\,\var)})
    \rewto
    \lam{\var}{\funtwo\,(\funtwo\,\var)}
  &
    \lam{\var}{\rulewittwo\,(\rulewittwo\,\var)} :
    \lam{\var}{\fun\,(\fun\,(\fun\,(\fun\,\var)))}
    \rewto
    \lam{\var}{\funtwo\,(\funtwo\,\var)}
  \\
  \end{array}
\]
Moreover:
\[
  \begin{array}{ll}
  &
    \fdup(\rulewit\,\fun)
    \seq \fdup(\rulewittwo)
    \seq \rulewit\,\funtwo
    \seq \refl{\lam{\var}{\funtwo\,(\funtwo\,\var)}}
     \\
    \tostdBy{Pull} &
    \fdup(\rulewit\,\fun)
    \seq \rulewit\,\rulewittwo
    \seq \refl{\lam{\var}{\funtwo\,(\funtwo\,\var)}}
    \seq \refl{\lam{\var}{\funtwo\,(\funtwo\,\var)}}
  \\
  \tostdBy{Del} &
    \fdup(\rulewit\,\fun)
    \seq \rulewit\,\rulewittwo
    \seq \refl{\lam{\var}{\funtwo\,(\funtwo\,\var)}}
    \\
    \tostdBy{Pull} &
    \rulewit\,(\rulewit\,\fun)
    \seq \lam{\var}{\rulewittwo\,(\rulewittwo\,\var)}
    \seq \refl{\lam{\var}{\funtwo\,(\funtwo\,\var)}}
  \end{array}
\]
\end{exa}

One important observation in the presence of erasing rewrite rules is
that  
 $\mstep \flateq \msteptwo$
does not necessarily imply that $\mstep = \msteptwo$.
For example, if
$\rulewit : \lam{\var}{\cons\,\var} \to \lam{\var}{\constwo}$
and
$\rulewittwo : \consthree \to \consthree'$
then the multistep
$\rulewit\,\rulewittwo : \cons\,\consthree \to \constwo$
and the multistep
$\rulewit\,\refl{\consthree} : \cons\,\consthree \to \constwo$
are permutation equivalent.
To justify this, note that, on one hand,
$\judgSplit{
   \rulewit\,\rulewittwo
 }{
   \rulewit\,\refl{\consthree}
 }{
   (\lam{\var}{\refl{\constwo}})\,\rulewittwo
 }$
and, on the other hand,
$\judgSplit{
   \rulewit\,\refl{\consthree}
 }{
   \rulewit\,\refl{\consthree}
 }{
   (\lam{\var}{\refl{\constwo}})\,\refl{\consthree}
 }$
so, by the \flatRule{Perm} rule,
$\rulewit\,\rulewittwo
 \flateq
 \rulewit\,\refl{\consthree} \seq \refl{\constwo}
 \flateq
 \rulewit\,\refl{\consthree}$.
Thus, in the standardization system that we propose,
uniqueness of standardization will be determined
modulo a notion of {\em strong equivalence}
rather than syntactical equality of the resulting sequential rewrites.

\begin{defi}[Strong equivalence]
Two sequential rewrites $\redseq,\redseqtwo$
are said to be {\em strongly equivalent},
written $\redseq \streq \redseqtwo$,
if they are of the form
$\redseq = \mstep_1\seq\hdots\seq\mstep_n\seq\refl{\tm}$
and
$\redseqtwo = \msteptwo_1\seq\hdots\seq\msteptwo_n\seq\refl{\tm}$
where $\mstep_i \flateq \msteptwo_i$ for all $i\in1..n$.
\end{defi}

It can then be shown that standardization is locally confluent,
up to strong equivalence:

\begin{lem}[Local confluence of $\tostd$ up to $\streq$]
\llem{body:streq_strong_bisimulation}
\llem{body:standardization_wcr}
\quad
\begin{enumerate}
\item
  \resultName{Strong Bisimulation.}
  If $\redseq \streq \redseq' \tostd \redseqtwo$
  then there exists $\redseqtwo'$ such that
  $\redseq \tostd \redseqtwo' \streq \redseqtwo$.
\item
  \resultName{Local Confluence.}
  If $\redseq_1 \tostd \redseq_2$ 
  and $\redseq_1 \tostd \redseq_3$ 
  then $\redseq_2 \tostd^*\streq \redseq_4$
  and $\redseq_3 \tostd^*\streq \redseq_4$
  for some $\redseq_4$.
\end{enumerate}
\end{lem}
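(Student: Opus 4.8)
The plan is to prove the Strong Bisimulation first and then use it as a black box in the Local Confluence argument, so that $\streq$ may be absorbed freely into the closing $\tostd^{*}$ reductions. For the Strong Bisimulation I would induct on the derivation of $\redseq' \tostd \redseqtwo$ and case on the last rule, exploiting that $\redseq \streq \redseq'$ forces equal length, the same trailing term, and componentwise $\flateq$-agreement of the multisteps. In the $\stdrule{Cong}$ case, $\redseq' = \mstep'\seq\kappa'$ with $\kappa'\tostd\lambda'$; writing $\redseq = \mstep\seq\kappa$ with $\mstep\flateq\mstep'$ and $\kappa\streq\kappa'$, the induction hypothesis gives $\kappa\tostd\lambda\streq\lambda'$ and $\mstep\seq\lambda$ is the required reduct. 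In the $\stdrule{Del}$ case the head multistep of $\redseq'$ is empty, hence the corresponding head $\mstep_1$ of $\redseq$ is empty up to $\flateq$; by \rlem{characterization_of_empty_multisteps} (items~1 and~2) $\mstep_1$ is literally a unit $\refl{\tm_1}$, so $\stdrule{Del}$ fires on $\redseq$ and the tails are already $\streq$. In the $\stdrule{Pull}$ case the key observation is that the witnessing data transfer verbatim: from $\mstep_{12}'\flateq\mstep_1'\seq\mstep_2$ and $\mstep_1\flateq\mstep_1'$, transitivity of $\flateq$ yields $\mstep_{12}'\flateq\mstep_1\seq\mstep_2$, so the same $\stdrule{Pull}$ instance applies to $\redseq$ and the reducts agree componentwise up to $\flateq$.

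For Local Confluence I would carry out a critical-pair analysis, classifying the two redexes by the relative position of the windows on which they act (a $\stdrule{Del}$ window is a single empty multistep, a $\stdrule{Pull}$ window a consecutive pair). Disjoint windows commute and close in one step on each side. Among overlapping windows, $\stdrule{Del}/\stdrule{Del}$ overlaps are either identical or trivially joinable; the putative $\stdrule{Del}/\stdrule{Pull}$ overlap in which $\stdrule{Del}$ empties the very multistep that $\stdrule{Pull}$ must split with a \emph{non-empty} first component is impossible, since splitting an empty multistep yields only empty components (the claim inside \rlem{characterization_of_empty_multisteps}); and the remaining $\stdrule{Del}/\stdrule{Pull}$ overlap, where $\stdrule{Del}$ removes the empty left operand of the $\stdrule{Pull}$, closes by a short computation that re-merges the pulled chunk and deletes the resulting empty step, using the identity splits $\judgSplit{\mstep}{\refl{\rsrc{\mstep}}}{\mstep}$ and $\judgSplit{\mstep}{\mstep}{\refl{\rtgt{\mstep}}}$.

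The genuine obstacle is the $\stdrule{Pull}/\stdrule{Pull}$ case, both when the two instances act on the same consecutive pair and when they act on adjacent, overlapping pairs. There the two results $\mstep_{12}\seq\mstep_3$ and $\hat{\mstep}_{12}\seq\hat{\mstep}_3$ are $\flateq$-equal as whole rewrites but need not be componentwise $\streq$, so both must be reduced further toward a common configuration. Reading a split $\judgSplit{\mstep_{23}}{\mstep_2}{\mstep_3}$ as a choice of which rule-symbol occurrences are performed first, the two steps pull forward two different non-empty subsets; the plan is to pull the remaining difference forward on each side with one further $\stdrule{Pull}$, so that both sides reach the configuration performing the \emph{union} first, equal up to $\streq$. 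The technical heart is to certify that the intermediate single multisteps these extra $\stdrule{Pull}$ steps require actually exist and that their residuals coincide; I would isolate this as an auxiliary commutation lemma for splittings, proved with the residual-system structure of multisteps (the Cube Lemma, \rprop{body:properties_of_projection}), the generalized \flateqRule{Perm} rule (\rprop{generalized_flateq_perm}), and the projection-of-splitting identities (\rlem{projection_of_splittings_and_rewrites}), invoking \rlem{body:streq_strong_bisimulation} to transport $\streq$ across the closing $\tostd^{*}$ steps. The adjacent-pair overlap is dispatched analogously, with the shared multistep playing the merge role on one side and the split role on the other, reshuffled into a common standard form modulo $\streq$.
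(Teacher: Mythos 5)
Your proposal is correct and takes essentially the same route as the paper: part (1) by induction on the step, using the characterization of empty multisteps for \stdrule{Del} and verbatim transfer of the witnessing $\flateq$-premises for \stdrule{Pull}; and part (2) by critical-pair analysis in which the \stdrule{Pull}/\stdrule{Pull} overlaps are closed by pulling the mutual residuals $\mstep'_2/\mstep_2$ and $\mstep_2/\mstep'_2$ forward so that both sides reach the ``union'' configuration, which is exactly the paper's argument. The one detail to make explicit when you execute your auxiliary existence lemma is the case split on emptiness of these residuals: \stdrule{Pull}'s non-emptiness side condition means your ``one further \stdrule{Pull} on each side'' degenerates to zero steps on any side whose residual is empty (the paper's subcases where the diagram closes by $\streq$ alone, or by a single \stdrule{Pull} on just one side), a situation your $\tostd^{*}\streq$ formulation already accommodates but which the plan as stated does not name.
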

\begin{proof}
The proof of the first item is by induction on the derivation of $\redseq' \tostd \redseqtwo$. It relies on the property that  $\mstep = \refl{\tm}$ iff $\mstep \flateq \refl{\tm}$. For the second item, it suffices to study the cases in which the rewriting rules
\stdrule{Del} and \stdrule{Pull} overlap. We present one sample case, the most interesting one, namely the \stdrule{Pull}/\stdrule{Pull} case.
  Consider a \stdrule{Pull} step
  $\mstep_1\seq\mstep_{23}\seq\redseq
   \tostd
   \mstep_{12}\seq\mstep_3\seq\redseq$
  where $\mstep_{12} \flateq \mstep_1\seq\mstep_2$
  and $\mstep_{23} \flateq \mstep_2\seq\mstep_3$
  with $\mstep_2$ a non-empty multistep.
  Moreover, suppose that there is
  another \stdrule{Pull} step at the root,
  of the form
  $\mstep_1\seq\mstep_{23}\seq\redseq
   \tostd
   \mstep'_{12}\seq\mstep'_3\seq\redseq$
  where $\mstep'_{12} \flateq \mstep_1\seq\mstep'_2$
  and $\mstep_{23} \flateq \mstep'_2\seq\mstep'_3$,
  with $\mstep'_2$ a non-empty multistep.

  Note that
  $\mstep_2\seq\mstep_3
   \flateq \mstep_{23}
   \flateq \mstep'_2\seq\mstep'_3$.
  Hence:
  \[
    \begin{array}{rlll}
      \mstep_3
    & \flateq &
      \ftgt{\mstep_2} \seq \mstep_3
      & \text{by \rthm{body:soundness_completeness_flateq} using \permeqRule{IdL}}
    \\
    & \flateq &
      \ftgt{\mstep_2} \seq (\mstep_3/\ftgt{\mstep_2})
      & \text{by \rprop{body:properties_of_projection}}
    \\
    & \flateq &
      (\mstep_2/\mstep_2)\seq(\mstep_3/(\mstep_2/\mstep_2))
      & \text{by \rlem{self_erasure} and \rprop{congruence_flateq_projection}}
    \\
    & = &
      (\mstep_2\seq\mstep_3)/\mstep_2
      & \text{by definition of projection}
    \\
    & \flateq &
      (\mstep'_2\seq\mstep'_3)/\mstep_2
      & \text{by \rprop{congruence_flateq_projection}}
    \\
    & = &
      (\mstep'_2/\mstep_2)\seq(\mstep'_3/(\mstep_2/\mstep'_2))
      & \text{by definition of projection}
    \end{array}
  \]
  Symmetrically,
  $\mstep'_3 \flateq
   (\mstep_2/\mstep'_2)\seq(\mstep_3/(\mstep'_2/\mstep_2))$.

  The proof proceeds by case analysis, depending on whether
  $\mstep'_2/\mstep_2$ and $\mstep_2/\mstep'_2$
  are empty or not. There are four subcases:
  \begin{enumerate} 
  \item {\bf $\mstep'_2/\mstep_2$ and $\mstep_2/\mstep'_2$ are both empty.}
    Then, note that $\mstep_{12} \flateq \mstep'_{12}$
    and $\mstep_3 \flateq \mstep'_3$. Indeed:
    \[
      \begin{array}{rlll}
        \mstep_{12}
      & \flateq &
        \mstep_1\seq\mstep_2
      \\
      & \flateq &
        \mstep_1\seq\mstep'_2
        & \text{by \rthm{body:projection_equivalence}}
      \\
      & \flateq &
        \mstep'_{12}
      \end{array}
    \]
    \[
      \begin{array}{rlll}
        \mstep_3
      & \flateq &
        (\mstep'_2/\mstep_2)\seq(\mstep'_3/(\mstep_2/\mstep'_2))
        & \text{as shown above}
      \\
      & = &
        \ftgt{\mstep_2}\seq(\mstep'_3/\ftgt{(\mstep'_2)})
        & \text{as $\mstep_2/\mstep'_2$ is empty}
      \\
      & \permeq &
        \mstep'_3
        & \text{by \rthm{body:soundness_completeness_flateq},
                   \rprop{body:properties_of_projection}}
      \end{array}
    \]
    Hence the diagram may be closed as follows:
    \[
      \xymatrix{
        \mstep_1\seq\mstep_{23}\seq\redseq
        \ar[r]^-{\stdrule{Pull}}
        \ar[d]_-{\stdrule{Pull}}
      &
        \mstep_{12}\seq\mstep_3\seq\redseq
      \\
        \mstep'_{12}\seq\mstep'_3\seq\redseq
        \ar@{.}[ru]_-{\streq}
      }
    \]
  \item {\bf $\mstep'_2/\mstep_2$ is empty and
             $\mstep_2/\mstep'_2$ is non-empty.}
    Let $\msteptwo := \mstep_{12}$.
    First note that:
    \[
      \begin{array}{rcll}
        \msteptwo
      & = &
        \mstep_{12}
      \\
      & \flateq &
        \mstep_1\seq\mstep_2
      \\
      & \flateq &
        \mstep_1\seq\mstep_2\seq\ftgt{\mstep_2}
        & \text{by \rprop{body:properties_of_projection}}
      \\
      & \flateq &
        \mstep_1\seq\mstep_2\seq(\mstep'_2/\mstep_2)
        & \text{as $\mstep'_2/\mstep_2$ is empty}
      \\
      & \flateq &
        \mstep_1\seq\mstep'_2\seq(\mstep_2/\mstep'_2)
        & \text{by \rlem{multistep_permutation}}
      \\
      & \flateq &
        \mstep'_{12}\seq(\mstep_2/\mstep'_2)
      \end{array}
    \]
    Moreover, note that:
    \[
      \begin{array}{rlll}
        \mstep'_3
      & \flateq &
        (\mstep_2/\mstep'_2)\seq(\mstep_3/(\mstep'_2/\mstep_2))
        & \text{as shown above}
      \\
      & \flateq &
        (\mstep_2/\mstep'_2)\seq(\mstep_3/\ftgt{\mstep_2})
        & \text{as $\mstep'_2/\mstep_2$ is empty}
      \\
      & \flateq &
        (\mstep_2/\mstep'_2)\seq\mstep_3
        & \text{by \rprop{body:properties_of_projection}}
      \end{array}
    \]
    Since $\mstep_2/\mstep'_2$ is non-empty,
    this means that there is a \stdrule{Pull} step
    $\mstep'_{12}\seq\mstep'_3\seq\redseq
     \tostd
     \msteptwo\seq\mstep_3\seq\redseq$.

    Then:
    \[
      \xymatrix{
        \mstep_1\seq\mstep_{23}\seq\redseq
        \ar[r]^-{\stdrule{Pull}}
        \ar[d]_-{\stdrule{Pull}}
      &
        \mstep_{12}\seq\mstep_3\seq\redseq
        \ar@{.}[d]^-{\streq}
      \\
        \mstep'_{12}\seq\mstep'_3\seq\redseq
        \ar@{.>}[r]_-{\stdrule{Pull}}
      &
        \msteptwo\seq\mstep_3\seq\redseq
      }
    \]
  \item {\bf $\mstep'_2/\mstep_2$ is non-empty and
             $\mstep_2/\mstep'_2$ is empty.}
    Symmetric to the previous case.
  \item {\bf $\mstep'_2/\mstep_2$ and $\mstep_2/\mstep'_2$ are non-empty.}
    Let $\msteptwo := \mstep_{12} \join \mstep'_{12}$
    and $\mstepthree := \mstep'_3/(\mstep_2/\mstep'_2)$.
    We claim that there are two \stdrule{Pull} steps,
    $\mstep_{12}\seq\mstep_3\seq\redseq
     \tostd \msteptwo\seq\mstepthree\seq\redseq$
    and
    $\mstep'_{12}\seq\mstep'_3\seq\redseq
     \tostd \msteptwo\seq\mstepthree\seq\redseq$.
    We prove these two facts:
    \begin{enumerate}
    \item
      First, note that:
      \[
        \begin{array}{rlll}
          \msteptwo
        & = &
          \mstep_{12} \join \mstep'_{12}
        \\
        & \flateq &
          \mstep_{12} \seq (\mstep'_{12}/\mstep_{12})
          & \text{by \rlem{rewrite_permutation}} 
        \\
        & \flateq &
          \mstep_{12} \seq ((\mstep_1\seq\mstep'_2)/(\mstep_1\seq\mstep_2))
          & \text{by \rprop{congruence_flateq_projection}}
        \\
        & \flateq &
          \mstep_{12}\seq(\mstep'_2/\mstep_2)
          & \text{by \rlem{self_erasure} and \rprop{body:properties_of_projection}}
        \end{array}
      \]
      and recall that, as shown above
      $
        \mstep_3
        \flateq (\mstep'_2/\mstep_2)\seq(\mstep'_3/(\mstep_2/\mstep'_2))
        = (\mstep'_2/\mstep_2)\seq\mstepthree
      $.
      Hence, since $\mstep'_2/\mstep_2$ is non-empty,
      there is a \stdrule{Pull} step
      $\mstep_{12}\seq\mstep_3\seq\redseq
       \tostd \msteptwo\seq\mstepthree\seq\redseq$.
    \item
      Second, note that:
      \[
        \begin{array}{rlll}
          \msteptwo
        & \flateq &
          \mstep_{12}\seq(\mstep'_2/\mstep_2)
          & \text{as already shown}
        \\
        & \flateq &
          \mstep_1\seq\mstep_2\seq(\mstep'_2/\mstep_2)
        \\
        & \flateq &
          \mstep_1\seq\mstep'_2\seq(\mstep_2/\mstep'_2)
          & \text{by \rlem{rewrite_permutation}} 
        \\
        & \flateq &
          \mstep'_{12}\seq(\mstep_2/\mstep'_2)
        \end{array}
      \]
      Moreover, recall that, as shown above,
      $\mstep'_3 \flateq (\mstep_2/\mstep'_2)\seq(\mstep_3/(\mstep'_2/\mstep_2))$.
      Therefore:
      \[
        \begin{array}{rlll}
          (\mstep_2/\mstep'_2)/\mstep'_3
        & = &
          (\mstep_2/\mstep'_2)/((\mstep_2/\mstep'_2)\seq
                                (\mstep_3/(\mstep'_2/\mstep_2)))
          & \text{by \rprop{congruence_flateq_projection}}
        \\
        & = &
          ((\mstep_2/\mstep'_2)/(\mstep_2/\mstep'_2))/
          (\mstep_3/(\mstep'_2/\mstep_2))
          & \text{by definition of projection}
        \\
        & \flateq &
          \ftgt{(\mstep'_2)}/
          (\mstep_3/(\mstep'_2/\mstep_2))
          & \text{by \rlem{self_erasure} and \rprop{congruence_flateq_projection}}
        \\
        & \flateq &
          \ftgt{(\mstep_3/(\mstep'_2/\mstep_2))}
          & \text{by \rprop{body:properties_of_projection}}
        \end{array}
      \]
      This means that $(\mstep_2/\mstep'_2)/\mstep'_3$ is equivalent
      to an empty multistep.
      By \rlem{characterization_of_empty_multisteps}, this implies
      that $(\mstep_2/\mstep'_2)/\mstep'_3$ itself has to be empty.
      As a consequence, we have that:
      \[
        \begin{array}{rlll}
          \mstep'_3
        & \flateq &
          \mstep'_3\seq((\mstep_2/\mstep'_2)/\mstep'_3)
          & \text{as $(\mstep_2/\mstep'_2)/\mstep'_3$ is empty}
        \\
        & \flateq &
          (\mstep_2/\mstep'_2)\seq(\mstep'_3/(\mstep_2/\mstep'_2))
          & \text{by \rlem{rewrite_permutation}} 
        \\
        & = &
          (\mstep_2/\mstep'_2)\seq\mstepthree
        \end{array}
      \]
      Since $\mstep_2/\mstep'_2$ is non-empty,
      there is a \stdrule{Pull} step
      $\mstep'_{12}\seq\mstep'_3\seq\redseq
       \tostd
       \msteptwo\seq\mstepthree\seq\redseq$.
    \end{enumerate}
    With these two \stdrule{Pull} steps, we are able to close the diagram:
    \[
      \xymatrix{
        \mstep_1\seq\mstep_{23}\seq\redseq
        \ar[r]^-{\stdrule{Pull}}
        \ar[d]_-{\stdrule{Pull}}
      &
        \mstep_{12}\seq\mstep_3\seq\redseq
        \ar@{.>}[d]^-{\stdrule{Pull}}
      \\
        \mstep'_{12}\seq\mstep'_3\seq\redseq
        \ar@{.>}[r]_-{\stdrule{Pull}}
      &
        \msteptwo\seq\mstepthree\seq\redseq
      }
    \]
  \end{enumerate} 
\item \stdrule{Pull}/\stdrule{Pull} (2):
  Consider two overlapping \stdrule{Pull} steps,
  the first one of the form
  \[
    \mstep_1\seq\mstep_{23}\seq\mstep_{45}\seq\redseq
    \tostd
    \mstep_{12}\seq\mstep_3\seq\mstep_{45}\seq\redseq
  \]
  where
  $\mstep_{12} \flateq \mstep_1\seq\mstep_2$
  and
  $\mstep_{23} \flateq \mstep_2\seq\mstep_3$
  with $\mstep_2$ a non-empty multistep,
  and the second one of the form
  \[
    \mstep_1\seq\mstep_{23}\seq\mstep_{45}\seq\redseq
    \tostd
    \mstep_1\seq\mstep_{234}\seq\mstep_5\seq\redseq
  \]
  where $\mstep_{234} \flateq \mstep_{23}\seq\mstep_4$ 
  and $\mstep_{45} \flateq \mstep_4\seq\mstep_5$,
  with $\mstep_4$ a non-empty multistep.

  Let $\mstep_{34} := \mstep_{234}/\mstep_2$.
  We claim that there are two \stdrule{Pull} steps,
  $\mstep_{12}\seq\mstep_3\seq\mstep_{45} \tostd
   \mstep_{12}\seq\mstep_{34}\seq\mstep_5$
  and
  $\mstep_1\seq\mstep_{234}\seq\mstep_5 \tostd
   \mstep_{12}\seq\mstep_{34}\seq\mstep_5$.

  \begin{enumerate}
  \item
    First, note that:
    \[
      \begin{array}{rlll}
        \mstep_{34}
      & = &
        \mstep_{234}/\mstep_2
      \\
      & \flateq &
        ((\mstep_2 \seq \mstep_3)\seq\mstep_4)/\mstep_2
        & \text{by \rprop{congruence_flateq_projection}}
      \\
      & \flateq &
        \mstep_3\seq\mstep_4
        & \text{by \rlem{self_erasure} and \rprop{body:properties_of_projection}}
      \end{array}
    \]
    and recall that
    $\mstep_{45} \flateq \mstep_4\seq\mstep_5$.
    Since $\mstep_4$ is non-empty,
    this gives us a \stdrule{Pull} step
    $\mstep_{12}\seq\mstep_3\seq\mstep_{45} \tostd
     \mstep_{12}\seq\mstep_{34}\seq\mstep_5$.
  \item
    Second, recall that $\mstep_{12} \flateq \mstep_1\seq\mstep_2$
    and note that:
    \[
      \begin{array}{rlll}
        \mstep_{234}
      & \flateq &
        \mstep_{23}\seq\mstep_4
      \\
      & \flateq &
        \mstep_2\seq\mstep_3\seq\mstep_4
      \\
      & \flateq &
        \mstep_2\seq\mstep_{34}
        & \text{as $\mstep_{34} \flateq \mstep_3\seq\mstep_4$,
                as shown above}
      \end{array}
    \]
    Since $\mstep_2$ is non-empty,
    this gives us a \stdrule{Pull} step
    $\mstep_1\seq\mstep_{234}\seq\mstep_5 \tostd
     \mstep_{12}\seq\mstep_{34}\seq\mstep_5$.
  \end{enumerate}
  With these two \stdrule{Pull} steps, we are able to close the diagram:
  \[
    \xymatrix{
      \mstep_1\seq\mstep_{23}\seq\mstep_{45}\seq\redseq
      \ar[r]^-{\stdrule{Pull}}
      \ar[d]_-{\stdrule{Pull}}
    &
      \mstep_{12}\seq\mstep_3\seq\mstep_{45}\seq\redseq
      \ar@{.>}[d]^-{\stdrule{Pull}}
    \\
      \mstep_1\seq\mstep_{234}\seq\mstep_5\seq\redseq
      \ar@{.>}[r]_-{\stdrule{Pull}}
    &
      \mstep_{12}\seq\mstep_{34}\seq\mstep_5\seq\redseq
    }
  \]
\end{proof}

If $\mstep$ is a multistep, an {\em unfolding} of $\mstep$
is any sequential rewrite $\mstep_1\seq\hdots\seq\mstep_n\seq\refl{\tm}$
such that $\mstep \flateq \mstep_1\seq\hdots\seq\mstep_n\seq\refl{\tm}$
and such that the $\mstep_i$ are not empty.
The remainder of this section {\bf relies on the following assumption}
on the HRS $\hrsone$,
which allows to prove strong normalization and the standardization
theorem itself.
\begin{assumption}[Finiteness condition]
For any given multistep $\mstep$,
the length of the unfoldings
of $\mstep$ is bounded. We write $\depth{\mstep}$ for the length
of the longest unfolding of $\mstep$,
called its {\em depth}.
\end{assumption}

Any strongly normalizing HRS enjoys the finiteness condition.
Unfortunately, the untyped $\lambda$-calculus does not.
For instance, given terms
$\omega = \flam\,(\lam{\var}{\fapp\,\var\,\var})$
and
$\Omega = \fapp\,\omega\,\omega$
and the rewrite
$\redseq = \beta\,(\lam{\var}{\fapp\,\var\,\var})\,\omega : \Omega \rewto \Omega$,
then
$\beta\,(\lam{\var}{\vartwo})\,\Omega :
  \fapp\,(\flam\,(\lam{\var}{\vartwo}))\,\Omega \rewto \vartwo$
has arbitrarily long unfoldings:
$
\fapp\,(\flam\,(\lam{\var}{\vartwo}))\,\redseq
\seq \hdots \seq
\fapp\,(\flam\,(\lam{\var}{\vartwo}))\,\redseq
\seq
\beta\,(\lam{\var}{\vartwo})\,\Omega$.

\begin{prop}
\lprop{body:standardization_sn_cr}
Standardization is strongly normalizing,
and it is confluent up to $\streq$,
that is,
if $\redseq \tostd^* \redseq_1$
and $\redseq \tostd^* \redseq_2$
then there are $\redseq'_1,\redseq'_2$
such that $\redseq_1 \tostd^* \redseq'_1$
and $\redseq_2 \tostd^* \redseq'_2$
and $\redseq'_1 \streq \redseq'_2$.
\end{prop}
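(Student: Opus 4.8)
The plan is to prove the two halves separately. \emph{Strong normalization} of $\tostd$ will follow from a position-weighted well-founded measure built from the depth function $\depth{\cdot}$; this is exactly where the finiteness condition is indispensable, since it guarantees that $\depth{\mstep}$ is a genuine natural number for every multistep, and the counterexample at the end of the section shows that without it $\tostd$ diverges. \emph{Confluence up to $\streq$} will then follow from strong normalization together with the two parts of \rlem{body:standardization_wcr} by a Newman-style argument carried out modulo the equivalence $\streq$.

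For strong normalization, to a sequential rewrite $\redseq = \mstep_1 \seq \cdots \seq \mstep_n \seq \refl{\tm}$ I would assign the pair $(M(\redseq), \len{\redseq})$ ordered lexicographically, where
\[
  M(\redseq) \eqdef \sum_{i=1}^{n} i \cdot \depth{\mstep_i}.
\]
First I would observe that the congruence rule \stdrule{Cong} only prepends a fixed multistep, uniformly shifting indices, so it preserves strict lexicographic decreases. Rule \stdrule{Del} deletes an empty multistep, strictly decreasing $\len{\redseq}$ while not increasing $M(\redseq)$, since the deleted step has depth $0$ and each later step moves to a smaller index. Rule \stdrule{Pull} turns $\mstep_1 \seq \mstep_{23}$ into $\mstep_{12} \seq \mstep_3$ with $\mstep_{12} \flateq \mstep_1 \seq \mstep_2$, $\mstep_{23} \flateq \mstep_2 \seq \mstep_3$ and $\mstep_2$ non-empty; it preserves $\len{\redseq}$ and moves the non-empty work $\mstep_2$ from one position to the immediately preceding one, so that $M(\redseq)$ strictly decreases by $\depth{\mstep_2} \geq 1$. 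Hence $(M,\len)$ strictly decreases along every $\tostd$-step.

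The crux of the \stdrule{Pull} case, and the main obstacle of the whole proof, is that this computation relies on \emph{additivity of depth}, namely $\depth{\mstep \seq \msteptwo} = \depth{\mstep} + \depth{\msteptwo}$ whenever the composition is flat-equivalent to a single multistep; applied twice, this says that \stdrule{Pull} conserves the total depth $\sum_i \depth{\mstep_i}$. The inequality $\geq$ is immediate, as concatenating longest unfoldings of $\mstep$ and $\msteptwo$ yields an unfolding of $\mstep \seq \msteptwo$. The reverse inequality — that every unfolding of the composite can be split, up to $\flateq$, into an unfolding of $\mstep$ followed by one of $\msteptwo$ — is the delicate point; I would establish it using the residual structure of \rprop{body:properties_of_projection} together with the permutation lemmas, projecting a given unfolding of $\mstep \seq \msteptwo$ past $\mstep$ in order to recover the boundary between the two factors.

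For confluence up to $\streq$, strong normalization licenses well-founded induction along $\tostd$. The two required ingredients are local confluence up to $\streq$ (\rlem{body:standardization_wcr}(2)) and the strong bisimulation $\streq \cdot \tostd \subseteq \tostd \cdot \streq$ (\rlem{body:standardization_wcr}(1)); together these are precisely the local-coherence conditions for confluence modulo an equivalence of a terminating relation. Concretely, given $\redseq \tostd^* \redseq_1$ and $\redseq \tostd^* \redseq_2$, I would split off the first step of each reduction, close the resulting local peak by \rlem{body:standardization_wcr}(2) to obtain a common reduct up to $\streq$, and then use the strong bisimulation to transport the remaining reductions across the intervening $\streq$ so that the induction hypothesis — available because the immediate reducts are strictly below $\redseq$ in the termination order — applies to join everything up to $\streq$. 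The care here is purely in the bookkeeping of the modulo-$\streq$ diagram chase, which is routine once the two local properties and termination are in hand; the genuine difficulty of the proposition is the additivity lemma underlying the termination measure.
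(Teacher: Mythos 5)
Your confluence half is fine and is essentially the paper's argument: strong normalization together with the two parts of \rlem{body:standardization_wcr} (local confluence up to $\streq$ and the strong bisimulation property of $\streq$) yield the result by Newman's lemma applied to $\tostd$ lifted to $\streq$-equivalence classes. The gap is in the strong-normalization half. Your measure $M(\redseq)=\sum_i i\cdot\depth{\mstep_i}$ genuinely requires the upper bound $\depth{\mstep_{12}}\leq\depth{\mstep_1}+\depth{\mstep_2}$ (without it the term $i\cdot\depth{\mstep_{12}}$ in the new sum is uncontrolled, and the \stdrule{Cong} case breaks as well), and the route you sketch for proving it fails. Concretely, take a duplicating rule $\rulewit : \lam{\var}{\consof{d}\,\var} \to \lam{\var}{\consof{p}\,\var\,\var}$ and a rule $\rulewittwo : \consof{a} \to \consof{b}$. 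The splitting $\judgSplit{\rulewit\,\rulewittwo}{\rulewit\,\refl{\consof{a}}}{(\lam{\var}{\consof{p}\,\var\,\var})\,\rulewittwo}$ gives $\rulewit\,\rulewittwo \flateq \mstep_1\seq\mstep_2$ with $\mstep_1 = \rulewit\,\consof{a}$ of depth $1$ and $\mstep_2 = \consof{p}\,\rulewittwo\,\rulewittwo$ of depth $2$, while the symmetric splitting $\judgSplit{\rulewit\,\rulewittwo}{(\lam{\var}{\consof{d}\,\var})\,\rulewittwo}{\rulewit\,\refl{\consof{b}}}$ shows that $(\consof{d}\,\rulewittwo)\seq(\rulewit\,\consof{b})\seq\refl{\consof{p}\,\consof{b}\,\consof{b}}$ is an unfolding of $\mstep_1\seq\mstep_2$. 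Its first step performs only $\rulewittwo$-work (belonging to $\mstep_2$) and its second only $\rulewit$-work (belonging to $\mstep_1$): the contributions of the two factors occur in inverted order and neither step straddles the boundary, so this unfolding admits no factorization into an unfolding of $\mstep_1$ followed by one of $\mstep_2$. Reading ``split up to $\flateq$'' loosely does not rescue the argument, because $\flateq$ does not preserve length, so a factorization up to $\flateq$ (which exists trivially) gives no bound on the length of the given unfolding. In this example the inequality itself still holds ($3\leq 1+2$), so what fails is your proof route, not necessarily the statement; but establishing additivity would need a genuinely different and substantial argument (e.g.\ first permuting an arbitrary unfolding so that all $\mstep_1$-work comes first, itself a standardization-type result), and your proposal leaves exactly this as its load-bearing step.

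The paper's proof shows the lemma is unnecessary. Its Sekar--Ramakrishnan measure $\SR{\mstep_1\seq\cdots\seq\mstep_n\seq\refl{\tm}} = (n,\depth{\mstep_n},\hdots,\depth{\mstep_2},\depth{\mstep_1})$ compares length first and then depths lexicographically \emph{from the last multistep to the first}. A \stdrule{Pull} step leaves all positions to the right of the affected pair untouched and strictly decreases the depth at the right-hand position, since prepending the non-empty $\mstep_2$ to a longest unfolding of $\mstep_3$ yields an unfolding of $\mstep_{23}$, whence $\depth{\mstep_{23}}\geq 1+\depth{\mstep_3}$ --- only your ``easy direction''. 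The possibly large jump from $\depth{\mstep_1}$ to $\depth{\mstep_{12}}$ sits at a strictly less significant coordinate and is therefore harmless; \stdrule{Del} decreases the first coordinate; \stdrule{Cong} only appends a least-significant coordinate; and $\streq$ preserves $\SR{}$ because depth is $\flateq$-invariant, which is exactly what lets both the relation and the measure descend to $\streq$-classes for the Newman argument. Replacing your weighted sum by this right-to-left lexicographic measure eliminates the additivity lemma, which is the one real gap in your proposal.
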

\begin{proof}
For the first item,
define the {\em Sekar--Ramakrishnan measure}~\cite{DBLP:journals/iandc/SekarR93}
of a sequential rewrite $\redseq$
as a tuple of non-negative integers written $\SR{\redseq}$ defined by:
\[
  \SR{\mstep_1\seq\mstep_2\seq\hdots\mstep_n\seq\refl{\tm}} =
  (n, \depth{\mstep_n}, \hdots, \depth{\mstep_2}, \depth{\mstep_1})
\]
These tuples are partially ordered by declaring
$(n,x_1,\hdots,x_n) \srgt (m,y_1,\hdots,y_m)$ to hold
if either $n > m$
or $n = m$ and $(x_1,\hdots,x_n) > (y_1,\hdots,y_n)$ holds
according to the usual lexicographic order.
This order is well-founded.
To see that $\tostd$ is strongly normalizing, it suffices to note that
it decreases the Sekar--Ramakrishnan measure.
More precisely,
$\redseq \tostd \redseqtwo$ implies $\SR{\redseq} \srgt \SR{\redseqtwo}$.
Moreover,
$\redseq \streq \redseqtwo$ implies $\SR{\redseq} = \SR{\redseqtwo}$,
which means that $\tostd$ can be lifted to $\streq$-equivalence classes.
The second item is a consequence of Newman's Lemma on
the rewriting relation $\tostd$ lifted to $\streq$-equivalence classes,
using local confluence~(\rlem{body:standardization_wcr}).
\end{proof}

\begin{thm}[Standardization]
\lthm{body:standardization_theorem}
Let $\redseq,\redseqtwo$ be sequential rewrites.
Then
$\redseq \flateq \redseqtwo$
if and only if
there exist rewrites $\redseq',\redseqtwo'$ in $\tostd$-normal form
such that $\redseq \tostd^* \redseq'$
and $\redseqtwo \tostd^* \redseqtwo'$ and $\redseq' \streq \redseqtwo'$.
\end{thm}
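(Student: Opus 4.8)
The goal is the Standardization Theorem: two sequential rewrites are flat-permutation equivalent iff they reach strongly equivalent $\tostd$-normal forms. This is the classic interpolation between an equational characterization of equivalence and a rewriting-based one, so the plan is to show that $\tostd$ together with $\streq$ exactly captures $\flateq$ on sequential rewrites.

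\medskip

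\noindent\textbf{Approach.}
The plan is to prove the two directions separately, leaning on \rprop{body:standardization_sn_cr} (strong normalization and confluence up to $\streq$) for the backward direction and on a soundness-style invariant for the forward direction. First I would establish the easy invariant that the reduction relation $\tostd$ is \emph{sound} with respect to flat permutation equivalence: if $\redseq \tostd \redseqtwo$ then $\redseq \flateq \redseqtwo$. This is checked rule by rule. For \stdrule{Del}, $\refl{\tm}\seq\redseq \flateq \redseq$ follows from \permeqRule{IdL} together with \rthm{body:soundness_completeness_flateq} (or directly from the splitting $\judgSplit{\mstep_1}{\refl{\rsrc{\mstep_1}}}{\mstep_1}$ already used in the completeness proof). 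For \stdrule{Pull}, the hypotheses $\mstep_{12}\flateq\mstep_1\seq\mstep_2$ and $\mstep_{23}\flateq\mstep_2\seq\mstep_3$ give, working modulo \flateqRule{Assoc},
\[
  \mstep_1\seq\mstep_{23}\seq\redseq
  \flateq
  \mstep_1\seq\mstep_2\seq\mstep_3\seq\redseq
  \flateq
  \mstep_{12}\seq\mstep_3\seq\redseq.
\]
The \stdrule{Cong} closure case is immediate since $\flateq$ is closed under composition contexts. Likewise I would note that $\streq$ implies $\flateq$ directly from its definition (componentwise $\flateq$ and closure under composition). Combining these two observations closes the backward direction: if $\redseq\tostd^*\redseq'$, $\redseqtwo\tostd^*\redseqtwo'$ and $\redseq'\streq\redseqtwo'$, then $\redseq\flateq\redseq'\flateq\redseqtwo'\flateq\redseqtwo$.

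\medskip

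\noindent\textbf{Forward direction.}
For the converse, suppose $\redseq\flateq\redseqtwo$. By \rprop{body:standardization_sn_cr}, standardization is strongly normalizing, so there exist $\tostd$-normal forms $\redseq'$ and $\redseqtwo'$ with $\redseq\tostd^*\redseq'$ and $\redseqtwo\tostd^*\redseqtwo'$. By the soundness established above, $\redseq'\flateq\redseq\flateq\redseqtwo\flateq\redseqtwo'$, so it remains to show that two \emph{$\tostd$-normal} sequential rewrites that are $\flateq$-equivalent must be strongly equivalent. The cleanest route is to exploit the confluence-up-to-$\streq$ part of \rprop{body:standardization_sn_cr}: I would like to produce a single common reduct. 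The natural bridge is a Church--Rosser style statement for $\flateq$ itself, namely that $\redseq'\flateq\redseqtwo'$ implies $\redseq'$ and $\redseqtwo'$ reach a common rewrite up to $\streq$. Since $\flateq$ is generated by \flateqRule{Assoc} and \flateqRule{Perm} steps (both forward and backward), I would show that each such generating step can be simulated by $\tostd^*$ up to $\streq$ between normal forms; intuitively, a single \flateqRule{Perm} equation either splits or merges adjacent multisteps, and both are realizable by \stdrule{Pull}/\stdrule{Del} together with strong equivalence. Chaining these simulations along the $\flateq$-derivation, and repeatedly re-normalizing via strong normalization and closing local divergences via \rlem{body:standardization_wcr}, yields $\redseq'\tostd^*\streq\redseqtwo'$; since both are already normal, the $\tostd^*$ parts are trivial and we obtain $\redseq'\streq\redseqtwo'$ as required.

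\medskip

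\noindent\textbf{Main obstacle.}
I expect the delicate point to be precisely this last claim: that $\flateq$-equivalence of two \emph{standard} (normal) sequential rewrites forces $\streq$. The danger is erasure, exactly the phenomenon illustrated in the paper with $\rulewit\,\rulewittwo\flateq\rulewit\,\refl{\consthree}$, where equivalent multisteps are not syntactically equal; this is why uniqueness can only hold up to $\streq$ and never up to $=$. Handling it requires carefully tracking, at each \flateqRule{Perm} rewriting of the equivalence derivation, that the multistep being pulled forward is non-empty exactly when \stdrule{Pull} applies, and that empty residuals are absorbed by \stdrule{Del} and $\streq$; \rlem{characterization_of_empty_multisteps} is the key tool ensuring that ``$\flateq$ to an empty multistep'' coincides with ``syntactically empty'', so that the splitting analysis in the local-confluence proof transfers verbatim. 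Once the simulation of a single $\flateq$-step by $\tostd^*$-up-to-$\streq$ is in place, the rest is routine diagram chasing built on \rlem{body:standardization_wcr} and the well-founded Sekar--Ramakrishnan measure of \rprop{body:standardization_sn_cr}.
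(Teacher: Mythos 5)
Your proposal is correct and follows essentially the same route as the paper: the backward direction via $\tostd\ \subseteq\ \flateq$ and $\streq\ \subseteq\ \flateq$, and the forward direction by simulating each generating \flateqRule{Perm} step with \stdrule{Pull}/\stdrule{Del} steps (run backwards) together with $\streq$ --- including the same case split on whether the split-off multistep is empty --- and then concluding with strong normalization and confluence up to $\streq$ from \rprop{body:standardization_sn_cr}. The paper merely packages this as an explicit three-way equivalence through the intermediate conversion relation $(\tostd \cup \tostd^{-1} \cup \streq)^*$, whereas you normalize first and then argue about normal forms; the content is the same.
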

\begin{proof}
We prove that the following are equivalent:
\begin{enumerate}
\item $\redseq \flateq \redseqtwo$
\item $\redseq \mathrel{(\tostd \cup \tostd^{-1} \cup \streq)^*} \redseqtwo$
\item There exist rewrites $\redseq',\redseqtwo'$ in $\tostd$-normal form
      such that $\redseq \tostd^* \redseq'$
      and $\redseqtwo \tostd^* \redseqtwo'$
      and $\redseq' \streq \redseqtwo'$.
\end{enumerate}
\begin{itemize}
\item $(1 \implies 2)$
  Since $\redseq,\redseqtwo$ are sequential,
  we may work implicitly up to the \flateqRule{Assoc} rule.
  Doing this, it can be shown
  that the equivalence $\redseq \flateq \redseqtwo$
  must be derived from the reflexive, symmetric, and transitive closure
  of following variant of the \flateqRule{Perm} rule:
  \[
    \begin{array}{rllr}
        \mstepthree_1\seq\hdots\mstepthree_i\seq
        \mstep
        \seq\mstepthree_{i+1}\seq\hdots\mstepthree_n\seq\refl{\tm}
      & \flateq &
        \mstepthree_1\seq\hdots\mstepthree_i\seq
        \flatten{\mstep_1}\seq\flatten{\mstep_2}
        \seq\mstepthree_{i+1}\seq\hdots\mstepthree_n\seq\refl{\tm}
      &
        \flateqRule{Perm'}
      \\
      &&&\text{if $\judgSplit{\mstep}{\mstep_1}{\mstep_2}$}
    \end{array}
  \]
  It suffices to show that if $\redseq \flateq \redseqtwo$
  is derived using the \flateqRule{Perm'} axiom,
  then $\redseq \mathrel{(\tostd \cup \tostd^{-1} \cup \streq)^*} \redseqtwo$.
  Indeed, let $\judgSplit{\mstep}{\mstep_1}{\mstep_2}$.
  We consider two cases, depending on whether $\flatten{\mstep_2}$
  is empty or not:
  \begin{enumerate}
  \item {\bf If $\flatten{\mstep_2}$ is empty.}
    Then $\flatten{\mstep_2} = \ftgt{\mstep_2} = \fsrc{\mstep_1}$.
    Note that
    $\mstep \flateq \flatten{\mstep_1}\seq\flatten{\mstep_2}
            \flateq \flatten{\mstep_1}$
    holds by \rthm{body:soundness_completeness_flateq}, using \permeqRule{IdR}.
    Then:
    \[
      \begin{array}{rlll}
        \mstepthree_1\seq\hdots\mstepthree_i\seq
        \mstep
        \seq\mstepthree_{i+1}\seq\hdots\mstepthree_n\seq\refl{\tm}
      & \streq &
        \mstepthree_1\seq\hdots\mstepthree_i\seq
        \flatten{\mstep_1}
        \seq\mstepthree_{i+1}\seq\hdots\mstepthree_n\seq\refl{\tm}
      \\
      & \tostdinv &
        \mstepthree_1\seq\hdots\mstepthree_i\seq
        \flatten{\mstep_1}\seq\flatten{\mstep_2}
        \seq\mstepthree_{i+1}\seq\hdots\mstepthree_n\seq\refl{\tm}
        & \text{by \stdrule{Del}}
      \end{array}
    \]
  \item {\bf If $\flatten{\mstep_2}$ is non-empty.}
    On one hand,
    $\mstep \flateq \flatten{\mstep_1}\seq\flatten{\mstep_2}$.
    Moreover,
    $\flatten{\mstep_2} \flateq \flatten{\mstep_2}\seq\ftgt{\mstep_2}$
    holds by \rthm{body:soundness_completeness_flateq}, using \permeqRule{IdR}.
    Then:
    \[
      \begin{array}{rlll}
        \mstepthree_1\seq\hdots\mstepthree_i\seq
        \mstep
        \seq\mstepthree_{i+1}\seq\hdots\mstepthree_n\seq\refl{\tm}
      & \tostdinv &
        \mstepthree_1\seq\hdots\mstepthree_i\seq
        \mstep\seq\ftgt{\mstep_2}
        \seq\mstepthree_{i+1}\seq\hdots\mstepthree_n\seq\refl{\tm}
        & \text{by \stdrule{Del}}
      \\
      & \tostdinv &
        \mstepthree_1\seq\hdots\mstepthree_i\seq
        \flatten{\mstep_1}\seq\flatten{\mstep_2}
        \seq\mstepthree_{i+1}\seq\hdots\mstepthree_n\seq\refl{\tm}
        & \text{by \stdrule{Pull}}
      \end{array}
    \]
  \end{enumerate}
\item $(2 \implies 3)$
  This is an immediate consequence of~\rprop{body:standardization_sn_cr}.
\item $(3 \implies 1)$
  It suffices to show that $\tostd \ \subseteq\ \flateq$
  and $\streq\ \subseteq\ \flateq$:
  \begin{enumerate}
  \item $\tostd\ \subseteq\ \flateq$:
    Let $\redseq \tostd \redseqtwo$ and
    let us check that $\redseq \flateq \redseqtwo$.
    We consider two subcases, depending on whether
    $\redseq \tostd \redseqtwo$ is derived from
    \stdrule{Del} or \stdrule{Pull}:
    \begin{enumerate}
    \item \stdrule{Del}:
      The step is of the form:
      \[
        \mstepthree_1\seq\hdots\seq\mstepthree_i\seq
        \mstep
        \seq\mstepthree_{i+1}\seq\hdots\seq\mstepthree_n\seq\refl{\tm}
        \tostd
        \mstepthree_1\seq\hdots\seq\mstepthree_i
        \seq\mstepthree_{i+1}\seq\hdots\seq\mstepthree_n\seq\refl{\tm}
      \]
      where $\mstep$ is empty.
      Then note that $\mstep = \ftgt{\mstep} = \fsrc{\mstepthree_{i+1}}$,
      so by \rthm{body:soundness_completeness_flateq}, using \permeqRule{IdR}, 
      we have that:
      \[
        \mstepthree_1\seq\hdots\seq\mstepthree_i\seq
        \mstep
        \seq\mstepthree_{i+1}\seq\hdots\seq\mstepthree_n\seq\refl{\tm}
        \flateq
        \mstepthree_1\seq\hdots\seq\mstepthree_i
        \seq\mstepthree_{i+1}\seq\hdots\seq\mstepthree_n\seq\refl{\tm}
      \]
    \item \stdrule{Pull}:
      The step is of the form:
      \[
        \mstepthree_1\seq\hdots\seq\mstepthree_i\seq
        \mstep_1\seq\mstep_{23}
        \seq\mstepthree_{i+1}\seq\hdots\seq\mstepthree_n\seq\refl{\tm}
        \tostd
        \mstepthree_1\seq\hdots\seq\mstepthree_i\seq
        \mstep_{12}\seq\mstep_3
        \seq\mstepthree_{i+1}\seq\hdots\seq\mstepthree_n\seq\refl{\tm}
      \]
      where $\mstep_{12} \flateq \mstep_1\seq\mstep_2$
      and $\mstep_{23} \flateq \mstep_2\seq\mstep_3$
      and $\mstep_2$ is non-empty.
      Then:
      \[
        \begin{array}{rll}
        &&
          \mstepthree_1\seq\hdots\seq\mstepthree_i\seq
          \mstep_1\seq\mstep_{23}
          \seq\mstepthree_{i+1}\seq\hdots\seq\mstepthree_n\seq\refl{\tm}
        \\
        & \flateq &
          \mstepthree_1\seq\hdots\seq\mstepthree_i\seq
          \mstep_1\seq\mstep_2\seq\mstep_3
          \seq\mstepthree_{i+1}\seq\hdots\seq\mstepthree_n\seq\refl{\tm}
        \\
        & \flateq &
          \mstepthree_1\seq\hdots\seq\mstepthree_i\seq
          \mstep_{12}\seq\mstep_3
          \seq\mstepthree_{i+1}\seq\hdots\seq\mstepthree_n\seq\refl{\tm}
        \\
        \end{array}
      \]
    \end{enumerate}
  \item $\streq\ \subseteq\ \flateq$:
    Suppose that $\redseq \streq \redseqtwo$.
    This means that
    $\redseq = \mstep_1\seq\hdots\seq\mstep_n\seq\refl{\tm}$
    and $\redseqtwo = \msteptwo_1\seq\hdots\seq\msteptwo_n\seq\refl{\tm}$
    where $\mstep_i \flateq \msteptwo_i$ for all $1 \leq i \leq n$.
    Then it is immediate to conclude that
    $\mstep_1\seq\hdots\seq\mstep_n\seq\refl{\tm}
    \flateq \msteptwo_1\seq\hdots\seq\msteptwo_n\seq\refl{\tm}$.
  \end{enumerate}
\end{itemize}
\end{proof}


\section{Related Work and Conclusions}
\lsec{related_work_and_conclusions}

As mentioned in the introduction, proof terms were introduced by van Oostrom and de Vrijer for first-order left-linear rewrite systems to study equivalence of reductions in~\cite{DBLP:journals/entcs/OostromV02} and~\cite[Chapter 9]{Terese03}. They are inspired in Rewriting Logic~\cite{DBLP:journals/tcs/Meseguer92}. In the setting of HORs, Hilken~\cite{DBLP:journals/tcs/Hilken96} introduces rewrites for $\beta\eta$-reduction together with a notion of permutation equivalence for those rewrites. He does not study permutation equivalence for arbitrary HORs nor formulate notions of projection. Hilken does, however, justify his equations through a categorical semantics. We have already discussed Bruggink's work extensively~\cite{thesis:bruggink:08,DBLP:conf/rta/Bruggink03}. Another attempt at devising proof terms for HOR by the authors of the present paper is~\cite{DBLP:conf/ppdp/BarenbaumB20}. The latter uses a term assignment for a minimal modal logic called Logic of Proofs (LP), to model rewrites. LP is a refinement of S4 in which the modality $\Box A$ is refined to $[s]A$, where $s$ is said to be a witness to the proof of $A$. The intuition is that terms and rewrites may be seen to belong to different stages of discourse; rewrites verse about terms. Terms are typed with simple types and rewrites are typed with a modal type $[s]A$ where the term $s$ is the source term of the rewrite. However, the notion of substitution that is required for subject reduction is arguably ad-hoc. In particular, substitution of a rewrite $\rewr{\redseq}{\tm}{\tm'}{\typ}$ for $x$ in another rewrite  $\rewr{\redseqtwo}{\tmtwo}{\tmtwo'}{\typ}$ is \emph{defined} as the  composed rewrite $\redseq\subt{\var}{\tmtwo}\seq\tm'\subtr{\var}{\redseqtwo}$,  where $\redseq$ is substituted for $x$ in $\refl{\tmtwo}$ followed by $\redseqtwo$ where $\tm'$ is substituted for $x$.

\textbf{Future work.} It would be of interest to develop tools based
on the work presented here for reasoning about computations
in higher-order rewriting, as has recently been explored for 
first-order rewriting~\cite{DBLP:conf/rta/KohlM18,DBLP:conf/cade/KohlM19}.
One downside is that our
rewrites cannot be treated as terms in a higher-order rewrite system. Indeed,
rewrites are not defined modulo $\beta\eta$ (for good reason
since an expression such as $(\lam{\var}{\redseq})\,\redseqtwo$ should not be
subject to $\beta$ reduction).

One problem that should be addressed is that of formulating {\em standardization}~(see \eg~\cite[Section 8.5]{Terese03}) using rewrites.
This amounts to giving a procedure that reorders the steps of a rewrite $\redseq$,
yielding a rewrite $\redseq^*$ in which outermost steps are performed before
innermost ones.
Standardization finds canonical representatives of
$\permeq$-equivalence classes, in the sense that
$\redseq \permeq \redseqtwo$ if and only if $\redseq^* = \redseqtwo^*$.
The flattening rewrite system of~\rsec{flat_permutation_equivalence} is
a first approximation to standardization, since
$\redseq \permeq \redseqtwo$ if and only if
$\flatten{\redseq} \flateq \flatten{\redseqtwo}$.
\rsec{standardization} presents a procedure
to compute canonical representatives of $\permeq$-equivalence classes,
based on the idea of repeatedly converting
$\mstep\seq\msteptwo$ into $\mstep'\seq\msteptwo'$
whenever $\judgSplit{\msteptwo}{\mstepthree}{\msteptwo'}$
and $\judgSplit{\mstep'}{\mstep}{\mstepthree}$,
an idea reminiscent of {\em greedy decompositions}~\cite{dehornoy2015foundations}.
Unfortunately, this procedure does not always terminate,
due to the fact that rewrites may have infinitely long ``unfoldings'';
for instance, if $\rulewit : \cons \rewto \cons$
and $\rulewittwo : \consfour(\var) \rewto \constwo$
then $\rulewittwo(\cons) : \consfour(\cons) \rewto \constwo$
is equivalent to arbitrarily long rewrites of the form
$\consfour(\rulewit) \seq \hdots \seq \consfour(\rulewit) \seq \rulewittwo(\cons)$.
A terminating procedure should probably rely on a measure based on the
notion of {\em essential development}~\cite[Definition~11]{DBLP:conf/rta/Oostrom99}.

Another avenue to pursue is to characterize permutation equivalence via
{\em labelling}. The application of a rewrite step leaves a witness in the term
itself, manifested as a decoration (a label). These labels thus collect and
record the history of a computation. By comparing them one can determine
whether two computations are equivalent. Labelling equivalence for first-order
rewriting is studied by van Oostrom and de Vrijer
in~\cite{DBLP:journals/entcs/OostromV02} and~\cite[Chapter 9]{Terese03}.

We have given semantics to rewrites via Higher-Order Rewriting Logic. A categorical semantics for a similar notion of rewrite and permutation equivalence was presented by Hirshowitz~\cite{DBLP:journals/corr/Hirschowitz13} (projection equivalence and flattening are not studied though). Our $\tm\subtr{\var}{\redseq}$ is called \emph{left whiskering} and $\redseq\subt{\var}{\tm}$ \emph{right whiskering}, using the terminology of 2-category theory. These are then used to  define $\redseq\subrr{\var}{\redseqtwo}$. A precise relation between the two notions of rewrite should be investigated.


\bibliographystyle{alphaurl}
\bibliography{main}










\end{document}
